\newif\ifnotation
\newtheorem{definition}{Definition}
\declaretheorem{theorem}
\declaretheorem{proposition}
\declaretheorem{lemma}
\newcommand\eps{\varepsilon}
\newcommand\eeta{f} % output variable, previously eta
\newcommand\blank{\texttt{\char32}}
\begin{document}

\title{Probabilistic Finite Automaton Emptiness is Undecidable}
\author{Günter Rote\thanks{Freie Universität Berlin,
    Institut für Informatik, Takustr.~%aße
    9, 14195 Berlin, \texttt{rote€inf.fu-berlin.de}}}
\maketitle
\vspace {-3ex}
\begin{abstract}
  It is undecidable whether the language {recognized} by a
  probabilistic finite automaton is empty.  Several other
  undecidability results, in particular regarding problems about
  matrix products, are based on this important theorem.
%In this %lecture
%note,
  We present three proofs of this theorem from the literature in a
  self-contained way, and we derive some strengthenings.  For example,
  we show that the problem remains undecidable for a fixed
  probabilistic finite automaton with 11~states, where only the
  starting distribution is given as input.
\end{abstract}

\thispagestyle{empty}

\tableofcontents

\ifnotation\else\hrule height 0pt\vspace{5.5ex}\hrule height 0pt \fi

\section{Probabilistic finite automata (PFA)}
\label{sec:PFA}

A probabilistic finite automaton (PFA)
combines characteristics of
a finite automaton and a Markov chain.
We give a formal definition
% of a probabilistic finite automaton (PFA)
below. %later,
%in Section~\ref{sec:formal}.
Informally, we can think of a PFA in terms of an algorithm that
reads a sequence of input symbols % word
from left to right,
% character by character,
having only finite memory. That is, it can manipulate a
finite number of variables with bounded range, just like an ordinary
finite automaton. In addition, a
PFA %probabilistic finite automaton
can make coin flips.
As a consequence,
the question whether the PFA arrives in an accepting state and thus accepts a given input word % $u$.
is not a yes/no decision, but it happens with a certain {probability}.
The language \emph{recognized}
(or \emph{represented})
% (Salomaa, Theory of Automata, p.76)
by a PFA is defined by specifying a
probability threshold or
\emph{cutpoint} $\lambda$. By convention, the language consists of all words for which the
probability of acceptance strictly exceeds~$\lambda$.

The \emph{PFA Emptiness Problem} is the problem of deciding whether
this language is empty.

This problem is undecidable.
There are three different % two independent
proofs of this
theorem in the literature.
%based on different ideas,
The first is by
Masakazu Nasu and Namio Honda~%
%Nasu and Honda
\cite{NasuHonda1969} from 1969.
The second proof, by Volker Claus \cite{claus81} is loosely related to
this proof.
A completely independent proof, which
 uses a very different approach,
was given by
Anne Condon and Richard J. Lipton~%
%Condon and Lipton
\cite{condon-lipton-1989} in 1989,
based on ideas of
R{\={u}}si{\c{n}}{\v{s}}
Freivalds~\cite{freivalds81} from 1981.
The somewhat intricate %convolved(?) contorted verwickelt
history is %briefly
described
in Section~\ref{sec:history}.

We will present these three proofs
in Sections \ref {sec:nasu-honda-claus},
\ref{sec:thm-variable-matrices},
and \ref {sec:2c}, respectively.
The chains of reductions are shown in
Figure~\ref{fig:reductions} in Section~\ref{sec:shortcut}.
A self-contained proof of the basic undecidability result
(Proposition~\ref{=1/4})
takes about
3 pages, see Section \ref {sec:nasu-honda-claus}.
The rest of the paper is devoted to % be able to derive
% some %
different
sharpenings of the undecidability statement,
where certain parameters of the PFA are restricted %fixed
(Theorems~\ref{thm-condon-lipton}--\ref{thm-variable-matrices-binary}).
%in Section~\ref{sec:NEW}).

 \subsection{Formal problem definition}
\label{sec:formal}

 Formally, a PFA is given by a sequence of stochastic \emph{transition
 matrices} $M_\sigma$, %  $M^\sigma$,
% $\mathcal{M}=\{M_1,\ldots,M_k\}$,
 one for each letter $\sigma$ from the input alphabet $\Sigma$.
 The matrices are $d\times d$ matrices if the PFA has $d$~states.
 % Our PFA can be constructed to have a single accepting state.
 The start state is chosen according
 to a given
 probability distribution $\pi\in \mathbb R^d$.
 The set of accepting
 states is characterized by a 0-1-vector $\eeta \in \{0,1\}^{d}$.
%
% Then the acceptance criterion
%is changed to

In terms of these data, the PFA Emptiness Problem
with cutpoint $\lambda$, whose undecidability we will show, can be
formally described as follows.

\goodbreak

\begin{quote}
  \textsc{PFA Emptiness.}
  Given  a finite set of % $d\times d$ %rational
  stochastic matrices
 $\mathcal{M} \subset  \mathbb Q^{d\times d}$,
a probability distribution $\pi\in \mathbb Q^d$,
and a 0-1-vector $\eeta \in \{0,1\}^{d}$, 
is there a sequence
$M_1,M_2,\ldots, M_m$ with $M_j\in\mathcal{M}$ %for all $i=1,\ldots,m$, whose upper
%is there a product
%$M_1M_2\ldots M_m$, with $M_i\in\mathcal{M}$ for all $i=1,\ldots,m$, whose upper
%right corner
%is larger than $\lambda$: % 1/2:
such that
\begin{equation}\label{eq:accept}
  \pi^T\!M_1M_2\ldots M_m\eeta > \lambda \ ?
%  (M_1M_2\ldots M_m)_{1,d}> \lambda \ ?
\end{equation}
\goodbreak
\end{quote}
The most natural choice is $\lambda=\frac12$, but
the problem is undecidable for
%we will see
%% $\mathcal{M}=\{M_1,\ldots,M_k\}$,
%that
%% $1/2$ can be replaced by
% we can choose
any fixed (rational or irrational) cutpoint $\lambda$ with $0<
\lambda< 1$.  We can also ask $\ge\lambda$ instead of $>\lambda$.
%, and in all cases the problem is undecidable.

Our results, which we discuss in the next section, show that the PFA
Emptiness Problem remains undecidable under
additional restrictions.
Table~\ref{tab:results}
gives an overview of the various assumptions and
constraints on the data. % for these results.
 Theorem~\ref{thm-condon-lipton} is essentially due to Condon and Lipton.
 Theorem~\ref{thm-claus-9-states} is due to Claus, and
Theorem~\ref{thm-variable-matrices-binary} is due to Mika Hirvensalo~\cite{hirvensalo07}.

\begin{table}[htb]
  \centering
  \newcommand\inp{\emph{input}}
  \begin{tabular}{|l|crlcl|}
    \hline
    Theorem&$\pi$&\!\!\!\!\!\!
                   $|\mathcal{M}|$&$M \in \mathcal{M}$&$\eeta$&acceptance criterion\\
    \hline
    \hline
    Thm.~\ref{thm-condon-lipton}
           & $\pi= e_2$ & 2 & \inp & $\eeta=e_1$ %$\eeta_q\in\{0,1\}$
                                                       & any (Thm.~\ref{dichotomy})\\
    \hline
Thm.~\ref{thm-fixed-f}a
           & \inp & 52 & $18\times18$, positive % TABLE A LITTLE TOO WIDE
                                               &$\!\!\eeta\in\{0,1\}^{18}$\! & $\ge1/2$\\
    Thm.~\ref{thm-fixed-f}b
           & \inp & 53 & $11\times11$ & $\eeta=e_1$ & $>1/4$\\
\hline    
Thm.~\ref{thm-fixed-f-2-matrices}
           & \inp & 2 & $572\times572$ & $\eeta=e_1$ & $>1/4$\\
\hline    
Thm.~\ref{thm-fixed-pi}a
           & $0<\pi_q<1$%, positive
                   & 52 & $9\times9$, positive & \inp & $\ge1/2$\\
Thm.~\ref{thm-fixed-pi}b
           & $0\le\pi_q\le1$ & 52 & $11\times11$ & \inp & $>1/4$\\
\hline    
%Thm.~\ref{thm-variable-matrices}a
%           & \inp & 4 & \inp, $8\times 8$, positiv\rlap e
%  & $\eeta=e_1$ &  $>1/8$ or $\ge1/8$ \\
% Thm.~\ref{thm-variable-matrices}b
%            & \inp & 2 & \inp, $18\times 18$, pos\rlap{itive}
%   & $\eeta=e_1$ &  $>1/18$ or $\ge1/18$ \\
%\hline
    Thm.~\ref{thm-claus-9-states}
           & $\pi= e_2$ & 5 & \inp, $9\times 9$, posit\rlap{ive} & $\eeta=e_1$ %$\eeta_q\in\{0,1\}$
                           & $>1/9$ %or $\ge 1/9$
    \\
Thm.~\ref{thm-variable-matrices-binary}
           & $\pi=e_2$ & 2 & \inp, $20\times 20$, posit\rlap{ive}
  & $\eeta=e_1$ &  $>1/20$ or $\ge1/20$ \\
\hline       
  \end{tabular}
  
  \caption{The main characteristics of the data $\pi$,
    $\mathcal{M}$, and $\eeta$ for %the
 different
    undecidable versions of PFA Emptiness. The data that are not
    marked as \inp\ are fixed.
    The vectors
    $e_1$ and $e_2$ are two standard unit vectors of appropriate dimension.
%    ``$\{0,1\}$'' means a 0-1-vector $\pi$ or $\eeta$, as opposed to a vector with general
%    entries from the interval $[0,1]$ or $(0,1)$.
  }
  \label{tab:results}
\end{table}

\section{Statement of results}
\label{sec:results}

 The PFA Emptiness Problem is undecidable even
if the start state is a fixed (deterministic) state,
and there is a single accepting state (different from the start state).
In this case,
$\pi$ is a standard unit vector, consisting of a single 1 and otherwise zeros,
and likewise, $\eeta $ is a standard unit vector.
The acceptance probability is found in a specific entry
(say, the upper right corner) of the product
$M_1M_2\ldots M_m$.

%MUSS IM BEWEIS ausgeführt werden! !!!

\begin{restatable} 
% \begin
{theorem}
{CondonLiptonTwoMatrices}
  \label{thm-condon-lipton}
 For any fixed 
 $\lambda$ with $0<
\lambda< 1$, the
PFA Emptiness Problem \eqref{eq:accept} with
 cutpoint~$\lambda$
 is undecidable, even when restricted to instances
where $\mathcal{M}$ consists of
only two transition matrices,
all of whose entries are from the set $\{0,\frac12,1\}$,
and $\pi$ and $\eeta $ are standard unit vectors.
\end{restatable}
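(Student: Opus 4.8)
The plan is to reduce from a standard undecidable problem about computations: the halting problem for two‑counter (Minsky) machines started with both counters $0$; equivalently one may reduce from the halting problem of a single‑tape Turing machine on the empty tape. Fix such a machine $T$. A halting run of $T$ is a finite sequence of configurations $C_0\vdash C_1\vdash\dots\vdash C_t$ in which $C_0$ is initial, $C_t$ is halting, and every step is legal; encode such a run as one word $w=\#C_0\#C_1\#\dots\#C_t\#$ over a finite alphabet $\Sigma$, each configuration written with its control state and its counter values in unary. The alphabet size, the number of transition matrices, and the branching probabilities will be forced into the shape demanded by the theorem only at the end; so I first build a PFA $P$ over $\Sigma$ that uses arbitrary \emph{dyadic} probabilities, aiming for: $P$ accepts $w$ with probability $\ge\tfrac34$ if $w$ encodes a genuine halting run of $T$, and with probability $\le\tfrac12$ otherwise.

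To this end $P$ flips a fair coin on the first letter and runs two checks conjunctively. The first is the \emph{local} check: that $w$ has the right shape, that $C_0$ is initial and $C_t$ halting, that each control state follows correctly from the instruction being executed, and that zero‑tests are respected — and ``is this counter block empty?'' is a property of a single configuration. All of this is regular, so a deterministic sub‑automaton contributes probability $1$ when everything is locally consistent and $0$ otherwise. The second check is the probabilistic core: in one left‑to‑right pass and with bounded memory it must verify that, from one configuration to the next, each counter value changes by the prescribed amount $\delta\in\{-1,0,+1\}$ — a family of constraints, each comparing the lengths of two unary blocks far apart in $w$. Here I would use Freivalds' device: a block $a^m$ becomes a surviving probability $2^{-m}$ after one fair coin flip per letter, so two such blocks can be compared. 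The \textbf{main obstacle} is that a single block‑by‑block comparison detects a length mismatch of $1$ only with probability about $2^{-m}$, which goes to $0$ as the counters grow; to get a fixed cutpoint to work, all the comparisons must be aggregated into one acceptance probability that is separated from $\tfrac12$ by an amount \emph{independent of the computation's size} — for instance by letting $P$ accumulate the total ``discrepancy'' of the whole run as a single integer with position‑dependent weights, whose nonzero values differ from the target by a fixed multiplicative factor. Designing this robust, instance‑independent gap is the heart of the argument.

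Three routine reductions then finish the proof. (i) \emph{Binary alphabet:} re‑encode each letter of $\Sigma$ by a fixed‑length block over $\{0,1\}$ and let the new automaton simulate $P$ block by block, using ``position within block'' states and sending to a reject sink any binary word that is not a legal concatenation of code blocks. (ii) \emph{Entries in $\{0,\tfrac12,1\}$, two matrices:} since $P$'s probabilities are dyadic, every transition is a composition of fair‑coin steps, which can be distributed over a suitably lengthened code block; a fair‑coin step is a stochastic matrix each of whose rows is $e_j$ or $\tfrac12 e_j+\tfrac12 e_k$, so its entries lie in $\{0,\tfrac12,1\}$, and only the two matrices $M_0,M_1$ survive. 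A single start state and a single accepting state make $\pi$ and $\eeta$ standard unit vectors. (iii) \emph{Arbitrary cutpoint:} because valid inputs are accepted with probability $\ge\tfrac34$ and all others with probability $\le\tfrac12$, replacing the acceptance probability $p$ by $\alpha p+\beta$ for suitable \emph{dyadic} $\alpha>0,\beta\ge0$ with $\alpha+\beta\le1$ — realized by mixing $P$ with the trivial automata of acceptance probability $0$ and $1$ — shifts this fixed gap so that it straddles $\lambda$; the admissible pairs $(\alpha,\beta)$ form a nonempty open region, which contains a dyadic point, so this goes through for every $\lambda\in(0,1)$, rational or irrational, and also with ``$\ge\lambda$'' in place of ``$>\lambda$''. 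Altogether, $T$ halts iff the resulting instance of \eqref{eq:accept} is a ``yes'' instance, and undecidability follows.
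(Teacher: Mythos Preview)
You have correctly identified the overall architecture and, crucially, the main obstacle: a single left-to-right pass comparing two unary blocks of length $m$ via Freivalds' coin flips detects a mismatch only with probability on the order of $2^{-m}$, so there is no hope of a fixed additive gap on a \emph{single} encoded run. But your proposed fix --- ``accumulate the total discrepancy as a single integer with position-dependent weights, whose nonzero values differ from the target by a fixed multiplicative factor'' --- is not a construction, and I do not see how to turn it into one. A PFA cannot store an unbounded integer, and if you encode it as a probability (the only tool available), the separation between ``zero'' and ``nonzero'' still shrinks with the computation length. So the heart of the argument is missing, not just sketched.

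The paper's (Condon--Lipton) resolution is structurally different from what you propose, and worth knowing. It abandons the goal of verifying a single run $w$ with a fixed gap. Instead:
\begin{itemize}
\item The equality test for one pair of blocks is a \emph{three-outcome} procedure (``Same'', ``Different'', ``Undecided'') built as a competition between two players, designed so that in the unequal case the \emph{ratio} $\Pr[\text{Different}]/\Pr[\text{Same}]$ is at least a large constant (e.g.\ $2^{11}$), even though both probabilities are tiny.
\item These tests are aggregated over one run $A$: declare ``CORRECT'' only if \emph{all} equality tests say ``Same'', and ``INCORRECT'' only if \emph{all} say ``Different''. The multiplicative ratio is preserved.
\item The input is taken to be $A^t$ for large $t$: many independent copies of the same run. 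The PFA accepts on the first ``CORRECT'' and rejects after (say) ten ``INCORRECT''s. With enough repetitions the overall acceptance probability is pushed above $1-\varepsilon$ for a genuine halting run and below $\varepsilon$ otherwise, for any fixed $\varepsilon>0$.
\end{itemize}
So the instance-independent gap comes not from a clever one-shot estimator but from \emph{repetition in the input word itself}. This also makes your step~(iii) unnecessary: since the gap can be made to straddle any $\lambda\in(0,1)$ directly, no mixing is needed. Your steps (i) and (ii) are fine and match the paper.
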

%\end{theorem}

The proof is given in Section~\ref{sec:2c}.

We mention that we don't have to rely on a sharp
distinction between $\le\lambda$ and $>\lambda$, because
 the PFA that is constructed in the proof exhibits is a strong
 separation property
 % \emph{dichotomy}
 (see Theorem~\ref{dichotomy} in Section~\ref{sec:boost},
 and Section~\ref{sec:further-boost}):
 Either there is a sequence of matrices %$M_i$
 for which
the product
$  \pi^T\!M_1M_2\ldots M_m\eeta$ exceeds % the value
$1-\eps$, %0.99,
or, for every sequence, the value is below~$\eps$, % 0.01$.
where $\eps$ be chosen arbitrarily close to 0.
%These thresholds can be chosen arbitrarily close to 1 and 0,
%respectively,
%see Section~\ref{sec:further-boost}.

% In a more general setting,
% we can allow the PFA to start in a random state, given by a
% probability distribution $\pi\in \mathbb R^d$, and there can be several accepting
% states, characterized by a 0-1-vector $\eeta \in \{0,1\}^{d}$. Then the acceptance criterion
% is changed to
% \begin{displaymath}
%   \pi^T\!M_1M_2\ldots M_m\eeta > \lambda.
% \end{displaymath}
% In this setting,
The remaining results deal with the case
where all matrices in $\mathcal M$
%There is another version of the  % undecidable
%problem where
%everything except the starting distribution $\pi$ is
are fixed.

\begin{definition}
By a \emph{binary fraction}, we mean a
rational number whose denominator is a power of~$2$.
\end{definition}

\goodbreak
\begin{restatable} 
% \begin
{theorem}
{NasuHondaEleven}
\label{thm-fixed-f}\
% newline needed otherwise pdftex does not link properly
\begin{enumerate}[\rm (a)]

\item
  There is a fixed set $\mathcal{M}'$ of 52
%positive
stochastic matrices
of size $18\times 18$
%and a %positive
%starting distribution $\pi$,
%all
with positive
% binary fractions as
entries that are multiples of
$1/2^{47}$,
and a fixed vector $f\in\{0,1\}^{18}$,
  for which the following question is undecidable\textup:

  Given a probability distribution $\pi \in \mathbb Q^{18}$ whose
  entries are positive binary fractions,
% rational numbers with denominators that are powers
% of~$2$,
  is there a product
  $M_1M_2\ldots M_m$, with $M_j\in\mathcal{M}'$ for all $j=1,\ldots,m$,
with
$$\pi^T\!M_1M_2\ldots M_m\eeta \ge \tfrac 12\ ?$$

\item 
  There is a
  fixed set $\mathcal{M}$ of 53
%rational
stochastic matrices
of size $11\times 11$,
%and a starting distribution $\pi$,
all of whose
% binary fractions as
entries are multiples of
$1/2^{47}$,
%and a standard unit vector $f\in\{0,1\}^{11}$,
  for which the following question is undecidable\textup:

  Given a probability distribution $\pi\in \mathbb Q^{11}$ whose
  entries are %\emph
  {binary fractions},
% \textup(rational numbers with denominators that are powers
% of~$2$\textup),
 is there a product
  $M_1M_2\ldots M_m$, with $M_j\in\mathcal{M}$ for all $j=1,\ldots,m$,
such that
$$\pi^T\!M_1M_2\ldots M_me_1 > \tfrac 14\ ?$$
In other words, is the language recognized by the PFA with starting
distribution $\pi$ and cutpoint $\lambda=\frac14$ nonempty? 

\end{enumerate}
\end{restatable}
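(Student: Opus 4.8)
The plan is to obtain both statements by refining the construction that proves the basic undecidability result, Proposition~\ref{=1/4}, so that everything depending on the instance is recorded in the starting distribution $\pi$ while the transition matrices shrink to a fixed finite repertoire of gadgets. The undecidable instances behind Proposition~\ref{=1/4} come, in the end, from a universal computing device (for example, a two-counter Minsky machine) whose transition table is fixed and whose entire input is a binary string encoding a program together with its data; I would first make this explicit, so that the only datum that changes from instance to instance is the integer describing the initial configuration. Such an integer $n$ is stored as the dyadic number $2^{-n}$ in one coordinate of $\pi$; together with a normalising coordinate this makes $\pi$ a probability distribution whose entries are positive binary fractions, exactly the input format required in part~(a), and---once zero entries are also allowed---in part~(b).

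For the automaton itself the key steps are: (i) assemble a fixed bank of stochastic matrices with dyadic entries that carry out the elementary operations the machine needs---multiplying a stored value by $2$ or by $\tfrac12$ (increment/decrement of a counter), testing whether it has reached $1$ (the counter is zero), and advancing the finite control---each designed so that the probability mass remaining on the ``legal'' track is preserved; (ii) chain these gadgets so that one pass simulates one step of the machine and let the freely chosen input word select which transition is applied, so that a word spelling a syntactically correct computation exists precisely when, and has exactly the length of, the machine's halting computation; (iii) route the surviving mass, at the instant the halt state is entered, into the accepting coordinate $e_1$, so that $\pi^T M_1\cdots M_m e_1$ attains the cutpoint for such words and stays strictly below it for all other words. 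What then remains is bookkeeping: count the coordinates needed for the two counter registers, the finite control, and an absorbing ``dead'' coordinate into which every illegal move is irreversibly dumped, check that this totals $11$ states (and $18$ in the positive variant of part~(a)), count the gadget matrices ($53$, resp.\ $52$), and verify that the largest denominator produced along the way is $2^{47}$.

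Part~(a) adds the requirement that every matrix entry be strictly positive. I would meet it by blending each gadget with a tiny fixed multiple---of size a suitable power of $1/2$, so that all entries stay multiples of $1/2^{47}$---of a ``flooding'' stochastic matrix whose columns all equal one fixed positive dyadic distribution, the multiple small enough that every acceptance probability is moved by less than the margin that separates the two cases. This perturbs the otherwise exact arithmetic, which is why part~(a) is stated with the relaxed criterion $\ge\tfrac12$ in place of the sharp $>\tfrac14$ of part~(b); the differing dimensions and matrix counts simply reflect the two different engineering targets---part~(a): strictly positive entries, an arbitrary fixed vector $f\in\{0,1\}^{18}$ of accepting states, cutpoint $\tfrac12$; part~(b): dyadic (possibly zero) entries, the single accepting state $f=e_1$, cutpoint $\tfrac14$.

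The step I expect to be the main obstacle is the global correctness claim underlying~(iii): that an imperfect, branching probabilistic simulation still yields a clean threshold. Because the ``state'' of a PFA is a distribution, a word that does not faithfully encode a halting computation is not simply rejected---it scatters probability mass, and one must guarantee both that this mass cannot pile up in $e_1$ above the cutpoint and that a faithful encoding of a halting computation delivers at least the cutpoint (in the spirit of Theorem~\ref{dichotomy}, in fact almost all of the mass). The device that makes this work is to have every wrong move---decrementing a zero counter, applying a transition not enabled by the current control, stopping before the halt state, or continuing past it---irrevocably lose its mass to the dead coordinate; arranging this so that it holds simultaneously with the exact denominator bound and the state and matrix counts is the delicate part of the argument.
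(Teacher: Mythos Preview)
Your proposal takes a fundamentally different route from the paper, and the route has a genuine obstruction.

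The paper does \emph{not} simulate a Minsky machine inside the PFA. It goes Turing machine $\to$ Modified PCP $\to$ PFA. A universal Turing machine $U_{15,2}$ (15 states, 2 tape symbols) is fixed once and for all; its transition rules are translated into a fixed list of 53 MPCP word pairs, while the \emph{input tape} of the machine is translated into the single starting pair $(v_1,w_1)$. The PFA is then the 9-state ``equality-testing'' automaton of Section~\ref{sec:9states}, built from binary automata $B(v_i)$, $B(w_i)$ and the identity $\tfrac12\phi\psi+\tfrac14(1-\phi^2)+\tfrac14(1-\psi^2)=\tfrac12-\tfrac14(\phi-\psi)^2$; the extra states $q_A,q_R$ bring it to 11. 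Because only the starting pair depends on the input, only the starting distribution $\pi$ varies, and the 53 transition matrices are fixed. The numbers 11, 18, 52, 53 and $2^{47}$ are artifacts of this particular chain (the 11-bit code for $U_{15,2}$'s alphabet gives entries in $2^{-44}$, and the two constructions below add three more factors of $\tfrac12$), not of any counter-machine bookkeeping.

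For part~(b) the paper adds one extra ``finalising'' matrix $M^\infty$ that sends each state $q$ to $q_A$ with probability $\hat f_q$; since every other matrix shrinks the mass in $q_A$ by a factor $\le\tfrac18$, any product exceeding $\tfrac14$ must use $M^\infty$ exactly once, at the end. Merging $M^\infty$ with the (also forced) finishing-pair matrix keeps the count at 53. For part~(a) the paper converts the fractional output vector $\hat f\in\{\tfrac14,\tfrac12,\tfrac58\}^9$ to a 0--1 vector by splitting every state $q$ into $q^+,q^-$ and multiplying the transition probabilities into $q^\pm$ by $\hat f_q$ and $1-\hat f_q$; this doubles 9 to 18 and multiplies the denominators by $8$. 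Positivity is not obtained by blending: the underlying $9\times9$ matrices are already positive because the binary encoding avoids the all-zero and all-one codewords (Section~\ref{sec:positive-Kronecker}).

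The obstruction in your plan is step~(i). Storing a counter value $n$ as the mass $2^{-n}$ in a coordinate lets you increment (halve the coordinate) stochastically, but \emph{decrementing} means doubling that coordinate, and no fixed stochastic matrix can do this: the extra mass must come from some other coordinate, and how much to take depends on $n$. Likewise, ``testing whether it has reached $1$'' is a nonlinear predicate, and a PFA transition is affine in the state vector. The Condon--Lipton proof (Section~\ref{sec:2c}) \emph{does} check counter machine computations, but by a probabilistic Equality Checker that needs far more than 11 states and does not give fixed transition matrices. Your blending idea for positivity also fails as stated: adding $\varepsilon$ times a fixed positive stochastic matrix to each gadget perturbs the acceptance probability by an amount that grows with the word length, so no single $\varepsilon$ works uniformly; the paper's Step~\ref*{thm-claus-9-states}.3 trick relies on the special zero-row-and-column-sum structure of the matrices $E_i$, which your gadgets would not have.
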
%theorem}
In part (b),
$e_1$ denotes the first unit vector in $\mathbb R^{11}$, meaning
that there is a single accepting state.
The proof is given in section~\ref{eleven}.

Part (b) of the theorem has the acceptance criterion $>\frac14$, in
line with the conventions for a PFA. Part (a) deviates from this
convention by using a weak inequality $\ge\frac12$, but this is
rewarded
by allowing a stronger assumption:
%statement: The dimension is reduced from 11 to~9.
%Moreover,
All matrices in $\mathcal{M}$
%and %, as well as
%the starting
%distribution $\pi$ % and the output vector~$\eeta$,
are strictly positive.

The distinction between the
cutpoint values $\frac12$ and $\frac14$ in parts (a) and (b)
is inessential.
In fact,
for all of the Theorems~\ref{thm-fixed-f}--\ref{thm-fixed-pi},
 the cutpoint
 can be set to any fixed rational value within some range, but then the
 assumption that all entries are binary fractions must be given up,
 and the size of the matrices must sometimes be increased.

 An easier version of
 Theorem~\ref{thm-fixed-f}b,
 but with matrices of size
 $12\times 12$, is proved in Section~\ref{sec:use-UTM}
(Proposition~\ref{thm-fixed-f-weak}).

The input alphabet can be reduced to
two symbols at the expense of the number of states.
The proof will be given in section~\ref{sec:coding}.

\begin{restatable} 
% \begin
{theorem}
{NasuHondaTwoMatrices}
\label{thm-fixed-f-2-matrices}
There is a PFA with 572 states, two input symbols
% another
with fixed transition matrices,
all of whose
% binary fractions as
entries are multiples of
$1/2^{47}$,
and with a single accepting state,
  for which the following question is undecidable\textup:

  Given a probability distribution $\pi\in \mathbb Q^{572}$ whose
  entries are %\emph
  {binary fractions},
is the language recognized by the PFA with starting
distribution $\pi$ and cutpoint $\lambda=\frac14$ nonempty? 
%
%\textup(Here, $e_1$ denotes the first unit vector in $\mathbb R^{11}$, meaning
%that there is a single accepting state.\textup)
\end{restatable}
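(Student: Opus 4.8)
The plan is to reduce the alphabet of the fixed $11$-state PFA from Theorem~\ref{thm-fixed-f}b down to two symbols by a standard block-encoding of letters, paying for it in the number of states. Since $|\mathcal M|=53$ in Theorem~\ref{thm-fixed-f}b, I would encode each of the $53$ matrices $M_\sigma$ as a fixed-length word over $\{a,b\}$; the natural choice is words of length $6$, since $2^6=64\ge 53$. The encoding must be \emph{prefix-free} (in fact a block code of uniform length $6$ already is), so that a product over the two new letters factors unambiguously into blocks and hence into a product of the original $M_\sigma$'s. The input $\pi$ carries over essentially unchanged, and the accepting state is still a single state, so $\eeta=e_1$ and cutpoint $\lambda=\tfrac14$ are preserved.

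The key construction is the transition matrices $A$ and $B$ of the new PFA. I would build a state space that is the disjoint union of six ``layers'', each a copy of the original $11$ states, representing ``how many letters of the current block have been read so far''; this gives $6\cdot 11=66$ states, which is far below $572$, so evidently the real construction has to also track \emph{which} block is being read, i.e. carry along the partial bit-pattern. More precisely, I would take as states the pairs $(p,w)$ where $p\in\{1,\dots,11\}$ is an original state and $w\in\{a,b\}^{\le 5}$ is the prefix of the current block read so far; reading an $a$ or $b$ either extends $w$ (if $|w|<5$) or, when $|w|=5$, completes a $6$-letter block $wv$, looks up $\sigma=\mathrm{code}^{-1}(wv)$ (or the absence of such $\sigma$), applies $M_\sigma$ to the $p$-coordinate, and resets $w$ to the empty word. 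If the $6$-letter block is not a valid code, the automaton moves to a dead state. A quick count: $11\cdot(1+2+4+8+16+32)=11\cdot 63=693$; trimming unreachable partial patterns (only the $53$ used codewords and their prefixes matter) brings this down, and with a single shared dead state one lands at $572$. I would present the precise bookkeeping that yields exactly $572$ as a routine but careful calculation, and verify that all entries of $A$ and $B$ remain multiples of $1/2^{47}$, since the only nonzero entries are either $1$ (deterministic layer-advancing and resets) or copies of entries of the $M_\sigma$, which are multiples of $1/2^{47}$ by Theorem~\ref{thm-fixed-f}b; hence so are all entries of $A$ and $B$, and $A,B$ are stochastic because each original $M_\sigma$ is.

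To finish, I would argue the equivalence of the two emptiness questions. Given the original PFA with input $\pi$, use the \emph{same} distribution $\pi$ on the layer-$0$, empty-prefix copy of the $11$ states in the new PFA (all other new states get probability $0$); then for any word $\sigma_1\cdots\sigma_m$ over the $53$-letter alphabet, feeding the concatenation $\mathrm{code}(\sigma_1)\cdots\mathrm{code}(\sigma_m)\in\{a,b\}^{6m}$ into the new PFA produces exactly the distribution $\pi^T M_{\sigma_1}\cdots M_{\sigma_m}$ on the layer-$0$ copy, so the acceptance probability at $e_1$ is identical. Conversely, any word over $\{a,b\}$ whose length is not a multiple of $6$, or which contains an invalid block, lands mass in the dead state and (because the original start distribution plus the structure forces the accepting state to sit in a completed layer-$0$ copy) cannot exceed the cutpoint beyond what a properly-blocked word achieves; a word of length $6m$ with all blocks valid corresponds to a genuine $53$-letter word. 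Hence the new language is nonempty iff the old one is, and undecidability transfers. The main obstacle I anticipate is purely combinatorial: getting the state count to land on exactly $572$ requires choosing the right mix of shared versus separate bookkeeping states and handling the dead/reset states with care — the conceptual content (block coding preserves acceptance probabilities) is straightforward, but the exact accounting, together with checking that no probability mass is ``lost'' at block boundaries, is where the attention must go.
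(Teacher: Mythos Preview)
Your overall strategy---encode the $53$-letter alphabet over $\{a,b\}$ and let the new PFA decode on the fly---is exactly the paper's approach, and the correctness argument (matched acceptance probabilities on coded words, probability $0$ on ill-formed words) is right in spirit. The gap is in the \emph{choice of code} and the resulting state count.

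You commit to a uniform length-$6$ block code. With that choice you cannot reach $572$ states: for any set of $53$ binary words of length~$6$, the number of proper prefixes of length $j$ is at least $\lceil 53/2^{6-j}\rceil$, giving at least $1+2+4+7+14+27=55$ proper prefixes; hence at least $55\cdot 11=605$ states before you even add a dead state. Your sentence ``trimming \ldots\ one lands at $572$'' is therefore not just hand-wavy but false for this coding scheme.

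The paper instead applies Lemma~\ref{coding} with the variable-length prefix code
\[
\mathtt{b},\ \mathtt{ab},\ \mathtt{aab},\ \ldots,\ \mathtt{a}^{51}\mathtt{b},\ \mathtt{a}^{52}.
\]
Its code tree is a \emph{complete} binary tree (every internal node has two children), so there are exactly $k-1=52$ proper prefixes and \emph{no dead state is needed}: every binary string is a prefix of some concatenation of codewords. The decoder just keeps a counter $i\in\{0,\ldots,51\}$ together with the original state $q\in\{1,\ldots,11\}$, giving precisely $52\times 11=572$ states. The transition matrices $M'_{\mathtt a},M'_{\mathtt b}$ are written explicitly in the lemma; their nonzero entries are either $0$, $1$, or entries of some $M_\sigma$, so the ``multiples of $1/2^{47}$'' claim is immediate. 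The whole proof of Theorem~\ref{thm-fixed-f-2-matrices} is then one line: apply Lemma~\ref{coding} to Theorem~\ref{thm-fixed-f}b.

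So the fix is simple: replace your block code by the unary-style code of Lemma~\ref{coding}. With a complete prefix code on $k$ letters the state blow-up is exactly $k-1$, which is optimal among prefix codes and is what produces the specific number~$572$.
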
%theorem}

\paragraph{More general acceptance.}
If each state $q$ is allowed to have an arbitrary
probability
$\eeta %\eeta
_q %\in [0,1]
$
as an ``acceptance degree''
% \emph{acceptance vector} $\eeta $ to contain arbitrary
%If we allow the \emph{acceptance vector} $\eeta $ to contain arbitrary
%probabilities
instead of just 0 or 1, %zero or one,
we can also turn things
around
and % arrive at the following undecidable problem, where even
fix the
starting distribution $\pi$, but let % acceptance degrees
the values $\eeta _q$
be part of the input.
The following theorem will be proved in Section~\ref{sec:vary-f}.

\goodbreak

\begin{restatable} 
% \begin
{theorem}
{NasuHondaFixedMatricesReverse}
\label{thm-fixed-pi}\
% newline needed otherwise pdftex does not link properly
\begin{enumerate}[\rm (a)]

\item
  There is a fixed set $\mathcal{M}'''$ of 52
positive
stochastic matrices
of size $9\times 9$
and a fixed %positive
starting distribution $\pi$,
all with positive
% binary fractions as
entries that are multiples of
$1/2^{44}$,
  for which the following question is undecidable\textup:

  Given a vector $\eeta \in \mathbb Q^{9}$ whose
  entries are binary fractions from the interval $[\frac14,\frac58]$,
% rational numbers with denominators that are powers
% of~$2$,
  is there a product
  $M_1M_2\ldots M_m$, with $M_j\in\mathcal{M}'''$ for all $j=1,\ldots,m$,
with
$$\pi^T\!M_1M_2\ldots M_m\eeta \ge \tfrac 12\ ?$$

\item 
  There is a fixed set $\mathcal{M}''$ of 52
%rational
stochastic matrices
of size $11\times 11$
and a fixed starting distribution $\pi$,
all of whose
% binary fractions as
entries are multiples of
$1/2^{45}$,
  for which the following question is undecidable\textup:

  Given a vector $\eeta \in \mathbb Q^{11}$ whose
  entries are binary fractions from the interval $[0,1]$,
% rational numbers with denominators that are powers
% of~$2$,
  is there a product
  $M_1M_2\ldots M_m$, with $M_j\in\mathcal{M}''$ for all $j=1,\ldots,m$,
such that
$$\pi^T\!M_1M_2\ldots M_m\eeta > \tfrac 14\ ?$$

\end{enumerate}
\end{restatable}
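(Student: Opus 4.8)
The plan is to reduce Theorem~\ref{thm-fixed-pi} to Theorem~\ref{thm-fixed-f} by a ``role reversal'' trick. In Theorem~\ref{thm-fixed-f} the acceptance vector $\eeta$ is fixed and the start distribution $\pi$ varies over binary fractions; here we want the opposite. The key observation is the identity $\pi^T M_1\cdots M_m \eeta = \eeta^T M_m^T\cdots M_1^T \pi$. So if we transpose every matrix and reverse the order in which we read them, the former start vector $\pi$ becomes the new accepting vector and vice versa. The transpose of a stochastic matrix is column-stochastic, not row-stochastic, so to stay within the class of PFAs one must first conjugate: replace $M$ by $DM^T D^{-1}$ for a suitable positive diagonal matrix $D$ (or, more robustly, embed each transition into a slightly larger stochastic matrix whose row sums are corrected by an extra ``garbage'' coordinate). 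After this manipulation the \emph{input} data of Theorem~\ref{thm-fixed-f} — the distribution $\pi$ with positive binary-fraction entries — becomes the new \emph{input} acceptance vector, whose entries will land in some sub-interval of $[0,1]$ that can be normalized to $[\frac14,\frac58]$ for part~(a) and to $[0,1]$ for part~(b); and the fixed $\eeta$ of Theorem~\ref{thm-fixed-f} becomes the new fixed $\pi$, which after conjugation is still a genuine probability distribution with entries bounded away from $0$ and $1$ in the positive case.

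Concretely, first I would take the PFA from Theorem~\ref{thm-fixed-f}a (resp.\ b): a fixed family $\mathcal M'$ of $18\times18$ (resp.\ $\mathcal M$ of $11\times11$) stochastic matrices with dyadic entries, a fixed $f\in\{0,1\}^{18}$ (resp.\ $e_1$), and an input distribution $\pi$ of positive binary fractions, with the undecidable question $\pi^T M_1\cdots M_m f\ge\frac12$ (resp.\ $>\frac14$). The dimensions in the target theorem ($9\times9$ and $11\times11$) suggest that the construction is not a black-box transpose of Theorem~\ref{thm-fixed-f} verbatim but rather a re-run of the underlying Nasu–Honda/Claus construction in ``reversed'' form, so that the auxiliary coordinates needed to make the original $\pi$ a legal probability vector can be dropped (they were only there to normalize the \emph{input}, which is now the acceptance vector $\eeta$ and need not sum to~$1$). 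I would therefore redo the reduction chain from Section~\ref{sec:nasu-honda-claus}, keeping the matrices fixed and stochastic, but reading off the acceptance probability as $\pi^TM_1\cdots M_m\eeta$ with a fixed $\pi$ (a point mass or a fixed dyadic distribution with entries in $(0,1)$ in the positive case) and with $\eeta$ carrying the problem instance. The arithmetic that in Theorem~\ref{thm-fixed-f} was loaded into $\pi$ — encoding the input word / Turing machine configuration as a binary fraction — is simply transferred verbatim into the coordinates of $\eeta$, which is why the entries of $\eeta$ are again binary fractions, confined to $[\frac14,\frac58]$ in the positive variant (to keep the matrices strictly positive after the normalization) and to all of $[0,1]$ in variant~(b).

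The two parts are then handled exactly as in Theorem~\ref{thm-fixed-f}: part~(b) keeps the PFA convention with strict cutpoint $>\frac14$ and allows zero entries in the matrices and in $\eeta$, whereas part~(a) pays the price of a weak inequality $\ge\frac12$ in exchange for strict positivity of all matrices, of the fixed $\pi$, and of the input $\eeta$ (forcing $\eeta_q\in[\frac14,\frac58]$ rather than $[0,1]$). The denominators $1/2^{44}$ and $1/2^{45}$ are whatever the explicit construction produces; they are one or two powers smaller than the $1/2^{47}$ of Theorem~\ref{thm-fixed-f} precisely because the normalization coordinates that inflated the exponent there are absent in the reversed construction.

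The main obstacle I expect is the bookkeeping around stochasticity under transposition/reversal: ensuring that after ``reversing time'' the matrices are still \emph{row}-stochastic, that the fixed $\pi$ is still a bona fide probability distribution (and strictly positive in part~(a)), and that the free parameter $\eeta$ can absorb the entire problem instance while its entries remain binary fractions inside the prescribed interval. This amounts to choosing the conjugating diagonal matrix $D$ (or the extra sink coordinate) correctly and checking that the separation between ``$\ge\frac12$'' / ``$<\frac12$'' (resp.\ ``$>\frac14$'' / ``$\le\frac14$'') survives the linear change of variables — a finite but somewhat delicate computation, to be carried out in Section~\ref{sec:vary-f}.
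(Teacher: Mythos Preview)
Your transposition idea is the wrong track, and the conjugation step you sketch does not work in general: for an arbitrary row-stochastic $M$, there is no diagonal $D>0$ making $DM^TD^{-1}$ row-stochastic (that would force $D^{-1}\mathbf{1}$ to be a right eigenvector of $M$ for eigenvalue~$1$, which depends on $M$ and cannot be chosen uniformly across all matrices in $\mathcal M$). Adding a ``garbage coordinate'' to fix row sums would break the multiplicative structure you need.

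The paper's route is far simpler and avoids transposition entirely. Recall that in the reduction chain (Section~\ref{sec:MPCP}) the binary-automaton matrices satisfy $B(u)B(u')=B(u'u)$ with \emph{reversed} concatenation. For Theorem~\ref{thm-fixed-f} the paper therefore used the \emph{Reversed} MPCP, so that the input-dependent starting pair $(v_1,w_1)$ becomes the \emph{first} matrix in the product and can be absorbed into~$\pi$. For Theorem~\ref{thm-fixed-pi} the paper simply uses the \emph{unreversed} MPCP (actually the 2MPCP, with a fixed finishing pair as well): now the input-dependent matrix $M^1$ sits at the \emph{right} end of the product $e_1^T M^2 M^{a_{m-1}}\cdots M^{a_2} M^1 \hat f$, and one sets $\pi^T:=e_1^T M^2$ (fixed, since the finishing pair is fixed) and $f:=M^1\hat f$ (variable). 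No matrix is ever transposed.

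The $9$ states do not arise from ``dropping auxiliary coordinates'' of Theorem~\ref{thm-fixed-f}; they come from a separate construction (Section~\ref{sec:9states}) that maintains four binary variables with the two pairs of indistinguishable states merged, giving $3\times3=9$ states and a fixed output vector $\hat f$ with entries in $\{\tfrac14,\tfrac12,\tfrac58\}$. The interval $[\tfrac14,\tfrac58]$ for $f$ is then immediate: $M^1$ is stochastic, so each entry of $f=M^1\hat f$ is a convex combination of the entries of $\hat f$. Part~(b) just adds the two extra states $q_A,q_R$ as before. The matrix count drops from $53$ to $52$ because the fixed finishing pair $(H\texttt{\#\#},\texttt{\#})$ is absorbed into~$\pi$.
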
%theorem}

The distinction between parts (a) and (b) is analogous to
Theorem~\ref{thm-fixed-f}.  This time, part~(a) has an
additional advantage:
In addition to the  positivity of all data in %the matrices
$\mathcal{M}$,
% the starting distribution
$\pi$, 
and% the output vector
~$\eeta$, % being strictly,
the dimension is reduced from 11 to~9.

\paragraph{Uniqueness of solutions.}
We mention that % the constructions of
Theorems~\ref{thm-fixed-f}--\ref{thm-fixed-pi},
%\ref{thm-fixed-f-2-matrices},
can be modified such that the solutions of the constructed matrix
product problem instances
are unique if they exist,
see % the remark after the proof of Theorem~\ref{thm-fixed-pi}
 Section~\ref{sec:unique}.
In other words, we are guaranteed that the
language recognized by the PFA contains at most one word.  This extension
requires a slightly larger number of matrices with larger denominators
in its entries.

% Uniqueness of solutions holds also for
% Theorem~\ref{thm-variable-matrices}a;
%  for Theorem~\ref{thm-variable-matrices}b,
% we don't see how uniqueness can be achieved without allowing larger matrices.

\paragraph{Smaller transition matrices.}

In 1981, Volker Claus \cite{claus81} investigated
what he called the
$(n, k)$-bounded Emptiness Problem, for PFAs with at most $n$ states
and an alphabet of size at most $k$.
He derived the undecidability
for certain parameter pairs
$(n, k)$
from the PCP. In 1981, it was known that
the PCP is undecidable with as few as 9 word pairs.
Meanwhile, we know
by results of Neary~\cite{neary:PCP5:2015} from 2015
that 5 word pairs are sufficient.
With this improved bound, the result of Claus reads as follows.

\begin{restatable}[{Claus \cite[Theorem 6(iii), p.~151]{claus81}}]
  {theorem}
  {ClausSmallMatrices}
  \label{thm-claus-9-states}
The PFA Emptiness Problem \eqref{eq:accept}
% with cutpoint~$\lambda$
is undecidable for PFAs with
a deterministic start state,
 a single accepting state, and $5$ positive transition matrices of size
$9\times 9$, and with cutpoint
$\lambda=1/9$.
%The same is true with weak inequality $(\ge1/9)$ as the  acceptance criterion.
\end{restatable}

%In a later, %completely
%independent development,
The same approach
to find
 small matrices for which the
 PFA Emptiness Problem
 is unsolvable
was
used in 2003
by
Blondel and Canterini \cite{blondel-canterini-03}, who
concentrated on instances with \emph{two} matrices, or 
in other words, PFAs with a binary input alphabet.
%one tries to minimize the number of states.
These results were improved by Hirvensalo
\cite{hirvensalo07}\rlap,\footnote{see also
  the technical report \cite{tHirvensalo06a}}
who showed in 2007
that the PFA Emptiness Problem is undecidable
for two transition matrices of size $25\times 25$.
%building and improving on earlier work of
%Blondel and Canterini \cite{blondel-canterini-03}.
%the PFA has 25 states.
%
Substituting the improved bound of Neary~\cite{neary:PCP5:2015} on the number of word pairs for which the
PCP is undecidable, five states can be saved.

\begin{restatable}[%essentially due to Hirvensalo
\cite{hirvensalo07}]
  {theorem}
  {HirvensaloAdaptedSmallMatrices}
  \label{thm-variable-matrices-binary}
The PFA Emptiness Problem \eqref{eq:accept}
  %with cutpoint~$\lambda$
is undecidable for PFAs with two positive transition matrices size
$20\times 20$, a single deterministic start state, a single accepting state, and with cutpoint
$\lambda=1/20$.

The same is true with weak inequality $(\ge1/20)$ as the  acceptance criterion.
% For weak inequality $(\ge\lambda)$ as the  acceptance criterion,
%the same statement holds with $21$ instead of $20$.
\end{restatable}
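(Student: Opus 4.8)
The natural plan is a reduction from the Post Correspondence Problem (PCP), which by Neary's theorem~\cite{neary:PCP5:2015} is undecidable already for instances with only $5$ pairs of words. One may proceed either directly from such an instance or, more transparently, from the $5$-letter automaton of Theorem~\ref{thm-claus-9-states}: a PFA with $9$ states, $5$ positive transition matrices (one per pair), deterministic start state $e_2$, single accepting state $e_1$, and cutpoint $\tfrac19$, whose language is nonempty exactly when the PCP instance is solvable. The task is then to simulate this $5$-letter PFA by a PFA over the binary alphabet $\{a,b\}$, keeping the number of states down to $20$ and paying for the re-encoding with a shift of the cutpoint from $\tfrac19$ to $\tfrac1{20}$.

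The underlying design principle --- common to all automata of this kind --- is to keep the state distribution close to the uniform distribution, whose acceptance probability is exactly $\tfrac1{d}$ for a $d$-state automaton with one accepting state, and to superimpose a perturbation that is ``favourable'' precisely when the input encodes a genuine PCP solution. Concretely I would: (i) fix a binary encoding of the five letters and present a candidate index sequence $i_1\cdots i_m$ as the concatenation $\mathrm{enc}(i_1)\cdots\mathrm{enc}(i_m)$; (ii) equip the $20$-state automaton with a small decoding gadget that parses each code word, together with the $9$-state ``computing block'' that performs the stochastic update $M_{i_j}$ whenever a code word has been completely read; (iii) arrange that every malformed input --- an uncompleted code word, an illegal prefix, and so on --- is absorbed into a part of the automaton that carries no acceptance weight, so that such inputs stay strictly below $\tfrac1{20}$; and (iv) check that on a well-formed input the automaton reproduces, relative to the uniform baseline, the equality test computed by the $5$-letter PFA, so that the acceptance probability exceeds $\tfrac1{20}$ precisely when the underlying index sequence solves the PCP. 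Tallying the decoding states, the computing block and the absorbing part, one verifies that $20$ states suffice and that the cutpoint comes out as $\tfrac1{20}$.

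Positivity of the two $20\times 20$ matrices should come for free from this design: each transition matrix is taken to be a mixture of a ``structured'' stochastic matrix (doing the parsing and the computation) with a small reset-to-uniform component, a multiple of $\tfrac1{20}\mathbf{1}\mathbf{1}^{T}$, applied at every step --- the very mechanism that keeps the distribution near the uniform baseline also keeps every entry bounded away from $0$. For the weak-inequality variant $(\ge\tfrac1{20})$ one needs, in addition, a two-sided gap: the acceptance probability should be at most $\tfrac1{20}-\delta$ on every input when the PCP instance is unsolvable and at least $\tfrac1{20}+\delta$ on some input when it is solvable, for a computable $\delta>0$. This is either visible directly --- the ``bonus'' an accepted word gains over $\tfrac1{20}$ is a fixed rational determined by the encoding sizes --- or can be produced afterwards by feeding the constructed PFA into the separation/boosting machinery of Section~\ref{sec:boost} (Theorem~\ref{dichotomy}), just as for Theorem~\ref{thm-condon-lipton}.

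The main obstacle I anticipate is the interaction of these requirements rather than any one of them in isolation. One must detect \emph{every} malformed input by a ``reset'' that is robust against which of the nine computing states is currently active --- which is what forces the reset to be a rank-one matrix --- and at the same time pin down the contribution of malformed and of unsolved-but-well-formed inputs to lie (strictly, in the $\ge$ case) below $\tfrac1{20}$, spend no more than $20$ states doing so, and still retain a clean two-sided gap at $\tfrac1{20}$. Balancing the code length, the number of decoding states, the absorbing part and the size of the uniform component against the target cutpoint $\tfrac1{20}$ is the delicate step, and it is where I would expect to have to work hardest.
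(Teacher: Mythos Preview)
Your plan has the right shape at a high level, but the order of operations is wrong, and because of that the state count does not close at~$20$.

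The paper does \emph{not} start from the $9$-state stochastic PFA of Theorem~\ref{thm-claus-9-states} and then encode the $5$-letter alphabet in binary. It stays with \emph{integer} $6\times 6$ matrices $\tilde A(v,w)$ as long as possible: (i)~it uses the fact that in Neary's $5$-pair instances the ending pair is used only once, so its matrix can be absorbed into the boundary vector, leaving $4$ integer matrices; (ii)~it does the binary encoding at the integer level, exploiting that row~$6$ of $\tilde A$ is a unit row (an ``absorbing'' state), so the coding of Lemma~\ref{coding} costs only $3\cdot 5+1=16$ states instead of $3\cdot 6=18$; (iii)~only then does it apply Turakainen's two-step conversion (unit boundary vectors $+2$, then $J+\alpha E$ for stochasticity $+2$), landing at $20$ states with cutpoint $1/20$. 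Positivity and the value $1/20$ both come out of this last step automatically.

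If instead you start from the $9$-state stochastic automaton and encode its $5$ letters, Lemma~\ref{coding} already multiplies the state count by~$4$, giving $36$ states, and the cutpoint stays at $1/9$ --- there is no mechanism by which ``re-encoding'' shifts it to $1/20$. The absorbing-state saving is also unavailable once Turakainen's conversion has been applied, because the extra rows and columns of $E_i$ destroy the unit-row structure. So the key missing idea is: postpone the stochastic conversion to the very end, and do the alphabet reduction on the integer matrices, where the absorbing state and the merged ending pair let you reach $16$ before paying the final $+4$.

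For the weak-inequality version, the paper's device is simpler than boosting: with starting vector $(-2,-2,-2,0,0,1)$ the integer expression is $1-2\bigl((v)_3-(w)_3\bigr)^2$, which is never~$0$, so $>1/20$ and $\ge 1/20$ coincide on all inputs. Invoking Theorem~\ref{dichotomy} here would change the automaton and blow up the state count.
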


We give the proofs of these theorems in Section~\ref{sec:thm-variable-matrices}.
We mention that the reduction to a binary input alphabet for the PFA
(i.e., two matrices)
was already considered by Claus~\cite%[Theorem 6(iii), p.~151]
{claus81}, but
his results are superseded by Theorem~\ref{thm-variable-matrices-binary}.
Claus also has results for alphabets of size 3 and 4 % and 6??, ....
\cite[Theorem 6(iii), p.~151]{claus81}.
Conversely,
constructions of PFAs with few states regardless of the
number of matrices 
are implicit in the proofs of \cite{blondel-canterini-03} and
\cite{hirvensalo07}, but they are not as strong as Theorem~\ref{thm-claus-9-states}.

\section{Preface: history and matrix products}
\label{sec:history}

\subsection{Three proofs}
The study of probabilistic finite automata was initiated %ere
% introduced
by
Michael Rabin in 1963~\cite{rabin1963}.
While this was %apparently
an % very
active research area in the 1960's,
%research about PFAs
it is less active today.
The first proof that
PFA Emptiness is undecidable is
due to Masakazu
Nasu and Namio
Honda from 1969
\cite[Theorem 21, p.~270]{NasuHonda1969}.
It
proceeds through a series of lemmas %reductions
with %that involve
tricky constructions,
showing
that more and more classes of % certain
languages, including certain types of context-free languages, can be
recognized by a PFA.
%and it uses %strong
%results from a previous paper of the authors.
Eventually,
 the undecidability of the PFA Emptiness Problem is derived from
 Post's Correspondence Problem (PCP, see Section~\ref{sec:PCP}).
 The proof is reproduced in the final part of a monograph
by Azaria Paz from 1971
  \cite[Theorem 6.17 in Section IIIB, p.~190]
  % \cite[Theorem B.6.17 in Section III, p.~190] % This is what is
  % used in the book
  % \cite[Theorem IIIB.6.17] 
  {paz71}.
  % , without many changes
The presentation is  
quite close to the original, but very much condensed (and it never
% explicitly
cares to mention the PCP by name!).
I suppose, as the result was still recent %fresh
when the book was written, % appeared,
it was the culmination point of the monograph. It appears as part of the last
theorem of the book, before a brief %short
final chapter on applications and generalizations.
%Almost everywhere in the literature, %has often been
%The result is %has often been
In the literature, as far as I have surveyed it, the result is almost universally
% erroneously
misattributed to Paz, although Paz gave credit
to Nasu and Honda
(not very specifically, however)
in the closing remarks of the chapter
\cite[Section IIIB.7, Bibliographical notes, p.~193]{paz71}.\footnote
% Penultimate sentence of the chapter:
{ ``The rest of that section beginning from Exercise 6.9 and on is based
 on the work of Nasu and Honda (1970).''
% Last sentence: "See also Schützenberger (1962)."
 The year 1970 is a mistake.}
%%
% Wo wird NasuHonda zitiert? Salomaa? this book was too early
%
% Notably, it is absent from the survey of Bukharaev, who
% lists 416 references, covering the
% ``survey of the most important results in the theory of probabilistic automata ob-
% tained from the start of the development of the theory through 1974.''
% Bukharaev, R.G. Probabilistic automata. J Math Sci 13, 359–386
% (1980). https://doi.org/10.1007/BF01088986
% Translated from Itogi Nauki i Tekhniki. Teoriya Veroyatnostei,
% Matematicheskaya Statistika, Teoreticheskaya Kibernetika, Vol. 15,
% pp. 79–122, 1978.
%
% Bukharaev cites
% Nasu and Honda's paper \cite{NasuHonda1968} from 1968,
% but not \cite{NasuHonda1969} from 1969.
%
% He discusses the textbook
% K. H. Böhling and G. Dittrich, “Endliche stochastische Automaten,
% Vol. 1, Hochschulskripten, No. 6, 766a, Bibliographisches Institut,
% Mannheim-Vienna-Zurich (1972).
% from 1972,
% but not the textbook of Volker Claus from 1971 \cite{claus71}
%
A simplified version of
this proof appears %ed
in a
textbook of
Volker Claus from 1971 \cite[Satz 28, p.~157]{claus71}
in German, with proper attribution to
 Nasu and Honda.

 A second proof is due to Claus \cite{claus81} from 1981.
It has been practically forgotten until now.
 It also
 takes the PCP as the starting point, but it proceeds via products of
 integer matrices, which were first used in this context by Mike
 Paterson~\cite{paterson70} in 1970.
As mentioned earlier, Claus was interested in
constraints on the number of states and the size of the input alphabet
for which the Emptiness Problem for PFAs
 remains undecidable.
These results
were %independently
later rediscovered and improved by
 Blondel and Canterini \cite{blondel-canterini-03} in 2003
 and further improved by
Hirvensalo in 2007
\cite{hirvensalo07}.

A completely independent proof was sketched by Anne Condon and Richard Lipton in
1989~\cite{condon-lipton-1989}. It arose as an auxiliary % side
result for their
investigation of space-bounded interactive proofs.  Condon and Lipton
based their reduction on the undecidability of the Halting Problem for 2-Counter
Machines (2CM),
see Section~\ref{sec:2c} below.

\subsection{Interlude: Other problems on matrix products}

As the formulation
\eqref{eq:accept}
% and others in Thm 2 and 3
shows,
the PFA Emptiness Problem is %a problem
about products of matrices
that can be taken from a given set $\mathcal{M}$. There are other
problems
of this type, %involving matrix products,
whose
undecidability comes down to PFA Emptiness: For example,
the \emph{joint spectral radius} of a set 
$\mathcal{M}$ %=\{M_1,\ldots,M_k\}$,
of $d\times d$ matrices is
\begin{displaymath}
  \limsup_{m\to\infty}
    \max_{A_1,A_2,\ldots,A_m\in \mathcal{M}}
  \sqrt[m]{
      \| A_1A_2\ldots A_m\|},
  \end{displaymath}
  where $\|\cdot\|$ denotes an arbitrary % matrix
  norm.
In 2000,
Blondel and Tsitsiklis~\cite{blondel2000} proved,
based on the PFA Emptiness Problem,
that it is
undecidable
whether the
joint spectral radius of a finite set of rational matrices exceeds~1.
% OR: (at most 1).

This has recently been generalized in the analysis of the growth rate of
\emph{bilinear systems}, see
Matthieu Rosenfeld~\cite{Rosenfeld21}
and
Vuong Bui
\cite{bui2021growth% LAA
  ,bui23-phd}. % (Ph.D. thesis)
The study of bilinear systems was initiated for a special case
of such a system in
Rote~\cite{rote2019maximum} in the context of a combinatorial
counting problem.
Corresponding decidability questions are discussed in
Rosenfeld~\cite{rosenfeld2022undecidable}
and
Bui~\cite[Chapter~6]{bui23-phd}.
These connections were my motivation for
starting the investigations about the PFA Emptiness Problem.

In fact,
Theorem~\ref{thm-fixed-pi}a,
which strengthens the
undecidability result of
PFA Emptiness to \emph{positive} transition matrices,
can be used to
resolve a conjecture of
%Vuong
Bui % (Ph.D. thesis)
\cite[Conjecture 6.7]{bui23-phd}, by
adapting the reduction of
Blondel and Tsitsiklis~\cite{blondel2000}:
Already for 
% a pair of
two \emph{positive} matrices, % $A, B$ is
it is undecidable to check
if their joint spectral radius % of a pair of positive matrices $A, B$
% is at most
is larger than~$1$.

%see also \cite{bui2022joint}
%  \cite{,bui2021growth2,bui2022growth3} 
%Bui \cite{bui2021growth} LAA, Bui \cite{bui2022growth3}

\subsection{\dots\ back to the proofs of PFA Emptiness:}% \ldots}

In 2000,
Blondel and Tsitsiklis~\cite{blondel2000},
not being aware of the paper of Claus \cite{claus81} from 1981,
could arguably complain that
\emph{a complete proof that
PFA Emptiness is undecidable cannot be found in its
entirety in the published literature}.
Since then, Condon and Lipton's proof
has been published in  sufficient detail in other papers, for example by
Madani, Hanks, and Condon
\cite[Sec.~3.1 % Undecidability of the Emptiness problem.
and Appendix A%. The weak equality test.
]{jair03} in 2003.
%\url{https://www.cs.ubc.ca/~condon/papers/jair02.pdf}
%
Moreover, in the publication list on Anne Condon's homepage,
the entry for the Condon--Lipton conference paper~\cite{condon-lipton-1989} from 1989
links to a
22-page manuscript, dated November 29, 2005.\footnote{\url{https://www.cs.ubc.ca/~condon/papers/condon-lipton89.pdf},
accessed 2024-05-01.}
According to the metadata, the file was generated on that date by
the dvips program from a file called
``journalsub.dvi''.
This manuscript also gives the proof in %great
detail.
 Condon and Lipton's proof, which is based on ideas of Freivalds,
 is conceptually simple and illuminating.
 The current article %paper
 originated from lecture notes about this proof.

Meanwhile, I struggled with
Nasu and Honda's article %publications
and tried to penetrate through its rendition in Paz
\cite{paz71},
which proceeds through a cascade of definitions and lemmas that
stretch
over the whole book.
When I had already acquired a rough understanding of some crucial ideas,
I was lucky to hit upon % (OK according to Langenscheidt) & found
the %a very readable % account of
undecidability % Nasu and Honda's
proof in the
%German
textbook of
% Volker
Claus % from 1971
\cite[Satz 28, p.~154--157]{claus71},
which is considerably simplified.
%helped me to get the overall picture.
% conceptualize %? the missing details.
The result in
\cite%[Satz 28]
{claus71} is weaker, because the number of
input symbols is the number $k$ of word pairs of the PCP, whereas
Nasu and Honda establish undecidability already for an input alphabet
of size~2.
It is, however, easy to
reduce the input alphabet, see Lemma~\ref{coding}.
(Nasu and Honda's technique for achieving this reduction is
considerably more involved, see
Section~\ref{sec:reflection-binary} and
Appendix~\ref{sec:nhp}.)

After finishing a previous version of this note, I learned about
another undecidability proof of the
PFA Emptiness Problem
 by Gimbert and Oualhadj from 2010
\cite[Section~2.1]{gimbert-oualhadj-2010:PFA}.
% \footnote
% {\url{https://hal.science/hal-00456538v3/file/gimbert_oualhadj_probabilistic_automata.pdf}}
Compared to the Nasu--Honda proof,
%but it differs
it treats the 
equality test $\phi(a)=\psi(a)$ (Section~\ref{check:equality})
differently, but otherwise it is similar to the proof
that I present
(Proposition~\ref{=1/4} in
Section \ref {sec:nasu-honda-claus}).
The authors
% somewhat
%misleadingly credit
misattribute the main auxiliary result of their proof to
a paper of Alberto Bertoni from 1975
\cite{Bertoni1975}.
More details are given in Section~\ref{sec:bertoni}.

By tracing the literature back from
\cite{gimbert-oualhadj-2010:PFA},
% cites Claus \cite{claus80-Bull-EATCS},
% which cites  Claus \cite{claus81} as "submitted for publication"
I discovered the paper of Claus \cite{claus81} from 1981,
which is about the Emptiness Problem
expressly for PFAs with few states and with small input alphabet.
It takes the opposite route: Rather than using PFAs to prove results
about matrix products, it starts from problems involving products of
integer matrices and converts them to PFAs

\subsection{Overview}
In this note, % paper,
I try to present the best parts of the three proofs in a
self-contained way.
% The construction
I use slightly
different terminology, and
some details vary % slightly
%is a slight variation of
% differs slightly
from constructions found elsewhere. % in the literature.
%The presentation is % of course
%more leisurely.
% , and in the interest of concreteness,
I have preferred % chosen
concrete formulations with
particular values of the parameters,
illustrating them with examples. Generalizations
to arbitrary parameters %ily small $\eps$ etc.\
are treated as an afterthought.
I have made an effort to streamline the proofs.
In particular, the complete Nasu--Honda--Claus proof leading
to the main undecidability result
of Proposition~\ref{=1/4} takes only 3 pages
(Section \ref {sec:nasu-honda-claus}, pp.~\pageref{sec:binary}--\pageref{=1/4}), and I encourage the reader to jump
directly to this section.
In later parts, I will incrementally introduce new
ideas that decrease the number of states or deal with variants of
the problem, and the treatment
becomes more technical.
For reference, I review the original Nasu--Honda proof in
Appendix~\ref{sec:nhp}.

\subsection{Comparison of the proofs}
The proofs
use different ideas, and they
have different merits:
 Condon and Lipton's proof leads to an arbitrarily large gap between
 accepting and rejecting probabilities
 (Theorem~\ref{dichotomy}
 and Section~\ref{sec:further-boost})\footnote
 {However, there is a general technique by which the acceptance by strict
   inequality $>\lambda$ can be amplified to a gap between $\le 1/2$
   and arbitrarily close to 1,
   see
 Section~\ref{sec:amplification-study} and
   the remark at the end of Section~\ref{equality-testing}.}
 and it is easy to restrict the input alphabet to 2 symbols
 (Theorem~\ref{thm-condon-lipton}).
The number of states is beyond control. 
 While the constructions in
  Condon and Lipton's proof use
a high-level description of a PFA
as a randomized algorithm,
the proof of Nasu and Honda %'s
encourages to work with the transition matrices %more
directly,
and consequently, allows a finer control over the number of states.

Moreover, by looking at the reductions in %greater
detail, one can even show undecidability of the
 Emptiness Problem for a \emph{fixed} PFA with 11 states and an input
 alphabet of size 53, where the only variable input is the starting distribution
(Theorem~\ref{thm-fixed-f}b). %--\ref{thm-fixed-pi}). XXX ...!!
This and similar sharpenings of the undecidability statement %result
are the %only
 contributions of this paper in terms of new results,
and we hope they might find other applications.
A variation of the problem allows as few as 9 states
(Theorem~\ref{thm-fixed-pi}a). %--\ref{thm-fixed-pi}). XXX

An alternative approach is somewhat similar in spirit, but it takes a detour via integer matrices,
which are only in the end converted to stochastic matrices
via Turakainen's
Theorem~\cite{turakainen69} (see the introduction of
Section~\ref{sec:fix-start}).
This also allows as few as 9 states
(Theorem~\ref{thm-claus-9-states}).

The distinction between the two main proof approaches is highlighted for
a particular example, the language
$\{\,\texttt{a}^i\texttt{b}^i\texttt{\#}\mid i\ge0\,\}$,
in Section~\ref{equality-testing}.

\section{The Condon--Lipton proof via  2-counter machines}
\label{sec:2c}

This section presents the proof of Condon and
Lipton~\cite{condon-lipton-1989} from 1989, leading to
Theorem~\ref{thm-condon-lipton}.

A %(deterministic)
\emph{counter machine} %with 2 counters
has
a finite control, %which is
represented by a state $q$ from a finite set~$Q$,
and a number of % 2
nonnegative counters. % $l$ and $r$.
There is a designated
start state and a designated halting % (and accepting)
state.
Such a machine operates as follows. At each step,
it checks which counters are zero. Depending on the outcome of these tests and the
current state $q$, it may
increment or decrement each counter by~1, and it
enters a new state.

A counter machine with as few as two
counters
(a 2CM)
is as powerful as
a Turing machine.
This was first proved by Marvin Minsky~\cite{Minsky61} in 1961 and is by now
 textbook knowledge
\cite[Theorem 7.9]{hopcroft79}.\footnote
% Hopcroft and Ullman, Section 7,8
{\label{simulate-TM-2CM}%
  The usual way to simulate a Turing machine by a 2CM proceeds in
  three easy steps: (i) A two-sided infinite
tape can be simulated by two push-down stacks.
(ii) A push-down stack can be simulated by two counters, interpreting
the stack contents as digits for an appropriate radix that is large
enough to accommodate the stack alphabet;
two counters are necessary to perform multiplication and division by
the radix.
(iii) Any number of counters can be simulated by two counters,
representing the values $a,b,c,d,\ldots$ of the counters as
a product $2^a3^b5^c7^d\ldots$ of prime powers.
See \url{https://en.wikipedia.org/wiki/Counter_machine\#Two-counter_machines_are_Turing_equivalent_(with_a_caveat)},
accessed 2024-04-13.}
The question whether such a 2-counter machine halts
if it is started with both counter values at 0 is undecidable.

%Initially, the counters are zero: $l=r=0$.

Denoting by $q_i,l_i,r_i$ the state and the values of the two counters after $i$ steps,
an accepting computation with $m$ steps can be written as follows:
\begin{displaymath}
  l_0,r_0,q_0,l_1,r_1,q_2,l_2,r_2,q_3, \ldots ,l_{m-1},r_{m-1},q_m
\end{displaymath}
To turn it into an input for a finite automaton, we encode it as a word $A$ over the alphabet 
$Q\cup \{\texttt{0},\texttt{1},\texttt{\#}\}$ with an end marker
\texttt{\#}:
\begin{equation}
  \label{eq:computation} A =
    \texttt{0}^{l_0}\texttt{1}^{r_0}
  q_0\texttt{0}^{l_1}\texttt{1}^{r_1}{q_1}\texttt{0}^{l_2}\texttt{1}^{r_2}{q_2}
  \dots\texttt{0}^{l_{m}}\texttt{1}^{r_{m}}q_m  \texttt{\#}
\end{equation}
There are some conditions
for an accepting computation
that 
a deterministic finite automaton can easily check:
Does the word
conform to this format?
Do the state transitions follow the rules?
Is $l_0={r_0}=0$?
Is
the initial and the final (halting) state correct?
We refer to these checks
as the \emph{formal checks}.

The only thing that a finite automaton cannot check
is the consistency of
the counters, for example, whether $l_{i+1}$ %=
is equal to
$l_i$, or $l_i+1$, or $l_i-1$,
as appropriate.

For this task, we use the probabilistic capacities of the PFA.  If
there is an accepting computation $A$ of the
form \eqref{eq:computation} for the counter machine, we feed this computation as input to the
PFA again and again. In other words, we input the word $A^t$ for a
large enough~$t$. We will set up the PFA in such a way that
there is a strong separation of probabilities: It will
accept this input with probability at least $0.99$.  On the other
hand, if there is no accepting computation, then
% the PFA
every input will be rejected
 with probability at least
$0.99$.

\subsection{The Equality Checker}
\label{sec:eq}

As an auxiliary procedure, we study a PFA that reads words of the form
$\texttt{a}^i\texttt{b}^j\texttt{\#}$. The goal is to ``decide'' whether
$i=j$. We call this procedure the \emph{Equality Checker}.
There are three possible outcomes,
``Different'',
``Same'',
or ``Undecided''.

The PFA simulates a competition between two players $D$ and $S$
(``Different'' and
``Same'', or ``Double'' and ``Sum''), as shown in Figure~\ref
{example-equality-checker}.
There are four unbiased
coins of different colors.
\begin{itemize}
\item Player $D$ flips the red coin twice for each \texttt{a}
  % that is read,
  and the orange coin twice for each \texttt{b}.
\item Player $S$ flips the blue coin and the green coin for each
  input symbol (\texttt{a} or % and each
   \texttt{b}).
\end{itemize}

\begin{figure}[htb]
  \centering
  \includegraphics[scale=1.1]{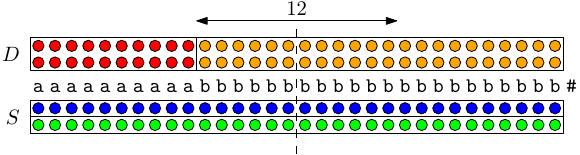}
  \caption{The coin flips for the input
    $\texttt{a}^{10}\texttt{b}^{22}\texttt{\#}$}
    \label{example-equality-checker}
\end{figure}

In addition,
the PFA keeps track of the difference $i-j$ modulo
$%p=
12$.
If $i \not\equiv j \pmod{12}$,
%(which is of course an indication that $i\ne j$),
the PFA declares the outcome to be ``Different''.

If $i \equiv j \pmod{12}$,
the outcome of the game is defined as follows.
We call a coin \emph{lucky} if it
always came up heads.
\begin{itemize}
\item If $D$ has a lucky coin and $S$ has no lucky coin, declare ``Different''.
\item If $S$ has a lucky coin and $D$ has no lucky coin, declare ``Same''.
\item Otherwise, declare ``Undecided''.
\end{itemize}
Since $i$ and $j$ are usually large, lucky %actually
means
\emph{extremely lucky}. Thus, the first two events are very rare, and
the outcome will almost always be ``Undecided''.  The outcome of the
Equality Checker is illustrated in Figure~\ref{fig:EC} and described
in the following lemma.

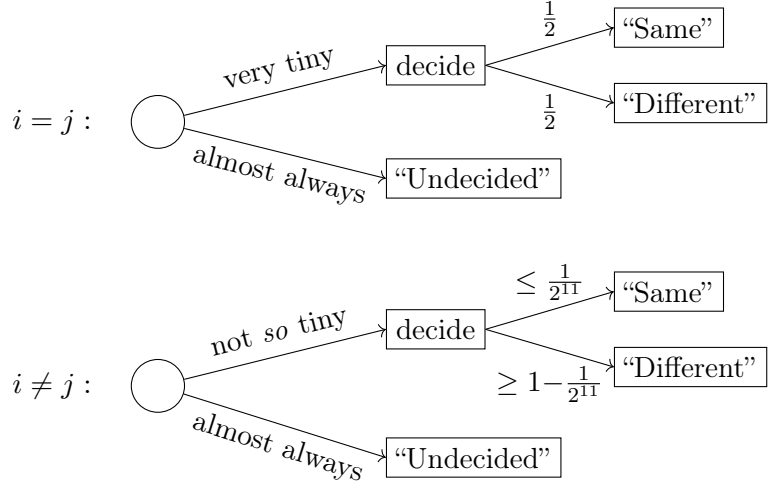
\begin{figure}[htb]
  \centering
\begin{tikzpicture}[
%roundnode/.style={circle, draw=green!60, fill=green!5, very thick, minimum size=7mm},
roundnode/.style={circle, draw=black, minimum size=7mm},
squarednode/.style={rectangle, draw=red!60, fill=red!5, very thick, minimum size=5mm},
boxnode/.style={rectangle, draw=black, minimum size=5mm},
]
% Nodes

\node[boxnode,anchor=west] at (1,2.5)     (undec1)             {``Undecided''};
\node[boxnode,anchor=west] at (4,4.5)     (eq1)             {``Same''};
\node[boxnode,anchor=west] at (4,3.5)     (diff1)             {``Different''};
\node[boxnode,anchor=west] at (1,4)        (dec1)       {decide};
\node[roundnode, label={left:$i=j:\quad $}] at (-2,3.25)        (start1)     {}  ;

% Lines
\draw[->] (start1) -- (dec1.west) node[above,midway,sloped]{very tiny};
\draw[->] (start1) -- (undec1.west) node[below,midway,sloped]{almost always};
\draw[->] (dec1.east) -- (eq1.west) node[midway,above]{$\frac12$};
\draw[->] (dec1.east) -- (diff1.west) node[midway,below]{$\frac12$};

% Nodes
\node[boxnode,anchor=west] at (1,-1.2)     (undec2)             {``Undecided''};
\node[boxnode,anchor=west] at (4,1)     (eq2)             {``Same''};
\node[boxnode,anchor=west] at (4,0)     (diff2)             {``Different''};
\node[boxnode,anchor=west] at (1,0.5)        (dec2)       {decide};
\node[roundnode,label={left:$i\ne j:\quad $}] at (-2,-0.25)        (start2)     {}  ;

% Lines
\draw[->] (start2) -- (dec2.west) node[above,midway,sloped]{not \emph{so} tiny};
\draw[->] (start2) -- (undec2.west) node[below,midway,sloped]{almost always};
\draw[->] (dec2.east) -- (eq2.west) node[midway,above]{$\le\frac1{2^{11}}$};
\draw[->] (dec2.east) -- (diff2.west) node[midway,below,yshift=-1mm]{$\ge1{-}\frac1{2^{11}}$};

%\node[squarednode]      (maintopic)                              {2};
%\node[roundnode]        (uppercircle)       [above=of maintopic] {1};
%\node[squarednode]      (rightsquare)       [right=of maintopic] {``Undecided''};
%\node[roundnode]        (lowercircle)       [below=of maintopic] {4};

%Lines
%\draw[->] (rightsquare.south) .. controls +(down:7mm) and +(right:7mm) .. (lowercircle.east);
\end{tikzpicture}

  \caption{The behavior of the Equality Checker, assuming
  $i \equiv j \pmod{12}$}
  \label{fig:EC}
\end{figure}

\begin{lemma}\ \label{EC-outcome}
  \begin{itemize}
  \item 
  If $i=j$,
  $\Pr[\textup{``Different''}] = \Pr[\textup{``Same''}]$.
  \item 
  If $i\ne j$,
  $\Pr[\textup{``Different''}] \ge 2^{11}\cdot \Pr[\textup{``Same''}]$.
  \end{itemize}
\end{lemma}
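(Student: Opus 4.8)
The plan is to rewrite everything in terms of the four coins being ``lucky'', so that the lemma comes down to one elementary inequality between products. First I would record the exact probabilities: when the automaton has read $\texttt{a}^i\texttt{b}^j$, the red coin has been flipped $2i$ times, the orange coin $2j$ times, and each of the blue and green coins $i+j$ times, all flips being mutually independent; hence
\[
  \Pr[\text{red lucky}]=2^{-2i},\quad \Pr[\text{orange lucky}]=2^{-2j},\quad \Pr[\text{blue lucky}]=\Pr[\text{green lucky}]=2^{-(i+j)} .
\]
Let $p_D:=\Pr[\text{red lucky}\vee\text{orange lucky}]$ be the probability that $D$ owns a lucky coin and $p_S:=\Pr[\text{blue lucky}\vee\text{green lucky}]$ the analogue for $S$. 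If $i\not\equiv j\pmod{12}$, the verdict is ``Different'' with probability $1$, so $\Pr[\text{``Same''}]=0$ and both claims are trivial; hence I may assume $i\equiv j\pmod{12}$ throughout. Since $D$'s two coins are independent of $S$'s two coins, the two events of interest factor as $\Pr[\text{``Different''}]=p_D(1-p_S)$ and $\Pr[\text{``Same''}]=p_S(1-p_D)$.

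By inclusion--exclusion and independence, $p_D=2^{-2i}+2^{-2j}-2^{-2(i+j)}$ and $p_S=2\cdot 2^{-(i+j)}-2^{-2(i+j)}$. The key observation is that the two copies of $2^{-2(i+j)}$ cancel in the difference, leaving $p_D-p_S=2^{-2i}+2^{-2j}-2\cdot 2^{-(i+j)}=(2^{-i}-2^{-j})^2\ge 0$. Thus $p_D\ge p_S$ always, with equality exactly when $i=j$; the first bullet is then immediate, since $i=j$ gives $p_D=p_S$ and so $\Pr[\text{``Different''}]=p_D(1-p_S)=p_S(1-p_D)=\Pr[\text{``Same''}]$.

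For the second bullet, take $i\ne j$ with $i\equiv j\pmod{12}$, so that $d:=|i-j|$ is a positive multiple of $12$ and in particular $d\ge 12$. Since $\{2i,2j\}=\{(i+j)+d,\,(i+j)-d\}$, factoring out $2^{-(i+j)}$ yields
\[
  2^{-2i}+2^{-2j}=2^{-(i+j)}(2^{d}+2^{-d})\ \ge\ 2^{12}\cdot 2^{-(i+j)}=2^{11}\cdot\bigl(2\cdot 2^{-(i+j)}\bigr).
\]
Subtracting the common term $2^{-2(i+j)}$ on the left, and the larger term $2^{11}\cdot 2^{-2(i+j)}$ on the right, only strengthens the inequality, so $p_D\ge 2^{11}p_S$. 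Combining this with $p_D\ge p_S$ (whence $1-p_S\ge 1-p_D$), and arguing multiplicatively rather than by division---note that $p_D=1$ can occur, namely when $i=0$ or $j=0$---we obtain
\[
  \Pr[\text{``Different''}]=p_D(1-p_S)\ \ge\ 2^{11}p_S(1-p_S)\ \ge\ 2^{11}p_S(1-p_D)=2^{11}\Pr[\text{``Same''}],
\]
which is exactly the claim.

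The computations are routine; the one thing that needs care is pinning down the constant. The factor comes out as $2^{11}$ and not $2^{12}$ because $S$'s side contributes the \emph{sum} of two coin probabilities, $2\cdot 2^{-(i+j)}$, whereas $D$'s side contributes $2^{-2i}+2^{-2j}=2^{-(i+j)}(2^{d}+2^{-d})$; after the quadratic terms cancel, the gain is $(2^{d}+2^{-d})/2\ge 2^{11}$ rather than $2^{d}\ge 2^{12}$. This is also the step where the modulus $12$ built into the automaton's bookkeeping is used.
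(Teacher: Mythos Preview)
Your proof is correct. It takes a somewhat different route from the paper's argument, and the comparison is worth noting.

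The paper argues more loosely: it lower-bounds $\Pr[D\text{ lucky}]$ by the probability that the \emph{single} coin with fewer flips is lucky, upper-bounds $\Pr[S\text{ lucky}]$ by a union bound, obtains the ratio $\ge 2^{11}$, and then observes that subtracting the common term $\Pr[D\text{ lucky and }S\text{ lucky}]$ from numerator and denominator only increases a ratio that already exceeds~$1$. For the first bullet it simply appeals to symmetry.

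You instead compute $p_D$ and $p_S$ exactly by inclusion--exclusion, obtain the pleasant identity $p_D-p_S=(2^{-i}-2^{-j})^2$, and derive the sharper statement $p_D\ge 2^{11}p_S$ directly. Your multiplicative chain $p_D(1-p_S)\ge 2^{11}p_S(1-p_S)\ge 2^{11}p_S(1-p_D)$ is cleaner than the paper's ratio manipulation and, as you point out, sidesteps the division-by-zero edge case $\min(i,j)=0$ (where $p_D=1$ and $\Pr[\text{``Same''}]=0$), which the paper's displayed fraction does not handle explicitly. The trade-off is that the paper's cruder bounds make the origin of the constant $2^{11}$ more visually immediate (one coin gains $12$ flips, the union bound costs a factor $2$), whereas in your argument the constant emerges from the algebraic step $(2^d+2^{-d})/2\ge 2^{11}$.
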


%mistake in Madani, Hanks, and Condon
%\cite[Appendix A, p.~32, line 5]{jair03}:
%In the exponent should not be $-2lk$ but $-lk$.
%factor $2^k$ restated on p.~16, line -2$.

\begin{proof}
The first statement is clear, since each coin is flipped $2i$ times, and the situation between $D$ and $S$
is symmetric.

Assume that $i\ne j$.  
If $i \not\equiv j \pmod{12}$, then
$\Pr[\textup{``Same''}]=0$, and we are done.

Otherwise, $|i-j|\ge 12$, and the smaller of $i$ and $j$, say $i$, is
at most
$i\le \frac{i+j}2 - 6$. Then the red coin is flipped at most $2i\le
i+j-12$ times.
Thus,
\begin{align}
    \label{D-lucky}
  Pr[\text{$D$ has a lucky coin}]
 \ge Pr[\text{the red coin was lucky}]
 \ge 1/2^{i+j-12}
\end{align}
The blue and the green coin was each flipped $i+j$ times,
and hence
\begin{multline}
  \label{S-lucky}
  Pr[\text{$S$ has a lucky coin}]
  \le \\%{}&
  Pr[\text{the blue coin was lucky}]
  %\\&
  + Pr[\text{the green coin was lucky}]
 \le 2/2^{i+j}
\end{multline}
The ratio
$%\frac
{  Pr[\text{$D$ lucky}]} / {  Pr[\text{$S$ lucky}]}$
between \eqref{D-lucky} and
 \eqref{S-lucky} is at least $2^{11}$.
From each of these probabilities, we have to subtract the (small)
probability that both $S$ and $D$ have a lucky coin, but this tilts the ratio
between ``Different'' and ``Same'' even more in $D$'s favor.
 Formally:
\begin{displaymath}
  \frac{\Pr[\textup{``Different''}]}{ \Pr[\textup{``Same''}]}
=
  \frac{  \Pr[\text{$D$ lucky}] - \Pr[\text{$D$ lucky and $S$ lucky}]}
  {  \Pr[\text{$S$ lucky}]  - \Pr[\text{$D$ lucky and $S$ lucky}]}
  > % \ge
  \frac{  \Pr[\text{$D$ lucky}]} {  \Pr[\text{$S$ lucky}]}
  \ge 2^{11}\qedhere
\end{displaymath}

\end{proof}

Since the algorithm only needs to count up to 11 and to maintain a few
flags,
it is clear that it
can be
carried out by a PFA.\footnote
{%
As an exercise, the reader may try to work out the required number of states.
%of the PFA.
The %three
outcomes should be represented by a partition of
the states into four % three%
classes,
including a category ``Rejected'' for inputs that don't adhere to
the format $\texttt{a}^i\texttt{b}^j\texttt{\#}$.
%states that are entered when reading the final \texttt{#}
A literal and naive implementation that simply keeps track of every lucky
and unlucky coin and sets a flag when a \texttt{b} is seen
(this is the only thing that needs to be remembered in order to check
the syntax, except for a final state change on reading~\texttt{\#})
would need $2^5\times 12 +4 =
% 32*12+4 = 384+4 =
388$ states.
%
%O+a O+b O-b
%3*2*2*2*12 + 4 = 24*12
%
By excluding impossible combinations of flags and
with some other tricks
like merging states whose distinction is irrelevant
(see Section~\ref{sec:2less}),
%, equivalent automaton with the same result probabilities
I managed to do it with 173 states.
If the PFA can trust that the input has the correct format, 108 states suffice.
}
% R+ means red is still lucky
% R- means red is unlucky, similarly for O,B,G
% a  means only "a"s have been seen so far.
% b  means a "b" has been seen.
% This is the only flag necessary for syntax checking.
% Every state is in category ``Rejected'' until # is read:
% Then three states D,S,U, plus extra absorbing state Rej.

% Player D coins, 5 states: R+a  R+b  R-O+a  R-O+b   R-O-b
% Explanations:
% * With R+, the distinction between O+ and O- is irrelevant.
% * O- can only coexist with b.

% Player S coins, 3 states: G+B+ 1+1- G-B-
% Explanation:
% * G+B- and G-B+ are indistinguishable.

% All states except the final states D,S,U,Rej must be multiplied by 12.
% Exception: (R+b,G-B-) leads to ``Different'' (or Rej). Counting can stop

% 5*3 state pairs 15=14+1
% 14*12+1*1+4 = 14*12+5 = 168+5 = 173

% Variation: PFA can *trust* that the format is correct.
% Input "#" is ignored.
% R+ R-O+ R-O-
% (R+O- is irrelevant. But no saving for early determination of "Different".)
% 3*3*12 = 9*12 = 108

\subsection{Correctness Test: checking a 2CM computation}
\label{sec:check-computation}

Recall that we wish to check a description of a computation of the
 form
\begin{equation}
  \nonumber%\label{eq:computation}
  A =
  \texttt{0}^{l_0}\texttt{1}^{r_0}
  q_0\texttt{0}^{l_1}\texttt{1}^{r_1}{q_1}\texttt{0}^{l_2}\texttt{1}^{r_2}{q_2}
  \dots\texttt{0}^{l_{n}}\texttt{1}^{r_{n}}q_n  \texttt{\#}\ .
\end{equation}
The Equality Checker can be adapted to look at, say, two consecutive
zero blocks
$\texttt{0}^{l_i}$ and
$\texttt{0}^{l_{i+1}}$ of a computation that represent the values of the counter $l$ and check
whether $l_{i+1}={l_{i}}$.
It can also be adapted to check
 $l_{i+1}=l_{i}+1$, or 
 $l_{i+1}=l_{i}-1$, as appropriate for the state
 $q_i$ and the results of the zero test of $l_i$ and $r_i$. The guarantees of Lemma~\ref{EC-outcome} about the
 outcome remain valid.

We run independent Equality Checkers for each relation
between two consecutive values 
 ${l_i}$ and ${l_{i+1}}$, as well as
 ${r_i}$ and ${r_{i+1}}$, of a computation $A$. In total, these are
 $2n$ Equality Checkers.
 In the  schematic drawing of
 Figure~\ref{fig:checker},
the outcomes
of the % $2n$
Equality Checkers
are shown as a row of boxes.
Typically, most of them will be
``Undecided'', with a few interspersed ``Same'' and ``Different'' results
(proportionally much fewer than shown in the first example row). We are interested
in the rare cases when all outcomes are
 ``Same'', or all ``Different''.

 \begin{figure}[htb]
   \centering
   \includegraphics[scale=0.95]{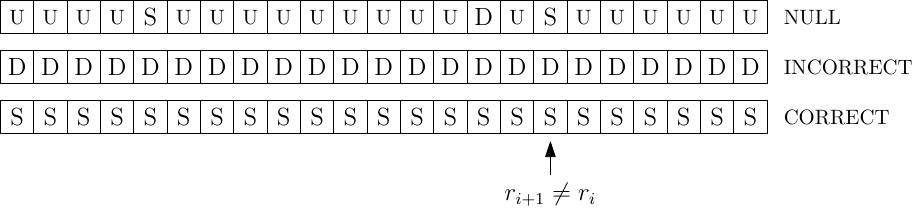}
   \caption{The Correctness Test for a computation $A$, and a
    hypothetical position where equality does not hold.}
   \label{fig:checker}
 \end{figure}
 
 The output of these Equality Checkers % for one computation $A$
 is aggregated into a \emph{Correctness Test} as follows:
We report the output ``CORRECT'' %``SAME'' (``Super-same'') %an $S$
if \emph{all} Equality Checkers report ``Same'', and
we report the output ``INCORRECT'' % ``DIFFERENT'' % a $D$
if \emph{all} Equality Checkers report ``Different''.
Otherwise, we report ``NULL''.

To compute this result,
only four independent Equality Checkers have to run simultaneously:
While reading the input,
the current block lengths $l_i$ and $r_i$ have to be compared with the preceding and
the next values.
Thus, the computation can be implemented by a PFA, with finitely many states.
(Looking more carefully, one sees that actually, only three Equality
Checkers are active at the same time: For example, when reading
$\texttt{1}^{r_i}$, the Equality Check between $\texttt{0}^{l_{i-1}}$
and $\texttt{0}^{l_{i}}$ has
already been completed.)

\begin{lemma}
  Suppose that a computation $A$ of the form \eqref{eq:computation} passes all formal checks.

  If $A$ represents an accepting computation,
    $$\Pr[\textup{``INCORRECT''}] = \Pr[\textup{``CORRECT''}].$$

  If $A$ does not represent an accepting computation,
%    Suppose that the input $A$ passes all formal tests but does not
%    represent an accepting computation. Then
    $$\Pr[\textup{``INCORRECT''}] \ge 2^{11}\cdot
    \Pr[\textup{``CORRECT''}].$$

  \end{lemma}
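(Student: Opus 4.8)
The plan is to reduce everything to Lemma~\ref{EC-outcome} together with the independence of the $2n$ constituent Equality Checkers. Index the checkers by $k=1,\dots,2n$, where checker $k$ tests one of the relations between two consecutive counter values (of the form $l_{i+1}=l_i$, $l_{i+1}=l_i\pm1$, or the analogues for~$r$), and let $D_k$ and $S_k$ denote the events that checker~$k$ outputs ``Different'' and ``Same''. Each checker uses its own private coins and its own counter modulo~$12$; the word $A$ is fixed and only determines, deterministically, which of the two cases of Lemma~\ref{EC-outcome} applies to each~$k$. Hence the pairs $(D_k,S_k)$ for distinct~$k$ are probabilistically independent. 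Since $\textup{CORRECT}=\bigcap_k S_k$ and $\textup{INCORRECT}=\bigcap_k D_k$ by definition, this gives
$$\Pr[\textup{CORRECT}] = \prod_{k=1}^{2n}\Pr[S_k], \qquad \Pr[\textup{INCORRECT}] = \prod_{k=1}^{2n}\Pr[D_k].$$

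\emph{First case: $A$ represents an accepting computation.} Then every one of the $2n$ consistency relations actually holds, so each checker is in the first case of Lemma~\ref{EC-outcome} and $\Pr[D_k]=\Pr[S_k]$ for all~$k$. Multiplying these equalities term by term yields $\Pr[\textup{INCORRECT}]=\Pr[\textup{CORRECT}]$.

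\emph{Second case: $A$ passes all formal checks but does not represent an accepting computation.} The formal checks already certify the format, the legality of the state transitions, the initial condition $l_0=r_0=0$, and the correctness of the first and last states, so the only way left for $A$ to fail is that some counter update is wrong; i.e.\ some checker $k^{*}$ tests a relation that is false, putting it in the second case of Lemma~\ref{EC-outcome} (whose guarantees, as noted, persist for the adapted checkers), so $\Pr[D_{k^{*}}]\ge 2^{11}\Pr[S_{k^{*}}]$. For every other~$k$, Lemma~\ref{EC-outcome} gives either $\Pr[D_k]=\Pr[S_k]$ (if that relation holds) or $\Pr[D_k]\ge 2^{11}\Pr[S_k]\ge\Pr[S_k]$ (if it does not); in all cases $\Pr[D_k]\ge\Pr[S_k]$. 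Multiplying over all~$k$ and keeping the extra factor $2^{11}$ from $k^{*}$ gives $\Pr[\textup{INCORRECT}]\ge 2^{11}\prod_k\Pr[S_k]=2^{11}\Pr[\textup{CORRECT}]$. If $\Pr[S_k]=0$ for some~$k$, then $\Pr[\textup{CORRECT}]=0$ and the inequality holds trivially.

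The argument is essentially bookkeeping once Lemma~\ref{EC-outcome} is available; the points that deserve genuine care are (i) that the $2n$ checkers are truly independent — this rests on the implementation feeding fresh coins to each checker, the fact that only three or four are simultaneously \emph{active} being an implementation detail irrelevant to the probabilities — and (ii) the claim that ``passes all formal checks'' together with ``not accepting'' forces at least one false counter relation, which is precisely the division of labour between the deterministic part of the automaton and the Equality Checkers. I expect (i) to be the step a careful reader most wants spelled out.
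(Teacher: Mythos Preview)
Your proof is correct and follows essentially the same approach as the paper: factor $\Pr[\textup{CORRECT}]$ and $\Pr[\textup{INCORRECT}]$ as products over the $2n$ independent Equality Checkers, then apply Lemma~\ref{EC-outcome} termwise. Your write-up is in fact more careful than the paper's, making the independence explicit and handling the $\Pr[S_k]=0$ edge case separately.
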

  \begin{proof}
    The probability for ``CORRECT'' is the product of the
    probabilities that each Equality Checker results in ``Same'', and
    analogously, for ``INCORRECT'' and ``Different''.

    If $A$ represents an accepting computation, then all
    Equality Checkers are balanced between ``Same'' and ``Different'',
    and the result is clear.
    Otherwise, there is at least one position (marked by an arrow in
    Figure~\ref{fig:checker}) where an error occurs, and the
    probability for
 ``Different'' is at least $2^{11}$ times larger than for
 ``Same'',
according to Lemma~\ref{EC-outcome}.
 In all other Equality Checkers, the probability is
     either balanced or it gives a further advantage for
     ``Different''.
     Thus, the product of the probabilities
     is at least $2^{11}$ times larger
for ``all Different''
     than for ``all Same''.
  \end{proof}

\subsection{Third-level aggregation: processing the whole input}
\label{sec:whole}

An Equality Checker aggregates the results of many coin flips into
an output ``Same'', ``Different'', or ``Undecided''.
We have further aggregated the result of many Equality Checkers
into a
Correctness Test for the word $A$ (with
output ``CORRECT'', ``INCORRECT'', or ``NULL'').
We add yet another level of aggregation in order to
decide whether the PFA should accept
the input word.
As mentioned, we feed the PFA with %sufficiently  many
a huge number of copies of an accepting
computation~$A$.
Each copy of $A$ is subjected to the Correctness Test.

If we take the first definite result
(``CORRECT'' or ``INCORRECT'') as an indication whether to accept or
reject the input, we get an acceptance probability close to $1/2$ on
a valid input. (It is a little less than $1/2$ because of the %small
chance that the input runs out before a definite answer is obtained.)
On the other hand,
if no accepting computation exists,
%there is no accepting computation,
any input must consist of
``fake'' computations.
The algorithm
will recognize this and reject %it % any % (``fake'')input
with probability at least $1-1/2^{11}$.

\subsubsection{Increasing the acceptance probability}
\label{sec:boost}
We modify the rules to make the acceptance probability larger, at
the expense of the rejection probability for fake inputs.  We
determine the overall result % of processing the input is determined
as follows.  As soon as a Correctness Test yields ``CORRECT'', we
accept the input.  However,
in order to reject the input, we wait until we have received 10 answers ``INCORRECT''
before receiving an answer ``CORRECT''.  If the end of
the input is reached before any of these events happens, this also
leads to rejection.
Of course, we also reject the input right away %immediately
if any of the formal
checks fails.

\begin{theorem}\label{dichotomy}
  If there is an accepting computation $A$ for the 2CM, then the PFA accepts the
  input $A^t$, for sufficiently large $t$, with probability more than $0.99$.

  If there is no accepting computation, then the PFA rejects every
  input with probability at least $0.99$.
\end{theorem}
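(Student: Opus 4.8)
The plan is to analyze the two cases separately, using Lemma on the Correctness Test as the key probabilistic input and reducing everything to a clean random-walk / geometric-race computation.

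\textbf{Case 1: an accepting computation $A$ exists.} Feed the PFA the word $A^t$. Each copy of $A$ is subjected to an independent Correctness Test, and since $A$ is accepting, the previous lemma gives $\Pr[\text{``CORRECT''}] = \Pr[\text{``INCORRECT''}] =: p$ for each copy, with the remaining probability $1-2p$ going to ``NULL''. Write $c = \Pr[\text{``CORRECT''} \mid \text{not ``NULL''}] = 1/2$. Ignoring ``NULL'' copies, the sequence of definite outcomes is a fair coin sequence, and the PFA accepts as soon as a ``CORRECT'' appears, but rejects only once it has seen $10$ ``INCORRECT''s before any ``CORRECT''. The probability that the very first $10$ definite outcomes are all ``INCORRECT'' is $2^{-10}$; conditioned on reaching the end of the input without $10$ leading ``INCORRECT''s and without a ``CORRECT'', there is no decision, which also leads to rejection. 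So for an \emph{infinite} supply of copies, $\Pr[\text{reject}] = 2^{-10} < 0.001$, hence $\Pr[\text{accept}] > 1 - 2^{-10}$. For finite $t$: the only way to fail to accept despite not being in the ``$10$ leading INCORRECT'' event is to run out of copies before a ``CORRECT'' is seen; since each copy is ``CORRECT'' with probability $p>0$ independent of the others, the probability of no ``CORRECT'' among $t$ copies is $(1-p)^t \to 0$. Choosing $t$ large enough that $(1-p)^t < 0.01 - 2^{-10}$ gives $\Pr[\text{accept}] > 1 - (2^{-10} + (1-p)^t) > 0.99$, as claimed. (One must also note the formal checks pass on $A^t$ when $A$ passes them — this is the deterministic-automaton part — and that $A^t$ is still parsed as $t$ consecutive computations, which the format check can enforce.)

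\textbf{Case 2: no accepting computation exists.} Now consider an \emph{arbitrary} input word $w$. The PFA first runs the formal checks; if $w$ is not of the form $A_1 A_2 \cdots A_s$ with each $A_k$ syntactically a computation passing the formal checks, it is rejected with probability $1$, so assume $w$ has this form. None of the $A_k$ represents an accepting computation (there are none), so the lemma gives, for each copy, $\Pr[\text{``INCORRECT''}] \ge 2^{11} \Pr[\text{``CORRECT''}]$; in particular, conditioned on a definite outcome, $\Pr[\text{``CORRECT''} \mid \text{definite}] \le 1/(1+2^{11}) < 2^{-11}$. The PFA accepts $w$ only if some copy yields ``CORRECT'' before $10$ ``INCORRECT''s accumulate; a crude union bound over the (at most one) ``CORRECT'' event needed shows $\Pr[\text{accept } w] \le \sum_{k} \Pr[A_k \text{ gives ``CORRECT''}] $ restricted to the relevant prefix — cleaner is to bound it by the probability that, in the sequence of definite outcomes, a ``CORRECT'' occurs before the $10$th ``INCORRECT'', which is at most $\Pr[\text{at least one of the first } 10 \text{ definite outcomes is ``CORRECT''}]$ plus nothing more, i.e. $\le 10 \cdot 2^{-11} < 2^{-7} < 0.01$. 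Hence $\Pr[\text{reject } w] \ge 0.99$ for every input $w$.

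\textbf{Main obstacle.} The delicate point is the bookkeeping in Case 2: ``INCORRECT'' answers from earlier copies must not be forgotten, so the PFA needs a bounded counter (counting up to $10$) tracking ``INCORRECT''s-since-last-reset, and one must check this counter logic composes correctly with the fact that copies that yield ``NULL'' should be skipped rather than causing a reset. Concretely, I would argue that the PFA's decision — accept iff some ``CORRECT'' precedes the $10$th cumulative ``INCORRECT'' — depends only on the subsequence of definite outcomes, and then the analysis reduces to the simple race described above, uniformly over all inputs; the per-copy guarantees of the preceding lemma are exactly strong enough ($2^{11}$ beats $10$ comfortably) to absorb the factor of $10$ and still land below $0.01$. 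The other thing worth spelling out is that ``sufficiently large $t$'' in Case 1 depends on the fixed $p>0$ for the given accepting computation $A$, which is fine since $A$ (hence $p$) is fixed once we fix the 2CM and its accepting run.
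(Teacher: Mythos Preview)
Your proof is correct and follows essentially the same approach as the paper: both cases reduce to a race between ``CORRECT'' and ten ``INCORRECT'' outcomes, with the Correctness-Test lemma supplying the per-copy probability ratio. The only difference is cosmetic: in Case~2 the paper bounds the probability of ten ``INCORRECT'' before a ``CORRECT'' from below by the product $\bigl(2^{11}/(2^{11}+1)\bigr)^{10}\approx 0.995$, whereas you bound the complementary event from above via a union bound $10\cdot 2^{-11}<0.01$; both estimates land comfortably on the right side of $0.99$.
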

\begin{proof}
  If $A$ is an accepting computation, the distribution between
  ``CORRECT'' and ``INCORRECT'' is fair.
  Thus, the probability of receiving 10 
  outputs ``INCORRECT'' before receiving an output ``CORRECT'' is $1/2^{10}<0.001$.
  To this we must add the probability of rejection because the input runs out before receiving
  an
   output ``CORRECT'', but this can be made arbitrarily small by
   increasing~$t$.

  If there is no accepting computation, then
    ``INCORRECT'' has an advantage over
    ``CORRECT'' by a factor at least $2^{11}$.
    If the input runs out before a decision is reached, this is
    in the favor of rejection. 
Otherwise, the probability of receiving 10 
  outputs ``INCORRECT'' before receiving an output ``CORRECT'' is at least  
  \begin{displaymath}
    \left(\frac {2^{11}}{2^{11}+1}\right)^{10}=
  \left(1-\frac 1{2^{11}+1}\right)^{10}
  \ge (1-\tfrac 1{2000})^{10}
% = (1-0.0005)^{10}
  \approx 1-\tfrac 1{200} = 0.995.\qedhere
  \end{displaymath}
%$\ge 0.99$
\end{proof}

If the 2CM halts, there is an accepting computation $A$.
($A$ is unique since the
2CM is deterministic.)
In this situation, the language
recognized by the PFA with cutpoint $\lambda=\frac12$ contains the set $\{\,A^t\mid
t\ge t_0\,\}$ for some large $t_0$. Otherwise, the language is empty.

As a consequence, checking whether the language accepted by a PFA is
empty is undecidable.

\subsubsection{Who is afraid of small probabilities?}
As an exercise, we estimate the necessary number $t$ of
repetitions of~$A$.
Suppose that the accepting computation $A$ has $m$ transitions. Then the
counter values $l_i$ and $r_i$ are also bounded by $m$.
The probability of the outcome ``Same'' in the Equality Checker is roughly
$2^{-m}$,
and the probability that all $2m$
Equality Checkers for the computation $A$ yield ``Same'',
leading to the answer ``CORRECT'',
is roughly
$(2^{-m})^{2m}=4^{-m^2}$.

 We want the probability that none of $t$ experiments
 gets the answer ``CORRECT'' to be $\le 0.009$
 (the difference between the bound $0.001>1/2^{10}$ established in the
 proof of
Theorem~\ref{dichotomy}
 and the target tolerance $0.01$):
 \begin{displaymath}
   (1-4^{-m^2})^t\le 0.009
 \end{displaymath}
 Since
$
1-4^{-m^2} %\le
\approx
\exp(-4^{-m^2})$,
%
%at least as long as $4^{-n^2}m$ is close to 0,
%Thus
we need $t$ to be roughly % of the order
$5\cdot4^{m^2}%\times 5
$.

This dependence on the runtime $m$ of the 2-counter machine does not
appear %to be
so terrible; however,
when considering the overhead of simulating a
Turing machine (see footnote~\ref{simulate-TM-2CM}),
the dependence blows up to %one gets
a triply-exponential growth
in terms of the runtime of a
Turing machine.

 \subsubsection{Boosting the decision probabilities}
\label{sec:further-boost}

We can boost the decision probabilities beyond 0.99 to become
arbitrarily close to 1. We simply run an odd number of copies of the PFA
simultaneously and take a majority vote.

Alternatively, we can adjust the parameters.
The number $K$ of times that we wait for ``INCORRECT''
before rejecting the input
can be increased above $K=10$.
% the acceptance probability in case of an
% empty language of the 2CM can be made arbitrarily small.
As a compensation,
we have to increase the modulus $G$ (we have chosen~$G=12$) by which $i$ and $j$
are compared in the Equality Checker.
The acceptance probability in case of a valid input increases to
become arbitrarily close to $1-1/2^K$, and the rejection probability
for an invalid input is at least $(1-1/2^{G-1})^K$.

In summary,
%As we have pointed out (Section~\ref{sec:further-boost}),
for any $\eps>0$ we can construct the PFA in such a
way that it either accepts \emph{some word} with probability at least
$1-\eps$, or there is \emph{no word} that it accepts with probability larger than
$\eps$.
This does not mean that there cannot be words whose acceptance
probability is between those ranges, for example close to $1/2$.
 Candidates for such words are
the
words $A^t$ where $t$ is slightly too small.\footnote
{%
\label{isolatedcutpoint}%
  In fact, it is impossible to avoid the neighborhood of $1/2$ except
  for very simple languages:
  Rabin~\cite[Theorem~3]{rabin1963} showed in 1963 that
 an open gap interval $(p_1,p_2)$
of positive length, such that the acceptance probability
 never falls in this gap,
 can only exist if, for a cutpoint $\lambda$ in this interval,
the recognized language is regular,
%. Michael O. Rabin 1963, Probabilistic Automata,
%Information and Control 6 230--245. \cite{rabin1963}
see also
\cite[Theorem~2.3 in Section~IIIB, p.~160]{paz71}
or  \cite[\S 3.2.2, %Der Satz von Rabin,
pp.~112--115]{claus71}.
Such a cutpoint $\lambda$ is called an
\emph{isolated cutpoint}.
}

\subsection{Summing up the proof of Theorem~\ref{thm-condon-lipton}}
\label{sec:formalxx}

We have described the algorithm for the PFA verbally as
a probabilistic algorithm, keeping in mind the % particular
finiteness constraints of a finite automaton. Eventually,
this algorithm must be translated into a set of states and transition
matrices.
Theorem~\ref{thm-condon-lipton} puts some extra %technical
constraints on the
PFAs whose emptiness is undecidable.

\CondonLiptonTwoMatrices*

\begin{proof} 
The extra constraints can be easily fulfilled:

(a) We encode the input $A$
with a fixed-length
binary code for the original input alphabet
$Q\cup \{\texttt{0},\texttt{1},\texttt{\#}%,\texttt{*}
\}$. This means that
the set $\mathcal{M}$ can be restricted to only two matrices.
(Lemma~\ref{coding} in
Section~\ref{sec:coding} below
treats this transformation more formally.)

(b)
By padding the input, we can ensure that
the PFA algorithm needs to toss at most one coin per input symbol, and
thus
the entries of the matrices can be restricted to
$0,\frac12,1$.\footnote
{PFAs with this restricted set of probabilities are called
  \emph{simple PFAs} in \cite[Definition~2]{gimbert-oualhadj-2010:PFA}.}
In the algorithm as described, only 16 coin tosses are necessary per
input character
(four coins per Equality Checker running at any point in time).
Thus we simply
pad each codeword in the binary code 
with 15 zeros.

(c) Our algorithm does not need to make any coin flips before
reading the first symbol.
Thus, we can fix the start state to be a deterministic state.

(d) Finally,
a single accepting state is enough:
As soon as the algorithm has decided to accept the input, it will stay
committed to this decision.
The accepting state is an absorbing state, and
 there is another
 absorbing state for rejection.
%(Alternatively, we could allow a
%transition from the accepting state to rejection in case the algorithm
%is still checking and rechecking
%the formal conditions to the very end.)
 % THIS IS NOT COMPLETELY SPECIFIED.
In terms of vectors,
both the starting distribution~$\pi$
and the characteristic vector~$\eeta $ of accepting states are standard unit vectors.
(Since the empty input is not accepted, the accepting state is
distinct from the start state,
and
we can arrange the states so that the acceptance probability is found
in
the upper right corner of the product
$M_1M_2\ldots M_m$.)
\end{proof}
The dimension of the matrices $M_i$ will be huge. It is to the largest extent determined by
the number of states of the 2CM, and this is not under control.

\subsection{History of ideas}
\label{sec:ideas}

Condon and Lipton credit
the main ideas of their proof
to
%Rūsiņš Mārtiņš Freivalds
R{\={u}}si{\c{n}}{\v{s}}
Freivalds~\cite{freivalds81}, who studied the Emptiness Problem for probabilistic \emph{2-way} finite
automata in 1981 (unaware of Nasu and Honda's earlier work).
In particular,
Freivalds developed the idea of
a competition between two players to recognize the language
$\{\,\texttt{a}^i\texttt{b}^i\mid i\ge0\,\}$
(Section~\ref{sec:eq}),
and
 aggregating the results of these competitions into ``macrocompetitions''
 (Section~\ref{sec:check-computation}).
A 2-way automaton can move the input head back and forth over the
input, and thus process the input as often as it wants.
 Freivalds claimed that the Emptiness Problem for
 such
%probabilistic {2-way} finite automata.
 automata
is undecidable~\cite[Theorem~4]{freivalds81};
he gives only a hint that the reduction should be from the PCP
(Post's Correspondence Problem,
see Section~\ref{sec:PCP}),
without any %but gives no
details how to connect ``macrocompetitions''
with the PCP.
I have not been able to come up with an idea how the proof would
proceed.

For our present case of a (1-way)
finite automaton,
the repeated scan of the input is not possible; it
is replaced by providing an input which consists of many
repetitions of the same word.

\section{The Nasu--Honda--Claus proof via Post's Correspondence Problem
}
\label{sec:nasu-honda-claus}

This section presents the proof of
 Nasu and Honda~\cite{NasuHonda1969} from 1969 in the version of Claus
 \cite{claus71} from 1971,
leading to the undecidability results in
Propositions~\ref{ge1/2}--\ref{gt1/4},
which are then strengthened %specialized
to
Theorems~\ref{thm-fixed-f}--% and~
\ref{thm-fixed-pi}
in the rest of the paper.
%in the following sections.

\subsection{The binary PFA}
\label{sec:binary}

For a string $u\in \{\texttt 0,\texttt 1\}^*$, we denote by $(u)_2$
the
numeric %integer
value
of $u$ when it is interpreted as a binary number, and we write $|u|$
for the length of $u$. We define the % following
stochastic matrix
\begin{displaymath}
  B(u) :=
  \begin{pmatrix}
    1-\frac{(u)_2}{2^{|u|}} &
    \frac{(u)_2}{2^{|u|}} \\[5pt]
    1-\frac{(u)_2+1}{2^{|u|}} &
    \frac{(u)_2+1}{2^{|u|}}
  \end{pmatrix}
  \text{, for example }
  B(\texttt{00110}) =
  \begin{pmatrix}
    \frac{26}{32} &
    \frac{6}{32} \\[5pt]
    \frac{25}{32} &
    \frac{7}{32}
  \end{pmatrix}.
\end{displaymath}
These matrices % $B(u)$
fulfill the remarkable % following
 multiplication law
\begin{equation}\label{multiply}
  B(u)B(u') = B(u'u),
\end{equation}
which can be confirmed by
 a straightforward calculation.
Note the reversed order of the factors.

Note that the top right entry
$\frac{(u)_2}{2^{|u|}}$
% $={(u)_2}/{2^{|u|}}$
of the matrix  $B(u)$ is the value $0.u$
when interpreted as a binary
fraction;
in our example, where $u=\texttt{00110}$, we have $0.u =
(0.00110)_2 = \frac 6{32} =
(0.0011)_2$.
We will continue to use the convenient notation
$0.u$ for this.
As we see, trailing zeros don't influence the value
of $0.u$, and we have to careful about this.

\subsection{Post's Correspondence Problem (PCP)}
\label{sec:PCP}

In the \emph{Post Correspondence Problem} (PCP),
we are given a list of pairs of words $(v_1,w_1),
\allowbreak
(v_2,w_2), \ldots,
(v_k,w_k)$. 
%over some %the
%alphabet. % $\{\texttt{0},\texttt{1}\}$.
 The problem is to decide if there is a nonempty sequence
$a_1a_2\ldots a_m$
of indices $a_i \in \{1,2,\ldots,k\}$ such that
\begin{displaymath}
  v_{a_1}v_{a_2}\ldots v_{a_m}
=  w_{a_1}w_{a_2}\ldots w_{a_m}
\end{displaymath}
This is one of the well-known undecidable problems.\footnote
{A reduction from the Halting Problem for Turing
  Machines to a closely related problem,
the \emph{Modified}
Post Correspondence Problem (see %MPCP,
Section~\ref{sec:RMPCP}) is described in detail
in
Sections~\ref{sec:MPCP}--\ref{sec:wordpairs}.}
It is no restriction to fix
the alphabet to $\{\texttt{0},\texttt{1}\}$,
since every alphabet can be encoded in binary.

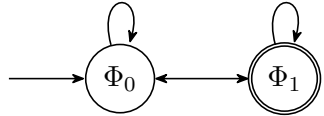
\begin{figure}[htb]
  \centering
  \begin{tikzpicture}[
    statenode/.style={draw,circle, minimum size=3mm},
    finalnode/.style={draw,double,circle, minimum size=5mm},
    style={->,shorten >=0.3pt,>={Stealth[round]},semithick},
    ]

      \node [statenode] (p00) {$\Phi_0$};
    \node [finalnode,right=1.7] at (p00) (p01) {$\Phi_1$};
\node [left=10mm] at (p00.west) (start) {}; % start edge

\draw[->]  (start) -- (p00);

\draw[<->]  (p00) -- (p01);    

%\draw[->] (p11) edge[in=-45,out=0,loop] (p11);
\draw[->] (p01) edge[loop above] (p01);
\draw[->] (p00) edge[loop above] (p00);

\end{tikzpicture}

\caption{The binary automaton with acceptance probability $\phi$}
  \label{fig:2PFA}
\end{figure}

Let us look at % a PCP instance, and take
the first sequence of words $v_1,\ldots, v_k$. We
construct a PFA with input alphabet 
$\{1,2,\ldots,k\}$ and two states $\Phi_0$ and $\Phi_1$,
see Figure~\ref{fig:2PFA}.
The transition matrices
are $M_i = B(v_i)$. We take $\Phi_0$ as the start state
and $\Phi_1$ % the second state
as the accepting state. Then
the acceptance
probability
 of the word $a=a_1a_2\ldots a_m$
 is found in the upper right corner of the product
$M_{a_1}M_{a_{2}}\ldots M_{a_{m-1}}M_{a_{m}}$
 of
the corresponding transition matrices, and 
it follows from \eqref{multiply} that
this %the acceptance
%probability of the word $a=a_1a_2\ldots a_m$
is
\begin{equation}\label{prob-v}
  \phi(a)=0.v_{a_m}v_{a_{m-1}}\ldots v_{a_2}v_{a_1}.
\end{equation}

We can build an analogous PFA for the other sequence of words
$w_1,\ldots, w_k$, and then the acceptance probability of $a$ will be
\begin{equation}\label{prob-w}
  \psi(a)=0.w_{a_m}w_{a_{m-1}}\ldots w_{a_2}w_{a_1}.
\end{equation}
Due to the swapping of the factors in the multiplication law \eqref{multiply},
 the words are concatenated
in \eqref{prob-v} and
\eqref{prob-w}
in reverse order,
but this cosmetic change does
not affect the undecidability %solvability
of the PCP.
Thus the PCP comes down to the question whether there is a nonempty word
$a$ with
equal acceptance probabilities
$\phi(a)=\psi(a)$ in the two PFAs.

We have to be careful because of the
\emph{trailing zeros issue}:
Trailing zeros don't change the
probabilities 
\eqref{prob-v} and
\eqref{prob-w}.
An easy way to circumvent
this problem
is to add a
\texttt{1} after every symbol of every word, thus doubling the length of
the words.
This ensures that there are no trailing zeros that could go unnoticed.

\subsection{Testing equality of probabilities}
\label{check:equality}
For recognizing the words
 $a$
 with $\phi(a)=\psi(a)$,
there is a construction of a PFA that does the job. It is 
based on the identity %formula
\begin{equation}\label{eq:equality-trick}
  \tfrac12 \phi\psi
+  \tfrac14(1-\phi^2)
+  \tfrac14(1-\psi^2)
= \tfrac 12 - \tfrac14 (\phi-\psi)^2.
\end{equation}
%From the right-hand expression one sees that this formula achieves its
%maximum value $\frac12$ if and only if $\phi=\psi$.
We will build a PFA for each term
$ \phi\psi$, $1-\phi^2$, $1-\psi^2$
on the left,
and we will
mix them in the right proportion.
As the right-hand side shows, we have then (almost) achieved our goal:
The acceptance probability
achieves its
maximum value $\frac12$ only for $\phi(a)=\psi(a)$.\footnote
{Section~\ref{sec:bertoni} describes an alternative way to achieve the
  same effect.}

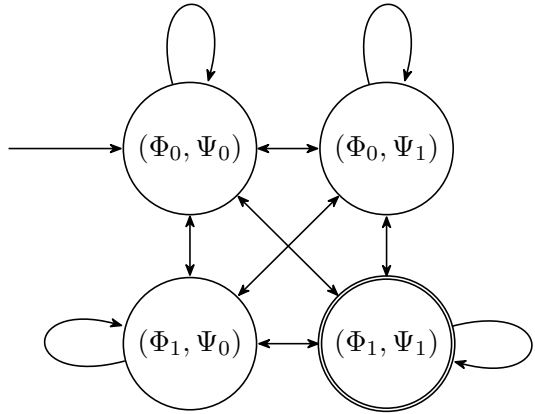
\begin{figure}[htb]
  \centering
  \begin{tikzpicture}[  %%%%% BEGIN 2x2 automaton %%%%
    statenode/.style={draw,circle, minimum size=3mm},
    finalnode/.style={draw,double,circle, minimum size=5mm},
    style={->,shorten >=0.3pt,>={Stealth[round]},semithick},
    ]

      \node [statenode] (p00) {$(\Phi_0,\Psi_0)$};
    \node [statenode,right=1.7] at (p00) (p01) {$(\Phi_0,\Psi_1)$};
    \node [statenode,below=1.7] at (p00) (p10) {$(\Phi_1,\Psi_0)$};
    \node [finalnode,below=1.7] at (p01) (p11) {$(\Phi_1,\Psi_1)$};
\node [left=15mm] at (p00.west) (start) {}; % start edge

%       \node [finalnode, right=3.7] at (p01) (q00) {$(\Phi_0,\Phi_0)$};
%     \node [finalnode,right=1.7] at (q00) (q01) {$\{\Phi_0,\Phi_1\}$};
%     \node [finalnode,below=1.7,densely dashed] at (q00) (q10) {$\{\Phi_1,\Phi_0\}$};
%     \node [statenode,below=1.7] at (q01) (q11) {$(\Phi_1,\Phi_1)$};
% \node [left=10mm] at (q00.west) (startq) {}; % start edge

\draw[->]  (start) -- (p00);    
%\draw[->]  (startq) -- (q00);    

\draw[<->]  (p00) -- (p01);    
\draw[<->]  (p00) -- (p11);    
\draw[<->]  (p00) -- (p10);    
\draw[<->]  (p01) -- (p10);    
\draw[<->]  (p11) -- (p10);    
\draw[<->]  (p11) -- (p01);    

%\draw[->] (p11) edge[in=-45,out=0,loop] (p11);
\draw[->] (p11) edge[loop right] (p11);
\draw[->] (p10) edge[loop left] (p10);
\draw[->] (p01) edge[loop above] (p01);
\draw[->] (p00) edge[loop above] (p00);

% \draw[<->]  (q00) -- (q01);    
% \draw[<->]  (q00) -- (q11);    
% \draw[<->,densely dashed]  (q00) -- (q10);    
% \draw[<->,densely dashed]  (q01) -- (q10);    
% \draw[<->,densely dashed]  (q11) -- (q10);    
% \draw[<->]  (q11) -- (q01);    

% %\draw[->] (q11) edge[in=-45,out=0,loop] (q11);
% \draw[->] (q11) edge[loop right] (q11);
% \draw[->,densely dashed] (q10) edge[loop left] (q10);
% \draw[->] (q01) edge[loop above] (q01);
% \draw[->] (q00) edge[loop above] (q00);

%\node [below=35mm] at (start) {(a)};
%\node [below=35mm] at (startq) {(b)};

\end{tikzpicture} %%%%% END 2x2 automaton %%%%
  
\caption{%(a)
  Acceptance probability $\phi\psi$
  % (b) acceptance probability $1-\phi^2$
}
  \label{fig:22PFA}
\end{figure}

It is straightforward to
build a PFA whose acceptance probability is the product
$\phi(a)\psi(a)$,
see Figure~\ref{fig:22PFA}: This PFA simulates the two PFAs for $v_1,\ldots,v_k$
and for $w_1,\ldots,w_k$ simultaneously and accepts if both % these
PFAs accept. The resulting \emph{product PFA} has four states
$\{\Phi_0,\Phi_1\}\times
\{\Psi_0,\Psi_1\}$.
Similarly, we can
build a PFA with acceptance probability
$\phi(a)^2$: We simulate two \emph{independent} copies of the PFA for
$v_1,\ldots,v_k$.  This leads again to four states.
%However, one can see
%that the states
%$(\Phi_0,\Psi_1)$ and
%$(\Phi_1,\Psi_0)$ of
%Figure~\ref{fig:22PFA}a
% become indistinguishable when $\phi=\psi$. Thus, they can be merged
%
To get acceptance probability $1-\phi(a)^2$, we complement the set of
accepting states.
The PFA for $1-\psi(a)^2$ follows the same principle. Finally, we mix
the three PFAs in the ratio $\frac12:\frac14:\frac14$, as shown in
Figure~\ref{fig:mix}a.

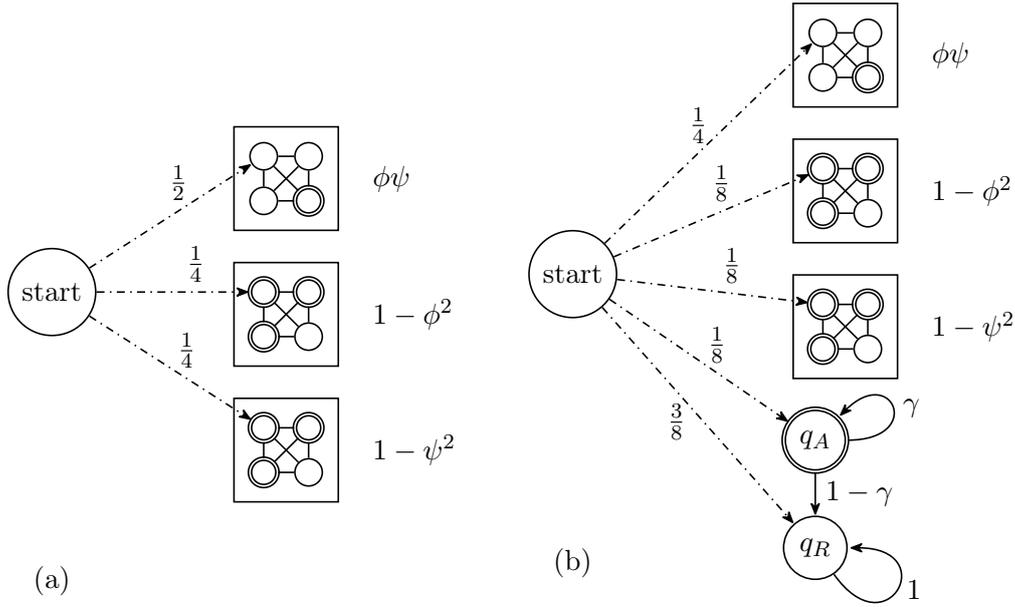
\begin{figure}[htb]
  \centering

\def\nodeboxes {    
    \node [finalnode] at (4,2.2) (phi00) {};
    \node [finalnode,right=0.4] at (phi00) (phi01) {};
    \node [finalnode,below=0.4cm%,densely dashed
    ] at (phi00) (phi10) {};
    \node [statenode,below=0.4cm] at (phi01) (phi11) {};

\draw [tips=false] (phi00) -- (phi01) -- (phi11) -- (phi10) -- (phi00) -- (phi11);
\draw [tips=false] (phi10) -- (phi01);
    
\node[inner sep=2mm,draw,rectangle,fit=(phi00) (phi01) (phi10) (phi11)]
(PHI) {};
\node [right=10mm] at (PHI) {$1-\phi^2$};

    \node [finalnode] at (4,0.4) (psi00) {};
    \node [finalnode,right=0.4] at (psi00) (psi01) {};
    \node [finalnode,below=0.4cm%, densely dashed
    ] at (psi00) (psi10) {};
    \node [statenode,below=0.4cm] at (psi01) (psi11) {};
\draw [tips=false] (psi00) -- (psi01) -- (psi11) -- (psi10) -- (psi00) -- (psi11);
\draw [tips=false] (psi10) -- (psi01);
\node[inner sep=2mm,draw,rectangle,fit=(psi00) (psi01) (psi10) (psi11)]
(PSI) {};
\node [right=10mm] at (PSI) {$1-\psi^2$};

    \node [statenode] at (4,4) (phipsi00) {};
    \node [statenode,right=0.4] at (phipsi00) (phipsi01) {};
    \node [statenode,below=0.4cm] at (phipsi00) (phipsi10) {};
    \node [finalnode,below=0.4cm] at (phipsi01) (phipsi11) {};
\node[inner sep=2mm,draw,rectangle,fit=(phipsi00) (phipsi01) (phipsi10) (phipsi11)]
(PHIPSI) {};
\draw [tips=false] (phipsi00) -- (phipsi01) -- (phipsi11) -- (phipsi10) -- (phipsi00) -- (phipsi11);
\draw [tips=false] (phipsi10) -- (phipsi01);
\node [right=10mm] at (PHIPSI) {$\phi\psi$};
}%%%

\noindent\null     \hfill
\begin{minipage}[t]{0.42\linewidth}
%\fbox
{\begin{tikzpicture}[
    statenode/.style={draw,circle, minimum size=3mm},
    finalnode/.style={draw,double,circle, minimum size=3mm},
    style={->,shorten >=0.3pt,>={Stealth[round]},semithick},
    ]
\nodeboxes
\node [left=22mm,circle,draw] at (phi00) (start) {start};

\node [below=35mm] at (start) {(a)};
\node [below=45mm] at (start) {};

% Lines
\draw[->,dashdotted] (start) -- (phipsi00) node[above,pos=0.55]{$\frac 12$};
\draw[->,dashdotted] (start) -- (psi00) node[above,pos=0.6]{$\frac 14$};
\draw[->,dashdotted] (start) -- (phi00) node[above,pos=0.65]{$\frac 14$};
  \end{tikzpicture}}
     \end{minipage}
     \hfill
     \begin{minipage}[t]{0.45\linewidth}
       %\fbox
       {  \begin{tikzpicture}[
    statenode/.style={draw,circle, minimum size=3mm},
    finalnode/.style={draw,double,circle, minimum size=3mm},
    style={->,shorten >=0.3pt,>={Stealth[round]},semithick},
    ]
\nodeboxes
\node [finalnode,minimum size=5mm] at (3.9,-1.4) (hold) {$q_A$};
\node [statenode,minimum size=5mm,below=10mm] at (hold) (reject) {$q_R$};

%\node [left=30mm,circle,draw] at (PSI) (start) {start};
\node [left=30mm,circle,draw] at (PSI.north) (start) {start};

\node [below=35mm] at (start) {(b)};

% Lines
\draw[->,dashdotted] (start) -- (phipsi00) node[above,pos=0.45]{$\frac 14$};
\draw[->,dashdotted] (start) -- (psi00) node[above,pos=0.6]{$\frac 18$};
\draw[->,dashdotted] (start) -- (phi00) node[above,pos=0.55]{$\frac 18$};
\draw[->,dashdotted] (start) -- (hold) node[above,pos=0.6]{$\frac 1{8}$};
\draw[->,dashdotted] (start) -- (reject) node[below,pos=0.4]{$\frac 3{8}\,$};

\draw[->] (hold) edge[in=45,out=0,loop] node[midway,right]{$\gamma$} (hold);
\draw[->] (hold) -- (reject) node[midway,right]{$1-\gamma$};
\draw[->] (reject) edge[in=0,out=-55,loop] node[midway,right]{$1$} (reject);

  \end{tikzpicture}}
\end{minipage}
\hfill\null
  \caption{
    (a) Acceptance probability $\tfrac 12 - \tfrac14 (\phi-\psi)^2$.
    (b) $\tfrac 14 - \tfrac18 (\phi-\psi)^2+\eps$}
  \label{fig:mix}
\end{figure}

%Figure~\ref{fig:mix}a %, and likewise some of the later figures,
%shows
The dash-dotted arrows
from the start state to % one of
three ``local start states''
inside the square boxes
denote random transitions
that should be thought of as
happening before the algorithm reads its first input symbol.
In the PFA, such a transition is actually carried out
in combination %together
with the subsequent
%part of
% together with
%the
transition for the %first
input symbol inside one of the square boxes,
as part of the transition out of the start state
when reading the first input symbol.
%The same convention is used in other automata that we construct.

More formally, %In terms of transition matrices,
if $M_i$ is a $12\times 12$
transition matrix for the 12 states in the square boxes and
$\pi_0^T=(\frac12,0,0,0, \frac14,0,0,0, \frac14,0,0,0)$ is the
starting distribution, the
$13\times 13$ matrix that includes the start state as the first state is
\begin{displaymath}
  \begin{pmatrix}
    0 & \pi_0^T M_i\\
    0 & M_i
  \end{pmatrix}.
\end{displaymath}

The introduction of the new start
state has the beneficial side effect of eliminating the empty word
$\epsilon$ from the recognized language.
The empty word would otherwise % have
satisfy the equation $\phi(a)=\psi(a)$, because
$\phi(\epsilon)=\psi(\epsilon)=0$.

In total, we have now 13 states, 7 of which are accepting.
As an intermediate undecidability result, we can thus state:
\begin{proposition}
  \label{ge1/2}
  The following problem is undecidable:
  
Given  a finite set $\mathcal{M}$ of stochastic
 matrices
of size $13\times 13$
 with binary fractions as %rational
   entries, % whose denominator is a power of  2,
%a starting probability distribution
% $\pi\in \mathbb Q^{11}$ whose
%  entries are binary fractions, % from the interval $[0,1]$,
%and a vector $\eeta \in \{0,1\}^{11}$,
% rational numbers with denominators that are powers
% of~$2$,
  is there a product
  $M_1M_2\ldots M_m$, with $M_i\in\mathcal{M}$ for all
  $i=1,\ldots,m$, 
such that the sum of the $7$ rightmost entries in the top row is
$ \ge \tfrac 12$?
\qed
\end{proposition}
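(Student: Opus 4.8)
The plan is to reduce from the Post Correspondence Problem, assembling the gadgets developed in Sections~\ref{sec:binary}--\ref{check:equality}. Given a PCP instance $(v_1,w_1),\ldots,(v_k,w_k)$ over $\{\texttt{0},\texttt{1}\}$, I would first apply the trailing-zeros fix of Section~\ref{sec:PCP}: insert a \texttt{1} after every symbol of every $v_i$ and every $w_i$. This preserves solvability, but now no concatenation $v_{a_1}\cdots v_{a_m}$ or $w_{a_1}\cdots w_{a_m}$ can end in \texttt{0}, so such a concatenation is determined by its binary value $0.(\cdot)$. Consequently, by \eqref{prob-v}--\eqref{prob-w}, the instance is a \emph{yes}-instance exactly when there is a \emph{nonempty} word $a=a_1\cdots a_m$ over $\{1,\ldots,k\}$ with $\phi(a)=\psi(a)$.

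Next I would build, over the input alphabet $\{1,\ldots,k\}$, the three four-state product automata: one with acceptance probability $\phi(a)\psi(a)$ (Figure~\ref{fig:22PFA}), one with $1-\phi(a)^2$, and one with $1-\psi(a)^2$, the last two obtained from two independent copies of the automaton for $v_1,\ldots,v_k$, respectively for $w_1,\ldots,w_k$, by complementing the accepting set. Each entry of each of these $4\times4$ transition matrices is a product of two entries of $B(\cdot)$-matrices, hence a binary fraction. I would then glue in a thirteenth state as the start state, wired as in Figure~\ref{fig:mix}a and the $13\times13$ block-matrix construction shown above: on reading the first symbol $i$ it moves to the three ``local start states'' with probabilities $\tfrac12,\tfrac14,\tfrac14$ while simultaneously performing the first $B$-transition; since $\tfrac12$ and $\tfrac14$ are binary fractions, all transition matrices remain binary-fraction matrices. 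Placing the start state first (so $\pi=e_1$) and the $7$ accepting states (one from the $\phi\psi$-block, three from each complemented block) as the last seven of the thirteen states, the acceptance probability of $a$ equals the sum of the $7$ rightmost entries in the top row of $M_{a_1}\cdots M_{a_m}$.

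Finally I would verify correctness: by the mixing ratio and identity~\eqref{eq:equality-trick}, this acceptance probability equals $\tfrac12-\tfrac14(\phi(a)-\psi(a))^2$ for every nonempty $a$, which is $\ge\tfrac12$ if and only if $\phi(a)=\psi(a)$; and the empty word yields value $0$, since the identity matrix has zeros in the rightmost seven top-row entries, so it is correctly excluded. Hence the constructed set $\mathcal{M}$ admits a product whose top row has rightmost-seven-entries sum $\ge\tfrac12$ precisely when the PCP instance is solvable, and undecidability of the PCP gives the claim. The one step that needs genuine care — the main obstacle — is not the automata bookkeeping but the numeric-versus-string-equality equivalence: one must be sure that $\phi(a)=\psi(a)$ as binary fractions really is equivalent to equality of the two string concatenations, which is exactly what the ``append \texttt{1}'' preprocessing, together with careful attention to trailing zeros throughout \eqref{prob-v}--\eqref{prob-w}, secures.
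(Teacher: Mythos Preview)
Your proposal is correct and follows essentially the same approach as the paper: the trailing-zeros fix, the three four-state product automata for $\phi\psi$, $1-\phi^2$, $1-\psi^2$, the mixing via a thirteenth start state in the ratio $\tfrac12:\tfrac14:\tfrac14$, and identity~\eqref{eq:equality-trick} to get acceptance probability $\tfrac12-\tfrac14(\phi-\psi)^2$. Your explicit verification that the empty product gives value~$0$ (and hence excludes the empty PCP solution) matches the paper's remark that the new start state has exactly this beneficial side effect.
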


\subsection{Achieving strict inequality}
\label{sec:strict}

Proposition~\ref{ge1/2} almost describes a PFA,
except that the convention for
a PFA %such an automaton
to recognize a word is strict inequality ($>\lambda$). We thus have to
raise the probability just a tiny %little
bit, without raising any of
the values $<\lambda %1/2
$ to become bigger than~$\lambda$. %$1/2$.

Since all probabilities are rational, this can be done as follows,
see Figure~\ref{fig:mix}b.
In our case,
all % entries in the
transition probabilities % matrix
within the square boxes are multiples of
 some small unit % $\gamma$, which we can set to
 \begin{displaymath}
 \gamma := 4^{-\max\{|v_i|,|w_i|:%\mid
   1\le i \le k\}}.  
 \end{displaymath}

 % HERE, $ \max\{|v_1|,|w_1|\}$ is measured in the binary encoding.
 % in fact, binary is the original here.

% 
 The original PFA is entered with probability 1/2.
The transition probabilities from the % ccc?
start state into the original PFA are now multiples of~$\gamma/8$.
(Remember that such a transition consists of a transition from the start
state along a dash-dotted arrow %as shown in Figure~\ref{fig:mix}b
combined with a transition
inside
a square box.)
We create a new accepting state $q_A$ that is chosen initially with
probability $1/{8}$. Whenever a symbol is read, the PFA stays in that
state with probability $\gamma$, and otherwise it moves to some
absorbing state~$q_R$.
With the remaining probability $3/{8}$, we go
to~$q_R$ directly.

The new part contributes $\eps := \frac 1{8} \gamma ^{|a|}$ to the acceptance
probability of every nonempty word~$a$.
From the old part we have
$\tfrac 14 - \tfrac18 (\phi-\psi)^2$,
and we know that this probability is a multiple of $\frac 18
\gamma^{|a|}=\eps$.
Thus, if this probability is less than $1/4$, it cannot become greater
than $1/4$ by adding~$\eps$.
If it was equal to $1/4$ (i.e., if $a$ is a solution to the PCP), it becomes
greater
than $1/4$.
\begin{proposition}\label{=1/4}
  It is undecidable whether the language recognized by a PFA with 15
  states with cutpoint $\lambda=1/4$ is empty.
  \qed
\end{proposition}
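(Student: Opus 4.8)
The plan is to prove Proposition~\ref{=1/4} by a small modification of the construction behind Proposition~\ref{ge1/2}, turning the weak inequality ``$\ge\tfrac12$'' into the strict inequality ``$>\tfrac14$'' required by the PFA convention, at the cost of two extra states. Concretely, I would keep the $12$-state machine of Figure~\ref{fig:mix} that, conditioned on being entered, accepts with probability $\tfrac12-\tfrac14(\phi-\psi)^2$, but enter it only with total probability $\tfrac12$, split from the new start state in the internal ratio $\tfrac12{:}\tfrac14{:}\tfrac14$ among the $\phi\psi$-, $1-\phi^2$-, and $1-\psi^2$-blocks (i.e.\ with weights $\tfrac14,\tfrac18,\tfrac18$), so that its unconditional contribution becomes $\tfrac14-\tfrac18(\phi-\psi)^2$. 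The remaining mass is spent on a new accepting state $q_A$, entered with probability $\tfrac18$, from which the machine stays put with probability $\gamma:=4^{-L}$ (where $L:=\max\{|v_i|,|w_i|\}$) on each symbol and otherwise falls into a new absorbing rejecting sink $q_R$, which is also entered directly with probability $\tfrac38$; the weights sum to $\tfrac14+\tfrac18+\tfrac18+\tfrac18+\tfrac38=1$. Counting the $12$ block states, the start state, $q_A$, and $q_R$ gives the claimed $15$ states, all transition probabilities remain binary fractions, the start state is folded into the first input transition as before, and (as noted in the text) its presence removes the empty word from the recognized language.

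For correctness I would first compute that a nonempty word $a$ is accepted ``via $q_A$'' with probability exactly $\eps:=\tfrac18\gamma^{|a|}$, so its total acceptance probability is $\tfrac14-\tfrac18(\phi(a)-\psi(a))^2+\eps$. The heart of the argument is the quantization observation: since $\phi(a)=0.v_{a_m}\cdots v_{a_1}$ and $\psi(a)=0.w_{a_m}\cdots w_{a_1}$ are binary fractions with at most $L|a|$ bits, both are integer multiples of $2^{-L|a|}$, hence $(\phi(a)-\psi(a))^2$ is an integer multiple of $4^{-L|a|}=\gamma^{|a|}$, and therefore $\tfrac14-\tfrac18(\phi(a)-\psi(a))^2$ is an integer multiple of $\eps$ in the range $[0,\tfrac14]$. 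If $a$ solves the PCP instance, this term equals $\tfrac14$ and the acceptance probability is $\tfrac14+\eps>\tfrac14$; if $a$ does not, the term is a multiple of $\eps$ strictly below $\tfrac14$, hence at most $\tfrac14-\eps$, so the acceptance probability is at most $\tfrac14$. Thus the PFA accepts $a$ above the cutpoint iff $a$ is a PCP solution, so its recognized language is nonempty iff the PCP instance is solvable — which is undecidable.

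The only genuinely delicate point is the quantization step, and specifically the trailing-zeros issue flagged in Section~\ref{sec:PCP}: I must be sure that $\phi(a)$ and $\psi(a)$ honestly have all $L|a|$ (or the appropriate fewer) bits with meaningful low-order content, so that $\phi(a)=\psi(a)$ is equivalent to the PCP equality $v_{a_1}\cdots v_{a_m}=w_{a_1}\cdots w_{a_m}$ and not merely to equality up to trailing zeros. This is exactly why one first appends a \texttt{1} after every symbol of every $v_i$ and $w_i$, so each block contributes its full bit-length with a nonzero last bit; after that normalization $L$ is still the maximum block length and the multiple-of-$2^{-L|a|}$ claim is clean. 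Everything else is routine verification: the sub-PFAs for $\phi\psi$, $1-\phi^2$, $1-\psi^2$ were already built in Section~\ref{check:equality}, the algebra is the identity \eqref{eq:equality-trick}, and the rest is the arithmetic bookkeeping above.
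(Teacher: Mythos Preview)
Your proposal is correct and matches the paper's own argument essentially line for line: the same two extra states $q_A$ and $q_R$, the same $\tfrac14{:}\tfrac18{:}\tfrac18{:}\tfrac18{:}\tfrac38$ split from the start state, the same $\gamma=4^{-L}$ decay at $q_A$, and the same quantization step showing that $\tfrac14-\tfrac18(\phi-\psi)^2$ is an integer multiple of $\eps=\tfrac18\gamma^{|a|}$. Your explicit justification via the bit-length of $\phi(a)$ and $\psi(a)$ is a clean way to phrase what the paper states more tersely as ``this probability is a multiple of $\tfrac18\gamma^{|a|}$''.
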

This PFA has a fixed start state.

The cutpoint can be changed to any positive rational value
less than % between 0 and
1/2 % (exclusive)
by adjusting the initial split probability between the original PFA
of Figure~\ref{fig:mix}a and the states $q_A$ and $q_R$.
 cutpoints between 1/2 and 1 can be achieved at the expense of adding
 another accepting state.

According to Neary~\cite{neary:PCP5:2015},
 the PCP is already undecidable with as few as five
 word pairs.
Therefore, the size of the input alphabet in
Proposition~\ref{=1/4}, or the number of matrices $\mathcal{M}$ in
Proposition~\ref{ge1/2} can be restricted to~5.
%A better bound is given in Theorem~\ref{thm-variable-matrices}.
 
\subsection{History of ideas}

The binary automaton
(Section~\ref{sec:binary})
and its generalization to
% automata for
other bases than 2 appears already in
Rabin's 1963 paper~\cite{rabin1963}, and it is credited to
 E.~F.~Moore.
%``The following matrices were suggested by E. F. Moore.''
The basic $m$-ary automaton processes
 single digits from $\{0,\ldots,m-1\}$.
%for variable-length input words $u$ is the product
%of several single-digit matrices.
%
The binary automaton matrix in
Section~\ref{sec:binary}
for variable-length input words $u$ is the product
of several such single-digit matrices.
Instead of binary automata,
% Claus \cite[p.~157]{claus71} uses a
Nasu and Honda~\cite{NasuHonda1969} use
ternary (triadic) automata with digits % input alphabet
$\{\mathtt{0},\texttt{1},\texttt{2}\}$, of which only
$\{\texttt{1},\texttt{2}\}$ are used
in order to avoid the trailing zeros
issue.

The equality test for probabilities
 (constructing a PFA to accept words
$a$ with $\phi(a)=\psi(a)$
from two PFAs with acceptance probabilities
$\phi(a)$ and $\psi(a)$, %respectively,
Section~\ref{check:equality}),
 including the method of
  adding a small probability to change $\ge\lambda$
 into $>\lambda$
(Section~\ref{sec:strict})
is given in Nasu and Honda \cite[Lemma 11, pp. 259--260]{NasuHonda1969}.
 The authors %Nasu and Honda
credit
H. Matuura, % (or H. Matsuura, Nagoya University, as written 3x in their
% earlier paper? according to bottom of p.270 it must be the same person,)
Y. Inagaki, and T. Hukumura
for the key ideas (a technical report and a conference record, both
from 1968 and in Japanese)
\cite[p.~261]{NasuHonda1969}.

Claus already % 
observed
\cite[p.~158, remark after the proof of Satz~28]{claus71}
that the construction leads to a bounded number of
states. The details have been worked out above.

As I haven't been able to survey the
rich literature on probabilistic automata,
I may very well
have overlooked some earlier roots of these ideas.

Nasu and Honda~\cite{NasuHonda1969},
in a footnote
to % their
Theorem~21, their main result about the
 undecidability of PFA Emptiness,
 write
 that ``it reduces to a statement in p.~150''
of a paper of
Marcel Schützenberger~\cite{schutzenberger63:certain} from 1963\footnote
{\url{https://monge.univ-mlv.fr/~berstel/Mps/Travaux/A/A/1963-4ElementaryFamAutomataSympThAut.pdf}}
\cite[footnote~6 on p.~270,
 referring to the remark before Lemma 12, p.~261]{NasuHonda1969}.
In that paper, Schützenberger derives some undecidability results,
using, among others, the PCP, but
I am not %haven't been
able to see the connection.

Nasu and Honda prove the undecidability of two more
questions
in connection with the language recognized by a PFA: whether the language is regular,
or whether the language is context-free
\cite[Theorem 22, p.~270]{NasuHonda1969},
%Some of their arguments are carefully phrased in the main theorem in
%order to to support these claims.
see Appendix~\ref{sec:context-free}.

\subsubsection{Gimbert and Oualhadj 2010, following Bertoni 1975}
\label{sec:bertoni}

In 2010, Gimbert and Oualhadj \cite{gimbert-oualhadj-2010:PFA}, after
joining the lamentations over the impenetrability of Paz's
treatment \cite{paz71} and the long and technical proof of Madani,
Hanks, and Condon
\cite%[Sec.~3.1 % Undecidability of the Emptiness problem.
%and Appendix A%. The weak equality test.
%]
{jair03} (see Section~\ref{sec:2c}), presented ``a new simple
undecidability proof of the Emptiness Problem''
as one of the results of their paper.  I review this proof
in greater detail in order to put it in the proper historical
perspective.
I will point where it differs from the proof
shown above, and I wish to correct an erroneous attribution of a
crucial auxiliary result to an early paper of Bertoni.

As a first step,
the proof starts with the translation of the PCP to the equality test
$\phi(a)=\psi(a)$,
as in Sections~\ref{sec:binary}--\ref{sec:PCP}.

%is common to all proofs.

The treatment of
the equality test in the second step is different from Nasu and Honda's
(Section~\ref{check:equality}):
The test $\phi(a)=\psi(a)$ is first reduced to the \emph{Equality Problem}
$\frac12\phi(a)+\frac12(1-\psi(a))=\frac12$,
using complementation and convex combination of PFAs.\footnote
{See also Paz \cite[Lemma 6.1 in Section~IIIB, p.~183]{paz71}.}
%  : ``Every $E$-eventcan be represented in the form $\phi=\frac12$.'']
This %Secondly,
Equality Problem (``Is the acceptance probability of a given PFA \emph{equal} to~$\lambda=\frac12$?\,'')
is reduced to the emptiness question with the
$\ge \lambda$ criterion by the relation
$x=\frac12\iff x(1-x)\ge\frac14$, and from there one can conclude that the Equality Problem
is undecidable.

Putting everything together, this amounts to the formula
\begin{displaymath}
\bigl(  \tfrac12\phi+\tfrac12(1-\psi)\bigr)
\bigl(  \tfrac12\psi+\tfrac12(1-\phi)\bigr)
 = \tfrac 14 - \tfrac14 (\phi-\psi)^2,
\end{displaymath}
giving almost the same result as
formula~\eqref{eq:equality-trick}.\footnote
{The discrepancy %difference
  between the constant 1/4 in this formula and the claimed
  cutpoint $\lambda=1/2$
\cite[Theorem 1]{gimbert-oualhadj-2010:PFA}
  is not addressed in
  \cite{gimbert-oualhadj-2010:PFA}.
}
I find this two-step approach conceptually simpler than the mixture of three automata in
Section~\ref{check:equality}.
If implemented in a straightforward way, it would lead to 16 states.

The third and final step adds a small probability to achieve strict inequality,
in essentially the same was as described in Section~\ref{sec:strict},
using three additional states.

Fijalkow, in a short 8-page note from 2017
\cite[p.~15]{Fijalkow-2017-SIGLOG}, has given a nice half-page
survey %summary
of
this proof.\footnote
{However, Fijalkow shortcuts the third step, the transition from acceptance with
$\ge\lambda$ to acceptance with $>\lambda$, by appealing to an invalid
argument:
Since
the existence of input strings $a$ with acceptance probability
$\phi(a)\ge 1/2$
is undecidable and
the existence of input strings with $\phi(a)= 1/2$ (the Equality Problem)
is undecidable,
the existence of input strings with $\phi(a)> 1/2$
must also be undecidable.}
% Fijalkow
As a complementary result,
he also showed an alternative to the binary automaton
that has acceptance probability $0.u$
%on reading the binary string~$u$
\cite[Figure~1, p.~14]{Fijalkow-2017-SIGLOG} and that satisfies a
similar
 multiplication law as the binary automaton~\ref{multiply},
see Figure~\ref{fig:accept-fijalkow}.
I find it less mysterious than the binary automaton, but it requires 3 states.

\begin{figure}[htb]
  \centering
  \begin{tikzpicture}[
    roundnode/.style={circle, draw=black, minimum size=9mm},
    style={->,shorten >=0.3pt,>={Stealth[round]},semithick},
    ]
    
% Nodes
\node[roundnode] at (0,0) (q0)    {$q_0$}  ;
\node[roundnode,double] at (2.5,-1.9) (Ac)    {$\top$}  ;
\node[roundnode] at (2.5,1.9) (Re)    {$\bot$}  ;
\node[left=10mm] at (q0.west) (start) {}; % start edge

\draw[->] (start) -- (q0);

\draw[->] (q0) -- node[below,pos=0.25,anchor=west]{\ \ $0.u$}  (Ac);
\draw[->] (q0) -- node[above,pos=0.25,anchor=west]{\ \ $1-0.u-1/2^{|u|}$}  (Re);

\draw[->] (Re) edge[loop right,looseness=14] node[midway,right]{$1$} (Re);
\draw[->] (Ac) edge[loop right,looseness=14] node[midway,right]{$1$} (Ac);
\draw[->] (q0) edge[loop above,looseness=10] node[midway,above]{$1/2^{|u|}$} (q0);

  \end{tikzpicture}
  \caption{A simpler automaton for acceptance with a given binary
    probability $0.u$ when reading a particular symbol}
  \label{fig:accept-fijalkow}
\end{figure}
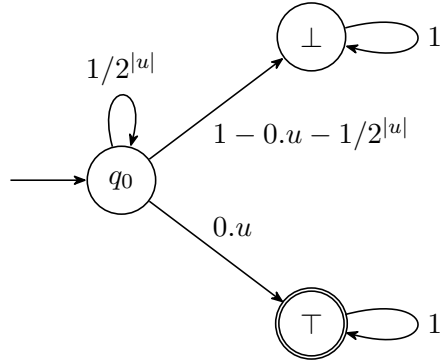

For the first one-and-a-half steps of the proof,
leading to the claim that the Equality Problem is
undecidable
\cite[Proposition~1]{gimbert-oualhadj-2010:PFA},  Gimbert and Oualhadj
% mistakenly
credit a short paper of 
Alberto Bertoni from 1975~\cite{Bertoni1975}.

It is true that
Bertoni
\cite[p.~111, item a)]{Bertoni1975} \emph{did} build a PFA with acceptance probability
 $\frac12\phi(a)+\frac12(1-\psi(a))= \frac12 + (\phi(a)-\psi(a))/2$
 (and with 4 states)
%The 4-states automaton is shown explicitly in Blondel and Canterini 2003
%Undecidable problems for probabilistic automata
%of fixed dimension
%\cite[(9), p.~243]{blondel-canterini-03}, citing [B1=Bertoni].
 as an intermediate step.
However, a statement that the Equality Problem is
undecidable appears nowhere in
Bertoni~\cite{Bertoni1975}.
In fact, the PCP instances of %he
Bertoni are
constructed in such a way that they
are guaranteed to have no (finite)
solutions,
and thus the Equality Problem would always have the answer ``no''.\footnote
{Bertoni's  % His
  goal was to prove that
 it is undecidable whether 1/2 is an isolated cutpoint
(see footnote~\ref{isolatedcutpoint}).
For this, he
constructed, for a given Turing machine $T$, a PCP with word pairs
$(v_i,w_i)$ with the following property: 
The longest common prefixes of the strings
$v_{a_1}v_{a_2}\ldots v_{a_m}$ and $w_{a_1}w_{a_2}\ldots w_{a_m}$ have
bounded length iff the Turing machine $T$ halts
\cite[End of Section~2, p.~110]{Bertoni1975}.
(The first condition is equivalent to 1/2 being an isolated
cutpoint
of $\frac12 + (\phi(a)-\psi(a))/2$
\cite[Theorem~3 of Section~3, p.~111]{Bertoni1975}.)

The classic construction of a PCP that has a solution iff $T$
halts
(Section~\ref{sec:MPCP})
would not work in this context. Clearly, if $T$ runs forever, there will be
arbitrarily long prefixes. However, if $T$ terminates and the 
PCP has a solution, one can create
arbitrarily long common prefixes
 simply by repeating the PCP solution
periodically
 and introducing a deviation at some point. Thus, $\lambda=1/2$ would
 never be isolated.

 As an additional puzzle,
the 2-state automaton that
Bertoni
  specifies \cite[Theorem~1 of Section~3, p.~110]{Bertoni1975}
 very roughly resembles a
  binary automaton. However, as written, it does not perform its claimed
  function.
  I assume he must have meant the binary automaton.
The $k$-ary automaton is written correctly in
a follow-up paper
\cite{BertoniMT77}
% top of p.90,
that deals with further decidability questions
about isolated cutpoints.
% \begin{enumerate}
% \item The problem ``is 1/2 an isolated cutpoint'' is semidecidable.
% \item The problem ``is there an isolated cutpoint'' is undecidable.
% \end{enumerate}
}

Bertoni refers to Rabin~\cite{rabin1963} and Paz \cite{paz71},
 but not to Nasu and Honda~\cite{NasuHonda1969}.

Gimbert and Oualhadj, in an appendix of the technical report
that is the full version of~\cite{gimbert-oualhadj-2010:PFA}%
\rlap,\footnote
{\url{https://hal.science/hal-00456538v3/file/gimbert_oualhadj_probabilistic_automata.pdf},
  p.~13}
present a proof of the undecidability of Equality Problem
that is supposedly due to Bertoni.
This proof is essentially the proof that I have sketched above, using
 binary automata and their mixture to generate the acceptance probability
$\frac12\phi(a)+\frac12(1-\psi(a))$,
except that
Gimbert and Oualhadj (and following them,
Fijalkow
\cite{Fijalkow-2017-SIGLOG}),
neglect to get rid of the empty solution of the PCP.\footnote
{Also, the third step
\cite[Proposition~2]{gimbert-oualhadj-2010:PFA}
of the proof
 depends having
a \emph{simple} PFA, whose transition probabilities are $0$,
$\frac12$, or 1.
 The claim that their construction for the second step yields
 such a PFA
 \cite[Problem~2 and Proposition~1]{gimbert-oualhadj-2010:PFA}
is not substantiated.}
Fijalkow \cite{Fijalkow-2017-SIGLOG}
%summarizes this proof on half a page, but he
adds to the %list of
mystifications by indirectly attributing
the
undecidability of the \emph{Emptiness Problem}
to
Bertoni~% 1975
\cite{Bertoni1975}.\footnote{``[Gimbert and Oualhadj 2010] gave a simple exposition of the
undecidability proof of Bertoni [Bertoni 1974] for the emptiness
problem.''
As for the date: December 1974 is the date of the conference whose proceedings
contain Bertoni's paper~\cite{Bertoni1975} and were published in 1975.
}

\subsection{Saving two states by merging indistinguishable states}
\label{sec:2less}

In the
PFA with acceptance probability
$\phi(a)^2$, where we simulate two {independent} copies of the same
PFA,
%for $v_1,\ldots,v_k$,
we can see
that the states
$(\Phi_0,\Psi_1)$ and
$(\Phi_1,\Psi_0)$ of
Figure~\ref{fig:22PFA}
 become indistinguishable when $\phi=\psi$. Thus, they can be merged
 into one state,
 denoted by $\{\Phi_0,\Phi_1\}$,
 and we reduce the number of states by one, see
 Figure~\ref{fig:22PFA-merged}a--b.

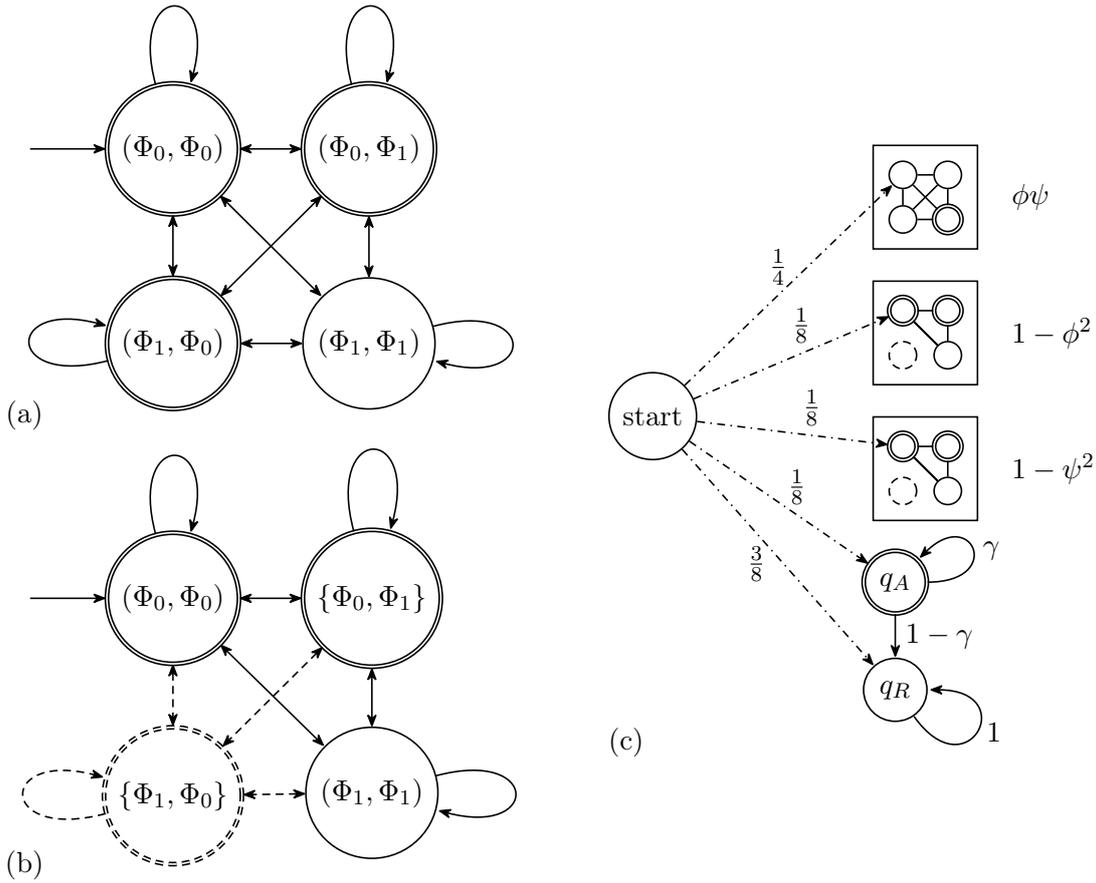
\begin{figure}[htb]
  \centering

  \begin{minipage}[t]{0.5\linewidth}
 %\fbox
  {  \!\!\!\!
 \begin{tikzpicture}[  %%%%% BEGIN twice 2x2 automaton %%%%
    statenode/.style={draw,circle, minimum size=3mm},
    finalnode/.style={draw,double,circle, minimum size=5mm},
    style={->,shorten >=0.3pt,>={Stealth[round]},semithick},
    ]

      \node [finalnode] (p00) {$(\Phi_0,\Phi_0)$};
    \node [finalnode,right=1.7] at (p00) (p01) {$(\Phi_0,\Phi_1)$};
    \node [finalnode,below=1.7] at (p00) (p10) {$(\Phi_1,\Phi_0)$};
    \node [statenode,below=1.7] at (p01) (p11) {$(\Phi_1,\Phi_1)$};
\node [left=10mm] at (p00.west) (start) {}; % start edge

      \node [finalnode, below=2.5] at (p10) (q00) {$(\Phi_0,\Phi_0)$};
    \node [finalnode,right=1.7] at (q00) (q01) {$\{\Phi_0,\Phi_1\}$};
    \node [finalnode,below=1.7,densely dashed] at (q00) (q10) {$\{\Phi_1,\Phi_0\}$};
    \node [statenode,below=1.7] at (q01) (q11) {$(\Phi_1,\Phi_1)$};
\node [left=10mm] at (q00.west) (startq) {}; % start edge

\draw[->]  (start) -- (p00);    
\draw[->]  (startq) -- (q00);

\draw[<->]  (p00) -- (p01);    
\draw[<->]  (p00) -- (p11);    
\draw[<->]  (p00) -- (p10);    
\draw[<->]  (p01) -- (p10);    
\draw[<->]  (p11) -- (p10);    
\draw[<->]  (p11) -- (p01);    

%\draw[->] (p11) edge[in=-45,out=0,loop] (p11);
\draw[->] (p11) edge[loop right] (p11);
\draw[->] (p10) edge[loop left] (p10);
\draw[->] (p01) edge[loop above] (p01);
\draw[->] (p00) edge[loop above] (p00);

\draw[<->]  (q00) -- (q01);    
\draw[<->]  (q00) -- (q11);    
\draw[<->,densely dashed]  (q00) -- (q10);    
\draw[<->,densely dashed]  (q01) -- (q10);    
\draw[<->,densely dashed]  (q11) -- (q10);    
\draw[<->]  (q11) -- (q01);    

%\draw[->] (q11) edge[in=-45,out=0,loop] (q11);
\draw[->] (q11) edge[loop right] (q11);
\draw[->,densely dashed] (q10) edge[loop left] (q10);
\draw[->] (q01) edge[loop above] (q01);
\draw[->] (q00) edge[loop above] (q00);

\node [below=32mm] at (start) {\ (a)};
\node [below=32mm] at (startq) {\ (b)};

\end{tikzpicture} %%%%% END twice 2x2 automaton %%%%
}
\end{minipage}
\hfill
%%% BEGIN RIGHT HALF, 13-state automaton
\begin{minipage}[t]{0.46\linewidth}

\def\nodeboxes {    
    \node [finalnode] at (4,2.2) (phi00) {};
    \node [finalnode,right=0.4] at (phi00) (phi01) {};
    \node [statenode%finalnode
    ,below=0.4cm,densely dashed] at (phi00) (phi10) {};
    \node [statenode,below=0.4cm] at (phi01) (phi11) {};

    \draw [tips=false] (phi00) -- (phi01) -- (phi11) %-- (phi10)
    -- (phi00) -- (phi11);
%\draw [tips=false] (phi10) -- (phi01);
    
\node[inner sep=2mm,draw,rectangle,fit=(phi00) (phi01) (phi10) (phi11)]
(PHI) {};
\node [right=10mm] at (PHI) {$1-\phi^2$};

    \node [finalnode] at (4,0.4) (psi00) {};
    \node [finalnode,right=0.4] at (psi00) (psi01) {};
    \node [statenode%finalnode
    ,below=0.4cm, densely dashed] at (psi00) (psi10) {};
    \node [statenode,below=0.4cm] at (psi01) (psi11) {};
    \draw [tips=false] (psi00) -- (psi01) -- (psi11) -- %(psi10) --
    (psi00) -- (psi11);
%\draw [tips=false] (psi10) -- (psi01);
\node[inner sep=2mm,draw,rectangle,fit=(psi00) (psi01) (psi10) (psi11)]
(PSI) {};
\node [right=10mm] at (PSI) {$1-\psi^2$};

    \node [statenode] at (4,4) (phipsi00) {};
    \node [statenode,right=0.4] at (phipsi00) (phipsi01) {};
    \node [statenode,below=0.4cm] at (phipsi00) (phipsi10) {};
    \node [finalnode,below=0.4cm] at (phipsi01) (phipsi11) {};
\node[inner sep=2mm,draw,rectangle,fit=(phipsi00) (phipsi01) (phipsi10) (phipsi11)]
(PHIPSI) {};
\draw [tips=false] (phipsi00) -- (phipsi01) -- (phipsi11) -- (phipsi10) -- (phipsi00) -- (phipsi11);
\draw [tips=false] (phipsi10) -- (phipsi01);
\node [right=10mm] at (PHIPSI) {$\phi\psi$};
}%%%

% \begin{minipage}[t]{0.45\linewidth}
% %\fbox
% {\begin{tikzpicture}[
%     statenode/.style={draw,circle, minimum size=3mm},
%     finalnode/.style={draw,double,circle, minimum size=3mm},
%     style={->,shorten >=0.3pt,>={Stealth[round]},semithick},
%     ]
% \nodeboxes
% \node [left=30mm,circle,draw] at (phi00) (start) {start};

% \node [below=35mm] at (start) {(a)};
% \node [below=45mm] at (start) {};

% % Lines
% \draw[->] (start) -- (phipsi00) node[above,pos=0.55]{$\frac 12$};
% \draw[->] (start) -- (psi00) node[above,pos=0.6]{$\frac 14$};
% \draw[->] (start) -- (phi00) node[above,pos=0.65]{$\frac 14$};
%   \end{tikzpicture}}
%      \end{minipage}
%      \hfill
       %\fbox
       {  \begin{tikzpicture}[
    statenode/.style={draw,circle, minimum size=3mm},
    finalnode/.style={draw,double,circle, minimum size=3mm},
    style={->,shorten >=0.3pt,>={Stealth[round]},semithick},
    ]
\nodeboxes
\node [finalnode,minimum size=5mm] at (3.9,-1.4) (hold) {$q_A$};
\node [statenode,minimum size=5mm,below=10mm] at (hold) (reject) {$q_R$};

%\node [left=30mm,circle,draw] at (PSI) (start) {start};
\node [left=30mm,circle,draw] at (PSI.north) (start) {start};

\node [below=40mm] at (start) {(c)\hskip 7mm \ };
\node [below=60mm] at (start) {\ }; % move the right part higher

% Lines
\draw[->,dashdotted] (start) -- (phipsi00) node[above,pos=0.45]{$\frac 14$};
\draw[->,dashdotted] (start) -- (psi00) node[above,pos=0.6]{$\frac 18$};
\draw[->,dashdotted] (start) -- (phi00) node[above,pos=0.55]{$\frac 18$};
\draw[->,dashdotted] (start) -- (hold) node[above,pos=0.6]{$\frac 1{8}$};
\draw[->,dashdotted] (start) -- (reject) node[below,pos=0.4]{$\frac 3{8}\,$};

\draw[->] (hold) edge[in=45,out=0,loop] node[midway,right]{$\gamma$} (hold);
\draw[->] (hold) -- (reject) node[midway,right]{$1-\gamma$};
\draw[->] (reject) edge[in=0,out=-55,loop] node[midway,right]{$1$} (reject);

  \end{tikzpicture}}
  \end{minipage}

\caption{(a) Acceptance probability $1-\phi^2$
  with 4 states
(b)~with 3 states.
(c) Acceptance probability
    $\tfrac 14 - \tfrac18 (\phi-\psi)^2+\eps$
with 13 states
}
  \label{fig:22PFA-merged}
\end{figure}

 More precisely, if we denote
the transition probabilities of the original binary automaton by
 \begin{displaymath}%\nonumber %\label{eq:2x2}
   B(u)=\raise 1.5ex\hbox{$
   \begin{array}{c@{\quad}cc}
     &\Phi_0&\Phi_1\\[1pt]
     \Phi_0&
\smash{\lower 1.7ex\llap{$\left(\vbox to 3ex{}\right.$}}             
             p_{00}&p_{01} 
\smash{\lower 1.7ex\rlap{$\left.\vbox to 3ex{}\right),$}}%
     \\
     \Phi_1&p_{10}&p_{11}
   \end{array}$}
 \end{displaymath}
 the 3-state PFA has the following transition matrix:
 %$3\times 3$
% matrix for the au
 \begin{equation}\label{eq:3x3}
   \let\phi=p
\raise 1.6ex\hbox{$   \begin{array}{c@{\ \quad}ccc}
& (\Phi_0,\Phi_0)
&       \{\Phi_0,\Phi_1\}
& (\Phi_1,\Phi_1)
\\[3pt]
(\Phi_0,\Phi_0) &         \phi_{00}^2& 2 \phi_{00} \phi_{01}& \phi_{01}^2\\
     \{\Phi_0,\Phi_1\}&
\llap{\smash{$\left(\vbox to 4.5ex{}\right.$}}
      \phi_{00} \phi_{10}& \phi_{01} \phi_{10}+\phi_{00} \phi_{11}&
     \phi_{01} \phi_{11}
\rlap{\smash{$\left.\vbox to 4.5ex{}\right)$}}
     \\
(\Phi_1,\Phi_1)& \phi_{10}^2& 2 \phi_{10} \phi_{11}& \phi_{11}^2
   \end{array}$}
\end{equation}
When the reduced automaton is in the state $\{\Phi_0,\Phi_1\}$, we can
think of the original 4-state automaton being in one of the states
$(\Phi_0,\Phi_1)$ or
$(\Phi_1,\Phi_0)$,
each with probability $1/2$.

For the PFAs in Propositions~\ref{ge1/2} and%Proposition
~\ref{=1/4},
the number of states
can thus be reduced by 2,
as stated in the following proposition.
Figure~\ref{fig:22PFA-merged}c illustrates the automaton for
Proposition~\ref{ge1/2+}b.
We will show some explicit examples of %the resulting
transition matrices
for this automaton below, in Section~\ref{example}.

\goodbreak

\begin{proposition}
  \label{ge1/2+}\ 
  \begin{enumerate}[\rm (a)]
  \item 
  The following problem is undecidable:
  
Given  a finite set $\mathcal{M}$ of stochastic
 matrices
of size $11\times 11$
 with binary fractions as entries,
%a starting probability distribution
% $\pi\in \mathbb Q^{11}$ whose
%  entries are binary fractions, % from the interval $[0,1]$,
%and a vector $\eeta \in \{0,1\}^{11}$,
  is there a product
  $M_1M_2\ldots M_m$, with $M_i\in\mathcal{M}$ for all
  $i=1,\ldots,m$, 
such that the sum of the 5 rightmost entries in the top row is
$ \ge \tfrac 12$?
\item 
  It is undecidable whether the language recognized by a PFA with 13
  states with cutpoint $\lambda=1/4$ is empty.
  \qed
  \end{enumerate}
\end{proposition}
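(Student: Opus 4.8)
The plan is to derive Proposition~\ref{ge1/2+} from the 13-state PFA of Proposition~\ref{ge1/2} and the 15-state PFA of Proposition~\ref{=1/4} by shaving off two states via the merging observation of Section~\ref{sec:2less}. Recall that the construction of Section~\ref{check:equality} consists of a fresh start state together with three sub-PFAs with acceptance probabilities $\phi(a)\psi(a)$, $1-\phi(a)^2$, $1-\psi(a)^2$, entered with initial probabilities $\tfrac12,\tfrac14,\tfrac14$; by identity \eqref{eq:equality-trick} the overall acceptance probability is $\tfrac12-\tfrac14(\phi(a)-\psi(a))^2$, which is $\ge\tfrac12$ exactly when $\phi(a)=\psi(a)$, i.e.\ exactly when $a$ solves the PCP. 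The sub-PFA for $1-\phi(a)^2$ runs two \emph{independent copies} of the binary automaton of Section~\ref{sec:binary} for the word list $v_1,\dots,v_k$, on the four states $\{\Phi_0,\Phi_1\}\times\{\Phi_0,\Phi_1\}$.

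First I would justify the merge. Because both copies use the same transition matrix $B(v_i)$ for input symbol $i$, and because the sub-PFA is entered in the symmetric deterministic state $(\Phi_0,\Phi_0)$, the distribution over the four states after reading any prefix is invariant under the transposition swapping $(\Phi_0,\Phi_1)$ and $(\Phi_1,\Phi_0)$; hence these two states may be collapsed into a single state $\{\Phi_0,\Phi_1\}$ carrying the \emph{sum} of their probabilities, with transition matrix \eqref{eq:3x3}. One checks routinely that \eqref{eq:3x3} is stochastic and that its entries, being products, and doubled products, of the binary-fraction entries $p_{ij}$ of $B(v_i)$, are again binary fractions. The accepting set of the merged automaton is $\{(\Phi_0,\Phi_0),\{\Phi_0,\Phi_1\}\}$, i.e.\ the complement of the single non-accepting state $(\Phi_1,\Phi_1)$, whose probability is exactly $\phi(a)^2$, so the acceptance probability is still $1-\phi(a)^2$. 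The same reduction applies verbatim to the $1-\psi(a)^2$ sub-PFA.

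After these two merges the three boxes contain $4+3+3=10$ states, of which $1+2+2=5$ are accepting; adding the start state gives $11$ states with the $5$ accepting states placed last, and the acceptance probability, read off as the sum of the $5$ rightmost entries of the top row of $M_1\cdots M_m$ (the start state being $e_1$), is unchanged at $\tfrac12-\tfrac14(\phi(a)-\psi(a))^2$. Undecidability of the PCP therefore yields part~(a). For part~(b) I would apply the strict-inequality device of Section~\ref{sec:strict} to this $11$-state automaton: halve the split probabilities, adjoin an accepting state $q_A$ entered with probability $\tfrac18$ and retained with probability $\gamma$ on each symbol, moving to a rejecting absorbing state $q_R$ otherwise, and route the remaining mass directly to $q_R$. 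Exactly as in Section~\ref{sec:strict}, the added contribution $\eps=\tfrac18\gamma^{|a|}$ is a positive multiple of the quantization unit of $\tfrac14-\tfrac18(\phi-\psi)^2$, so the cutpoint $\tfrac14$ is exceeded iff $\phi(a)=\psi(a)$; this is a PFA with $11+2=13$ states and cutpoint $\tfrac14$, proving~(b).

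The only genuinely delicate point is the symmetry argument underlying the merge: one must check that permutation invariance is preserved by \emph{every} input letter and by the (symmetric) initial state, so that nothing is lost in identifying $(\Phi_0,\Phi_1)$ with $(\Phi_1,\Phi_0)$. Everything else — stochasticity of \eqref{eq:3x3}, the binary-fraction property of its entries, the bookkeeping of accepting states, and the $\eps$-perturbation argument — is routine and already carried out in the cited sections.
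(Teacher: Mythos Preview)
Your proof is correct and follows exactly the paper's approach: the paper treats Proposition~\ref{ge1/2+} as an immediate consequence of the merging observation of Section~\ref{sec:2less} applied to the automata of Propositions~\ref{ge1/2} and~\ref{=1/4}, and you have spelled out precisely those details (the symmetry justification for the merge, the state and accepting-state counts $4+3+3+1=11$ with $5$ accepting, and the addition of $q_A,q_R$ for part~(b)). The paper itself gives no further argument beyond a \qed.
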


\subsection{Saving the start state by using the Modified
  Post Correspondence Problem
}
\label{sec:RMPCP}

We can eliminate the start state by using the \emph{Modified
  Post Correspondence Problem} (MPCP).
It differs from the PCP in one detail: The pair $(v_1,w_1)$ must be used
as the \emph{starting pair}, and it cannot be used in any other place.
In other words, the solution must satisfy the constraints $a_1=1$, and $a_i>1$ for $i=2,\ldots,m$.
The MPCP is often used as an
intermediate problem when reducing the Halting Problem for Turing
machines to the PCP,
and then it takes some extra effort to reduce the
MPCP to the PCP,
see for example
\cite[Lemma 8.5%Theorem 8.8
]{hopcroft79} or
\cite[p.~189]{Sipser}.
In our situation,
the MPCP is actually the more convenient version of the
problem.
% than the PCP.

The idea is to apply the transition for the first letter
$a_1$ right away, and use the resulting distribution on the states as
the starting distribution~$\pi$.

There is still a small technical discrepancy:
In the formulas \eqref{prob-v} and
\eqref{prob-w} for the acceptance probability, the first letter of the
sequence $a$
% to be read will
determines the
\emph{last} pair of words to be concatenated. Thus we must reverse all words
$v_i$ and $w_i$ and turn the MPCP into a
\emph{reversed} MPCP, where the \emph{last} pair in the concatenation
is prescribed
to be the pair $(v_1,w_1)$:
\begin{quote}
  The Reversed Modified Post Correspondence Problem (RMPCP).

We are given a list of pairs of words $(v_1,w_1), (v_2,w_2), \ldots
(v_k,w_k)$ over the alphabet $\{\texttt{0},\texttt{1}\}$ such that
 $v_1$ and $w_1$ end with~\texttt{1}.
 The problem is to decide if there is a % (possibly empty)
 sequence
$a_2,\ldots, a_m$
of indices $a_i \in \{2,\ldots,k\}$ such that
\begin{displaymath}
  v_{a_m}v_{a_{m-1}}\ldots v_{a_2}v_1
=   w_{a_m}w_{a_{m-1}}\ldots w_{a_2}w_1 \ .
\end{displaymath}
\end{quote}
This is of course just a trivial variation of the MPCP.
%In going from one version of the problem to the other, one simply has
%to reverse all words, and,
%(WILL DESCRIBE IN TERMS OF MPCP without mentioning the reversal ...)
%For this problem,
The translation of \eqref{prob-v} and
\eqref{prob-w} can now be applied directly. Moreover, the trailing zeros
issue disappears, since $v_1$ and $w_1$ end with~\texttt{1}.
This extra condition can be easily fulfilled by
appending a \texttt{1} to $v_1$
and $w_1$ if necessary.

\begin{proposition} \label{gt1/4}
  It is undecidable whether the language recognized by a PFA with 12
  states
%  and %, and with
%  a given probability distribution $\pi$ over the start states,
 with cutpoint $\lambda=1/4$ is empty.
\end{proposition}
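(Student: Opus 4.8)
The plan is to build on the 13-state PFA of Proposition~\ref{ge1/2+}b and to delete its start state by exploiting the special shape of the \emph{reversed} MPCP: in every candidate input word the first symbol is forced to be~$\texttt{1}$, so the effect of reading that symbol can be absorbed into the starting distribution, saving one state.

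Concretely, I would start from an RMPCP instance with pairs $(v_1,w_1),\dots,(v_k,w_k)$, where $v_1$ and $w_1$ end with~$\texttt{1}$; I also assume $v_1\neq w_1$, which is no loss of generality, since an RMPCP instance with $v_1=w_1$ is trivially a yes-instance (take the empty continuation, $m=1$), so undecidability persists on the instances with $v_1\neq w_1$. The construction behind Proposition~\ref{ge1/2+}b, applied to the two word sequences $v_1,\dots,v_k$ and $w_1,\dots,w_k$, produces a PFA~$P$ over the alphabet $\{1,\dots,k\}$ with one start state and $12$ further states (the merged three-automaton gadget of Figure~\ref{fig:22PFA-merged}c, together with $q_A$ and $q_R$), whose acceptance probability on a nonempty input word~$c$ equals $\tfrac14-\tfrac18(\phi(c)-\psi(c))^2+\tfrac18\gamma^{|c|}$ and hence exceeds $\tfrac14$ exactly when $\phi(c)=\psi(c)$, with $\phi,\psi$ as in \eqref{prob-v}--\eqref{prob-w}.

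Next, I would define a $12$-state PFA~$P'$ over the reduced alphabet $\{2,\dots,k\}$: keep the transition matrices of~$P$ for the symbols $2,\dots,k$, restricted to the $12$ non-start states, and take as the starting distribution of~$P'$ the state distribution that~$P$ reaches after reading the single symbol~$\texttt{1}$ from its start state. This is again a probability distribution, with binary-fraction entries. By construction the acceptance probability of~$P'$ on a word $a'=a_2a_3\cdots a_m$ over $\{2,\dots,k\}$ equals that of~$P$ on the (always nonempty) word $\texttt{1}\,a_2a_3\cdots a_m$; hence $P'$ accepts $a'$ with probability $>\tfrac14$ exactly when $\phi(\texttt{1}a_2\cdots a_m)=\psi(\texttt{1}a_2\cdots a_m)$, that is, when $v_{a_m}\cdots v_{a_2}v_1=w_{a_m}\cdots w_{a_2}w_1$. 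Here one uses that both concatenations end with~$\texttt{1}$, so that equality of their binary values forces equality of the strings: an odd integer scaled by $2^t$ has exactly $t$ trailing zero bits, so the two string lengths must coincide. Thus $P'$ accepts $a'$ iff $(a_2,\dots,a_m)$ solves the RMPCP; in particular the empty word is accepted iff $v_1=w_1$, which was excluded. Consequently the language of~$P'$ with cutpoint $\lambda=\tfrac14$ is nonempty iff the RMPCP instance has a solution, and undecidability of the RMPCP yields the proposition.

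The delicate points are the direction of the reversal and the empty input. Because the formulas \eqref{prob-v}--\eqref{prob-w} concatenate the words in reverse order, it is the \emph{first} symbol read that supplies the \emph{least significant} block of $\phi$ and $\psi$; this is precisely why the \emph{reversed} MPCP — with the distinguished pair placed last in the concatenation — is the convenient formulation, and why it is the symbol~$\texttt{1}$, not~$a_m$, that gets folded into the starting distribution. The empty input causes no trouble exactly because of the assumption $v_1\neq w_1$. That words over $\{2,\dots,k\}$ which are not RMPCP solutions are genuinely rejected — with acceptance probability strictly below $\tfrac14$ — is inherited unchanged from the $\eps$-perturbation of Section~\ref{sec:strict}, so no new estimate is needed there.
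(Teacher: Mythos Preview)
Your proposal is correct and follows essentially the same route as the paper: absorb the forced first matrix $M_1$ of the RMPCP into the starting distribution $\pi^T=e_1^TM_1$, drop $M_1$ from the alphabet, and discard the start state (which is never revisited), leaving a $12$-state PFA whose language with cutpoint $\tfrac14$ is nonempty iff the RMPCP has a solution. Your extra hypothesis $v_1\neq w_1$ is harmless but unnecessary: if $v_1=w_1$, the empty word is accepted and the RMPCP has the trivial solution $m=1$, so the desired biconditional ``language nonempty $\Leftrightarrow$ RMPCP solvable'' holds without restricting the instances.
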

  \begin{proof}
    The above construction that has led to
 Proposition~\ref{ge1/2+}b % (NAME IT, SUMMARIZE!)
    gives a set of $13\times13$ matrices such that
    the index sequence $a_1a_2\ldots a_m$ is a
    solution of the PCP if and only if
  $$e_1^TM_{a_1}M_{a_{2}}\ldots M_{a_{m-1}}M_{a_{m}}f > \tfrac 14,$$
  where $e_1$ is the first unit
  vector $(1,0,\ldots,0)$ and $\eeta $ is a vector with 6 zeros and 7 ones. For every other index sequence,
  the value of the expression is $<\frac14$.

  For the reversed MPCP the % first letter $a_1$ and hence the
  first matrix $M_{a_1}=M_1$ is specified. Thus the product
  $  e_1^T\!M_{1}$ has a fixed value $\pi^T$, and we can replace it by
  this vector:
  \begin{displaymath}
\pi^T\!M_{a_{2}}\ldots M_{a_{m-1}}M_{a_{m}}f
  \end{displaymath} 
  This is the expression for the acceptance probability starting from an
  initial probability distribution $\pi$.
  The remaining matrix product is not allowed to use $M_1$, and this
  is easily ensured %achieved
  by removing $M_1$ from $\mathcal{M}$.
  
  The original PFA goes from the start state to
  the 12 other states and never returns to the start state; thus we can eliminate the
  start state %after the first step,
  and only use the $12\times12$
  submatrices for the remaining states. %without the start state.
\end{proof}

We mention that with cutpoint $\frac12$ and the weak inequality $\ge \tfrac 12$
as acceptance criterion
 instead of
$> \tfrac 14$, we don't need the extra states $q_A$ and $q_R$, and the
number of states %size of the matrices
is reduced to~10. %for $10\times 10$. % matrices.

\section{Fixing the set of matrices by using a universal Turing
  machine}
\label{sec:fixed-matrices}
\label{sec:NEW}

We can achieve stronger and more specific results by
tracing back the undecidability of the PCP to the Halting Problem.
%and having a closer look at
%the Turing Machine where it comes from.
In particular, we will look at
a universal
Turing
machine and derive from it a ``universal'' PCP.
A universal Turing
machine is a fixed Turing Machine that can simulate any other Turing
machine.
In particular, the Halting Problem for such a machine is undecidable:
Given some initial contents of the tape, does the machine halt?
Sticking to one fixed machine allows us to choose a fixed set of
matrices that represents the PFA. The only variation is the starting
distribution $\pi$, or, in another variation, the vector $\eeta $
of output values.

\subsection{Constructing an MPCP for a Turing machine}
\label{sec:MPCP}

In order to adhere to the usual practice, we describe the translation
to the MPCP and not to the reversed MPCP. (For applying the RMPCP, the words simply have to be
reversed.)
Also, we temporarily use a general
 alphabet
for the word pairs of the MPCP.
 In the end, this alphabet will
 be encoded into the binary
alphabet
 $\{\texttt{0},\texttt{1}\}$ in order to be translated into a PFA.

The %string
string
$  v_{a_1}v_{a_2}\ldots v_{a_m}
=  w_{a_1}w_{a_2}\ldots w_{a_m}$
 is built as a concatenation of successive
\emph{configurations} of the Turing machine,
separated by the marker \texttt{\#}.
The strings are built incrementally in such a way that the partial string
$v_{1}v_{a_2}\ldots v_{a_n}$
lags
one step (of the Turing machine)
behind the partial string
$w_{1}w_{a_2}\ldots w_{a_n}$.
Figure~\ref{fig:example-TM} shows an example.
\begin{figure}[htb]
  \centering
  \includegraphics[scale=1.05]{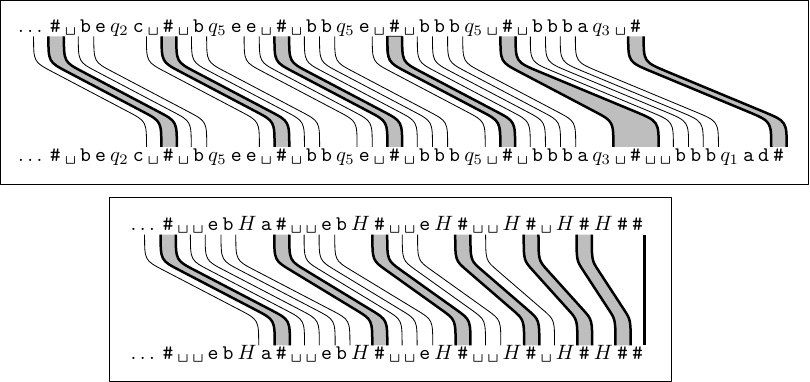} %example-TM}
  \caption{Top: Building two partial strings
    $v_{1}v_{a_2}
    \ldots v_{a_n}$ (upper row in each box) and
    $w_{1}w_{a_2}
    \ldots w_{a_n}$ (lower row)
for a Turing machine with transition rules
$(q_2,\texttt c,\texttt e,L,q_5)$,
$(q_5,\texttt e,\texttt b,R,q_5)$,
$(q_5,\blank,\texttt a,R,q_3)$,
$(q_3,\blank,\texttt d,L,q_1)
$, among others.
For better visibility, the correspondences %for the pairs
involving the separator \texttt{\#}
are highlighted.
Near the right end, the
{padding pair}
$(\texttt{\#},\texttt{\blank \#\blank })$ is used once in order to
produce extra blanks at the ends of the tape, preventing the state
symbol $q_3$ from becoming adjacent to the marker \texttt{\#}.
Bottom:
After the Turing machine has halted,
the tape is cleared and the common string
is completed.}

  \label{fig:example-TM}
\end{figure}
Following the common convention, a string such as
\texttt{\#\blank be{$q_2$}c\blank \#}
denotes the configuration where 
the Turing machine is in state $q_2$, the tape contains
the symbols \texttt{bec} padded by infinitely many blank symbols on
both sides (of which two are present in the string), and the
Turing machine is positioned over the third occupied cell, the one with the symbol~\texttt{c}.

% \def\mmark#1{%
%   \let\next=\mmark
%   %\message{#1}
%   \ifx\endmark#1
%      \let\next=\relax
%   \else
%   \putbetween
%   \ifx B#1%
%   \texttt{\char32}%\textvisiblespace
%   \else
%   \texttt{#1}%
%   \fi
%   \fi
%   \next}

% \def\putbetween{\hbox to 0pt{\hss $|$\hss}}

% \def\putbetween{\hbox to 2pt{}}

% x\texttt{\char32}y \textvisiblespace z\texttt{\textvisiblespace}u
% \\
% \ldots\mmark
% \#Bbe{$q_2$}cB%
% \#Bb{$q_5$}eeB%  
% \#Bbb{$q_5$}eB%
% \#Bbbb{$q_5$}B%
% \#Bbbba{$q_3$}B%
% \#BBbbb{$q_1$}aB%
% %\#Bebbb{$q_3$}aB%
% \#%  
% \endmark
% \\
% \def\putbetween{\hbox to 2pt{\hss $|$\hss}}
% \\
% \mmark \#BBebc{$q_2$}caBBB%
% \#BBBeb{$q_5$}ceaBBBB%  
% \#BBBBebd{$q_4$}e%aBBB\#%  
% \endmark
% \\\\
% \ldots\texttt{\#BBebc{$q_2$}caBBB%
% \#BBBeb{$q_5$}ceaBBBB%  
% \#BBBBebd{$q_4$}e%aBBB\#%  
% }
% \\\\
% \ldots\texttt{\#BBebc{$q_2$}caBBB%
% \#BBBeb{$q_5$}ceaBBBB%  
% \#BBBBebd{$q_4$}e%aBBB\#%  
% }
% \\
% \ldots\texttt{\#BBeb{$H$}ca%
%   \#BBeb{$H$}a%
%   \#BBeb{$H$}%
%   \#BBe{$H$}%
%   \#BB{$H$}%
%   \#B{$H$}%
%   \#{$H$}%
%   \#\#%
% }
% \\
% \\
% \ldots\texttt{\#BBeb{$H$}ca%
%   \#BBeb{$H$}a%
%   \#BBeb{$H$}%
%   \#BBe{$H$}%
%   \#BB{$H$}%
%   \#B{$H$}%
%   \#{$H$}%
%   \#\#%
% }
% % \\
% % \texttt{
% %   \#BBB{$H$}BBBB%
% %   \#BB{$H$}BBB%
% %   \#B{$H$}BB%
% %   \#BB{$H$}BB%
% %   \#B{$H$}\#\#}
% % \\ \\
% \def\putbetween{\hbox to 2pt{}}\\
% \mmark \#BBeb{$H$}ca%
%   \#BBeb{$H$}a%
%   \#BBeb{$H$}%
%   \#BBe{$H$}%
%   \#BB{$H$}%
%   \#B{$H$}%
%   \#{$H$}%
%   \#\#%
% \endmark
% \\\\
% \mmark \#BBeb{$H$}ca%
%   \#BBeb{$H$}a%
%   \#BBeb{$H$}%
%   \#BBe{$H$}%
%   \#BB{$H$}%
%   \#B{$H$}%
%   \#{$H$}%
%   \#\#%
% \endmark

The transition rules of the Turing machine are translated into pairs
$(v_i,w_i)$, as will be described %discussed in detail
below.
The important feature of this translation is
shown in Figure~\ref{fig:matrices-from-TMreductions}:
The input for the Turing machine is translated into the
starting pair
$(v_1,w_1)$. In the above translation to a PFA, % Emptiness,
leading to Proposition~\ref{gt1/4},
% (WHICH? SUMMARIZE!)
the
starting pair
$(v_1,w_1)$ affects only the starting distribution~$\pi$, whereas the
transition matrices $M_i$ depend only on the rules of the
Turing machine, which, for a
universal
Turing machine, are fixed!

\begin{figure}[htb]
  \centering
\fboxsep=5mm
\fbox{\begin{tikzpicture}[
  boxnode/.style={rectangle, draw=black, minimum size=6mm},
  style={->,shorten >=0.3pt,>={Stealth[round]},semithick}
  ]
% Nodes
  \node[boxnode,%fill=blue!10,
  align=center] (TM)
  %at (2,5) % pos
  {TM %Turing machine
    input tape $u$};
   
  \node[boxnode,%fill=blue!10,
  align=center] (start)
  %at (-3,-1.2)
  [below=8mm of TM]% pos
  {starting pair $(v_1,w_1)$};

  \node[boxnode,%fill=blue!10,
  align=center] (pi)
  % at (0,3)
  [below=8mm of start]
  {starting distribution $\pi$};
  
 \node[boxnode,%fill=blue!10,
] (rules)
  [right=23mm of TM]
  {\vphantom{p}TM %Turing machine
    rules};

  \node[boxnode,%fill=blue!10,
  ] (pairs)
  %at (-3,-1.2)
  [below=8mm of rules]% pos
  {other word pairs $(v_i,w_i)$};

  \node[boxnode,%fill=blue!10,
  align=center] (M)
  % at (0,3)
  [below=8mm of pairs]
  {\vphantom{g}matrices $M\in \mathcal{M}$};

\node (titleTM) [left=8mm of TM,align=left]{{Turing machine (TM):}};

\node (titlePCP) [below=8mm of titleTM.south west, anchor=north west]
{\vphantom{p}MPCP:};

\node (titlePFA) [below=8mm of titlePCP.south west, anchor=north west]
{PFA:};

 % \node[boxnode,%fill=blue!10,
 %  align=center] (PFA)
 %  [below=50mm of TM]
 %  {probabilistic finite automaton (PFA) Emptiness};
  
 % \node[boxnode,%fill=blue!10,
 %  align=center] (mp)
 %  [below=8mm of PFA]
 %  {various problems about matrix products};
  
% adjust surrounding box for better centering:
%\node[right=3mm of 2CM]{{\ }};

% Lines
\draw[->] (TM) -- (start);
\draw[->] (start) -- (pi);
\draw[->] (rules) -- (pairs);
\draw[->] (pairs) -- (M);
  
\end{tikzpicture}
}
\caption{How the PFA is constructed from a Turing machine %(TM)
  via an MPCP}
  \label{fig:matrices-from-TMreductions}
\end{figure}
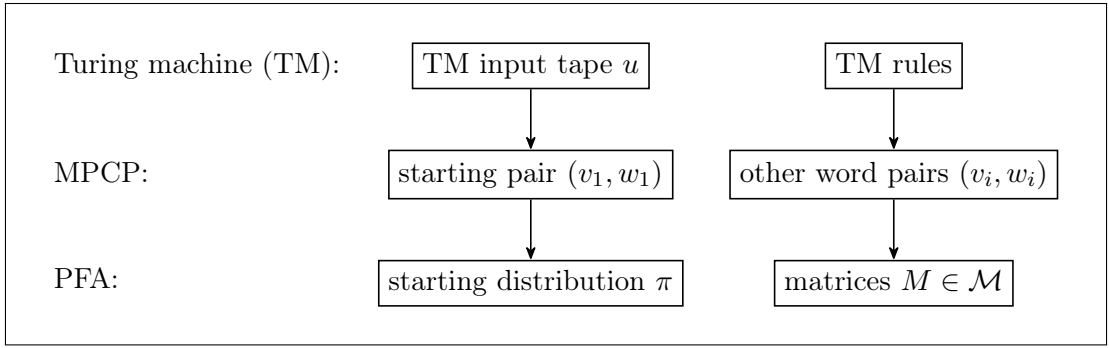

\subsection{List of word pairs of the MPCP}
\label{sec:wordpairs}

Since we want a MPCP with as few pairs as possible,
we review the construction of the MPCP from the Turing machine in detail.
We follow the construction
from %of
Sipser~\cite[Section 5.2, Part~5, p.~187]{Sipser}
to ensure
that the configurations are padded with sufficiently many
blank symbols. This eliminates the need to deal with special cases
when the Turing machine reaches the ``boundary'' of the tape in the
 representation as a finite string.

 Let $\Gamma$ denote the
tape alphabet
including the blank symbol \blank,
and let $Q$ denote the set of states of the Turing machine.
%Rules
%$(q,s,s',L,q')$ and
%$(q,s,s',R,q')$.
%$(q,s,-)$.
 The words $v_i$ and $w_i$ of
the MPCP use %are formed with
the alphabet $\Gamma \cup Q \cup \{\texttt{\#},H\}$
 with %and \texttt{\#} and $H$ are
two extra symbols:
a separation symbol~\texttt{\#} and
a halting
symbol~$H$.

\begin{itemize}
\item 
  If the input string for the Turing machine is $u \in \{
\texttt{0},\texttt{1}\}^*$,
  we define the \emph{starting pair}
$(v_1,w_1)=(\texttt{\#}, \texttt{\#\blank }q_0u\texttt{\blank \#})$, where $q_0$
is the start state of the Turing machine.
\end{itemize}
The other pairs $(v_i,w_i)$ are as follows:
\begin{itemize}
\item
  For \emph{copying} from the shorter string to the more advanced string, we have
  the pairs
  $(s,s)$ for all $s\in \Gamma%\cup\{\texttt{\#}\}
  $.
\item 
%  The following pair is
  %As an addition to the standard construction,
  We have another copying pair
  $(\texttt{\#},\texttt{\#})$, and %. We are allowed to
  the \emph{padding pair}
  $(\texttt{\#},\texttt{\blank \#\blank })$. We are allowed to
  (nondeterministically) emit an additional blank symbol at
both ends of the configuration. % in each step.
\item 
  For each \emph{right-moving rule} $(q,s,s',R,q')$, the pair $(qs,s'q')$.
  (Such a rule means that when the Turing machine is
  in state $q$ and reads the tape symbol $s$,
  it %the Turing machine
  overwrites the
  symbol $s$ with $s'$, moves one step to the right on
the tape, and changes to state~$q'$.)
  % $(q,s,s',L,q')$ is Turing's original quintuple notation,
  % also used by Neary and Woods, for example
% Also Rogozhin (just without comma and parentheses)  
  
\item For each \emph{left-moving rule} $(q,s,s',L,q')$ and for each $t\in\Gamma$,
  the pair $(tqs,q'ts')$.
  
\item For each \emph{halting rule} $(q,s,-)$, the pair $(qs,H)$.
% This is how Rogozhin writes the halting rule (just without comma and
% parentheses)  
  The character %symbol
  $H$ represents the fact that the machine has halted. %ing state.
\item 
  For each $s\in \Gamma %\cup\{\texttt{\blank }\}
  $, the
  \emph{erasing pairs} $(Hs,H)$ and $(sH,H)$. The halting symbol absorbs
  all symbols on the tape one by one.
%\item 
%the pair $(\texttt{\blank \blank $H$\blank \blank },H)$. $H$ must eat
%up blanks faster than they are generated
\item 
 Finally,
the \emph{finishing pair}
$(H\texttt{\#\#},\texttt{\#})$. This is the only way how the two strings
can come to a common end.
\end{itemize}
In total, these are $3|\Gamma|+3$ pairs,
plus $|\Gamma|$ pairs
for each left-moving rule, % $(q,s,s',L,q')$,
plus one pair for each
right-moving or halting rule,
plus the starting pair.

We can save copying rules by encoding everything in binary, for
example using the codewords $\mathtt{bab},\mathtt{baab},\mathtt{baaab},\ldots$ over the
alphabet $\{\mathtt{a},\mathtt{b}\}$. No codeword is a substring of
another codeword, and the strings that are composed from such codewords
can always be uniquely decoded.
%
%
% bei Claus ist das nicht! 01, 011, 0111
%
%
Then we can replace the $|\Gamma|+1$
copying pairs for $\Gamma\cup\{\texttt{\#}\}$ by just two pairs
$(\mathtt{a},\mathtt{a})$ and $(\mathtt{b},\mathtt{b})$.

This leads to a baseline of
 $2|\Gamma|+4$ pairs instead of $3|\Gamma|+3$. % pairs.

The Turing machine $U_{15,2}$ that we will use in the next section,
has only $|\Gamma|=2$ symbols. In this case, we would save one pair.
Thus, the number of matrices in the theorems of this section and
Section~\ref{sec:fix-start} can in fact be reduced by one. We have not
investigated the bit size of the transition matrices that we would get
from this approach.
 
\subsection{Using a universal Turing machine}
%proof of Theorem~\ref{thm-fixed-f}
\label{sec:use-UTM}

We have looked at the parameters of various universal Turing machines
in the literature in order to see which ones give the smallest number
of pairs $(v_i,w_i)$ for our PCP.
The best result is obtained from the machine
$U_{15,2}$ of
Neary and Woods
\cite[Section 3.5]{4-small-2009}\rlap.\footnote
{see also
  \url{http://mural.maynoothuniversity.ie/12416/%1/Woods_FourSmall_2009.pdf%
  },
  with %slightly inaccurate
  incorrect page numbers, however}
Its tape alphabet, including the blank symbol, has size $|\Gamma|=2$.
It has 15 states, not counting the halting state. It has
15 left-moving rules,
14 right-moving rules, and 1 halting rule.
In terms of PCP pairs,
left-moving rules are more costly than right-moving rules,
but
we have the freedom to swap
left-moving with right-moving rules
by flipping the Turing machine's tape. We have to
switch to the nonstandard convention of
starting the Turing machine over the rightmost
input character, but this is easily accomplished
in the construction of
the starting pair $(v_1,w_1)$.
Thus, with
14 left-moving rules and
16 right-moving and halting rules, we
get
$3\times 2+3  + 14\times2+ 16=53$ pairs, plus the starting pair $(v_1,w_1)$
that encodes the input.

In some sense, this can be regarded as a \emph{universal} MPCP:
all pairs except the starting pair are fixed.

We can now establish a weaker version of
Theorem~\ref{thm-fixed-f}b,
with matrices of dimension $12\times12$
instead of $11\times11$.

\begin{proposition} 
%\begin{displaymath} 
%% \begin
%{theorem}
%{NasuHondaFixedMatrices}
  \label{thm-fixed-f-weak}
%  \begin{enumerate}
%  \item 
  There is a fixed set of 53 %$12\times 12$
  % rational
  stochastic matrices
$\mathcal{M}''''$
of dimension $12\times 12$,
whose entries are multiples of $2^{-22}$,
  and a fixed $0$-$1$-vector $\eeta \in \{0,1\}^{12}$,
  for which the following question is undecidable\textup:

  Given a probability distribution $\pi\in \mathbb Q^{12}$ whose
  entries are {binary fractions},
 is there a product
  $M_1M_2\ldots M_m$, with $M_i\in\mathcal{M}''''$ for all $i=1,\ldots,m$,
such that
$$\pi^T\!M_1M_2\ldots M_m\eeta > \tfrac 14\ ?$$
In other words, is the language recognized by the PFA with starting
distribution $\pi$ and cutpoint $\lambda=\frac14$ nonempty?
%\end{restatable} %theorem}
\end{proposition}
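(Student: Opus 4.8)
The plan is to run the reduction behind Proposition~\ref{gt1/4} essentially unchanged, but to feed it the \emph{universal} reversed MPCP of Section~\ref{sec:use-UTM} instead of an arbitrary PCP instance, so that every transition matrix depends only on a fixed Turing machine while the sole input-dependent datum ends up inside the starting distribution. Fix once and for all the universal Turing machine $U_{15,2}$ of Neary and Woods~\cite[Section~3.5]{4-small-2009}, whose Halting Problem on a given input tape $u\in\{\texttt{0},\texttt{1}\}^*$ is undecidable. As in Section~\ref{sec:wordpairs}, translate $U_{15,2}$ into an MPCP: after binary-encoding the alphabet $\Gamma\cup Q\cup\{\texttt{\#},H\}$ and flipping the tape so that right-moving rules dominate, one gets $3|\Gamma|+3+14\cdot2+16=53$ word pairs that do \emph{not} mention $u$ (the two copying pairs, the padding pair, the finishing pair, the erasing pairs, and one or two pairs per transition rule), together with a single starting pair $(v_1,w_1)$ that encodes $u$ and places the head over its rightmost symbol. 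Reversing all words (Section~\ref{sec:RMPCP}) and, if needed, appending a \texttt{1} to $v_1$ and $w_1$ turns this into a reversed MPCP whose last pair is prescribed to be $(v_1,w_1)$ and which has no trailing-zeros issue.

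Next I would apply the construction of Section~\ref{sec:nasu-honda-claus} to this reversed MPCP: the equality-test mixture of Section~\ref{check:equality}, with the two indistinguishable states merged as in~\eqref{eq:3x3}, and with the extra states $q_A,q_R$ of Section~\ref{sec:strict} that convert $\ge\tfrac14$ into $>\tfrac14$. Exactly as in the proof of Proposition~\ref{gt1/4}, this produces a $13\times13$ stochastic matrix $M_i$ for each MPCP pair and a fixed $0$-$1$ acceptance vector $\eeta$, such that a sequence $1,a_2,\dots,a_m$ solves the reversed MPCP iff $e_1^T M_1M_{a_2}\cdots M_{a_m}\eeta>\tfrac14$. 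The crucial observation is that for $i\ge2$ the matrix $M_i$ is built only from $B(v_i)$, $B(w_i)$ and the universal dyadic constants $\gamma,\tfrac12,1$, so the $53$ matrices with $i\ge2$ are \emph{fixed}; let $\mathcal{M}''''$ be this set. The starting-pair data are absorbed by setting $\pi^T:=e_1^T M_1$, a vector of binary fractions that is the only part of the instance depending on $u$. Since the PFA never returns to its start state, I would delete it and keep the $12\times12$ submatrices on the remaining states, as in Proposition~\ref{gt1/4}.

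Chaining the reductions gives: $U_{15,2}$ halts on $u$ $\iff$ the reversed MPCP has a solution $\iff$ there is a product $M_{a_2}\cdots M_{a_m}$ with all factors in $\mathcal{M}''''$ and $\pi^T M_{a_2}\cdots M_{a_m}\eeta>\tfrac14$ $\iff$ the language recognized by the PFA with starting distribution $\pi$, transition matrices $\mathcal{M}''''$, acceptance vector $\eeta$ and cutpoint $\tfrac14$ is nonempty; hence this emptiness question is undecidable. For the denominators: every $B(\cdot)$ matrix occurring here has entries that are multiples of $2^{-N}$, where $N$ is the length of the longest binary-encoded MPCP word; after the squarings and products of~\eqref{eq:3x3} and the dyadic mixing, the entries of each $M_i$ are multiples of $2^{-2N}$, and a direct check of the word lengths produced by the $U_{15,2}$-MPCP yields the stated bound $2^{-22}$.

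I expect the real work to be bookkeeping rather than a new idea. One must verify, pair by pair against the list of Section~\ref{sec:wordpairs}, that \emph{only} the starting pair mentions $u$, so that the $53$ matrices of $\mathcal{M}''''$ are genuinely input-independent, and one must carry the reversal conventions — reversed MPCP, head over the rightmost input symbol, the appended \texttt{1} — consistently through the binary encoding. The other fiddly points are pinning down the binary code for $\Gamma\cup Q\cup\{\texttt{\#},H\}$ and hence the length bound $N$ that yields $2^{-22}$, and checking that $\pi$ indeed consists of binary fractions; both are routine once the encoding is fixed. Strengthening this to a single accepting state and dimension $11\times11$ — Theorem~\ref{thm-fixed-f}b — would then need the additional effort of collapsing the seven accepting states into one, which is why that sharper result carries larger denominators.
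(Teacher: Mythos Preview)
Your overall approach is the paper's: specialize Proposition~\ref{gt1/4} to the reversed MPCP derived from the fixed universal machine $U_{15,2}$, absorb the starting-pair matrix into~$\pi$, and observe that the remaining $53$ matrices depend only on the fixed machine. The count of~$53$, the deletion of the start state to reach $12\times12$, and the $2^{-22}$ denominator via a variable-length code all match.

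There is, however, one genuine gap, and it is not bookkeeping. You call $\gamma$ a ``universal dyadic constant'', but in Section~\ref{sec:strict} it was defined as $\gamma=4^{-\max\{|v_i|,|w_i|:1\le i\le k\}}$, and this maximum includes the starting pair $(v_1,w_1)$, whose binary length grows without bound with the Turing-machine input~$u$. If you fix $\gamma$ independently of~$u$ --- as you must, since $\gamma$ sits in every matrix of $\mathcal{M}''''$ --- then the crucial argument of Section~\ref{sec:strict} breaks: a non-solution $a=1a_2\ldots a_m$ can have $(\phi(a)-\psi(a))^2$ positive but strictly smaller than $\gamma^{m}$ (the difference is a nonzero multiple of $4^{-\max\{\sum|v_{a_j}|,\sum|w_{a_j}|\}}$, and $|v_1|,|w_1|$ alone can dominate~$11m$). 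The added term $\tfrac18\gamma^{m}$ then pushes the acceptance probability above~$\tfrac14$, and your reduction fails.

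The paper isolates exactly this point as ``one technicality that needs to be resolved''. The fix is that only the \emph{first}, absorbed transition out of $q_A$ uses an input-dependent constant $\gamma_1=4^{-\max\{|v_1|,|w_1|\}}$; since that transition lives entirely inside~$\pi$ (so $\pi_{q_A}=\tfrac18\gamma_1$, see Table~\ref{tab:starting}), the $53$ fixed matrices can safely keep the universal $\gamma=2^{-22}$, and the product $\gamma_1\gamma^{m-1}$ is again small enough to make the Section~\ref{sec:strict} argument go through. This is the one place where your plan needs a new idea rather than a check.
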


%\NasuHondaFixedMatrices*

\begin{proof}
 We specialize the proof of Proposition~\ref{gt1/4} to the current setting.
 The important point, as discussed above and shown
 in Figure~\ref{fig:matrices-from-TMreductions},
 is that
% the  first matrix $M_{a_1}=M_1$, which is fixed in the
%  reversed MPCP and determines the starting distribution
%   $\pi^T :=  e_1^T\!M_{1}$, depends on the starting pair $(v_1,w_1)$ of
%   the MPCP, and this is the only pair that depends on the input of the
%  Turing machine.
%  All other
 the
 matrices in $\mathcal{M}$ depend only on the word pairs that reflect the rules of the
 universal Turing machine $U_{15,2}$, which are fixed, and we have
already calculated that there are 53 of these matrices.

We must not forget that the symbols of the alphabet
$\Gamma \cup Q \cup \{\texttt{\#},H\}$, in which the word pairs
$(v_i,w_i)$ of the MPCP are written, have to be encoded somehow into the binary
alphabet $\{\texttt{0},\texttt{1}\}$ in order to define the
matrices of the
PFA, and we have to ensure that the codes of
 $v_1$ and $w_1$ end with~\texttt{1}, for example by letting the
 code for \texttt{\#} end with~\texttt{1}.

There is one technicality that needs to be resolved.
The quantity $\gamma$ was
  required to be a common divisor of the matrix entries, and % this
it depends on the maximum lengths $|v_i|$ and $|w_i|$ of the input words.
  However, the words $v_1$ and $w_1$ depend on the input tape,
  and thus, their lengths $|v_1|$ and $|w_1|$ cannot
be bounded in advance. (The remaining word pairs depend only on the Turing machine.)
  The solution is to carry out the imagined
  first transition (which is not encoded into a transition matrix in 
  $\mathcal{M}$, but determines the starting distribution~$\pi$) with a
  sufficiently small value of $\gamma$, namely $\gamma_1 =
  4^{-\max\{|v_1|,|w_1|\}}$,
where the lengths $|v_1|$ and $|w_1|$ are measured in the binary encoding.
  The other transitions from the state $q_A$ can
  be carried out with the fixed value $\gamma$ that is sufficient for those entries.
Table~\ref{tab:starting} shows the starting distribution $\pi$
resulting from this construction. Since $\pi$ is allowed to depend on
the input, we have solved the problem.

\begin{table}[htb]
  \centering
  \begin{tabular}{|@{\,}c@{\,}|c||@{\,}c@{\,}|c||@{\,}c@{\,}|c|}
    \hline
    state $q$& $\pi_q$
    &state $q$& $\pi_q$
    &state $q$& $\pi_q$
    \\[0.5pt]\hline
    $(\Phi_0,\Psi_0)$ & $\!\frac14(1-0.v_1)(1-0.w_1)\!$
&                        
    $(\Phi_0,\Phi_0)$ & $\frac18(1-0.v_1)^2$
&
$\{\Psi_0,\Psi_1\}$ & $\!\frac14(1-0.w_1)0.w_1\!
                      \raise 2,5pt\strut$
\\[3pt] %second line:
$(\Phi_0,\Psi_1)$ & $\frac14(1-0.v_1)0.w_1$
&                        
    $\{\Phi_0,\Phi_1\}$ & $\frac14(1-0.v_1)0.v_1$
&
    $(\Psi_1,\Psi_1)$ & $\frac18  (0.w_1)^2$
\\[3pt] %third line:
$(\Phi_1,\Psi_0)$ & $\frac14 \cdot 0.v_1(1-0.w_1)$
    &
$(\Phi_1,\Phi_1)$ & $\frac18(0.v_1)^2$     
&$q_A$&$\frac1{8}\gamma_1$       
\\[3pt]  % fourth line:
$(\Phi_1,\Psi_1)$& $\frac14\cdot 0.v_1\cdot0.w_1$
    &
$(\Psi_0,\Psi_0)$ & $\frac18(1-0.w_1)^2$
    &$q_R$&$\frac 12 -\frac1{8}\gamma_1 % 7{16} + \frac1{16}(1-\gamma_1)
    \lower2pt\strut$\\
    \hline
  \end{tabular}
  \caption{Starting probabilities $\pi$ for Proposition~\ref{thm-fixed-f-weak}}
  \label{tab:starting}
\end{table}

We have now established the existence of 53 fixed matrices
$\mathcal{M}''''$ and a finishing 0-1-vector $\eeta$ for which the decision
problem of
Proposition~\ref{thm-fixed-f-weak}
is undecidable.

\subsection{An efficient code}
\label{sec:efficientcode}

In order to say something about the entries of these matrices, we have
to be more specific about the way how the alphabet
 $\Gamma \cup Q \cup \{\texttt{\#},H\}$
is encoded. % into the binary alphabet.
The words %pairs
$v_i$ and $w_i$ that come from %into which
the Turing machine rules % are translated
are
actually quite short: they have at most 3~letters.
More precisely, they consist of at
most one ``state'' symbol from
$Q \cup \{H\}$,
plus at most two letters from
%remaining symbols
% $\Gamma \cup \{\texttt{\#}\}$.
the tape alphabet $\Gamma \cup \{\texttt{\#}\}$.
The Turing machine $U_{15,2}$ has $|Q|=15$ states and a tape alphabet
of size $|\Gamma|=2$.

In this situation, a variable-length % prefix-free
 code is more efficient than
 a fixed-length code.
We can use 5-letter codes of the form \texttt{0****} for the 15 states plus the halting
state $H$. This leaves the 3-letter codes \texttt{1**} for the 3 symbols
$\Gamma \cup \{\texttt{\#}\}$, leading to word lengths bounded
by $5+3+3=11$.
In the binary automaton, the
transition
probabilities are therefore multiples of $2^{-11}$.
Since each box carries out two binary automata simultaneously,
% one for $v_i$ and one for $w_i$,
the transition
probabilities are multiples of $4^{-11}$.
% In the
% left-moving rules, the maximum word lengths occur for
%  $v_i$
%  and for $w_i$ simultaneously, and thus,
%  the denominator $4^{-11}$ does indeed occur.
% We only need to take care that the code for \texttt{\#} is not all
% zeros/ends with a \texttt{1}, and then the trailing zero problem cannot arise.
% ARGUMENT!
\end{proof}
With a weak inequality like $\ge \tfrac 12$ instead of
$> \tfrac 14$ as acceptance criterion, we don't need the extra states $q_A$ and $q_R$, and the
size of the matrices for which
Proposition~\ref{thm-fixed-f-weak}
holds can be reduced to %for
$10\times 10$. % matrices.

As mentioned after
 Proposition~\ref{=1/4}, the cutpoint can be changed to a different
 value; then the constraint that the input distribution $\pi$
consists of binary fractions must be abandoned. Since the change only
affects the very first transition, the fixed matrix set
$\mathcal{M}$ remains unchanged.

The above variable-length code seems to be pretty efficient, but it
wastes one of the four codewords
\texttt{1**}. By looking at the actual rules of
the machine
$U_{15,2}$ and
fiddling with the code, it might be possible to improve the power $22$ in
the  denominator of the binary fractions.

\subsection{Example matrices}
\label{example}
For illustration, we compute
some matrices of the set
$\mathcal{M}''''$
explicitly.
We use the binary code
$\texttt{\#} \doteq \texttt{101}$,
$\texttt{\blank} \doteq \texttt{100}$.
The copying pair $(\blank,\blank)\doteq(\mathtt{100},\mathtt{100})$ is
then translated
into the block diagonal matrix 
\begin{displaymath}
\overline  M_{(\blank,\blank)} =
  \begin{pmatrix}
%% "Command \small invalid in math mode" but nevertheless it has some effect
\small\dfrac 1{64}\left(\begin{array}{@{\,}r@{\ }r@{\ }r@{\ }r@{\,}}
16 & 16 & 16 & 16 \\
12 & 20 & 12 & 20 \\
%\hline
 12 & 12 & 20 & 20 \\
9 & 15 & 15 & 25
                         \end{array}\right) & 0&0&0&0\\
          0 &
\small\dfrac 1{64}\small \left(\begin{array}{@{\,}r@{\ }r@{\ }r@{\,}}
16 & 32 & 16 \\
12 & 32 & 20 \\
9 & 30 & 25
\end{array}\right)
  &           0&0&0\\
          0&0&\dfrac1{64}\small\left(\begin{array}{@{\,}r@{\ }r@{\ }r@{\,}}
16 & 32 & 16 \\
12 & 32 & 20 \\
9 & 30 & 25
\end{array}\right)&0&0\\
0 & 0&0&\frac1{2^{22}}&1-\frac1{2^{22}}\\
0 & 0&0&0&1
\end{pmatrix},
\end{displaymath}
where the rows and columns correspond to the states as they are ordered in
Table~\ref{tab:starting}.\footnote
{Since $v_i=w_i=\blank$ in this case, we have a chance to compare
%the relation between
the straightforward $4\times 4$ construction of
the probability $\phi^2$
(the upper left block) with
the condensed representation with 3 states, in the two middle blocks.}
We use a bar in the notation $\overline  M_{(\blank,\blank)}$ to
remind us of the fact that
 we are dealing with the \emph{reversed} MPCP, and therefore the
 words should be reversed (which, in this case, has no effect
 because we have only one-letter words).

Let us look at the erasing
pair
$(\texttt{\#\blank},\texttt{\#})$. Here
%we have to take into account that we are dealing with
%the \emph{reversed} MPCP, and therefore
the reversal does have an effect,
and the
strings
are actually
$(\texttt{\blank\#},\texttt{\#})\doteq(\mathtt{100\,101},\mathtt{101})$.
(The codewords don't have to be reversed.)
With these data,
%$(\texttt{\#\blank},\texttt{\#})\doteq(\mathtt{101\,100},\mathtt{101})$,
the matrix $\overline M_{(\texttt{\#\blank},\texttt{\#})}$ looks as follows:
\begin{equation}\nonumber\label{erasing-matrix}
\begin{pmatrix}
\small\dfrac 1{512}\left(\begin{array}{@{\,}r@{\ }r@{\ }r@{\ }r@{\,}}
81 & 135 & 111 & 185 \\
54 & 162 & 74 & 222 \\
%\hline
78 & 130 & 114 & 190 \\
52 & 156 & 76 & 228
\end{array}\right) & 0&0&0&0\\
          0 &
\small\dfrac 1{4096}\small \left(\begin{array}{@{\,}r@{\ }r@{\ }r@{\,}}
729 & 1998 & 1369 \\
702 & 1988 & 1406 \\
676 & 1976 & 1444
\end{array}\right)
  &           0&0&0\\
          0&0&\dfrac1{64}\small\left(\begin{array}{@{\,}r@{\ }r@{\ }r@{\,}}
9 & 30 & 25 \\
6 & 28 & 30 \\
4 & 24 & 36
\end{array}\right)&0&0\\
0 & 0&0&\frac1{2^{22}}&1-\frac1{2^{22}}\\
0 & 0&0&0&1
\end{pmatrix}
\end{equation}

% def bin_autom(s):
%    "s is a binary string"
%    v = int(s,2)
%    n = len(s)
%    M = 2**n
%    return matrix([[M-v,v],[M-v-1,v+1]]) # /M
%
% def tensor2x2(m1,m2):
%     "m1,m2 are 2x2 matrices, result is 4x4"
%     return block_matrix(
%         [[m1[0,0]*m2,m1[0,1]*m2],
%          [m1[1,0]*m2,m1[1,1]*m2]])
%
% def bin_autom2(s,t):
%     m1 = bin_autom(s)
%     m2 = bin_autom(t)
%     return tensor2x2(m1,m2)
%
% def bin_autom1(s):
%     m = bin_autom(s)
%     ((m00,m01),(m10,m11)) = m
%     return matrix([
%         [m00**2, 2*m00*m01, m01**2],
%         [m00*m10, m01*m10+m00*m11, m01*m11],
%         [m10**2, 2*m10*m11, m11**2]
%         ])
%
% latex(bin_autom2('100','100'))
% latex(bin_autom1('100'))
%
% latex(bin_autom2('100101','101'))             
% latex(bin_autom1('100101'))               
% latex(bin_autom1('101'))
%
% latex(bin_autom2(
%   '10001001110',
% # '11001001100',
%   "10011000001"
% # "00001110100"
% ))     
% latex(bin_autom2('10001001110',  "10011000001" ))     
% latex(bin_autom1('10001001110')) #11001001100'))
% latex(bin_autom1("10011000001")) #00001110100"))
%
% sage: a=bin_autom1('101100')
% 64
% sage: b=bin_autom1('101')
% 8
% latex(a.tensor_product(b))

Finally, as our most elaborate example, we consider a left-moving rule
of the Turing machine $U_{15,2}$ from~\cite{4-small-2009}:
% maybe also for one particular rule of the TM
%
% $(u_3,c,c,L,u_7)$ $\to$
% % Table 16. wrong p. 121 top
% $(q_3,\blank,\blank,R,q_7)$
%
%  $(u_1,c,c,R,u_2)$,
%  $(u_5,c,b,R,u_1)$,
% 
%  $(u_9,c,c,R,u_1)$  $\to$
% % Table 16. wrong p. 121, top of page
$(q_9,\blank,\blank,L,q_1)$.
This was originally a right-moving rule, but has been converted into
a left-moving rule by flipping the tape.
% $q_7 \doteq \mathtt{00111}$
It produces two word pairs, since $|\Gamma|=2$.
One of these pairs is
$(\texttt{b$q_9$\blank},
\texttt{$q_1$b\blank})$,
where
\texttt{b} is the other letter of the tape alphabet besides \blank.
Coding this letter as
$\texttt{b} \doteq \texttt{110}$
and
the states in the most straightforward way as
$q_1 \doteq \mathtt{00001}$ and
$q_9 \doteq \mathtt{01001}$, we get,
after reversal, the binary word pair
$(\texttt{\blank$q_9$b},
\texttt{\blank b$q_1$})\doteq
(\mathtt{100\, 01001\, 110},
\mathtt{100\, 110\, 00001})$
%$q_3 \doteq \mathtt{00011}$
and
% This leads to
the following transition matrix:
\begin{displaymath}
\begin{pmatrix}
\!\small\dfrac 1{2^{22}}\!\left(\begin{array}{@{}r@{\ }r@{\ }r@{\ }r@{}}
786126 & 1151282 & 915762 & 1341134 \\
785180 & 1152228 & 914660 & 1342236 \\
%\hline
785295 & 1150065 & 916593 & 1342351 \\
784350 & 1151010 & 915490 & 1343454
                         \end{array}\right)\hskip-1cm & 0&0&0&0\\
          0 &
\hskip-14mm\dfrac 1{2^{22}}\!\small \left(\begin{array}{@{}r@{\ }r@{\ }r@{}}
894916 & 2084984 & 1214404 \\
893970 & 2084828 & 1215506 \\
893025 & 2084670 & 1216609
\end{array}\right)\hskip-1cm
  &           0&0&0\\
          0&0&\hskip-1cm\dfrac1{2^{22}}\!\small\left(\begin{array}{@{}r@{\ }r@{\ }r@{}}
690561 & 2022654 & 1481089 \\
689730 & 2022268 & 1482306 \\
688900 & 2021880 & 1483524
\end{array}\right)\hskip-3mm&0&0\\
0 & 0&0& \!\frac1{2^{22}}\!&\!1{-}\frac1{2^{22}}\!\\
0 & 0&0&0&1
\end{pmatrix}
\end{displaymath}

%\subsection
\section{%Fixed starting probabilities,
  %Variable
  Output values instead of a set of accepting states}
%proof of Theorems~\ref{thm-fixed-f}--\ref{thm-fixed-pi}

\label{sec:fix-start}

 In the expression for the acceptance probability in
 \eqref{eq:accept},
 $\pi$ and $\eeta$ appear in symmetric roles.
We will now fix % even %keep
the starting
distribution $\pi$,
and in exchange,
%However, this makes sense only if
we allow more general
values~$\eeta_q$.
% and hence is makes sense to relax the 0-1-restriction on the values
% $\eeta_q$.
%
%\paragraph{Output values / Acceptance degrees.}

In the classic model of a PFA, $\eeta$ is a 0-1-vector:
Once the input has been read and all probabilistic
 transitions have been made,
 acceptance is a yes/no decision.
 The state that has been reached is either accepting or not.

We can think of a general value $\eeta_q$ as a probability in a final acceptance
decision,
after the input has been read.
Another possibility is that
 $\eeta_q$ represents
 a \emph{prize} or \emph{value} that
that is gained when the process stops
in state~$q$, as in game theory.
Then
 $\eeta_q$
does not need to be restricted to the interval $[0,1]$.
In this view, instead of
the acceptance probability, we compute the \emph{expected} gain
(or loss) of the automaton.
Following Carl Page~\cite{page-1966}, who was the first to consider
this generalization,
we call $\eeta$
the \emph{output vector}
and $\eeta_q$
the \emph{output values}.
% 
%$F$: output vector, a $n$-component column vector whose entries are
%\emph{real} numbers. $F_i$ is the output from state $S_i$
Mathematically, it make sense to take the outputs even from
 some (complex) vector space
(quantum automata?).

In our results,
the values $\eeta_q$ are restricted to $[0,1]$, and in fact, they have
an interpretation as probabilities.

% However, as this would introduce an additional state, this
% cannot be used to improve
% Theorem~\ref{thm-fixed-pi}
% for the ``classic'' setting.

    Turakainen~\cite
{turakainen69,Turakainen_1975} considered the most general setting, allowing    
arbitrary positive or negative entries also for the matrices $M\in \mathcal{M}$
and the vectors $\pi$ and $\eeta$.
He showed that the condition
\eqref{eq:accept} % with an arbitrary cutpoint $\lambda$
with these more general data
 does not define a more general class of languages than a
 classic PFA, see also
    \cite[\S 3.3.2, %Der Satz von Turakainen,
    pp.~120--126]{claus71}
or
  \cite[Proposition 1.1 in Section IIIB, p.~153]
  {paz71}.
% However, the transformation to classic PFAs introduces extra states,
% and therefore this result is not useful for our purposes. ?
% anyway.

%Are they actually the SAME set of matrices?
%Some matrices may coincide with, eg. copying (if binary code was
%reversed. NOT INTERESTING

\subsection{Saving one more state by maintaining four binary variables}
\label{sec:9states}

The PFA of Figure~\ref{fig:22PFA-merged}c mixes the PFAs for the three terms
$\phi\psi$, $1-\phi^2$, and $1-\psi^2$ by deciding \emph{in advance}
which sub-automaton they should enter.
As an alternative approach when arbitrary output values $\eeta_q$ are allowed, we can delay this decision to the end, when
we decide whether to (probabilistically) accept the input, and this
will allow us to further reduce the number of states by one.

The idea is to maintain four independent binary state variables
$\Phi',\Phi'',\Psi',\Psi''$
throughout the process.
Such a pool of variables is sufficient for any of the terms
$\phi\psi$, $1-\phi^2$, and $1-\psi^2$.
This would normally require $2^4=16$ states.
As discussed above, the combinations
$(\Phi'_0,\Phi''_1)$ and
$(\Phi'_1,\Phi''_0)$ need not be distinguished
and can be merged into one state, denoted by
$\{\Phi_0,\Phi_1\}$, and similarly for the $\Psi$ variables.
Thus, the overall number of states
is reduced from $%2^4=
16$ to
$3\times 3=9$ combinations~$q$, %combinations,
one less than the 10 states in the three square boxes of
Figure~\ref{fig:22PFA-merged}c.

As we will see,
we have to set the nine entries~$\hat\eeta _q$ of the output vector to the
following values.
\begin{equation}
  \label{eq:acceptance-table}
\begin{tabular}{c|c@{ \ }c@{ \ }c}
&$(\Psi'_0,\Psi''_0)$&
$\{\Psi_0,\Psi_1\}$&
$(\Psi'_1,\Psi''_1)$\\\hline
\raise 2pt\hbox{\strut}%
  $(\Phi'_0,\Phi''_0)$&
1/2&1/2&1/4\\[2pt]
$\{\Phi_0,\Phi_1\}$&
    1/2& 5/8& 1/2\\[2pt]
$(\Phi'_1,\Phi''_1)$&
    1/4& 1/2& 1/2\\
\end{tabular}
\end{equation}
Beware that this is an output \emph{vector}
$\hat\eeta _q\in\mathbb{Q}^9$, which
has been arranged in $3\times3$ tabular form only
for convenience.
These output values
result from the contributions to the
three terms 
$  \tfrac12 \phi\psi$,
$  \tfrac14(1-\phi^2)$,
$  \tfrac14(1-\psi^2)$
of the overall acceptance probability as shown below, % follows,
where  the states % vector $\hat\eeta$
are arranged in the same matrix form as in~\eqref{eq:acceptance-table}:
\begin{displaymath}
  \frac12
  \begin{pmatrix}
    0&0&0\\
    0& 1/4& 1/2\\
    0& 1/2& 1\\
    % 0&\frac 14&\frac 12\\
    % 0&\frac 12& 1\\
  \end{pmatrix}
  +
  \frac14
  \begin{pmatrix}
    1& 1& 1\\
    1& 1& 1\\
    0&0&0\\
  \end{pmatrix}
  +
  \frac14
  \begin{pmatrix}
    1& 1& 0\\
    1& 1& 0\\
    1& 1& 0\\
  \end{pmatrix}
  =
  \begin{pmatrix}
    1/2&1/2&1/4\\
    1/2& 5/8& 1/2\\
    1/4& 1/2& 1/2\\
  \end{pmatrix}
\end{displaymath}
The fractional values in the first matrix appear for the following
reason.
We have reduced the states for generating the acceptance probability
$\phi^2$ from 4 to 3 by merging two states into one.
% We must remember that
% $\{\Phi_0,\Phi_1\}$ represents a mix %ture
% of $50\,\%$ of the combination
% $(\Phi'_0,\Phi''_1)$
% and $50\,\%$ of the combination
% $(\Phi'_1,\Phi''_0)$.
Thus, when the PFA is, for example, in the state
$(\{\Phi_0,\Phi_1\}, (\Psi'_1,\Psi''_1))$,
it is ``really'' in one
of the two states
$(\Phi'_0,\Phi''_1, \Psi'_1,\Psi''_1)$
or
$(\Phi'_1,\Phi''_0, \Psi'_1,\Psi''_1)$,
each with a share %proportion
of $50\,\%$.
If we consider the product $\phi\psi$ as built, say, from the
conjunction $(\Phi'_1,\Psi'_1)$,
ignoring the variables  $\Phi''$ and $\Psi''$,
only the second of these two states should lead to acceptance, and therefore
we get the fractional
output value
$ 1/2$. % in the first matrix.

We can
change the cutpoint (for the original automaton, without the extra
states $q_A$ and $q_R$) from $\lambda=1/2$ to any %positive
rational value $\lambda$ strictly between
$0$ and $1$ % (exclusive):
by modifying the output values $\hat\eeta _q$ in
\eqref{eq:acceptance-table}:
By scaling both $\hat\eeta$ and $\lambda$ down by the same factor,
$\lambda$ can be brought arbitrarily close to 0.
On the other hand, by applying the transformation
$x\mapsto 1-\alpha(1-x)$ for some constant $0< \alpha\le 1$ to
 $\hat\eeta$ and $\lambda$, the cutpoint
$\lambda$ can be moved arbitrarily close to 1
 \cite[Proposition 1.4 of Section IIIB, p.~153]{paz71}.

\subsection{Making all transition probabilities positive}
\label{sec:positive-Kronecker}
By using an appropriate binary code,
we can ensure that all transition matrices
are strictly positive.
Rabin
calls such PFAs
\emph{actual automata} and
studies their %\emph{very special
properties~\cite[Sections~IX--XII, p.~242--245]{rabin1963},
see also~\cite[\S 3.2.3, %Stabilitätsproblem und aktuelle Rezeptoren,
pp.~115--118]{claus71}.

One can easily check that
the transition matrix $B(u)$
for the binary automaton
is positive except when
the string $u$
%CHECK! None of the codewords
consists only of zeros or only of ones.
With only 3 symbols
$\Gamma \cup \{\texttt{\#}\}$ using
the 4 codewords \texttt{1**}, we can avoid
the all-ones codeword \texttt{111}
(as in the code used for the examples in Section~\ref{example}).

A state symbol other than $H$ never appears alone in a word $v_i$ or
$w_i$.
Thus, we can use the codeword \texttt{00000} for one of the original
states,
and thereby ensure that
the transition matrices $B(v_i)$ and  $B(w_i)$ are always positive.
As discussed earlier,
the encoded words $u_i$ and $v_i$ have at most 11 bits, and hence
the matrix entries % of each $2\times 2$ % binary
% transition matrix
are multiples of $2^{-11}$.
%$2^{-|_v_i|}$ and
% $2^{-|_w_i|}$, respectively.
The entries of the
$3 \times 3$ transition matrix
\eqref{eq:3x3} are sums and products of entries of
the $2 \times 2$ matrices
$B(v_i)$ or $B(w_i)$, respectively, and are therefore positive
multiples of $2^{-22}$.
%$4^{-|v_i|}$ and
% $4^{-|w_i|}$, respectively.
Each entry of the
 $9 \times 9$ transition matrix is obtained by multiplying appropriate entries
 of the two
 $3\times3$ matrices, and is hence a
 positive
 multiple of
 $2^{-44}$.
% 
% actually, $4^{-|v_i|-|w_i|}$.  !!
(To say it more concisely, the matrix
 is the Kronecker product, or tensor product, of the two
 $3\times3$ matrices.)
% As a consequence, the entries are multiples of $2^{-88}$.
%Looking at the list of word pairs in
%Section~\ref{sec:wordpairs}

 More generally, the entries are multiples of
  $\gamma^2 = 16^{-\max\{|v_i|,|w_i|:%\mid
    1\le i \le k\}}$.
  % could even take the larger unit
%  $ 4^{-\max\{|v_i|+|w_i|:%\mid

\subsection{Fixing everything except the output vector, proof of
  Theorem%s~\ref{thm-fixed-f} and%
  ~\ref{thm-fixed-pi}}
\label{sec:vary-f}

We will from now on use superscripts like  $M^i$ or $M^{(v_i,w_i)}$
 or $\smash{\overline M}^{(v_i,w_i)}$
for the matrices that are
associated to the word pairs
$(v_i,w_i)$, in order to distinguish them from the notation % numbering
$M_j$ in the theorem below, where they are numbered in the order in
which they are used in the matrix product of the solution.

For the version with fixed starting distribution, we use the
original (unreversed) MPCP, where the \emph{first} word pair in the
solution, %concatenation,
and hence the \emph{last} matrix in the matrix product, is fixed.

We can save a matrix by observing that the
 \emph{last} word pair in the
PCP %concatenation,
is also known: It is the finishing pair
$(H\texttt{\#\#},\texttt{\#})$, and like the
starting pair, this pair is used nowhere else.
(This % pair
%$(H\texttt{\#\#},\texttt{\#})$ must be used, because it
is the only
pair, besides the starting pair, that has a different number of
\texttt{\#}'s in the two components, and it is the only possibility
how the
string $v_{1}v_{a_2}\ldots v_{a_n}$ can catch up with
 the string
$w_{1}w_{a_2}\ldots w_{a_n}$.)

For clarity, we formulate
the (unreversed) Doubly-Modified Post Correspondence Problem (2MPCP),
with %which has
two special pairs:
a starting pair $(v_1,w_1)$ and a finishing pair $(v_2,w_2)$:
\begin{quote}
We are given a list of pairs of words $(v_1,w_1), (v_2,w_2), \ldots
(v_k,w_k)$ over the alphabet $\{\texttt{0},\texttt{1}\}$ such that
 $v_2$ and $w_2$ end with a~\texttt{1}.
 The problem is to decide if there is a % (possibly empty)
 sequence
$a_2,\ldots, a_{m-1}$
of indices $a_i \in \{3,\ldots,k\}$ such that
\begin{displaymath}
  v_1v_{a_2}v_{a_{3}}\ldots v_{a_{m-1}}v_2
=   w_1w_{a_2}w_{a_{3}}\ldots w_{a_{m-1}}w_2 \ .
\end{displaymath}
\end{quote}
The PFA
starts deterministically in the state
$(\Phi'_0,\Phi''_0,\Psi'_0,\Psi''_0)$. % taking values $0$ or $1$.
Thus,
the 2MPCP has a solution if and only if the following inequality can
be solved:
\begin{equation}
  \label{eq:2MPCP}
e_1^T M^2M^{a_{m-1}}\ldots M^{a_2}M^1 \hat f \ge \tfrac 12,
\end{equation}
where $\hat f$ is the output vector defined in
\eqref{eq:acceptance-table}.
The matrix $M^2 =
M^{(H\texttt{\#\#},\texttt{\#})}
$ comes from the finishing pair
$(H\texttt{\#\#},\texttt{\#})$
and
is fixed, and $M^1$ depends on the input tape $u$ of
the Turing machine. With the substitutions
\begin{align*}
\pi^T & :=  e_1^T M^2,
  \\
 f & := M^1 \hat f ,  
\end{align*}
we can remove $M^2$ from the set of matrices $\mathcal{M}$, and
this directly leads to part (a) of the following theorem:

\NasuHondaFixedMatricesReverse*

\begin{proof}
For part (a), everything has already been said except for observing that
the entries of $f=M^1\hat f$ are in the interval
 $[\frac14,\frac58]$ because
  $M^1$ is a stochastic matrix
  and the entries of~$\hat f$
  are in that interval.

  For part (b), we add the same two states $q_A$ and $q_R$ as in
  Figure~\ref{fig:mix}b (p.~\pageref{fig:mix})
  and
Figure~\ref{fig:22PFA-merged}c, with $\gamma = 2^{-44}$.
%  except that we use  $\gamma = 2^{-89}$ for the matrix $M^2$.
  Initially, we select the original start state
  $(\Phi'_0,\Phi''_0,\Psi'_0,\Psi''_0)$ and the state $q_A$ each with
  probability $\frac12$.
  Denoting by $\pi_0$ the corresponding vector with two $\frac12$ entries,
  the initial distribution $\pi$ is then defined as
  \begin{equation}
    \label{eq:pi0-with-extra-states}
\pi_0^T M^2 =: \pi^T,
  \end{equation}
  and its entries are multiples of
  $\frac1{2^{45}}$. %\footnote
  % {\label{M2}actually,  $1/2^{44}$.
  %
  % NO! THE LOWER RIGHT BLOCK HAS    $\frac1{2^{45}}$!
  %
%  A refined argument %easily
%     shows that
%  since the entries of $M^i$ are multiples of $4^{-|v_i|-|w_i|}$,
%        Looking at the codewords in Section~\ref{sec:efficientcode}, we see
%     that
%     the encoded lengths for the finishing pair
%     $(v_2,w_2)=(H\texttt{\#\#},\texttt{\#})$ are
%     $|v_2|=11$ and $|w_2|=3$,
% and hence the entries of $M^2$ are multiples of $2^{-28}$.}

  The matrix $M^1$ is constructed from the starting
  pair $(v_1,w_1)$, and it uses the value
  \begin{equation}
    \label{eq:gamma1}
\gamma_1= 1/16^{\max\{|v_1|,|w_1|\}},    
  \end{equation}
where $|v_1|$ and $|w_1|$ are
the lengths after the binary encoding.
  
The output values of the  extra states are defined as
  $\hat f_{q_A} = 1/8$ and
  $\hat f_{q_R} = 0$.
Since the remaining output values
in $\hat f$ are multiples of $1/8$, 
the value
  $\hat f_{q_A} = 1/8$ is small enough
to ensure that it does not turn an acceptance probability $<\frac14$
into a probability $>\frac14$.
\end{proof}

% load("pfa.sage") 
% sage: a=bin_autom1('101100')
% 64
% sage: b=bin_autom1('101')
% 8
% latex(a.tensor_product(b))
% e = matrix[[22,22],[0,1]]
% block_diagonal_matrix(c,e) 

To give a concrete example, here is the transition matrix
$M^{(\texttt{\#\blank},\texttt{\#})}\in \mathcal{M}''$
 for the erasing
pair
$(\texttt{\#\blank},\texttt{\#})\doteq(\mathtt{101\,100},\mathtt{101})$:\footnote
{If the words $v_i$ and $w_i$ weren't reversed between %for going from
the MPCP
and the RMPCP,
the upper left $9\times 9$ block would be the Kronecker product of the two
middle $3\times3$ blocks in the corresponding
 $12\times12$ matrix
 $\smash{\overline M}^{(\texttt{\#\blank},\texttt{\#})}\in\mathcal{M}''''$
of Proposition~\ref{thm-fixed-f-weak}
for this pair, which was shown on p.~\pageref{erasing-matrix}
in Section~\ref{example}. If we substitute this Kronecker product as
it stands,
we get %forms part of
the matrix
$\smash{\overline M}^{(\texttt{\blank\#},\texttt{\#})}\in\mathcal{M}''$ of the opposite erasing pair.
%$(\texttt{\blank\#},\texttt{\#})$.)
}
\begin{displaymath}
   \frac 1{2^{18}}\small
\left(\begin{array}{r@{\ }r@{\ }r|r@{\ }r@{\ }r|r@{\ }r@{\ }r|cc}
3600 & 12000 & 10000 & 15840 & 52800 & 44000 & 17424 & 58080 & 48400 &0&0\\
2400 & 11200 & 12000 & 10560 & 49280 & 52800 & 11616 & 54208 & 58080 &0&0\\
1600 & 9600 & 14400 & 7040 & 42240 & 63360 & 7744 & 46464 & 69696 &0&0\\
\hline
 3420 & 11400 & 9500 & 15624 & 52080 & 43400 & 17820 & 59400 & 49500 &0&0\\
2280 & 10640 & 11400 & 10416 & 48608 & 52080 & 11880 & 55440 & 59400 &0&0\\
1520 & 9120 & 13680 & 6944 & 41664 & 62496 & 7920 & 47520 & 71280 &0&0\\
\hline
 3249 & 10830 & 9025 & 15390 & 51300 & 42750 & 18225 & 60750 & 50625 &0&0\\
2166 & 10108 & 10830 & 10260 & 47880 & 51300 & 12150 & 56700 & 60750 &0&0\\
1444 & 8664 & 12996 & 6840 & 41040 & 61560 & 8100 & 48600 & 72900 &0&0\\
\hline\raise2pt\strut
 0 & 0 & 0 & 0 & 0 & 0 & 0 & 0 & 0 & \frac1{2^{26}} & 2^{18}{-}\frac1{2^{26}} \\[4pt]
0 & 0 & 0 & 0 & 0 & 0 & 0 & 0 & 0 & 0 & 2^{18}
\end{array}\right)
\end{displaymath}

\subsection{Uniqueness of the solution}
\label{sec:unique}
%We mention that
In both parts of % the theorem,
Theorem~\ref{thm-fixed-pi},
we can achieve that every problem
instance
that we construct has a unique solution if it has a solution at all.
This comes at the cost % expense
of increasing the number of matrices and relaxing the
bound on the denominators.
The Turing machine itself is deterministic.
The MPCP loses the determinism through
the padding pair
$(\texttt{\#},\texttt{\blank \#\blank })$.
We omit this pair and replace it by other word pairs.
%in exchange, we have to treat the cases
%when the state symbol $q$ would ``fall off'' the tape.
In particular,
if a state symbol $q$ is adjacent to the separation symbol
\texttt{\#}
and is in danger of ``falling off'' the tape,
this must be treated as if a $\blank$ were present. %, cf.~\cite[pp. ???? XXXX near Theorem 8.8]{hopcroft79}.
This leads to one extra word pair for each state
% (q,#,...) same result as (q,blank)
plus
one extra word pair for each left-moving rule.\footnote
{In contrast to the construction found in most textbooks,
  we cannot assume that the Turing machine never moves to the left of
  its initial position, since we want to keep our Turing machine small.}
% (#qs,# q' blank s') # leading to words of length 4.
%
% We can assume that the two extra cases don't occur simultaneously
% by ensuring that the tape length is initially at least 2.
%

Since,
in addition to the starting pair, also the finishing pair $(v_2,w_2)$ is fixed in
the 2MPCP, the solution to the 2MPCP, and hence the matrix product
$M_1\ldots M_m$,
becomes unique.
(In the normal PCP or MPCP, a solution could be extended by appending arbitrary
copying pairs.)

We emphasize that this uniqueness property holds only for output
vectors~$f$ that are constructed according to the proof of
Theorem~\ref{thm-fixed-pi}. It is obviously impossible to achieve
uniqueness for every vector~$f\in[0,1]^d$.

One can check that uniqueness carries over, with the same provisos, to
the other theorems %results
of this section.

\subsection{Eliminating the output vector, proof of
  Theorem%s~\ref{thm-fixed-f} and%
  ~\ref{thm-fixed-f}}
\label{eleven}

We will transfer these results to the
classic setting with a \emph{set} of
accepting states instead of an
output vector $f$.
The
{set} of
accepting states
%characteristic vector $f$
will be fixed, and the input should come
through the starting distribution $\pi$.
Consequently,
the Turing machine input should be coded, via the first matrix in the
matrix product,
into the starting distribution $\pi$.  Hence we
\emph{reverse} the PCP again, as in Sections
\ref{sec:RMPCP} and~\ref{sec:fixed-matrices}.
We construct a set $\overline{\mathcal{M}}%'''
$ of 53 positive
$9\times 9$ matrices,
including a matrix for
 the finishing pair
 $(H\texttt{\#\#},\texttt{\#})$,
in the same way as in the proof of Theorem~\ref{thm-fixed-pi}a,
but with reversed words.
We refrain from formulating
the  \emph{Doubly-Modified {Reversed} Post Correspondence Problem} (2MRPCP).
We just observe that,
in the expression for the acceptance probability
\begin{equation}
  \label{eq:2MPCP-b}
e_1^T M^2M^{a_{m-1}}\ldots M^{a_2}M^1 \hat f % \ge \tfrac 12,
\end{equation}
from \eqref{eq:2MPCP},
the matrix that depends on the input tape $u$ of
the Turing machine now appears as
the matrix $M^2$ at the beginning of the product,
and the matrix that comes from the finishing pair
$(H\texttt{\#\#},\texttt{\#})$
is the last matrix $M^1$.
Then, $\pi^T :=  e_1^T M^2$ is the variable input to
the problem,
and $f := M^1 \hat f$ is some fixed vector of 
output values $f_q \in [\frac14,\frac58]$.
The acceptance probability becomes
\begin{displaymath}
%  \label{eq:2MPCP-b}
\pi^T M^{a_{m-1}}\ldots M^{a_2} f.
\end{displaymath}
What remains to be done is to get rid of the fractional
values in the output vector $f$.
We will use two methods to convert a PFA with
an output vector $f$ with entries from $[0,1]$ to into one with
a 0-1 %characteristic
vector~$f$.
The first method is a general method that does not change the
recognized language. It doubles the number of states,
and it maintains positivity.\footnote{\label{foot-square}%
  There is a method in
  the literature with the same effect, but it \emph{squares}
  the number of states,
see~\cite[proof of Theorem~1, p.~308]{turakainen69},
\cite[Step~V,
pp.~123--124]{claus71},
or Section~\ref{sec:output-0-1-epilogue}.}
%% Paz p.150 uses convex combination of 0-1-vectors, refers to
%% Corollary 1.7 (heavy use of Kronecker product)
This is formulated as
 part~(a) in the following theorem.
 As an alternative,
we will start with the construction of Theorem~\ref{thm-fixed-pi}b
and we will take the liberty to change the recognized language by adding a
symbol to the end of every word. This works without adding extra states
beyond the states $q_A$ and $q_R$ that are already there,
and it
will lead to
 part~(b) of the following theorem.

 \NasuHondaEleven*
 \begin{proof}
(a)
\smash{\hbox to 0pt{\vbox{
  % TRICK: only white space, but get a label to be used with \pagref
  \begin{enumerate}\def\makelabel#1{\hss}% show empty label
  \item \ \label{proof-b-f}\end{enumerate}}\hss}}%
We interpret the output values $f_q$ as probabilities.
 If we arrive in state $q$ after reading the input, we still have
 to make a random decision whether to accept the input. 
The idea is to generate the randomness for making this acceptance decision
already
\emph{when} each %the last
symbol is read, and not \emph{afterwards},
in the end. Every state $q$ of the original PFA comes now in two versions, $q^+$ and $q^-$.
The transition probabilities to $q^+$ are multiplied by $\eeta_q$,
and the transition probabilities to $q^-$ are multiplied by $1-\eeta_q$.
The accepting states are the states $q^+$.

In terms of matrices, this can be expressed as follows.
Let  $M%\in\mathcal{M}'''
$ be written in column form as
\begin{displaymath}
  M =
  \begin{pmatrix}
    m_1 \ m_2 \ \cdots\ m_{9}
  \end{pmatrix}
.
\end{displaymath}
This is converted to the following $18\times 18$ matrix for the set
$\mathcal{M}'$, arranging the states in the order
$q_1^+,\ldots,q_9^+,q_1^-,\ldots,q_9^-$:
\begin{displaymath}
  \begin{pmatrix}
    \eeta_1m_1 & \eeta_2m_2 &\cdots&\eeta_9 m_{9} &    
    (1-\eeta_1)m_1 & (1-\eeta_2)m_2 &\cdots&(1-\eeta_9) m_{9}\\%\hline
    \eeta_1m_1 & \eeta_2m_2 &\cdots&\eeta_9 m_{9} &    
    (1-\eeta_1)m_1 & (1-\eeta_2)m_2 &\cdots&(1-\eeta_9) m_{9}\\
  \end{pmatrix}
\end{displaymath}
Similarly,
the starting distribution
$\pi^T = (
    \pi_1 ,\pi_2 ,\ldots,\pi_{9}
    )$
is replaced
    by
$ (    
    \eeta_1\pi_1 ,\allowbreak \eeta_2\pi_2 ,\allowbreak\ldots,\allowbreak\eeta_9 \pi_{9},\allowbreak \     
    (1-\eeta_1)\pi_1 ,\allowbreak (1-\eeta_2)\pi_2 ,\allowbreak\ldots,\allowbreak(1-\eeta_9) \pi_{9})
    $.
The matrix consists of two equal $9\times 18$ blocks,
in accordance with the fact that
the distinction between $q^+$ and $q^-$
has no influence on the next transition.
%

%It is clear that the input is accepted with the correct probability $\eeta_q$.

%Does not work for the empty word (as the related methods from the
%literature, see footnote \ref{foot-square}).

As the output values $\eeta_q \in\bigl\{\frac14,\frac12,\frac58\bigr\}$ are
multiples of $\frac18$, all resulting probabilities are multiples of
$\frac1{2^{47}}$.

%%%%%%%%%%%%%%%%%%%%%%%%%%%%%%%%%%%%%%%%%%%%%%%%   
(b) %%%%%%%%%%%%%%%%%%%%%%%%%%%%%%%%%%%%%%%%%%%%%%%%
%%%%%%%%%%%%%%%%%%%%%%%%%%%%%%%%%%%%%%%%%%%%%%%%
The idea is to add to the set of matrices a matrix
$M^\infty$ that is necessarily the last matrix in
any solution, without imposing this as a constraint.
 
We start by constructing a set $\overline{\mathcal{M}'''}$
 of 53 matrices of size
$11\times 11$,
including a matrix for
 the finishing pair
 $(H\texttt{\#\#},\texttt{\#})$,
in the same way as in the proof of Theorem~\ref{thm-fixed-pi}b,
but with reversed words.
The states $q_A$ and $q_R$ are now already present.

We want to emulate the acceptance criterion of Theorem~\ref{thm-fixed-pi}b:
\begin{equation}
  \label{eq:variable-pi-f-hat}
\pi^T\!M_1M_2\ldots M_m\hat\eeta > \tfrac 14  
\end{equation}
Here, the variable vector $\pi^T$ as given by
\eqref{eq:pi0-with-extra-states} has already swallowed the
matrix $M^2$ representing the input tape of the Turing machine;
However, $\hat\eeta$ is the fixed output vector constructed in the proof of
Theorem~\ref{thm-fixed-pi}b with the values
\eqref{eq:acceptance-table} extended by the values
$\hat f_{q_A} = 1/8$ and $\hat f_{q_R} = 0$ for the two additional
states.
We do not yet merge the last matrix $M_m$ with $\hat f$.

To the 53 matrices
 $\overline{\mathcal{M}'''}$, % that we already have.
 we add an extra
``final'' transition matrix $M^\infty$.
We declare $q_A$ to be the unique accepting state.
 In the transition  $M^\infty$,
 each state $q$ goes to $q_A$ with probability $\hat f_q$,
 and to $q_R$ with the complementary probability $ 1-\hat f_q$.
 This rule applies equally to the state $q_A$, which goes to itself
 with probability  $\hat f_{q_A}=1/8$, and otherwise goes to $q_R$.
 The state $q_R$ remains an absorbing state.

 It is clear that adding $M^\infty$ at the end of the product
\eqref{eq:variable-pi-f-hat} and accepting in state $q_A$ has the same
effect as accepting with the output vector $\hat\eeta$.
However, a priori we are not sure that
 $M^\infty$ really comes at the end of the product. % and nowhere else.

The acceptance probability of our PFA is given as
\begin{equation}
 \label{eq:2MPCP-c}
  \pi^T
M_1M_2\ldots M_{m-1}M_m  % M^{a_m}M^{a_{m-1}}\ldots M^{a_2}M^{a_1}
e_{q_A} % \ge \tfrac 12
,
\end{equation}
where the vector
of output values is the unit vector  $e_{q_A}$
corresponding to the accepting
state~$q_A$.

We will now argue that in any product of this form
with matrices $M_j$ from $\overline{\mathcal{M}'''}\cup \{M^\infty\}$
that is larger than $\frac14$,
the matrix $M^\infty$ must appear in the last position $M_m$, and it cannot
appear anywhere else.

 If we never use the matrix
 $M^\infty$
  in the matrix product, the only chance of reaching $q_A$ comes from
starting in $q_A$ at the beginning and staying there, and the probability
for this is negligibly small.
%, too small to be in the language recognized by the PFA.
(Even the empty matrix product is not a solution:
Remember that $\pi^T$ is defined in
\eqref{eq:pi0-with-extra-states}
as $\pi^T = \pi_0^T M^2$,
where $M^2$
comes from the word pair representing the input of the Turing
machine.
Already in $M^2$, the probability $\gamma_1$ of remaining in $q_A$,
as given by \eqref{eq:gamma1},
is very small.)

On the other hand, when we use the matrix
$M^\infty$, the PFA will arrive in state $q_A$ or~$q_R$.
Any further matrices after $M^\infty$
 reduce
 the probability of staying in~$q_A$ by a factor $1/8$ or smaller,
 hence they will not lead to solutions.

Thus we can assume without loss of generality that
$M^\infty$ is the last matrix $M_m$ in the product, and that it is used only
in that position.
The acceptance probability is then the same as if the output vector
$\hat f$ had been used instead of $M^\infty$.
(Algebraically, $M^\infty e_{q_A} = \hat f$.)

Thus, the expression \eqref{eq:2MPCP-c} has the
same value as \eqref{eq:variable-pi-f-hat}, % in  Theorem~\ref{thm-fixed-pi}b,
and it is already of the correct form for our claim.
The vector $e_{q_A}$ describes a unique accepting state.
As mentioned, we have changed the language recognized by the PFA by adding the
symbol  corresponding to $M^\infty$
to the end of each word, but this does not affect the emptiness question.

We can save one matrix by remembering
that the last word pair in the PCP solution is
always the finishing pair
$(v_2,w_2)=(H\texttt{\#\#},\texttt{\#})$, and this is used nowhere
else.
We therefore impose without loss of generality that the corresponding matrix
 $M^2=
 \overline M^{(H\texttt{\#\#},\texttt{\#})}$
 is the penultimate matrix $M_{m-1}$ in the product
 % \emph
 {before} $M^\infty$, and this matrix is used nowhere else. 
Accordingly,  we replace
 $M^2$ and
 $M^\infty$ by one matrix $M^{\mathrm{new}}=M^2M^\infty$, reducing the number of
 matrices back to $53$.
 Since the entries of
 $M^\infty$ are multiples of $\frac1{8}=\frac1{2^3}$, the entries of the new matrix
 are multiples of~$\frac1{2^{47}}$.
 % \footnote{actually,  $1/2^{32}$, see footnote~\ref{M2}.}
 % NO, in the lower right block is $\frac1{2^{47}}$!

 This modification also ensures that the solution is unique:
 Since we now have enforced that the matrix product
 \eqref{eq:variable-pi-f-hat} ends with  $M^2M^\infty$, in term of
 the original set of matrices,
 we are only considering solutions of the MPCP that
 end with $(H\texttt{\#\#},\texttt{\#})$, and these are unique. 
\end{proof}

%As in Theorem~\ref{thm-fixed-pi}, the solution is unique if it exists.
 
\subsection{Reduction to 2 input symbols, proof of
  Theorem%s~\ref{thm-fixed-f} and%
  ~\ref{thm-fixed-f-2-matrices}}
\label{sec:coding}

We have already %implicitly
used the reduction to a binary alphabet %is statement
in the
 proof of Theorem~\ref{thm-condon-lipton}
(Section~\ref{sec:formalxx}), but now we will look at an explicit construction.
This method has been described, in a more general context, by
Hirvensalo in 2007, \cite[Step 3 of Section~3]{hirvensalo07} or \cite[p.~5]{tHirvensalo06a},
see also Section~\ref{sec:integer-hirv}.

\begin{restatable} 
% \begin
  {lemma}
{CodingBinary}  
  \label{coding}
  Consider a PFA $A$ with input alphabet $\Sigma$ of size
  $k=|\Sigma|>2$,
  and
  let $\tau \colon \Sigma^* \to
\{\mathtt{a},
\mathtt{b}\} %  \Delta
^*$ be a % prefix-free
  coding function
  using %  I give a specific example with
  the codewords
  $%\{
  \mathtt{b},
\mathtt{ab},
\mathtt{aab},
\ldots,
\mathtt{a}^{k-2}\mathtt{b},
\mathtt{a}^{k-1}
% \}
$. %, where $k=|\Sigma|$.

Then there is a PFA $A'$ with input alphabet
$\{\mathtt{a},\mathtt{b}\}$
that accepts
each word $\tau (u)$ with the same
probability as $A$ accepts $u\in \Sigma^*$.
  Words that are not of the form $\tau (u)$ are %not
  accepted with probability~$0$.

   The number of states is multiplied by $k-1$ in this construction.
\end{restatable}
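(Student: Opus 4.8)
The plan is to have $A'$ decode its input letter by letter, on the fly, and feed the decoded symbols of $\Sigma$ to a simulated copy of $A$. The code is prefix-free: $\mathtt a^{i-1}\mathtt b$ and $\mathtt a^{j-1}\mathtt b$ differ at position $\min\{i,j\}$, and the unique codeword lacking a terminal $\mathtt b$, namely $\tau(\sigma_k)=\mathtt a^{k-1}$, is neither a prefix of, nor has as a prefix, any $\mathtt a^{i-1}\mathtt b$ with $i\le k-1$. Hence a decoder need only remember how many copies of $\mathtt a$ have been read since the last completed codeword, a number in $\{0,1,\dots,k-2\}$. So I take the state set of $A'$ to be $Q\times\{0,1,\dots,k-2\}$, where $Q$ is the state set of $A$; this has the claimed size $|Q|\cdot(k-1)$. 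Writing $\Sigma=\{\sigma_1,\dots,\sigma_k\}$ and letting $M_\sigma$ be the transition matrix of $A$ for $\sigma$, I define the transitions of $A'$ by: reading $\mathtt a$ in $(q,j)$ with $j<k-2$ goes deterministically to $(q,j+1)$; reading $\mathtt a$ in $(q,k-2)$ completes $\tau(\sigma_k)=\mathtt a^{k-1}$, so it goes to $(q',0)$ with probability $(M_{\sigma_k})_{q,q'}$; and reading $\mathtt b$ in $(q,j)$ completes $\tau(\sigma_{j+1})=\mathtt a^{j}\mathtt b$, so it goes to $(q',0)$ with probability $(M_{\sigma_{j+1}})_{q,q'}$. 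If $A$ has start distribution $\pi$ and accepting set $F$, then $A'$ starts with mass $\pi_q$ on each $(q,0)$ and nowhere else, and its accepting states are exactly the $(q,0)$ with $q\in F$. Each row of each matrix of $A'$ is either a $0/1$ unit row or a row of some $M_{\sigma_i}$ relocated onto the counter-$0$ states, so $A'$ is a genuine PFA, and its entries come from those of $A$ together with $0$ and $1$ (so any divisibility property is inherited).

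Correctness rests on the following invariant, which I would prove by induction on the length of the prefix $x\in\{\mathtt a,\mathtt b\}^*$ read so far: $x$ factors uniquely as $x=\tau(u)\,\mathtt a^{j}$ with $u\in\Sigma^*$ and $0\le j\le k-2$, and after reading $x$ the distribution of $A'$ is supported on $\{(q,j):q\in Q\}$, with its marginal on the $Q$-coordinate equal to the distribution of $A$ after reading $u$. The base case $x=\epsilon$ ($u=\epsilon$, $j=0$) is immediate. For the step: reading $\mathtt a$ with $j<k-2$ increments $j$ and changes neither $u$ nor the $Q$-marginal; reading $\mathtt a$ with $j=k-2$ gives the new prefix $\tau(u)\mathtt a^{k-1}=\tau(u\sigma_k)$ with $j$ reset to $0$, and the transition rule applies exactly $M_{\sigma_k}$ to the $Q$-coordinate; reading $\mathtt b$ gives the new prefix $\tau(u)\mathtt a^{j}\mathtt b=\tau(u\sigma_{j+1})$ with $j$ reset to $0$, and the rule applies exactly $M_{\sigma_{j+1}}$. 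Uniqueness of the factorization is just prefix-freeness of the code.

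Plugging $x=\tau(u)$ into the invariant forces $j=0$, so the acceptance probability of $A'$ on $\tau(u)$ equals $\sum_{q\in F}\Pr[A'\text{ ends in }(q,0)]$, which is exactly the acceptance probability of $A$ on $u$. If $x$ is \emph{not} of the form $\tau(u)$, its unique factorization has $j>0$, so all of the mass of $A'$ sits in states with nonzero counter, none of which is accepting, and $A'$ accepts $x$ with probability $0$. I do not expect a real obstacle here; the only delicate point is the ``wrap-around'' bookkeeping — a run of $k-1$ consecutive $\mathtt a$'s must auto-emit $\sigma_k$ and reset the counter rather than stall — together with the fact that the code is designed precisely so that this emission never collides with the codewords $\mathtt a^{i-1}\mathtt b$; once the factorization $x=\tau(u)\mathtt a^{j}$ and the support invariant are stated cleanly, the induction is routine.
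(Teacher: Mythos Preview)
Your proposal is correct and is essentially the same construction as the paper's: state set $Q\times\{0,\dots,k-2\}$, a counter that tracks consecutive $\mathtt a$'s, deterministic increment on $\mathtt a$ with $j<k-2$, application of $M_{\sigma_k}$ on the $(k{-}1)$-st $\mathtt a$, application of $M_{\sigma_{j+1}}$ on $\mathtt b$, and acceptance only at counter $0$. The paper presents this construction and the resulting block matrices without a formal correctness argument; your explicit invariant and induction on the factorization $x=\tau(u)\,\mathtt a^{j}$ make the correctness proof more complete than what the paper gives.
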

%\end{lemma}

\begin{proof}
  Suppose $A$ has % $k$ input symbols with
  transition matrices
  $M_1,\ldots,M_k$ corresponding to the
   $k$ input symbols.
We construct a PFA $A'$ that does the decoding in a straightforward
way.
It maintains the number of \texttt{a}'s that have been seen in a
counter
variable $i$ in the range
$0\le i \le k-2$.
 In addition, it maintains the state $q\in Q$ of the
original PFA $A$.
Thus, the state set of $A'$ is $Q'= \{0,\ldots,k-2\} \times Q$.
Initially,
$q$ is chosen according to the starting
distribution of $A$, and
 $i=0$.
\begin{itemize}
\item If $A'$ reads the letter \texttt{b}, it
changes the state
  $q$
  to a random new state % of $q$
%  performs a random change of $q$
  according to the transition matrix $M_{i+1}$, and resets $i:=0$.
\item If $A'$ reads an \texttt{a} and $i<k-2$, it increments the counter: $i:=i+1$.
\item If $A'$ reads an \texttt{a} and $i=k-2$, it changes the state
  $q$
  to a random new state % of $q$
  according to $M_k$, and resets $i:=0$.
\end{itemize}
An input is accepted if $i=0$ and $q$ is an accepting state of $A$.

% If $A$ has the transition matrices $M_1,\ldots,M_k$,
The % two
transition matrices % of $A'$
for the symbols
\texttt{a} and \texttt{b} can be written in block form
as
\begin{displaymath}
  M'_\mathtt{a}=
  \begin{pmatrix}
  0 & I &  &    &&  \\
    & 0 & I &   &&  \\
    &   & 0 &   &&  \\[-0.8ex]
 %\vdots
 & %\vdots
 & %\vdots
 & \!\!\ddots
 && %\vdots
\\%[0.2ex]
  &  & &%\cdots
 & 0 & I\\
 M_{k}\!\!\! &  & & &  & 0
\end{pmatrix}
\text{ and }
M'_\mathtt{b}=
  \begin{pmatrix}
    M_1 & 0 & \cdots & 0\\[0.4ex]
    M_2 & 0 & \cdots & 0\\[0.4ex]
    M_3 & 0 & \cdots & 0\\[0.4ex]
    \vdots & \vdots & %\ddots
 & \vdots\\[0.9ex]
%    {M_{k-2}} & 0 & \cdots & 0\\[0.4ex]
    {M_{k-1}} & 0 & \cdots & 0
  \end{pmatrix}.
  \qedhere
\end{displaymath}
%With this code, the number of states is multiplied by $k-1$.
\end{proof}
The construction works more generally for any prefix-free code.
The set of states $Q'$ % of $A'$
will
have the form $K \times Q$, where the states in $K$
do the decoding.  

Applying this to Theorem%s~\ref{thm-fixed-f} and%
  ~\ref{thm-fixed-f}b, we get:

\NasuHondaTwoMatrices*

The number $572=52 \times 11$ of states is an overcount. For example, the absorbing
state $q_R$ can be left as is and need not be multiplied with~52.

If we are more ambitious,
we can achieve that all matrix
 entries are from the set $\{0,\frac12,1\}$,
 as in Theorem~\ref{thm-condon-lipton}, instead of multiples of $2^{-47}$.
 We apply the % same
 technique from % as in
item (b) in the proof of that theorem
% Theorem~\ref{thm-condon-lipton}
(Section~\ref{sec:formalxx}):
We simply add a block of 46 padding \texttt{a}'s after every
codeword.\footnote
% of the input.
%(or after the
%\texttt{b} of the input (or after the
%the 52nd \texttt{a}).
{%
We can roughly estimate the required number of states as follows.
Let $|Q|=11$ be the number of states of the original automaton,
and let $k=|\Sigma|=53$ be its number of symbols.
For each combination in $Q\times \Sigma$, whenever
the algorithm in
Lemma~\ref{coding} asks to ``change the state
  $q$
  to a random new state % of $q$
  according to $M_i$'', we have to set up a binary decision tree of
  height 47 to determine the next state. We can think of this tree as
follows: For
  a random
  number
  $x=0.x_1x_2\ldots x_{47}$,
we want to
  determine which of $|Q|$ intervals $[0,c_1],
  (c_1,c_2],
  (c_2,c_3],\ldots,
  (c_{|Q|-1},1]$  contains $x$,
  by
  looking at the successive bits $x_j$ of~$x$.
  This tree has at most $(|Q|-1)\times 46$ nodes where the
  outcome has not been decided:
  each such node lies on a root-to-leaf path to some interval endpoint~$c_i$.
  In addition we need up to $46\times |Q|$ states for the situation when the next
  state has been decided and the algorithm only needs to count to the end
  of the padding block.
  In total, this gives an upper bound of $(k-1)|Q| + |Q|k(|Q|-1)46 + 
  46 |Q|$ states, which is $572 + 46\times 11 \times(53\times 10+1)
  = 269\,258
  $.
% >>> k=53; Q=11
% >>> (k-1)*Q + Q*k*(Q-1)*47 + 47 *Q
% 275099
% >>> 572 + 46* 11 *(1+53* 10)
}

\section{Using integer matrices}
\label{sec:thm-variable-matrices}

As an alternative to constructions involving PFAs only,
we start from
matrix product problems of arbitrary \emph{integer} matrices.
These can be converted to stochastic matrices at some cost, but
overall,
we get the automata with the smallest number of states that are known
for having an undecidable Emptiness Problem.

\subsection{The smallest number of states without regard to the size
  of the alphabet, % proof of
  Theorem~\ref{thm-claus-9-states}}
\label{sec:fewest-possible}

\ClausSmallMatrices*

\begin{proof}

 Neary~\cite{neary:PCP5:2015} has shown that
 the PCP with five word pairs
 is undecidable.
Suppose the word pairs
$(v_1,w_1),\,(v_2,w_2),\,(v_3,w_3),\,(v_4,w_4),\,(v_5,w_5)$
are encoded over the alphabet $\Sigma=\{1,2\}$.
We denote
the base-3 %ternary
value of  $u=u_1u_2\ldots u_n\in \Sigma^*$ by
 $%\sigma(u_1u_2\ldots u_n)=
(u)_{3}=
 \sum_{j=1}^n u_j3^{n-j}$.

\subsubsection{Step  \ref*{thm-claus-9-states}.1. Modeling the PCP by
  integer matrices}
\label{sec:claus1}
Following Claus \cite[Definition 5, p.~143]{claus81}, we define the
following $6\times 6$ matrix $A(v,w)$ for words $v,w \in \{1,2\}^*$.
\begin{equation}
  \label{eq:mat-Claus-3-original}
  A(v,w)=%\\   \left(
  \begin{pmatrix}%{cccccc:c}
1 & -2 (v)_{3} & 2 (w)_{3} & ((v)_{3})^{2} & ((w)_{3})^{2} & -2(v)_{3} (w)_{3}
   % & 1-\bigl((v)_{3}{-}(w)_{3}\bigr)^2
    \\
0 & 3^{|v|} & 0 & -(v)_{3}\cdot 3^{|v|} & 0 & (w)_{3}\cdot 3^{|v|} &
                                                                  %   \bigl((v)_{3}-(w)_{3}\bigr) 3^{|v|}
    \\
    0 & 0 & 3^{|w|} & 0 & (w)_{3}\cdot 3^{|w|} & -(v)_{3}\cdot 3^{|w|} &
                                                                         %\bigl((v)_{3}-(w)_{3}\bigr)
                                                                         %3^{|w|}
    \\
    0 & 0 & 0 & 3^{2  |v|} & 0 & 0 & %-3^{2 |v|}
    \\
    0 & 0 & 0 & 0 & 3^{2  |w|} & 0 & %-3^{2 |w|}
    \\
    0 & 0 & 0 & 0 & 0 & 3^{|v| + |w|} &% -3^{|v| + |w|}
    \\
%    \hdashline %[2pt/2pt]
%    0 & 0 & 0 & 0 & 0 & 0 & 0\\
  \end{pmatrix}
%  \right)
\end{equation}
It is not straightforward to see, but can be checked by a
% simple %yet
% tedious
calculation that $A $ satisfies the multiplicative law
% , which is
% analogous to the %corresponding
% law \eqref{multiply} for the transition
% matrices of binary automata
\cite[Lemma~2%,p.~143
]{claus81}
\begin{equation}
  \label{eq:multi-gamma-forward}
A (v_1,w_1)
A (v_2,w_2) =
A (v_1v_2,w_1w_2).
\end{equation}
%
% The calculation becomes less tedious by observing that
% the last row is zero and the last column is
% a fixed linear combination of the remaining columns:
% \begin{equation}
%   \label{eq:claus-7-6-orig}
%   A(v,w) =
%   \left( \begin{array}{@{\,}c:c@{\,}}
%       \hat A%(v,w)
%       &  \hat A%(v,w)
%         \hat f\\
%       \hdashline % \raise3pt\strut
%       0 & 0
%   \end{array}  \right)
% =
% \left( \begin{array}{@{\,}c:c@{\,}}
%       \hat A  &0\\
%       \hdashline % \raise3pt\strut
%       0 & 0
%   \end{array}  \right)
%   \left( \begin{array}{@{\,}c:c@{\,}}
%       I
%       & 
%         \hat f\\
%       \hdashline % \raise3pt\strut
%       0 & 0
%   \end{array}  \right)
%   \text{ with } \hat f^T= (1,0,0, -1, {-1}, {-1})
% \end{equation}
% Thus, it is sufficient to check
% \eqref{eq:multi-gamma-forward} for the
% $6\times 6$ triangular matrices $\hat A=\hat A(v,w)$ in the upper left corner.
%
With the vectors $%\pi_1^T=
e_1^T=(1,0,0,0,0,0)$ and $ f_1^T= (1,0,0, -1, {-1}, {-1})$, one gets
\begin{equation}
  \label{eq:check-u=v}
e_1^T A(v,w) f_1 =  1-\bigl((v)_{3}-(w)_{3}\bigr)^2 .
\end{equation}
% Thus,
%The expression~\eqref{eq:check-u=v}
This expression
checks whether $v=w$:
Since all data are integral, $e_1A (v,w)\eeta_1>0$ iff
$v=w$.
% The value 0 cannot appear, and thus the construction works
% equally with
% %With a different starting distribution
% %$\hat\pi_1=\frac13(1,1,1,0,0,0)^T$, we get
% %$\pi_1^TA (v,w)\hat\eeta_1 = -\frac13\bigl((v)_{10}-(w)_{10}\bigr)^2$,
% %and now the condition for $v=w$ becomes
% the weak inequality
%  $\pi_1A (v,w)\hat\eeta_1\ge 0$.\footnote{Claus~\cite{claus81} actually uses the vector
%   $ f_1^T= (1,0,0, -1, {-1}, {-1})$, which leads to the result
% $1-\bigl((v)_{3}-(w)_{3}\bigr)^2 $. This is good enough for the
% acceptance criterion with strict inequality $>\lambda$). Alternatively, one could
% handle the weak inequality with the vector
%   $ f_1^T= (0,0,0, -1, {-1}, {-1})$.
% }

From the PCP word pairs $(v_i,w_i)$,
we now construct the  matrices $B_i=A(v_i,w_i)$.
%The upper right corner of the matrix $\eqref{eq:mat-Claus-3-original}$
%checks whether $v=w$.
%, we  model the
%PCP as the Emptiness Problem for an integer-weighted automaton with
%cutpoint 0.
%WEGLASSEN: [We write the product of a sequence of matrices
%$C_{a_m}C_{a_{m-1}}\ldots C_{a_2}C_{a_1}$ as
%$C_{a^R}$, where
%$a^R$ denotes reversal of the sequence $a$.]
By repeated application of \eqref{eq:multi-gamma-forward}, %we have
$%B_{a^R}
A(v_{a_1}\ldots v_{a_m}, w_{a_1}\ldots w_{a_m})=
B_{a_1}B_{a_{2}}\ldots %B_{a_{m-1}}
B_{a_m}$.
Thus,
the index sequence $a= a_1a_2\ldots a_m$ is a % nonempty
solution of the PCP if and only if
%the upper right corner of the product
%    \begin{equation}
%      \label{eq:matrix-C-}
%%      \pi_1^T
%      B_{a_m}
%      B_{a_{m-1}}\ldots B_{a_2}B_{a_1}
%%  \eeta _1 > 0,
%\end{equation}
%    is positive,
%    or in other words, if
    \begin{equation}
      \label{eq:matrix-C-2}
e_1^T
B_{a_1}B_{a_{2}}\ldots B_{a_{m-1}}B_{a_m}
%B_{a_m} B_{a_{m-1}}\ldots B_{a_2}B_{a_1}
\eeta_1 > 0.
\end{equation}

\subsubsection{Step  \ref*{thm-claus-9-states}.2. Introducing a unit vector as an ending vector}
\label{sec:step2-claus}

This step and the next one are
 a standard part of the proof of Turakainen's
Theorem~\cite{turakainen69,Turakainen_1975} (mentioned in the introduction of
Section~\ref{sec:fix-start}), and, in one form or another, they are
described in many places, see %textbooks
%Claus
\cite[Section 3.3.3, p.~120--125]{claus71},
% Böhling, Dittrich: Endliche stochastische Automaten.1972: §27,
% pp.~129--136.
% Paz, Theorem 2.17 in Section IIC, p.~141 ?
% Da steht es nicht.
% pseudo Markov chain defined on p.120, Def 1.3.
% SPE defined on p.145 bottom.
% following the proof of Turakainen's Theorem~\cite
%{turakainen69}. NOT REALLY. only the row sums and column sums 0 is
%found there, everything else is much more wasteful.
%See also %Blondel and Canterini
\cite[% Step 4a is not good here.
Steps 4b--4c in Section~2, p.~237--238]{blondel-canterini-03}),
or %  Hirvensalo
\cite[Steps 4--6 in Section~3]{hirvensalo07}.)

First we essentially introduce a new
 final state (state~7), extending the matrices to size
$7\times 7$:
%We set % $\pi_2=e_1$ and
%$f_2 = e_7$, and define the matrices
\begin{equation}\label{eq:matrices-D}
  D_i =
  \begin{pmatrix}
    B_i &  B_i f_1\\
    0 & 0\\
\end{pmatrix} .
\end{equation}
One can then check that
\begin{equation}
  \label{eq:D-B}
  D_{a_1}\ldots D_{a_m}
  =
  \begin{pmatrix}
      B_{a_1}\ldots B_{a_m}
 &    B_{a_1}\ldots B_{a_m}
 f_1\\
    0 & 0\\
\end{pmatrix}.
\end{equation}
With the final vector
$f_2 = e_7$,
the expression
    \begin{equation}
      \label{eq:matrix-DD}
e_1^T
      D_{a_1} D_{a_{2}}\ldots D_{a_{m-1}}D_{a_m}
      % D_{a_m} D_{a_{m-1}}\ldots D_{a_2}D_{a_1}
      e_7 %\eeta_2
      ,
    \end{equation}
    which represents the upper right corner of the matrix product
\eqref{eq:D-B},
has the same value as
the expression in \eqref{eq:matrix-C-2} whenever we have a nonempty
matrix product.\footnote 
{Actually, it is the $7\times 7$ matrices
\eqref{eq:matrices-D}
  that 
Claus takes as the starting point \cite[Definition 5, $\phi(u,v)$,
p.~143]{claus81}.
In his proof, there is no analog of
Step  \ref*{thm-claus-9-states}.1 that would be separate from
Step  \ref*{thm-claus-9-states}.2.
He investigated the ``POGAMOR-problem'' of
checking whether the upper right corner of a product of matrices can
be made positive.}
For $m=0$,
\eqref{eq:matrix-DD} has value 0, and thus we have now excluded the
empty PCP solution.

\subsubsection{Step  \ref*{thm-claus-9-states}.3. Conversion to stochastic matrices}
\label{sec:claus-stochastic}

The conversion of the matrices to stochastic matrices
 will introduce two more states.

As a preparation, % for getting stochastic matrices,
we % ensure that all row and column sums are 0.
% We
extend the
%$6\times 6$
matrices $D_i$ by two extra rows and columns to the matrices
\begin{displaymath}
  E_i =
  \begin{pmatrix}
 D_i & 0 &    t_i\\
 r_i^T & 0 & s_i\\
    0 & 0 & 0\\
  \end{pmatrix},
\end{displaymath}
where the vectors $r_i$ and $t_i$ and the scalar $s_i$ are chosen to
make all row and column sums zero.
It can be checked that the product $E_iE_j$ of two such matrices has the same form, its
row and column sums are zero,
and its upper left corner contains the matrix $D_iD_j$.
We extend the starting vector $e_1$ and the output vector $e_7$
by adding two zeros at the end.
%to $\pi_4=(0,\pi_3,0)$ and
%$\eeta_4=(0,\eeta_3,0)$.
Due to this setup, the extra rows and columns play no role for the result.
We have thus transformed \eqref{eq:matrix-C-2} into the equivalent
condition
\begin{equation}
  \label{eq:matrix-D+}
  e_1^T
  E_{a_1} E_{a_{2}}\ldots %E_{a_{m-1}}
  E_{a_m}
% E_{a_{m}}\ldots E_{a_2}E_{a_1}
  e _7 > 0,
\end{equation}
The matrices $E_i$ have dimension $d=9$.
Let $J$ denote the doubly-stochastic $d\times d$ matrix of
the ``completely random'' transition with all
entries equal to $1/d$.
Then, with a small constant $\alpha>0$,
we form the matrices $F_i := J + \alpha E_i$.
The constant $\alpha$ is chosen small enough such that $F_i>0$ for all
matrices~$F_i$.
Since the row sums are 1, the matrices $F_i$ are now stochastic.

The product of the new matrices $F_i$ is
\begin{displaymath}
      e_1^T
  F_{a_1} F_{a_{2}}\ldots %E_{a_{m-1}}
  F_{a_m}
%      F_{a_{m}}\ldots F_{a_2}F_{a_1}
      e _7
=
      e_1^T
J
e _7
+ \alpha^{m}  
      e_1^T
  E_{a_1} E_{a_{2}}\ldots %E_{a_{m-1}}
  E_{a_m}
%      E_{a_{m}}\ldots E_{a_2}E_{a_1}
      e _7.      
\end{displaymath}
The reason is that $E_iJ = JE_i=0$, and hence all ``mixed'' terms in
the expansion of the product $\prod F_{a_i}=\prod (J + \alpha E_{a_i})$ vanish.
The ``pure'' product $J^{m}$ simplifies due to the relation $JJ=J$.
Since
$e_1^T J e _7 =
%\bigl(
%\frac1d\
%\frac1d\,
%\cdots\,
%\frac1d
%\bigr) \eeta _4 = 
1/d =  1/9
$, \eqref{eq:matrix-D+} becomes equivalent to
    \begin{displaymath}
%      \label{eq:matrix-..}
      e_1^T
  F_{a_1} F_{a_{2}}\ldots %E_{a_{m-1}}
  F_{a_m}
%      F_{a_{m}}\ldots F_{a_2}F_{a_1}
      e_7 > \tfrac19.
%      \qedhere
    \end{displaymath}
    The matrices $F_i$ are positive by construction. This proves the result.
%  For weak inequality $(\ge\lambda)$ as the  acceptance criterion,
% we simply remove the term 1 in the upper right corner of the
% definition \eqref{eq:mat-Claus-3-original} of $A(v,w)$.
% It is still true that the last column is
% a linear combination of the remaining columns
% \eqref{eq:claus-7-6-orig}, with a modified coefficient vector
% $\hat f^T= (0,0,0, -1, {-1}, {-1})$. The remainder of the proof goes through
% without change.    
\end{proof}

%\paragraph{Uniqueness.}
%IST NICHT GEGEBEN. Warum eigentlich?  Because, with any solution
%$a= a_1a_2\ldots a_m$, all powers of $a$ are also solutions.

% Neary's 5-pair PCP simulates
% a binary tag system, and
% a binary tag system
%  is a deterministic system \cite{neary:PCP5:2015}. It follows from the
%  correctness argument of the simulation that the PCP solution is
%  unique if it exists, apart from the obvious possibility of repeating
%  the solution arbitrarily. We have excluded the last possibility by
%  fixing the ``terminal'' word pair $(u_5,w_5)$ to be the last pair and
%  removing it from the list. Thus, uniqueness carries over to our PFA
%  simulation.

\subsection{Binary alphabet, %proof of
  Theorem~\ref{thm-variable-matrices-binary}}
\label{sec:hirv}

For reducing the size of the input alphabet (or the number of
matrices) to the absolute minimum, namely two, we modify some steps of the
previous proof and  combine them with other steps.

\HirvensaloAdaptedSmallMatrices*
\begin{proof} 
We closely follow the proof of
Hirvensalo \cite{hirvensalo07}.

\subsubsection{Step \ref*{thm-variable-matrices-binary}.1. Modeling the PCP by integer matrices} 

We use the following variation of the matrix $ A$
from \eqref{eq:mat-Claus-3-original}. (It has been reflected at the
SW--NE diagonal.)
\begin{displaymath}
%  \label{eq:mat-Claus-7-gespiegelt}
\tilde  A(v,w)=
  \begin{pmatrix}
  %  \left( \begin{array}{cccccc}  % OR: ;{〈dash〉/〈gap〉}
3^{\strut|v| + |w|} & 0 & 0 & -(v)_{3}\cdot 3^{|w|} & (w)_{3}\cdot 3^{|v|} & -2 (v)_{3} (w)_{3} \\
0 & 3^{2 |w|} & 0 & (w)_{3}\cdot 3^{|w|} & 0 & ((w)_{3})^{2} \\
0 & 0 & 3^{2 |v|} & 0 & -(v)_{3}\cdot 3^{|v|} & ((v)_{3})^{2} \\
0 & 0 & 0 & 3^{|w|} & 0 & 2 (w)_{3} \\
0 & 0 & 0 & 0 & 3^{|v|} & -2 (v)_{3} \\
0 & 0 & 0 & 0 & 0 & 1 \\
 % \end{array}  \right)   
  \end{pmatrix}
\end{displaymath}
It also satisfies the
multiplicative law, but
with reversed order of factors:
\begin{displaymath}
%  \label{eq:multi-gamma-back-Hir}
\tilde A (v_1,w_1)
\tilde A (v_2,w_2) =
\tilde A (v_2v_1,w_2w_1)
\end{displaymath}
Thus it works in the same way as the matrix $ A$.\footnote
{\label{differ-blondel}%
  Our definition of the matrix $\tilde A$ differs from the corresponding matrix $A(u,v)$ in
  \cite[p.~235]{blondel-canterini-03}) 
and the matrix $\gamma(u,v)$ in %  Hirvensalo
\cite[Eq.~(4)]{hirvensalo07}
in some nonessential details.
 Most importantly, the matrix is transposed.
 In addition, the first and second rows,
 and the first and second columns are swapped.
The matrices in \cite{blondel-canterini-03}
and %  Hirvensalo
\cite{hirvensalo07} have no negative signs.
Blondel and Canterini
\cite{blondel-canterini-03} use radix 10 instead of radix~3,\footnotemark\
and Hirvensalo
\cite{hirvensalo07} uses radix~2.$^{\ref{arbitrary-base}}$ %\footnotemark
We have chosen the form of $\tilde A$ to remain consistent with
 \cite{claus81} and \eqref{eq:mat-Claus-3-original}.

In the matrix formulation of the acceptance probability of a
PFA in Hirvensalo \cite[Eq.~(1), p.~310]{hirvensalo07},
the starting
distribution $\pi$
(or $\mathbf{y}$, in the notation of \cite{hirvensalo07})
is written on the right end of the matrix product and the
output vector $f$ (or $\mathbf{x}$) on the left, and the transition matrices are
 column-stochastic.
%In order to adhere to the common convention, we have chosen the
%transposed
%form~\eqref{eq:gamma-matrix}.
 Thus, despite the fact that the matrices are transposed, our
 interpretation as a PFA is the same as Hirvensalo's.
}
\footnotetext{ ``The notation is quaternary, decimal, etc., according
  to taste.'' (Paterson~\cite{paterson70})}%
The important
feature is that the last row is a unit row, and thus it acts like an
absorbing state. This will allow us to save two states when reducing the
number of matrices to two.%
% TRICKERY WITH MULTIPLE SUBFOOTNOTES IN A FOOTNOTE
%
{\setbox0=\hbox{\kern 24cm % out of the way
  \footnote{\label{arbitrary-base}The reader may worry that the binary value function
  $(u)_2$ might not be injective for strings $u$ with ternary digits
  $1,2$.  In fact, the function $(u)_2$ is even a \emph{bijection}
  between the strings $u\in\{1,2\}^*$ and the nonnegative
  integers. This can be extended to any radix.}}%
\dimen0=\wd0 \unhbox0\kern-\dimen0 % make the fnote symbol disappear
}

We can test equality of strings $v=w$ in terms of the matrix  $\tilde A$ with
a starting vector
$\pi_1=(-2,-2,-2,0,0,1)^T$
and an output vector
$\eeta_1=(0,0,0,0,0,1)^T$:
\begin{displaymath}
 \pi_1^T\tilde A (v,w)\eeta_1 = 1-2\bigl((v)_{3}-(w)_{3}\bigr)^2 
\end{displaymath}
Thus, since all data are integral, $\pi_1\tilde A (v,w)\eeta_1>0$ iff
$v=w$.
The value 0 cannot appear, and thus the construction works
equally with
the weak inequality
$\pi_1\tilde A (v,w)\eeta_1\ge 0$.\footnote
{Hirvensalo uses the starting vector $(-1,-1,-1,0,0,1)$ (translated to
  our formulation), and this works only for acceptance with strict inequality.
He treats the weak acceptance variant (with $\ge \lambda$)
by adjusting the matrix later,
which requires an additional state
\cite[Modification Step 3.5]{hirvensalo07}.
Alternatively, he could have used the starting vector $(-1,-1,-1,0,0,0)$.
}

For a PCP with word pairs $(v_i,w_i)$, we thus form the matrices
$B_i := \tilde A (v_i,w_i)$, and then the PCP solutions  $a_1a_2\ldots a_m$
(including the empty solution)
are characterized by the inequality
\begin{displaymath}
%  \label{eq:2MPCP-Hirv}
\pi_1^T B_{a_m}B_{a_{m-1}}\ldots B_{a_2}B_{a_1} \eeta_1 > 0.
\end{displaymath}

\subsubsection{Step \ref*{thm-variable-matrices-binary}.2. Merging the first and last matrices into the
  boundary vectors}
\label{sec:merge-canterini}

Hirvensalo
\cite{hirvensalo07}
based his proof on the undecidable PCP instances with
 7 word pairs of
 Matiyasevich and S{\'e}nizergues~\cite{matiyasevich2005decision},
 which had the smallest number of pairs until 2015.
% An earlier version appeared in "11th Annual IEEE Symposium on Logic
% in Computer Science, LICS 1996".
These instances are actually instances of what we have called
the Doubly-Modified Post Correspondence Problem (2MPCP,
 Section~\ref{sec:vary-f}).
They have
two special pairs
$(v_1,w_1)$ and $(v_2,w_2)$, which must be used at the beginning
and at the end of the solution, respectively, and can be used nowhere else.
Thus, as in \eqref{eq:2MPCP}, the two corresponding matrices $B_1$ and $B_2$
can be multiplied with $\pi_1$ and $\eeta_1$, respectively,
yielding new matrices
$\pi_2^T =  \pi_1^T B_2$
and $f_2 := B_1  f_1 $
This reduces the number of matrices, and thus the size of the
alphabet, from 7 to 5.
By this change, we have also eliminated the unwanted empty solution ($m = 0$).

We can reduce the number of matrices by substituting the undecidable
PCP instance of Neary~\cite{neary:PCP5:2015} with only 5 pairs.  In
these instances, % of Neary,
there is also a starting pair $(v_1,w_1)$ and an ending pair
$(v_2,w_2)$, such that every solution necessarily starts with % the pair
$(v_1,w_1)$ and ends with $(v_2,w_2)$.  The ending pair % $(v_5,w_5)$
can appear nowhere else.  However, unlike in the 2MPCP, the starting
pair
% $(v_1,w_1)$
is also used in the middle of the solutions.
(This multipurpose usage of the
starting pair is one of the devices to achieve such a  remarkably small number of
word pairs.\footnote
{The proof of Theorem~11
in Neary~\cite{neary:PCP5:2015}
contains a mistake which needs to be fixed:
Neary's PCP instances encode binary tag systems.
%111, 1111) cannot be chosen as four consecutive 1s do not appear in
%the encoding (this is because in u we cannot have 2 encoded c symbols
%(hci = 1) next to each other).
 When showing that the PCP solution must
 follow the intended patters of the simulation of the binary tag system,
 Neary \cite[p.~660]{neary:PCP5:2015}
needs to show that the ending pair $(v_5,w_5)=
(10^\beta1111,1111)$ cannot be used except to bring the two strings to
a common end. He
 claims that
 a block $1111$ cannot appear in the encoded string because in $u$
(the unencoded string of the binary tag system, which is described in Lemma~9)
 we
cannot have two
% NOT: "encoded"
$c$ symbols
next to each other.
This is not true. The paper contains plenty of examples,
and they
contradict this claim; for example, the string $u'$ in (7)
\cite[p.~657]{neary:PCP5:2015} contains seven $c$'s in a row.
The mistake can be fixed by taking a longer block of 1s:
Looking at the appendants in Lemma~9, it is clear that every block of
length $|u|+1$ must contain a $b$. Thus the word pair
$(v_5,w_5)=
(10^\beta1^{|u|+99},1^{|u|+99})$ will work.
% https://arxiv.org/abs/1312.6700: did not contain the finishing pair
% at all.
})
Thus, we can 
merge
the boundary matrices
 into $\pi_1$
 and $\eeta_1$ as above.
But this reduces the number of matrices only from 5 to~4. 

Thus, we are left with four $6\times 6$ integer matrices, which we call
$C_1,C_2,C_3,C_4$.

\subsubsection{Step \ref*{thm-variable-matrices-binary}.3. Reduction to  two matrices %binary alphabet.
}
\label{sec:Hirvensalo-to-binary}

We have in integer-weighted automaton with 6 states and inputs
$a_{m-1}a_{m-2}\ldots a_2%a_{1}
\in \{1,2,3,4\}^*$ from an alphabet of 4
symbols.
We reduce the input alphabet to a binary alphabet along the lines
of Lemma~\ref{coding}, using the codewords
\texttt{b}, \texttt{ab}, \texttt{aab}, \texttt{aaa}.

Applying Lemma~\ref{coding} directly would multiply the number of states
by~3.
However, we observe that state 6 behaves like an absorbing state,
because the last line of all matrices is the unit vector that leads
back to state 6.
Thus, when the automaton is in state 6, we can stop decoding the
input,
and we need not split state 6 into three states.

Formally, we split the $6\times 6$ matrices $C_i$ and the vector
$\pi_2$ and $\eeta_2$ into
blocks of size $5+1$:
\begin{displaymath}
  C_i =
  \begin{pmatrix}
    \hat C_i &c_i\\
    0 & 1
  \end{pmatrix}
, 
  \pi_2 =
  \begin{pmatrix}
    \hat \pi_2\\
p_2%    (\pi_2)_6
  \end{pmatrix}
,
  \eeta_2 =
  \begin{pmatrix}
    \hat \eeta_2\\
g_2%    (\pi_2)_6
  \end{pmatrix}
.
\end{displaymath}
% \texttt{a} and \texttt{b}
Following the construction in the proof of Lemma~\ref{coding},
the new transition matrices and starting and ending vectors
are written in block form
with block sizes $5+5+5+1=16$:
\begin{displaymath}
  % D_1 =
  M'_{\mathtt{a}}=
  \begin{pmatrix}
  0 & I & 0 & 0\\
  0 & 0 & I & 0 \\
  \hat C_4  &  0 & 0 &c_4\\
  0&0&0&1  
\end{pmatrix}
,\ %\text{ and }
% D_2=
M'_\mathtt{b}=
  \begin{pmatrix}
    \hat C_1 & 0 & 0 & c_1\\%[0.4ex]
    \hat C_2 & 0 & 0 & c_2\\%[0.4ex]
    \hat C_3 & 0 & 0 & c_3\\%[0.4ex]
    0 & 0 & 0 & 1
  \end{pmatrix}
,\ %\text{ and }
\pi_3 =
\begin{pmatrix}
 \hat \pi_2\\0\\0\\p_2%(\pi_2)_6
\end{pmatrix} 
\text{, and }
f_3 =
\begin{pmatrix}
\hat f_2\\\hat f_2\\\hat f_2\\g_2
\end{pmatrix}
.
\end{displaymath}
With the intermediate product
\begin{displaymath}
  M'_{\mathtt{a}}
M'_{\mathtt{a}}
=
  \begin{pmatrix}
  0 & 0 & I & 0 \\
  \hat C_4  &  0 & 0 &c_4\\
 0&  \hat C_4  & 0 &c_4\\
  0&0&0&1  
\end{pmatrix},
\end{displaymath}
one can work out the matrices
$M'_{\mathtt{b}}$,
$M'_{\mathtt{a}}
M'_{\mathtt{b}}$,
$M'_{\mathtt{a}}
M'_{\mathtt{a}}
M'_{\mathtt{b}}$, and
$M'_{\mathtt{a}}
M'_{\mathtt{a}}
M'_{\mathtt{a}}$
and check that they
are of the form
\begin{displaymath}
  \begin{pmatrix}
    \hat C_i & 0 & 0 & c_i\\
    * & * & * & *\\
    * & * & * & *\\
    0 & 0 & 0 & 1    
  \end{pmatrix}
\end{displaymath}
for $i=1,2,3,4$, thus simulating the original automaton on the states
1--5 and 16:
It is easy to establish by induction that
multiplying the initial distribution $\pi_3^T$ with a sequence of such matrices produces a
vector $x^T$ of the form
\begin{equation}
  \label{eq:result-form-Hirv}
x^T = (\hat x^T \ 0\ 0\  x_{6})
  % \begin{pmatrix}
  %   \hat x \\0\\0\\ *
  % \end{pmatrix}
\end{equation}
whose nonzero entries
$\underline x^T=(\hat x^T\  x_{6})$ %(indicated by stars)
coincide with the entries of
the corresponding vector produced from $\pi_2^T$ with the original matrices~$C_i$.
If the scalar product with $f_3$ is taken, the result is the same as with
 the original vectors $\underline x^T$ and~$f_2$.

One technicality remains to be discussed: Some  ``unfinished'' words in
$\{\mathtt{a},\mathtt{b}\}^*$  do not factor into codewords but
end in a partial
codeword \texttt{a} or \texttt{aa}.
To analyze the corresponding matrix products, we also look at the products
$M'_{\mathtt{a}}$, and
$M'_{\mathtt{a}}M'_{\mathtt{a}}$. They have the form
\begin{displaymath} %\label{eq:form-MaMa}
  \begin{pmatrix}
    0 & I & 0 & 0\\
    * & * & * & *\\
    * & * & * & *\\ %c_4\\
    0 & 0 & 0 & 1    
  \end{pmatrix}
\text{ and  }
  \begin{pmatrix}
    0 & 0 &I & 0\\
    * & * & * & *\\
    * & * & * & *\\ %c_4\\
    0 & 0 & 0 & 1    
  \end{pmatrix},
\end{displaymath}
and therefore, multiplying them
with the vector $x^T$ of \eqref{eq:result-form-Hirv}
yields
$ (0 \ \hat x^T\, 0\  x_{6})$ and
$ (0 \ 0\ \hat x^T\,  x_{6})$, respectively.
If this is multiplied with $f_3$, the result is still the same as with
 the vector $x^T$ of \eqref{eq:result-form-Hirv}. % the original matrices $C_i$ and~$f_2$.
Thus, \emph{input sequences whose decoding process leaves
a partial codeword $\mathtt{a}$ or $\mathtt{aa}$ at the end produce the same value
as if that partial codeword
% the corresponding last matrix or last two matrices
were omitted}.
This is the desired behavior in the context of the emptiness
question.\footnote
{\label{Hirvensalo-mistake}
  We are thus committing the (inconsequential) ``error''
  which took
  Nasu and Honda
\cite{NasuHonda1969}
and
Paz \cite{paz71} some effort to fix, see
Section~\ref{sec:reflection-binary}.
  Hirvensalo
  \cite%p.?
  {hirvensalo07} erroneously claimed that such
  incomplete inputs %  ill-formed strings
  give value~0.
He mistakenly defined the vector $f_3$ ($\mathbf{y}_3$ in his notation)
``analogously
as the vector $\pi_3$ ($\mathbf{x}_3$ in his notation). With this
definition, the result for incomplete inputs cannot be controlled at all.
}
We can describe the issue in terms of the algorithm of
Lemma~\ref{coding}:
The acceptance decision of this algorithm takes into account the value
of the
counter $i$: It accepts only when $i=0$. However, in state $6$, our algorithm ``loses track'' of~$i$.
Thus we have to ignore the counter also in the other states. This is
reflected in the vector $f_3$ by copying $\hat f_2$ for every value
of the counter ($i=0,1,2$).

\subsubsection{Step \ref*{thm-variable-matrices-binary}.4. Introducing
  unit vectors as starting and ending vectors}
\label{sec:hirv-unit-vector}
We have now
two $16\times 16$ matrices
$M'_{\mathtt{a}}$, and
$M'_{\mathtt{b}}$
%which we call
%$D_1,D_2$,
and the
corresponding starting and ending vectors $\pi_3$ and~$\eeta_3$.

This step and the next one follow the standard proof of Turakainen's
Theorem~\cite{turakainen69,Turakainen_1975}, and they have already been described in Section~\ref{sec:fewest-possible}.

First we % essentially
introduce a new
start state (state 1)
and a new final state (state~18), extending the matrices to size
$18\times 18$.
This is similar to Step~\ref*{thm-claus-9-states}.2
(Section~\ref{sec:step2-claus}),
where we introduced only a final state, because the starting vector
was already a unit vector.

We set $\pi_4=e_1$, $f_4 = e_{18}$, and
\begin{displaymath}
  D_1 =
  \begin{pmatrix}
    0 & \pi_3^T M'_{\mathtt{a}}  & \pi_3^T M'_{\mathtt{a}} f_3\\
0 & M'_{\mathtt{a}}  & M'_{\mathtt{a}}f_3 \\
    0 & 0 & 0\\
\end{pmatrix}
,\
  D_2 =
  \begin{pmatrix}
    0 & \pi_3^T M'_{\mathtt{b}}  & \pi_3^T M'_{\mathtt{b}} f_3\\
0 & M'_{\mathtt{b}}  & M'_{\mathtt{b}}f_3 \\
    0 & 0 & 0\\
\end{pmatrix}.
\end{displaymath}
One can check that this leads to the same results.

\subsubsection{Step \ref*{thm-variable-matrices-binary}.5. Conversion to stochastic matrices}
This conversion has been described above in
Step \ref*{thm-claus-9-states}.3
(Section~\ref{sec:claus-stochastic}). It introduces two more states,
leading to a total of 20 states.
This concludes the proof of Theorem~\ref{thm-variable-matrices-binary}.
\end{proof}

%\paragraph{Uniqueness.} Uniqueness is lost in conversion to binary.

\subsection{History of ideas}

Mike Paterson~\cite{paterson70} pioneered the modeling of the PCP by multiplication of
 integer matrices in 1970 in order to show that mortality for $3\times 3$
 matrices is undecidable. This problem asks whether the zero matrix
 can be obtained as a product of matrices from a given set.
 Paterson %  He
 introduced the matrix
 \begin{equation}
   \label{eq:Paterson}
   \begin{pmatrix}
   10^{|v|} &0&0\\
   0&  10^{|w|}&0\\
   (v)_{10}
&   (w)_{10} & 1   
   \end{pmatrix},
 \end{equation}
to represent a pair $(v,w)$ of a PCP.\footnote
{Paterson %~\cite{paterson70}
 specified these matrices informally by way of example, without
 committing to a particular base: ``The notation is quaternary, decimal, etc., according
 to taste.''}
These matrices satisfy a multiplicative law like~\eqref{eq:multi-gamma-forward}.

It was pointed out both by
Claus \cite[p.~153]{claus81}
and by
Blondel and Canterini \cite[p.~235]{blondel-canterini-03} that one
can generate the quadratic terms that are necessary to form
the expression $1-\bigl((v)_{10}-(w)_{10}\bigr)^2$
by taking the Kronecker product of two $3\times 3$ matrices
of the form~\eqref{eq:Paterson},
which would be a $9\times 9$ matrix.
The
matrix
\eqref{eq:mat-Claus-3-original}
of smaller size  $6\times 6$, which is sufficient for this task, was
found by Claus
 \cite[p.~153]{claus81} in 1981, and in slightly different form
 by
Blondel and Canterini \cite{blondel-canterini-03} in 2003, see footnote~\ref{differ-blondel}.

As repeatedly mentioned, the conversion from general integer
matrices and vectors $\pi$ and $f$ to the PFA setting
is due to Turakainen~\cite{turakainen69} from 1969.
 The techniques were later sharpened by
 Turakainen~\cite{Turakainen_1975} in 1975.

We mention
that
  Theorem~\ref{thm-claus-9-states} can be slightly
  strengthened by using techniques from the proof of
  Theorem~\ref{thm-variable-matrices-binary}.
Neary's PCP is an RMPCP with an ending pair that that is used nowhere else.
Thus, we can merge the rightmost matrix $B_m$ with the vector
$f_1$, analogous to
Step \ref*{thm-variable-matrices-binary}.2
(Section~\ref{sec:merge-canterini}). If this is done right at the beginning,
\emph{before}
introducing
a new final state
(Step~\ref*{thm-claus-9-states}.2, Section~\ref{sec:step2-claus}), it
does not affect the remainder of the proof.
This reduces the number of matrices in
  Theorem~\ref{thm-claus-9-states} from 5 to 4.
At the same time,
this eliminates the empty PCP solution.
In addition, with the
 alternative final vector $f_1^T=(0,0,0,-1,-1,-1)$
instead of $(1,0,0,-1,-1,-1)$ in \eqref{eq:check-u=v},
the theorem
extends to weak inequality $(\ge1/9)$ as the  acceptance criterion.

% \begin{multline}
%   \label{eq:mat-Claus-3-gespiegelt}
%   A(v,w)=  \text{(an der Nebendiagonale gespiegelt)}\\
%     \left(
%  \begin{array}{c:cccccc}  % OR: ;{〈dash〉/〈gap〉}
%   0
%    & \!-3^{\strut|v| + |w|}
%    & \!\!-3^{2 |w|}
%    & \!\!-3^{2 |v|}\!\!
%         &\bigl((v)_{3}{-}(w)_{3}\bigr) 3^{|w|}
%         &\bigl((v)_{3}{-}(w)_{3}\bigr) 3^{|v|}
%         &  1-\bigl((v)_{3}{-}(w)_{3}\bigr)^2
% \\ \hdashline %[2pt/2pt]          
% 0&3^{\strut|v| + |w|} & 0 & 0 & -(v)_{3}\cdot 3^{|w|} & (w)_{3}\cdot 3^{|v|} & -2 (w)_{3} (v)_{3} \\
% 0&0 & 3^{2 |w|} & 0 & (w)_{3}\cdot 3^{|w|} & 0 & ((w)_{3})^{2} \\
% 0&0 & 0 & 3^{2 |v|} & 0 & -(v)_{3}\cdot 3^{|v|} & ((v)_{3})^{2} \\
% 0&0 & 0 & 0 & 3^{|w|} & 0 & 2 (w)_{3} \\
% 0&0 & 0 & 0 & 0 & 3^{|v|} & -2 (v)_{3} \\
% 0&0 & 0 & 0 & 0 & 0 & 1 \\
%   \end{array}
%   \right)
% \end{multline}

% \begin{equation}
%   \label{eq:claus-7-6-gespiegelt}
%   A(v,w) =
%   \left(
%     \begin{array}{c:c}
%       0 & c^T \hat A%(v,w)
%       \\
%       \hdashline \raise3pt\strut
%       0 & \hat A%(v,w)
%       \\     
%     \end{array}
%   \right)
%   \text{ with } c^T= ( -1\ {-1}\ {-1}\ \, 0\ \,0\ \,1)
% \end{equation}

% INPUT NEW RESULTS?
\section{Alternative universal Turing machines}
\label{sec:alternative}

Our proofs rely on a particular small universal Turing machine.
In the literature, some % different
``universal'' Turing machines with smaller numbers of states and symbols are
proposed. We review these machines %proposals
and discuss
whether they could possibly be used to decrease the number of
% word pairs
matrices in
Theorems~\ref{thm-fixed-f}--\ref{thm-fixed-pi}.

%why we believe they cannot be used

\subsection{Watanabe, weak and semi-weak universality}
\label{sec:watanabe-weakly}

A universal Turing machine $U_W$
with 3 symbols and 7 states was
published by
Shigeru Watanabe~\cite{watanabe72} in 1972,
but I haven't been able to get hold of this paper.
According to the survey
\cite[Fig.~1]{woods-survey-2009},
%falsche ARXIV, fixed
this is
a \emph{semi-weakly} universal Turing machine.
In \emph{semi-weakly} and
 \emph{weakly} universal machines,
 the empty parts of the tape
on one or both sides of the input
 are initially filled
with
some %fixed
repeating pattern
instead of uniformly
blank symbols.
Such a
repeating pattern can be
easily accommodated in
the translation to the MPCP
by modifying the padding pair of words
$(\texttt{\#},\texttt{\blank \#\blank })$.

In the worst case,
the 21 rules contain only one halting
rule and the remaining 20 rules are balanced between left- and
right-moving rules.
Then, with
10 left-moving rules, 1 halting rule,
10 right-moving rules, and $|\Gamma|=3$, we get
$3\times 3+3  + 10\times3+ 11=53$ matrices, the same number
as from the machine
$U_{15,2}$ of
Neary and Woods.
Any imbalance in the distribution
of left-moving and right-moving rules
would allow to reduce the number of matrices
in Theorem~\ref{thm-fixed-f}b
 from
53 to 51 or less.

This speculative %potential
improvement depends
on an %is conditional 
% We are working under the
assumption,
which would need to be verified. %I haven't been able to verify. % / ascertain 
According to \cite[Section 3.1]{woods-survey-2009},
Watanabe's weak machine $U_W$ simulates other
Turing machines $%\hat
T$ directly.
What would be most useful for us is that
the periodic pattern that initially fills the tape of $U_W$ is
a fixed pattern that is specified as part of the definition of~$U_W$
and does not depend on $ T$ or its input.
% the universal Turing machine

If this is the case, we can use them for
our construction, % of ... !!!,
% our purposes,
where only the first (or
last) pair of the PCP should depend on the input.
%REPEATED BELOW.

We could even accommodate some weaker % condition/
requirement, namely that the 
periodic pattern depends on the Turing machine $ T$ that is being
simulated, as long as it is independent of the input $u$ to that Turing machine.
 We could then let $U_W$ simulate a fixed universal
Turing machine $T=U_0$ (universal in the usual, standard, sense),
%that
%takes the description of $ T$ as part of its input,
and then the periodic pattern would also be fixed. % only depend on $T_0$.

\subsection{Wolfram--Cook, rule 110}
\label{tag-systems}
%Woods and Neary
%\cite{woods-survey-2009} list also
%There are
Some small machines
% which
are based on simulating a particular cellular
automaton, the so-called \emph{rule-110 automaton} of Stephen Wolfram.
% Some of
These machines are given in \cite[Fig.~1, p.~3]{cook2004}
and \cite{small-weakly-universal-2009},
see also the survey~\cite{woods-survey-2009}.
The machines of \cite{small-weakly-universal-2009} have as few as 6
states and 2 symbols, or 3 states and 3 symbols, or 2 states and 4 symbols.
The {rule-110 automaton} was shown to be universal by
Matthew Cook~\cite{cook2004},
%\footnote{As of August 7, 2023, the DOI of this
%  publication did not work.}
see also Wolfram~\cite[%Chapter 11,
  Section 11.8, pp.~675--689]
{wolfram-NKS}\rlap.\footnote
{on-line at
  \url{https://www.wolframscience.com/nks/p675--the-rule-110-cellular-automaton/}}%
The universality of the rule-110 automaton comes from the fact that
rule 110 can simulate 
\emph{cyclic tag systems}.
\emph
{Tag systems} are a special type of string rewriting systems, where
symbols are deleted from the front of a string, and other symbols are
appended to the end of a string, according to certain rules.
\emph{Cyclic tag systems}
 are a particularly simple variation of tag
systems. Tag systems as well as cyclic tag systems are known to be
universal,
because they can simulate Turing machines.

%\bigskip
%\url{Cook: https://www.complex-systems.com/abstracts/v15_i01_a01/}

One difficulty
%The primary reason
with these small Turing machines, which makes them not directly suitable
for our
purposes is that,
like in the weakly universal machines of
Section~\ref{sec:watanabe-weakly},
the repeating patterns by which the ends % empty
% parts
of the 
tape are filled are not fixed, but depend on the tag system,
%in a complicated way,
see
\cite[Note on initial conditions, p.~1116]
{wolfram-NKS}\rlap.\footnote
{%on-line at
  \url{https://www.wolframscience.com/nks/notes-11-8--initial-conditions-for-rule-110/}}
%
%[Chapter 11: The Notion of Computation
%Section 8: The Rule 110 Cellular Automaton]
As a consequence, we don't have a fixed replacement for
 the padding pair
 $(\texttt{\#},\texttt{\blank \#\blank })$.
 Thus we cannot %directly
 use them for the proof of
Theorems~\ref{thm-fixed-f} and \ref{thm-fixed-pi},
% the construction of ... !!!,
% our purposes,
where only the first (or
last) pair of the PCP should depend on the input.
% and we don't get a fixed
%set of matrices
%from which we choose the matrix product.

As in Section~\ref{sec:watanabe-weakly}, one could start with a universal Turing
machine $U_0$ and
construct from it a fixed cyclic tag system.
The classic way to do this is the  method of Cocke and Minsky from
1964~\cite{cocke-minsky-tag-system-1964}.
It converts a Turing machine $U$ to a tag system $\mathcal{C}$. The transition
rules of $U$ are converted to the rules (appendants) of $\mathcal{C}$, and
the input tape of $U$ is translated to the starting
string of the tag system $\mathcal{C}$.
If we start with a universal Turing machine $U_0$, the rules of
$\mathcal{C}$ are fixed.
 The simulation of
the tag system $\mathcal{C}$
by a cyclic tag system $\mathcal{C}'$ is  easy~\cite{cook2004}\rlap.\footnote{An alternative simulation
of Neary and Woods~\cite{neary-woods-rule110-2006} simulates a
Turing machine directly by a cyclic tag system $\mathcal{C}'$, and is 
also much more time-efficient.}
% with polynomial overhead.
% In the construction of~\cite{neary-woods-rule110-2006}, a halting
% state leads to a repeating sequence of configurations. The
% construction can be modified so that $\mathcal{C}$ enters a stage
% where all appendants are empty, and $\mathcal{C}$ comes to a halt.
Eventually, in the Turing machine $T$ that
simulates the rule-111 automaton via the cyclic tag system  $\mathcal{C}'$,
this translates
to a fixed pattern by which the empty tape of $T$ is filled on the two sides
of the input.
Therefore,
 the padding pair
 $(\texttt{\#},\texttt{\blank \#\blank })$ in the PCP has a fixed replacement,
 which is translated into a fixed stochastic matrix for the PFA.

 There is, however, another reason why the %we cannot use these
machines
simulating the rule-110 automaton cannot be used
directly: %, and it has to do with \emph{halting},
They have no provision for \emph{halting},
or for otherwise determining some set of inputs that they %\emph
{accept}.%
\footnote{%(
Curiously, %  Strangely,
while the survey of Woods and Neary
\cite{woods-survey-2009} carefully distinguishes 
% Woods and Neary
%\cite{woods-survey-2009} carefully distinguish
 \emph{semi-weak} and
 \emph{weak} universality,
the fundamental characteristic % distinction
whether the % universal machine that can halt and
%``universal''
Turing machine has a provision for halting % command
is
% blurred and
treated only as an afterthought.%
\footnotemark}%
\footnotetext{
Incidentally,
in Turing's original article %paper 
\cite {turing-1936} from 1937,
where the machines that are now called Turing machines were first defined, %introduced,
the good %Turing
machines are those that
\emph{don't} halt or go into a loop %, except by mistake
(the \emph{circle-free} ones),
% sic! p.233. otherwise circular
because
they are capable of producing an infinite sequence of zeros and ones
on dedicated \emph{output cells} on the tape, forming the fractional bits
of a \emph{computable number}.
The question about his machines that Turing proved to be undecidable
is:
Does this machine ever print a \texttt{0}?
}
This is natural in the context
of a %, as they are supposed to simulate a
cellular automaton,
which performs an infinite process.
However,
% Universality of the cellular automaton 110 is shown by reduction from
a %the %\emph
{cyclic tag system}, which the automaton supposedly simulates,
%are known to be universal, and which
\emph{does} have a way of terminating, % the process.
namely when the string on which
it operates becomes empty.
% This possibility is mentioned on p.11 of Cook.
Fortunately, %Luckily, % there are
Cook gives
% at least
a few hints about termination
and about undecidable questions for the corresponding Turing machines:
%in connection with rule 110:
\emph{Questions about their behavior,
such as ``Will this sequence of symbols ever appear on the tape?'', are
undecidable}
\cite[{Note [7]}, p.~38]{cook2004}.
More specifically,
Cook
mentions some particular undecidable questions for so-called \emph{glider systems}.
%and for Rule 110~\cite[p.~11]{cook2004}, and
Some consequences of 
this discussion
 for Rule 110 are briefly touched upon
 % and is continued as a sketch
 in
\cite[Section 4.6, p.~37]{cook2004}:
\emph{So another specific example of an undecidable question for Rule 110
is: Given an initial middle segment, will there ever be an $F$?}
Here, an $F$ is a particular type of ``glider'', a cyclically repeated
sequence of
patterns
that moves at constant speed through the cellular automaton as long
as it does not hit other gliders or irregularities.
% See also \cite{small-weakly-universal-2009},
% before section 3.1
Hypothesizing that % suppose
the presence of such a glider % its presence
could be detected by the occurrence of a particular
pattern $\hat F$ in the cellular automaton, or on the Turing machine tape,
% if the details were confirmed,
%If this is indeed the case,
such a criterion could be translated into a word pair $(\hat F,H)$ that
introduces the halting symbol~$H$, and this would lead to small
undecidable instances of the MPCP.

All these arguments
require careful examination, and the approach depends on a cascade of reductions,
so we have not pursued it.

\subsection{Wolfram's 2,3 Turing machine}

An even smaller Turing machine with only 2 symbols and 3 states was
proposed by
 Wolfram~\cite[%Chapter 11,
 Section 11.12,
 p.~709]
{wolfram-NKS}\footnote
{on-line at
%  \footnotesize
\url{https://www.wolframscience.com/nks/p709--universality-in-turing-machines-and-other-systems/}}
and was shown to be universal, in a certain sense,
by
Alex Smith~\cite{smith20}\rlap.\footnote
{The reader should be warned
that the journal version \cite{smith20}
  is partly incomprehensible, since the proper horizontal alignment in the
  tabular presentation of the Turing machines has been destroyed in
  the typesetting process.  
  Understandable %Readable
  versions can be found elsewhere on the web, see for example
  \url{https://www.wolframscience.com/prizes/tm23/TM23Proof.pdf}.
Apart from this, the paper could definitely have benefited from some reviewing and editorial guidance.  
}
As above in Section~\ref{tag-systems}, %gain,
the proof performs % shows that % starts
% from
% certain types of
a reduction from
cyclic tag systems, %, which are known to be universal.
and again, this machine does not halt.
Smith showed that
the
2,3 Turing machine
 can simulate
 cyclic tag systems, but
unfortunately, it is
not addressed at all what happens when
the operation of the
simulated cyclic tag system terminates.
% because
Besides the issue of halting, there is a more severe obstacle
for using this machine for our purpose:
The input is not some finite word bounded by repeated patterns on
both sides, but an infinite string that has to be set up
by an independent process.

% Hypothesizing that such a halting computation
% would lead
% to some
% % short
% repeating cycle of movements of the Turing machine,
% such a configuration could be recognized by
% the occurrence of
% a certain pattern on the tape, and an appropriate PCP pair $(v_i,w_i)$ could be
% created that introduces the halting state~$H$.

\section{Outlook}
\label{sec:outlook}

\subsection{Equality testing}
\label{equality-testing}
For the reader who has well digested the basic ideas % that go into
in the two different
proof approaches of PFA Emptiness undecidability,
it is an instructive % rewarding
exercise to see how Nasu and Honda's %the
method of testing acceptance probabilities for equality by
formula~\eqref{eq:equality-trick}
(Section~\ref{check:equality})
would
apply to the Equality Checker problem
of
Section~\ref{sec:eq}
for the string
$\texttt{a}^i\texttt{b}^j\texttt{\#}$:
It is straightforward to set up a PFA with two states that accepts
$\texttt{a}^i\texttt{b}^j\texttt{\#}$ with probability $\phi=1/2^i$,
and another that accepts it with probability $\psi=1/2^j$.

The construction of
Figure~\ref{fig:mix}a, translated into the language of
the Equality Checker, leads to the following algorithm:
The coins are flipped as usual. In the end, when reading the symbol
\texttt{\#}, the PFA flips two more coins, and
\begin{compactitem}
\item   with probability 1/4, it accepts if the red coin was unlucky;
\item 
  with probability 1/4, it accepts if the orange coin was unlucky
\item 
  with probability 1/2, it accepts if the blue coin was lucky.
\end{compactitem}
The green coin is ignored.
The resulting probability of acceptance is
\begin{equation}
  \label{eq:equality-by-coins}
\textstyle \frac14(1-1/4^i)
+\frac14(1-1/4^j)
+\frac12\cdot 1/2^{i+j} = \frac12 -\frac14 (1/2^i-1/2^j)^2,
\end{equation}
which reaches its maximum value $1/2$ if and only if $i=j$.
(To save coin flips, one would of course rather take the decision between the
three branches in advance.)

We notice a sharp contrast between the  character of
% Note that
the outcome % has a very different
in the two cases.
% While
The equality test
 by
 formulas~\eqref{eq:equality-trick}
and~\eqref {eq:equality-by-coins}
capitalizes on %leverages
the capability of a PFA to detect
 a tiny fluctuation of the acceptance probability
 above % or below
 the cutpoint.
On the other hand,
the Equality Checker,
as illustrated in
Figure~\ref{fig:EC},
almost always leaves the answer %question
``Undecided'', but if it makes a decision, the probabilities of the
two outcomes, % outcomes
in case of inequality, % are distinguished
differ by several orders of magnitude.
%

%ALREADY INTO THE MAIN TEXT! what is the good name of those two problems.
%

We remark that the fluctuation above $\lambda$ can be amplified by
a technique of
Gimbert and Oualhadj
\cite{gimbert-oualhadj-2010:PFA} from 2010,
which was considerably simplified by
Fijalkow
\cite{Fijalkow-2017-SIGLOG}:
The Emptiness Problem with strict threshold $>\frac12$
can be reduced
to
the dichotomy
``there are acceptance probabilities arbitrarily close to~1''
versus
 ``all acceptance probabilities are $\le 1/2$''.
See Section~\ref{sec:amplification-study}.

\subsection{Shortcutting the reduction?}
\label{sec:shortcut}
Figure~\ref{fig:reductions} illustrates the chain of reductions
leading to the three undecidability proofs of PFA Emptiness.
We have not drawn the detour from the PCP via integer matrices to PFA Emptiness.
In all cases, undecidability ultimately
stems from the Halting Problem for Turing machines.

\begin{figure}[htb]
  \centering
\fboxsep=5mm
\fbox{\begin{tikzpicture}[
  boxnode/.style={rectangle, draw=black, minimum size=6mm},
  style={->,shorten >=0.3pt,>={Stealth[round]},semithick}
  ]
% Nodes
  \node[boxnode,%fill=blue!10,
  align=center] (TM)
  %at (2,5) % pos
  {Turing machine};
   
  \node[boxnode,%fill=blue!10,
  align=center] (tag)
  at (-3,-1.2)
  %[below left=10mm of TM]% pos
  {tag system};

  \node[boxnode,%fill=blue!10,
  align=center] (UTM)
  % at (0,3)
  [below=8mm of tag]
  {universal Turing machine};
  
 \node[boxnode,%fill=blue!10,
  align=center] (PCP)
  at (-2,-4.1)
  %[below=8mm of UTM]
  {Post's Correspondence Problem (PCP)};
  
 \node[boxnode,%fill=blue!10,
  align=center] (PFA)
  [below=50mm of TM]
  {Probabilistic Finite Automaton (PFA) Emptiness};
  
 \node[boxnode,%fill=blue!10,
  align=center] (mp)
  [below=8mm of PFA]
  {various problems about matrix products};
  
 \node[boxnode,%fill=blue!10,
  align=center] (2CM)
  at (3.34, -3.34) %[right=12mm of UTM]
  {2-counter machine};
% adjust surrounding box for better centering:
\node[right=3mm of 2CM]{{\ }};

% Lines
\draw[->] (TM) -- (tag);
\draw[->] (TM.330) -- (2CM);
%\draw[->,dashed] (TM) edge[in=30,out=230] (UTM);
\draw[->,dotted] (TM.230) -- (UTM.25);
\draw[->,dashed] (TM.300) -- (PCP.13);
\draw[->] (tag) -- (UTM);
\draw[->] (UTM.290) -- (PCP.140);
\draw[->] (PCP.300) -- (PFA);
\draw[->] (PFA) -- (mp);
\draw[->] (2CM) -- (PFA);
  
\end{tikzpicture}
}
\caption{Reductions for proving undecidability.
   The four topmost boxes concern the \emph{Halting Problem} for the
   respective systems.
   The dashed arrow
   represents the reduction that is sufficient for
   the plain undecidability result of PFA Emptiness
(Propositions~\ref{ge1/2}--\ref{gt1/4}), without the specializations
   %introduced
   of Theorems~\ref{thm-fixed-f}--\ref{thm-fixed-pi}.
   %original proof by Nasu and Honda,
 %  PCP -- TM dashed???
 }
  \label{fig:reductions}
\end{figure}
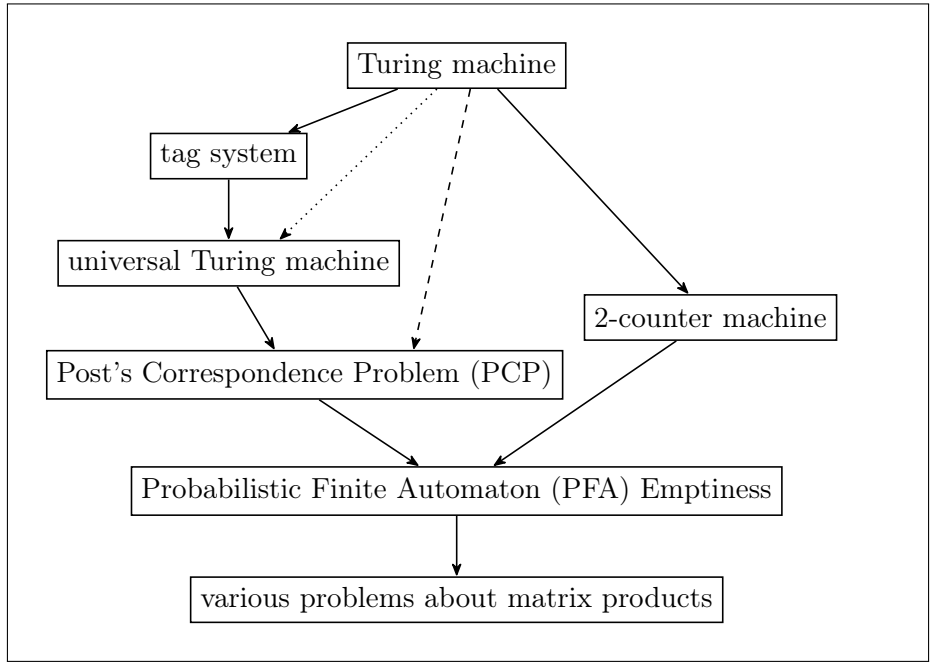

The earliest universal Turing machines simulate general
Turing machines directly, as indicated by the dotted arrow.
However,
the smallest universal Turing machines known today do not simulate
Turing machines, but  % they simulate %\emph
% arbitrary
{tag systems} (Section~\ref{tag-systems}).
%, a special type of string rewriting systems, where
%symbols are deleted from the front of a string, and other symbols are
%appended to the end of a string, according to certain rules.
In particular, this is true for the machine
$U_{15,2}$ of
Neary and Woods, on which
 Theorems~\ref{thm-fixed-f}--\ref{thm-fixed-pi} are based.
This has the
% It is
somewhat curious effect that our construction of
specialized undecidable instances of PFA Emptiness % in Section~ref{XX}
%, for which the problem is still 
proceeds by
reduction from
tag systems, which operate % are problems
% the PCP, which is a problem
on \emph{strings}, via universal Turing \emph{machines},
%to tag systems, which is
to another problem
on \emph{strings}:
the PCP. %tag systems.
It would seem natural to shortcut this detour and try to go
from
tag systems
to
the PCP
directly.
Tag systems are universal in the sense that every Turing machine can
be simulated by some tag system.
What might be useful for us is a tag system with a
(small) fixed set of rules for which the halting problem is
undecidable, depending on the starting string.
It seems that such tag systems have not been studied in their own right.
Of course, one can take a tag system that simulates a
universal Turing machine, an idea that was sketched % this approach
in
 Sections~\ref{sec:watanabe-weakly}
and~\ref{tag-systems}.
This adds another round to the detour,
but it might be interesting to pursue this approach.

% \subsection{Strictly positive matrices}
% \label{sec:positive}
% We have established a couple of undecidability results where all transition probabilities
% are constrained to be positive
% (Theorems~\ref{thm-fixed-f}a and \ref{thm-fixed-pi}a),
% but only for the case when acceptance is by the criterion
% $\ge\lambda$.
% We don't know whether this can be achieved with the classic
%  acceptance criterion
%  $>\lambda$.
 
\subsection{The minimum number of states}
\label{sec:states}
It is a natural question to ask for the smallest number of states
for which the PFA Emptiness Problem is undecidable.
%We could reduce the number of states
%to~8 (Theorem~\ref{thm-variable-matrices}a)
The number of states can be reduced
to~9 (Theorem~\ref{thm-claus-9-states}).
 When the transition matrices are fixed, the bounds are not much larger:
Depending on the precise technical formulation of the question,
9~states (Theorem~\ref{thm-fixed-pi}a)
or 11~states (Theorems~\ref{thm-fixed-f}b and~\ref{thm-fixed-pi}b) suffice.
%Even without insisting on a fixed set $\mathcal{M}$ of transition
%matrices,
%this is the
%best upper bound that we have.

Claus \cite[Theorem 7 and Corollary, p.~155]{claus81} showed that
the emptiness question can be decided for
PFAs with two states.
%As the example of binary automata shows, already a 2-state PFA can be very powerful.

\section{Epilogue: How to present a proof%: general or special
}
\label{sec:reflection}

Struggling through the literature and writing this article has prompted
me to reflect on different % possible
ways of presenting things. 
In this final section, which has become quite long, I want to discuss two issues:
(1)~Choosing the right level of abstraction.
(2)~Presenting material in a self-contained way versus relying
on powerful general results.

\subsection{Levels of abstraction}
%  I have emphasized the duality between the intuitive view of a PFA as
%  a randomizes algorithm, subject to the finite-memory restriction,
%  versus the formal view via transition matrices.
We have initially defined PFAs by a high-level informal description
as an \emph{algorithm}, referring to finite-range variables and
using metaphors such as flipping of coins.
We have complemented this with a low-level, formal definition
in terms of transition \emph{matrices}.

An even larger range of abstraction levels exists for Turing machines.
We know that a Turing machine can implement any algorithm.
So we may just specify at a high level what the machine should do,
trusting that the reader has internalized the Church--Turing thesis.
At an intermediate level, we may describe how the
Turing machine
marks cells or
carries information back and forth on the tape in fulfilling its task.
The lowest level
is the
description in terms of the transition function, % tables.
in ``assembly language'', so-to-speak.

Each level %has different merits and
has its proper place.
For example, the description of the PFA for the Condon--Lipton proof
in Section~\ref{sec:2c} remains exclusively at an abstract level.
On the other hand, the binary PFA in Section~\ref{sec:binary} is best
described in terms of its transition matrix. There is nothing % else
to say about its behavior besides that it performs its transitions as
specified by the matrix.

People might have different preferences in this matter.
If an informal and high-level description is complemented by
a more concrete ``implementation'' at a lower level, such
a presentation may offer something for every taste.
The low-level description may be useful to
confirm the understanding
or to dispel doubts.

Anyway, 
an experienced reader will be able to translate between the
levels. For example,
when a machine does several things simultaneously or keeps track of
several counters, % while processing the input,
this
corresponds to taking the product of
the state sets, and the Kronecker product of transition matrices
(cf.\ Section~\ref{sec:positive-Kronecker}).

The first two case studies contrast different levels of abstraction.

\subsection{Case study 1: Restricting the output vector
  \texorpdfstring{$f$}{f} to a  \texorpdfstring{$0$-$1$}{0-1}-vector}
\label{sec:output-0-1-epilogue}

In the proof of Theorem~\ref{thm-fixed-f}a in
Section~\ref{eleven}
(p.~\pageref{proof-b-f}), %\ref{proof-b-f}),
we convert a PFA
with arbitrary output values (acceptance probabilities) $f$ from the interval $[0,1]$ to
 an equivalent PFA with $0$-$1$-values, i.e., with a set of accepting states.

 Let us review this proof from the point of view of the abstraction
 levels used.
 After stating the idea,
 we give an informal description of the conversion process.
 Then we describe the process formally in terms of transition
 matrices and the starting distribution.
 Finally, we make an observation on the resulting matrices, and
 confirm the understanding by interpreting it in terms of the
 original idea.
A formal proof that the new PFA yields the same acceptance
probabilities is omitted.

The same statement is proved as
the last step of the first proof of Turakainen's more general theorem~\cite
[p.~308]{turakainen69} from 1969
that has been mentioned
in Section~\ref{sec:fix-start} (see footnote \ref{foot-square}).
In this proof,
the probabilities are first rescaled to ensure that $\sum f_q=1$.
Then the basic idea is the same as in our proof of Theorem~\ref{thm-fixed-f}a:
Concurrently with every step,
we generate a random variable~$X$.
In contrast to our proof,
this random variable has $d$ possible outcomes.
Each outcome corresponds to one of the states.
Moreover,
the distribution of $X$ is always the same, independently of the current state:
The $d$ outcomes of $X$ are chosen according to
the distribution $f$.
This variable is sufficient to make a decision whether
the input word should really be accepted: % in state $q$:
It is accepted if the value of $X$ matches the current state $q$, which
happens with probability~$f_q$.
Since there are $d$ copies of every state, the number of states
increases %blows up
quadratically.

The idea, however, is not explained in~\cite{turakainen69}.
 Turakainen 
  writes down the
$d^2\times d^2$ transition
matrix, the corresponding starting distribution, and
the accepting set.
Correctness is proved formally by multiplying out the matrices and
showing that they lead to the same acceptance probabilities.
%\footnote
%{The treatment in the textbook \cite[Step~V,
%pp.~123--124]{claus71},
%follows Turakainen: It gives the transition matrices, starting v%ectors,
%and
%accepting set, but it even omits the calculation:
%\emph{``Man verifiziert nun leicht \dots''} [It is easy to verify that~\dots].}
% The presentation does not go beyond the formal level.
% DAS STIMMT NICHT: Es wird schon zu Beginn erklärt, was gemacht
% wird.

Our own proof in
Section~\ref{eleven} is not only less wasteful of states,
by generating a customized random variable $X$ depending on the state,
but it also gives explanations and some intuition of what
one tries to achieve.

Incidentally, the same statement is proved by Paz with a quite
different argument,
which is
very elegant.
%% Paz p.150 uses convex combination of 0-1-vectors, refers to
%% Corollary 1.7 (heavy use of Kronecker product)
It appears on
 \cite[p.~151]{paz71} as part of the proof of a more general statement, Theorem 2.4 of Section~IIIA.
The output vector $f\in[0,1]^d$ can
  be represented as a convex combination of 0-1-vectors
  (vertices of the hypercube $[0,1]^d$).
 Appealing to a general statement
about  convex combinations
\cite[Corollary 1.7 of Section IIIA, p.~148]{paz71})
concludes the proof.
The number of states of the construction is not discussed.
As stated, the construction may lead to an exponential blow-up of the number of states.
% Proposition 1.5. p.147
However, with appropriate extra considerations, for example,
by appealing to Carathéodory's Theorem, the
blow-up can be reduced to quadratic.

Turakainen's second proof \cite%[Step~2, p.~29] ???
{Turakainen_1975} from 1975 is
much more efficient, see Sections~\ref{sec:step2-claus} and
\ref{sec:hirv-unit-vector}: It uses just one additional state to get
to a
unit vector $f$, possibly losing the empty word from the recognized language.

\subsection{Case study 2: Probability amplification}
\label{sec:amplification-study}

The classical acceptance criterion is the
strict threshold $>\lambda$.
We will now see a general method by which
the
 Emptiness Problem
with strict acceptance criterion $>\frac12$ can be strengthened to
the dichotomy ``all acceptance probabilities are $\le 1/2$'' versus
``there are acceptance probabilities arbitrarily close to~1'':

\begin{theorem}
[Gimbert and Oualhadj 2010
\cite{gimbert-oualhadj-2010:PFA}]

  \label{thm-amplify}
We are given a PFA $A$ with input alphabet $\Sigma$, and we denote its
acceptance probability for an input $u\in \Sigma^*$ 
by $\phi(u)$.
We can construct a PFA $B$ with input alphabet
$\Sigma\cup\{\mathtt{end},\mathtt{check}\}$ with the following
property.
\begin{enumerate}
\item
  % If $A$ has no input with acceptance probability $\phi(u)>1/2$,
    If every input $u$ for $A$ has acceptance probability $\phi(u)\le1/2$,
  then the same holds for $B$.
\item If $A$ has an input with acceptance probability $\phi(u)>1/2$,
  then $B$ has inputs with acceptance probability arbitrarily close to~$1$.
\end{enumerate}
\end{theorem}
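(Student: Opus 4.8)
The plan is to let $B$ run the original PFA $A$ as a subroutine but to give it the opportunity to restart many times and to accumulate evidence. The input to $B$ will look like $u_1\,\mathtt{end}\,u_2\,\mathtt{end}\,\cdots\,u_N\,\mathtt{end}$ (possibly with the $\mathtt{check}$ symbol inserted as described below), where each $u_i\in\Sigma^*$ is a candidate word for $A$. On reading a block $u_i$, the machine $B$ simulates one run of $A$ on $u_i$; when it sees $\mathtt{end}$ it records whether that run ended in an accepting state of $A$, discards the simulation, and is ready to process the next block. The point is that if some word $u$ with $\phi(u)>1/2$ exists, then by repeating $u$ many times (all $u_i=u$) the blocks give independent Bernoulli trials with success probability $\phi(u)>1/2$, so a majority-style test will succeed with probability tending to~$1$; whereas if $\phi(u)\le 1/2$ for every $u$, every block succeeds with probability at most $1/2$ no matter what the adversary supplies, and the same test caps $B$'s acceptance at $1/2$.

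First I would make precise what ``test'' $B$ applies, choosing one that is robust: a biased random walk. $B$ keeps a counter $c$, incremented by $1$ for each block that accepted in $A$ and decremented by $1$ for each block that rejected. To turn this into a PFA acceptance we have $B$, upon each $\mathtt{end}$, flip a coin and with some small probability $\delta$ commit: it \emph{accepts} the whole input if the current $c$ is positive (and rejects if $c\le 0$), moving to an absorbing state; otherwise it continues with the next block. The classical gambler's-ruin / random-walk estimate gives: if each block's success probability is exactly $p$, then $\Pr[c>0 \text{ at the commit step}]$ is close to $1$ when $p>1/2$ and $N$ is large (the walk drifts to $+\infty$), and is $\le 1/2$ when $p\le 1/2$ (here one uses the symmetry/optional-stopping argument: a fair or sub-fair walk started at $0$ is positive at a stopping time with probability at most $1/2$). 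The role of the $\mathtt{check}$ symbol is a technical convenience from \cite{gimbert-oualhadj-2010:PFA}: it lets the machine, at the adversary's discretion, ``consult'' the current sign of the counter without committing, which is needed to phrase the commit decision as a single-symbol transition; alternatively one folds the commit into the $\mathtt{end}$ transition itself. I would present whichever is cleanest and remark the other is equivalent.

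The two items of the theorem then follow. For item~(1), suppose $\phi(u)\le 1/2$ for all $u$. Then regardless of the $u_i$ chosen by the adversary, each block contributes a $+1$ with probability $\le 1/2$ and a $-1$ otherwise, so the counter performs a walk that is stochastically dominated by a symmetric walk started at $0$; at whatever (stopping) time $B$ commits, the probability that $c>0$ is at most $1/2$, hence $B$'s acceptance probability on \emph{every} input is $\le 1/2$. For item~(2), fix $u$ with $\phi(u)>1/2$ and feed $B$ the input $u\,\mathtt{end}\,u\,\mathtt{end}\cdots$ with $N$ copies; the counter is now a walk with positive drift, so for $N$ large it is positive with probability $\ge 1-\eps_1$, and by taking the commit probability per step to be on the order of $1/N$ the chance that $B$ never commits before the input ends is also $\le\eps_1$; hence $B$ accepts with probability $\ge 1-2\eps_1$, which can be pushed arbitrarily close to~$1$. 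One can also boost by running an odd number of independent copies of $B$ and taking a majority vote, exactly as in Section~\ref{sec:further-boost}.

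The main obstacle I expect is purely bookkeeping rather than conceptual: making sure the whole thing is genuinely a \emph{finite-state} probabilistic automaton. Simulating one run of $A$ at a time costs only $|Q_A|$ states (we never run two copies concurrently), the counter must be \emph{bounded} — so I would cap it, say in the range $[-1,+1]$ or some fixed window, and verify that a bounded biased walk still has the required $>1/2$ vs.\ $\le 1/2$ behavior (it does, because a bounded walk with positive drift spends almost all its time at the top of the window, while a bounded sub-fair walk started at $0$ is still positive at a stopping time with probability $\le 1/2$ by the same reflection argument). The delicate point is thus choosing the window size and the per-step commit probability $\delta$ as a function of the target $\eps$ and verifying the gambler's-ruin estimates quantitatively; I would state these as two short lemmas and defer the elementary computations.
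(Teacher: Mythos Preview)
Your Case~1 argument is essentially fine, but Case~2 has a genuine gap that is not just bookkeeping. The theorem asks for a \emph{single} PFA $B$: item~(2) says that, for this one fixed $B$, there exist inputs whose acceptance probability comes arbitrarily close to~$1$. You, however, write that you will choose ``the window size and the per-step commit probability~$\delta$ as a function of the target~$\eps$'' (and earlier, ``$\delta$ on the order of $1/N$''). Those parameters are wired into $B$; they cannot depend on~$\eps$ or on the length of the input. Once the window $[-K,K]$ and $\delta$ are fixed, your acceptance probability on the input $(u\,\mathtt{end})^N$ converges, as $N\to\infty$, to a number determined by $K$, $\delta$, and $p=\phi(u)$ that is bounded away from~$1$. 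For example, with window $\{-1,0,1\}$ and $p$ only slightly above $1/2$, the stationary probability of $c>0$ is close to $1/3$, and no choice of input can push the acceptance probability past that ceiling. An additive score held in a bounded counter simply cannot amplify indefinitely.

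The paper's construction sidesteps this by replacing your additive test with a \emph{multiplicative} one that needs only two bits of state. A round is a block $u_1\,\mathtt{end}\cdots u_n\,\mathtt{end}\,\mathtt{check}$; within a round $B$ records only whether \emph{every} simulation of $A$ so far accepted, and (via one extra coin) symmetrically whether every one rejected. At $\mathtt{check}$ it commits ACCEPT if all were ``Plus'', commits REJECT if all were ``Minus'', and otherwise starts a fresh round. The per-round probabilities are $\tfrac12\prod_i\phi(u_i)$ and $\tfrac12\prod_i(1-\phi(u_i))$; their ratio is $\prod_i \phi(u_i)/(1-\phi(u_i))$, which for $u_i\equiv u$ with $\phi(u)>1/2$ tends to~$\infty$ as $n\to\infty$. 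Since $n$ is encoded in the \emph{input}, not in $B$, the conditional accept probability given a decision can be pushed arbitrarily close to~$1$; a second input parameter, the number $t$ of rounds, then drives the probability of perpetual indecision to~$0$. Case~1 goes through because each factor satisfies $\phi(u_i)\le 1-\phi(u_i)$, so ACCEPT~$\le$~REJECT in every round. The missing idea in your proposal is precisely this all-or-nothing product test; the $\mathtt{check}$ symbol is not a mere convenience but the mechanism that lets the input, rather than the automaton, set the amplification exponent.
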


\subsubsection{Algorithm F,  Fijalkow}
%  \cite{Fijalkow-2017-SIGLOG}%,  2017.
\label{sec:alg-F}

\begin{proof}[First proof of Theorem~\ref{thm-amplify}]

  % Rather than
  We will first describe a simplified construction
  due to Nathanaël
Fijalkow
  \cite{Fijalkow-2017-SIGLOG} from  2017 and show
 the original proof later.

To avoid confusion with the acceptance of $B$, we call the output
of the original automaton $A$ ``Plus'' or ``Minus'', accordingly as it arrives
in an accepting state or in a nonaccepting state after reading the input.

The input for the PFA $B$
is structured into \emph{rounds}. Each round processes an input
of the form
\begin{displaymath}
  u_1\ \mathtt{end}\ u_2\ \mathtt{end}\  \ldots \ u_n\ \mathtt{end}\ \mathtt{check}
\end{displaymath}
with $u_i\in \Sigma ^*$ and is processed as follows.
Each $u_i$ is processed as in $A$, and when reading the symbol \texttt{end}, the
machine notes the output of $A$ (``Plus'' or ``Minus'').
The machine flips a coin for each round, and with probability 1/2, it does one of the
following two things   when the symbol
  \texttt{check} is read:
\begin{itemize}
\item [($+$)]
  If the output of $A$ was ``Plus'' for \emph{all} inputs
  $u_1,\ldots,u_n$ of this round,
  the machine $B$ \emph{accepts} the input once and
  for all % (moving to an absorbing accepting state)
  and ignores the
  remainder of the input.
  Otherwise, it gets ready for the next round.
\item [($-$)]
If the output of $A$ was ``Minus'' for \emph{all} inputs
$u_1,\ldots,u_n$ of this round,
the machine $B$ \emph{rejects} the input once and 
for all % (moving to an absorbing rejecting state)
and ignores the
  remainder of the input.
  Otherwise, it gets ready for the next round.
\end{itemize}
We call the three outcomes of a round
ACCEPT, REJECT, and INDECISION.
This type of aggregation is similar to %reminiscent of
the
Correctness Test in Section~\ref{sec:check-computation}
(see Figure~\ref{fig:checker})
except that the machine $A$ has no analog of the output ``Undecided''.

If all outcomes were INDECISION at the end of the input and no definite decision has been reached,
the machine $B$ rejects the input.
The same is true for an incomplete round,
i.~e., when the last symbol read was not \texttt{check}.
%and therefore the round is
%incomplete, the machine $B$ also rejects the input. The same is true
If the automaton encounters an ill-formed
input, where a \texttt{check} symbol follows a symbol
%from $\Sigma$
other than \texttt{end}, it also rejects.

It is easy to write the probabilities of the outcomes of a round:
%assuming the machine is still undecided:
\begin{align}
  \label{eq:plus}
  \Pr[\text{ACCEPT}]
 & = \frac12
 \phi(u_1)\phi(u_2) \cdots \phi(u_n) 
 \\
  \Pr[\text{REJECT}]
 & =\frac12
 \label{eq:minus}
 \bigl(1-\phi(u_1)\bigr)\bigl(1-\phi(u_2)\bigr) \cdots \bigl(1-\phi(u_n)\bigr) 
\end{align}
Figure~\ref{fig:amplify} represents the outcome
in analogy to
Figure~\ref{fig:EC} for the
Equality Checker of Section~\ref{sec:eq}.

\begin{figure}[htb]
  \centering
\begin{tikzpicture}[
%roundnode/.style={circle, draw=green!60, fill=green!5, very thick, minimum size=7mm},
roundnode/.style={circle, draw=black, minimum size=7mm},
squarednode/.style={rectangle, draw=red!60, fill=red!5, very thick, minimum size=5mm},
boxnode/.style={rectangle, draw=black, minimum size=5mm},
]
% Nodes

\node[boxnode,anchor=west] at (1,2.5)     (undec1)             {INDECISION};
\node[boxnode,anchor=west] at (4,4.5)     (eq1)             {ACCEPT};
\node[boxnode,anchor=west] at (4,3.5)     (diff1)             {REJECT};
\node[boxnode,anchor=west] at (1,4)        (dec1)       {decide};
\node[roundnode,label={left:Case 1: $\phi \le \frac12\quad $}
] at (-1,3.25)        (start1)     {}  ;

% Lines
\draw[->] (start1) -- (dec1.west) node[above,midway,sloped]{(rarely)};
\draw[->] (start1) -- (undec1.west) node[below,midway,sloped]{(often)};
\draw[->] (dec1.east) -- (eq1.west) node[midway,above]{$\le\frac12$};
\draw[->] (dec1.east) -- (diff1.west) node[midway,below]{$\ge\frac12$};

% Nodes
\node[boxnode,anchor=west] at (1,-1.2)     (undec2)             {INDECISION};
\node[boxnode,anchor=west] at (4,1)     (eq2)             {ACCEPT};
\node[boxnode,anchor=west] at (4,0)     (diff2)             {REJECT};
\node[boxnode,anchor=west] at (1,0.5)        (dec2)       {decide};
\node[roundnode,label={left:Case 2: $\phi > \frac12\quad $}
] at (-1,-0.25)        (start2)     {}  ;

% Lines
\draw[->] (start2) -- (dec2.west) node[above,midway,sloped]{(rarely)};
\draw[->] (start2) -- (undec2.west) node[below,midway,sloped]{(often)};
\draw[->] (dec2.east) -- (eq2.west) node[midway,above]{$\ge1{-}\eps$\quad \null};
\draw[->] (dec2.east) -- (diff2.west) node[midway,below,yshift=-1mm]{$\le\eps$};

%\node[squarednode]      (maintopic)                              {2};
%\node[roundnode]        (uppercircle)       [above=of maintopic] {1};
%\node[squarednode]      (rightsquare)       [right=of maintopic] {``Undecided''};
%\node[roundnode]        (lowercircle)       [below=of maintopic] {4};

%Lines
%\draw[->] (rightsquare.south) .. controls +(down:7mm) and +(right:7mm) .. (lowercircle.east);
\end{tikzpicture}

\caption{The outcome of processing one round by the PFA $B$,
  assuming a large number $n$ of repetitions.}
  \label{fig:amplify}
\end{figure}
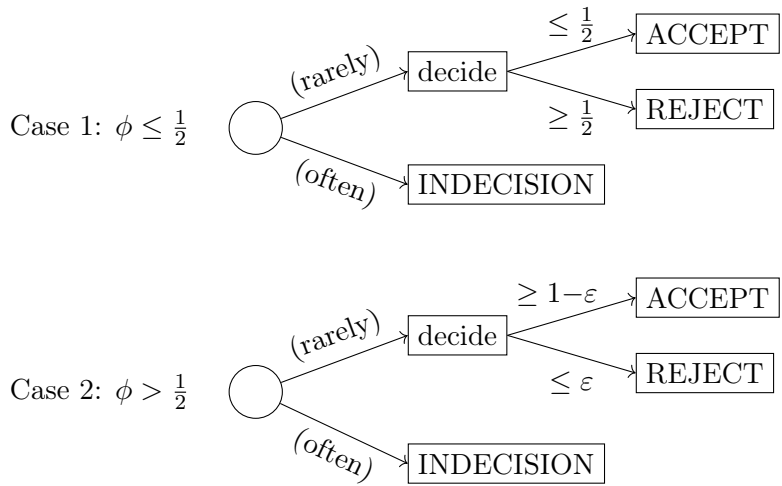

In case 1 of the theorem, when
 $\phi(u)\le1/2$ for every $u\in \Sigma^*$,
the probability  
  \eqref{eq:plus} cannot be larger than
  \eqref{eq:minus}.
This holds for every round, establishing the first statement of the theorem.  

Let us analyze the other case.
Let $u\in \Sigma^*$ be an input for $A$ with
$%\phi =
\phi(u)>1/2$.
Then, in the round with $n$ repetitions of this input,
\begin{displaymath}
  (u\ \mathtt{end})^n\ \mathtt{check},
\end{displaymath}
the ratio $(\phi(u)/(1-\phi(u)))^n$ between
  \eqref{eq:plus} and 
  \eqref{eq:minus}
  can be made as large as desired by choosing $n$ sufficiently large.
  We choose $n$ such that
  \begin{equation}\label{eq:reject-conditional}
    \frac{  \Pr[\text{REJECT}] }
{  \Pr[\text{ACCEPT}] +  \Pr[\text{REJECT}]}
%\ge 1-\eps
\end{equation}
becomes less than $\eps$.
As $n$ increases,  
  the chance of reaching any decision at all become smaller and smaller.
 We can counteract this effect by trying a large number $t$ of rounds:
\begin{equation} \label{equal-rounds}
  % \mathtt{check}
  \bigl((u\ \mathtt{end})^n\, \mathtt{check}\bigr)^t
\end{equation}
Even if the single probability of reaching a decision,
$\Pr[\text{ACCEPT}] +  \Pr[\text{REJECT}]$, is very small,
the probability of INDECISION in $t$ rounds
\begin{equation}
  \label{eq:failure}
\bigl(1-\Pr[\text{ACCEPT}] - \Pr[\text{REJECT}]\bigr)^t
\end{equation}
converges to 0 as $t$ increases.

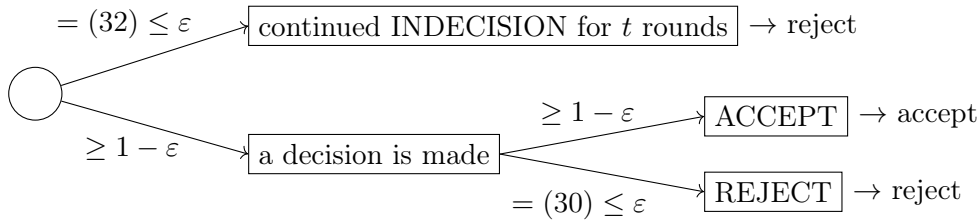
\begin{figure}[htb]
  \centering
\begin{tikzpicture}[
%roundnode/.style={circle, draw=green!60, fill=green!5, very thick, minimum size=7mm},
roundnode/.style={circle, draw=black, minimum size=7mm},
squarednode/.style={rectangle, draw=red!60, fill=red!5, very thick, minimum size=5mm},
boxnode/.style={rectangle, draw=black, minimum size=5mm},
]
% Nodes
\node[boxnode,anchor=west,label={right: $\to$ reject}]
at (1,2.2)     (undec2)
{continued INDECISION for $t$ rounds};
\node[boxnode,anchor=west,label={right: $\to$ accept}] at (7,1)     (eq2)             {ACCEPT};
\node[boxnode,anchor=west,label={right: $\to$ reject}] at (7,0)     (diff2)             {REJECT};
\node[boxnode,anchor=west] at (1,0.5)        (dec2)       {a decision
  is made};
\node[roundnode] at (-1.8,1.3)        (start2)     {}  ;

% Lines
\draw[->] (start2) -- (dec2.west)
node[below,midway,xshift=-3mm]{$\ge1-\eps$};
\draw[->] (start2) -- (undec2.west) node[midway,above,xshift=-4mm,yshift=1mm]{$=\eqref{eq:failure}\le\eps$};
\draw[->] (dec2.east) -- (eq2.west) node[midway,above]{$\ge1-\eps$\quad \null};
\draw[->] (dec2.east) -- (diff2.west) node[midway,below,yshift=-1mm,xshift=-3mm]{$=\eqref{eq:reject-conditional}\le\eps$};
\end{tikzpicture}

\caption{The outcome of $t$ rounds.}
  \label{fig:reject}
\end{figure}

The overall outcome after $t$ rounds is illustrated in Figure~\ref{fig:reject}.
The
conditional probability
\eqref{eq:reject-conditional}
of rejection, conditioned on the event that a decision is made,
can be reduced
below $\eps$
%as much as we like
by increasing $n$.
The
probability \eqref{eq:failure} of rejection due to indecision
can be reduced
below $\eps$
% as much as we like
by increasing $t$ (depending on
 $n$).\footnote{Fijalkow proposes to take $t=2^n$, and choose $n$
   large enough: The sequence of indecision probabilities
   $\bigr(1-\frac12 x^n-\frac12(1- x)^n\bigr)^{2^n}$
   converges to 0 if $x\ne 1/2$.}
 In total, the acceptance probability is at least $(1-\eps)^2$.
\end{proof}
The procedure is very much reminiscent of the third-level aggregation
in the
Condon--Lipton proof (Section~\ref{sec:whole})
whose idea goes back to Freivalds~\cite{freivalds81}.

\subsubsection{Algorithm GO, Gimbert and Oualhadj}
\label{sec:algorithm-GO}

\begin{proof}[Original proof of
  Theorem~\ref{thm-amplify}]

  The proof of
  Hugo Gimbert and Youssouf Oualhadj \cite{gimbert-oualhadj-2010:PFA} from 2010
constructs a slightly different automaton~$B$. For contrast, we
present it here.
The input is processed in rounds as before,
  but acceptance and rejection is decided differently.

The machine flips a \emph{single coin} at the beginning, and with probability 1/2, it does one of the
following two things:
\begin{itemize}
\item [($+$)]
  If, in some round,
  the output of $A$  for {all} inputs
  $u_1,\ldots,u_n$ in that round
  was ``Plus'',
  the machine accepts; otherwise it rejects.
\item [($-$)]
 If, in some round,
  the output of $A$  for {all} inputs
  $u_1,\ldots,u_n$ in that round
  was ``Minus'',
    the machine rejects; otherwise it accepts.
\end{itemize}
For the analysis,
the treatment of Case 1 of the theorem is the same as before.

For Case 2, let us assume that there is an
input $u\in \Sigma^*$ for $A$ with
$x %\phi
:=
\phi(u)>1/2$.
\begingroup \let\phi=x
Due to the single branch at the beginning of the algorithm, the
% consecutive
rounds are not independent, and the construction of an appropriate input becomes more
tricky.
Let us consider, as before,
 a number $t$ of identical rounds, each consisting of
 $n$ repetitions of $u$, as in \eqref{equal-rounds}.
 With $\bar\phi = 1-\phi$, we can write
 the conditional acceptance and rejection probabilities in the two branches as follows:
 \begin{align*}
   \Pr[\mathrm{ACCEPT}\mid (+)]
   &= 1-(1-\phi^n)^t&
   \Pr[\mathrm{ACCEPT}\mid(-)]
   &= (1-\bar\phi^n)^t\\
   \Pr[\mathrm{REJECT}\mid (+)]
   &= (1-\phi^n)^t
     &
   \Pr[\mathrm{REJECT}\mid(-)]
   &= 1-(1-\bar\phi^n)^t
 \end{align*}
 Keeping $n$ fixed (and large), we can assume that
both $\phi^n$ and $\bar\phi^n$ are close to~0, and
 we see a dilemma:
 If $t$ is small,  the $(+)$ branch
will almost always lead to rejection, because this is the default
result when the input runs out
before a decision
is reached.
If $t$ grows to infinity, we will almost always come to a definite
decision, no matter how small $\phi^n$ or  $\bar\phi^n$ is, and this
decision is REJECT in the $(-)$ branch.

In both of these extreme situations, the acceptance probability will be close to $1/2$.
We must therefore find the ``sweet spot'' with just the right number
$t$ of rounds.
This can be done as follows:
\smallskip
\begin{compactenum}
  
\item Choose some bound $\eps>0$ on the rejection probability ($\eps<1$).
\item Set $g := \eps/ \ln  \frac2\eps$.
\item Pick $n$ large enough such that $\bar\phi^n/ \phi^n \le g$
  and $\phi^n \le 1/2$.
\item Set $t := \lfloor
  \ln  \frac2\eps / \phi^n \rfloor$.
\end{compactenum}
\smallskip
Then, in the $(-)$ branch
\begin{align*}
   \Pr[\mathrm{REJECT}\mid(-)]
  &= 1- (1-\bar\phi^n)^t
    \le 1-(1-t\cdot \bar\phi^n)
  =t\cdot \bar\phi^n
  \\
  &\le  \ln  \tfrac2\eps / \phi^n \cdot g\phi^n
=      \ln  \tfrac2\eps\cdot g
%=  \ln  \tfrac2\eps / \tfrac1\eps /\ln  \tfrac2\eps
    = \eps
%\\\intertext{Thus,}
%       \Pr[\mathrm{REJECT}\mid(-)] &\le\eps
\end{align*}
In the $(+)$ branch, we have
\begin{align*}
   \Pr[\mathrm{REJECT}\mid(+)]
  &= (1-\phi^n)^t
    \le (1-\phi^n)^{\ln  \frac2\eps / \phi^n -1},
\end{align*}
where the term $-1$ in the exponent accounts for the rounding of~$t$.
We estimate the corresponding factor by
$ (1-\phi^n)^{-1} \le (1-1/2)^{-1} = 2$, and get
\begin{align*}
   \Pr[\mathrm{REJECT}\mid(+)]
  &\le (1-\phi^n)^{\ln  \frac2\eps / \phi^n}\cdot 2
    %               \\  &
                          =2\left( (1-\phi^n)^{1/ \phi^n}\right)^{\ln
                          \frac2\eps}
%  \\  &
         \le2( 1/e )^{\ln  \frac2\eps} %= 2 \cdot\tfrac\eps2
         = \eps.
         \qedhere
\end{align*}
The original construction of the input by Gimbert and Oualhadj \cite[Section 7 of the technical report]{gimbert-oualhadj-2010:PFA}
is different:
 They use an increasing sequence $n_1,n_2,\ldots$ of round lengths in a special way.
With this setup, they can reach
 a small rejection probability
 % their goal
 by simply choosing the number $t$ of rounds large enough.  
\end{proof}

\endgroup

\subsubsection{The no-coin algorithm NC}
\label{sec:no-coin}

Comparing Algorithms F and GO, we can
speculate that Algorithm GO
might have its origin in a shortage of coins. We find that we actually
don't need any coins at all,
as shown in the following
simplified version of Algorithm~F.

Every round is processed as follows
when the symbol
  \texttt{check} is read:
\begin{itemize}
\item [($\pm$)]
  If the output of $A$ was ``Plus'' for {all} inputs
  $u_1,\ldots,u_n$ of the round,
  the machine $B$ \emph{accepts} the input once and
  for all % (moving to an absorbing accepting state)
  and ignores the
  remainder of the input.

  If the output of $A$ was ``Minus'' for {all} inputs $u_1,\ldots,u_n$ of the round,
the machine~$B$ \emph{rejects} the input once and 
for all % (moving to an absorbing rejecting state)
and ignores the
remainder of the input.

In case of an empty round or an ill-formed round, the input is also rejected.

  Otherwise, the machine gets ready for the next round.
\end{itemize}
We had to come up with a rule %specification
for an empty round, because the first two
sentences contradict each other in this case. We have decided to reject such an input as
ill-formed. Algorithm~F would have chosen final acceptance or final rejection
with probability 1/2, but we are not in the mood for throwing coins.

The calculation goes through as before. The only change is that
the factor 1/2 is removed from
  \eqref{eq:plus} and 
  \eqref{eq:minus}.

\subsubsection{High-level verbal description versus state diagrams}
\label{sec:state-charts}

The above description in terms of an algorithm that is formulated in
words is not how the PFA $B$ is presented in
%Fijalkow
\cite
{Fijalkow-2017-SIGLOG}
and~%  Gimbert and Oualhadj
  \cite{gimbert-oualhadj-2010:PFA}.

Fijalkow
  \cite[Figure~1]{Fijalkow-2017-SIGLOG}, following
%  Gimbert and Oualhadj
  \cite%[Figure~1]
  {gimbert-oualhadj-2010:PFA},
starts by showing   
the \emph{expanding automaton} of
Figure~\ref{fig:expanding}. This is an abstraction of the automaton $B$ where
every simulation of the automaton~$A$ (for some input ``$u\ 
\mathtt{end}$'') is represented by %has been abstracted into
%a transition on reading
the symbol \texttt{sim}
with a %uniform
probability $x$ of output ``Plus''.
Two absorbing states $\top$ and $\bot$ represent ACCEPT and REJECT.\footnote
{We mention that Fijalkow's rendition of the expanding automaton
  \cite[Fig.~1]{Fijalkow-2017-SIGLOG} is an improvement over
  the corresponding automaton in
%  Gimbert and Oualhadj
  \cite[Fig.~1% on 7th page
  ]{gimbert-oualhadj-2010:PFA}, which has 7 states,
  not only in representing an algorithm that is easier to analyze:
  He also chose meaningful symbols % names
  \texttt{sim}, \texttt{check}, and \texttt{end}
  for
what is simply called $a$, $b$ and $\sharp$ in
  \cite{gimbert-oualhadj-2010:PFA}, as well as  meaningful names for some states.}

\begin{figure}[htb]
  \centering
  \begin{tikzpicture}[
    roundnode/.style={circle, draw=black, minimum size=9mm},
    style={->,shorten >=0.3pt,>={Stealth[round]},semithick},
    ]   
% Nodes
\node[roundnode] at (0,0) (q0)    {$q_0$}  ;
\node[roundnode] at (3.5,0) (RR)    {$-$}  ;
\node[roundnode] at (-3.5,0) (LL)    {$+$}  ;
\node[roundnode,double] at (-3.5,2.5) (Ac)    {$\top$}  ;
\node[roundnode] at (3.5,2.5) (Re)    {$\bot$}  ;
\node[below=10mm] at (q0.south) (start) {}; % start edge
% Lines
\draw[->] (start) -- (q0);
\draw[->] (q0) to [bend right] node[below,midway,sloped]{$\texttt{check}:\frac12$} (RR);
\draw[->] (RR) to [bend right] node[above,midway]{$\texttt{sim}:x$}  (q0);
\draw[->] (q0) to [bend left] node[below,midway]{$\texttt{check}:\frac12$} (LL);
\draw[->] (LL) to [bend left] node[above,midway]{$\texttt{sim}:1-x$}  (q0);
\draw[->] (LL) to node[left,midway]{$\texttt{check}:1$}  (Ac);
\draw[->] (RR) to node[right,midway]{$\texttt{check}:1$}  (Re);
%\draw[->] (start1) -- (undec1.west) node[below,midway,sloped]{almost always};
%\draw[->] (dec1.east) -- (eq1.west) node[midway,above]{$\frac12$};
\draw[->] (Ac) to [out=115,in=65,looseness=8]
     node[above,midway]{$\texttt{sim},\mathtt{check}:1$} (Ac);
\draw[->] (Re) to [out=115,in=65,looseness=8]
     node[above,midway]{$\texttt{sim},\mathtt{check}:1$} (Re);
\draw[->] (q0) to [out=115,in=65,looseness=9]
%     edge[loop above,looseness=10]
node[above,midway]{$\texttt{sim}:1$} (q0);
\draw[->] (LL) to [out=240,in=190,looseness=10] %edge[loop below]
node[below,midway]{\strut$\texttt{sim}:x$} (LL);
\draw[->] (RR) to [out=350,in=300,looseness=10] %edge[loop below]
node[below,midway]{\strut$\texttt{sim}:1-x$} (RR);
  \end{tikzpicture}
  \caption{The expanding automaton}
  \label{fig:expanding}
\end{figure}
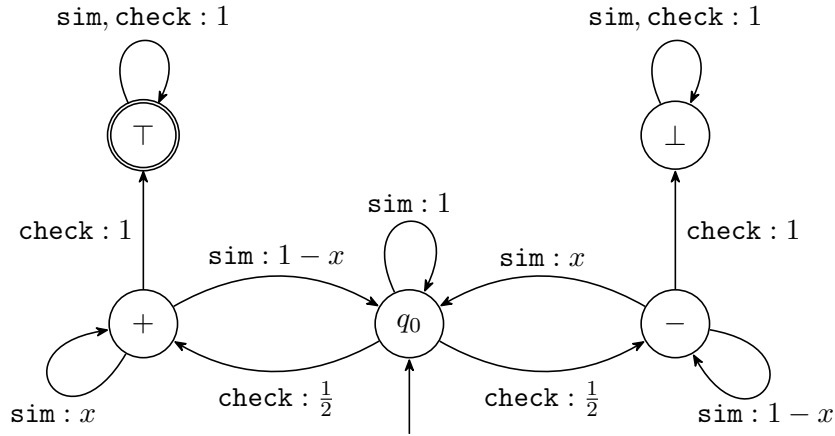

With this interpretation, the behavior of the automaton is very close
to Algorithm~F, with some insignificant differences.
We see that the coin is thrown \emph{before} each round, a
detail that was left unspecified in Algorithm~F.
Consequently,
the automaton ignores everything until the first \texttt{check} is
read,
at which point the coin for the first round is 
thrown. Every subsequent \texttt{check}
has a dual role: It moves the automaton to one of the two absorbing states
$\top$ or $\bot$ when a decision is made, or otherwise it throws the
coin for the next round.

With the expanding automaton, one can very nicely analyze and illustrate the
distinction between the cases $x\le 1/2$ and $x>1/2$, for which
the respective conclusions in
  Theorem~\ref{thm-amplify} arise.

  The actual automaton $B$ is obtained from the expanding automaton by
  substituting two copies of the automaton $A$, identifying the start
  state of $A$ with the states $+$ and $-$, respectively.
  When reading the symbol \texttt{end}, the automaton makes the
  appropriate transitions to $q_0$, $+$, or $-$, accordingly as the
  automaton $A$ is in an accepting or rejecting state (for $A$),
and depending on whether
  we are in the left ($+$) or right ($-$) branch.

   Fijalkow
  \cite[Figure~2]{Fijalkow-2017-SIGLOG} gives a schematic drawing of
    the automaton~$B$,
    % leaving some ambiguity about what \texttt{check} does at the
    % non-starting states of $A$. 
    while
    \cite[Section 8 of the technical report]
    {gimbert-oualhadj-2010:PFA} specifies all transitions in detail
  in words.
  % Section 8
    
A state diagram may have advantages through its immediate visual
appeal. However,
I think the difference between Algorithms~F and GO can be better
explained in verbal terms than by studying state diagrams and
abstracting the behavior from it.
(Fijalkow 2017
  \cite[p.~15--16]{Fijalkow-2017-SIGLOG} does not explain the distinction but
  rather downplays the difference.\footnote
{%``The automaton presented on the right-hand side of Figure 1 was used in
%the proof of the undecidability of the value 1 problem [Gimbert and
%Oualhadj 2010].'' (p.13 bottom)
%
``The construction is essentially the same as for the undecidability of
the value 1 problem by Gimbert and Oualhadj [Gimbert and Oualhadj
2010], the main improvements are in the correctness proof.''})
Moreover, constructing a PFA for a particular purpose via a state
diagram is
  % tricky.
a tricky business.
Indeed,
the dual task of the \texttt{check} symbol as aggregating the outcomes
of the previous round and throwing the coin for the next round leads
to an
ambiguity whether this symbol
(the symbol
$b$ in the notation of   \cite{gimbert-oualhadj-2010:PFA})
should come at the beginning or at the end of a round.
Accordingly, some statements in
\cite[Section 7 of the technical report]
{gimbert-oualhadj-2010:PFA} are incorrect as written.
% formula in the right half on top of p.16:
% a word ending in b is never accepted in the left half of the automaton.
%
%[Proof of Theorem 4, Section 7 in the appendix, p.~16], the $b$'s come
%at the beginning and not at the end. This is more reasonable.
%On the other hand leaps of argument:

A more significant oversight is shared by
 \cite{Fijalkow-2017-SIGLOG} and
  \cite{gimbert-oualhadj-2010:PFA}:
For a word $u_0\in \Sigma^*$ that leads back to the start state of
${A}$ with positive probability,
 the ill-formed input
  ``$\texttt{check}\ u_0\ \mathtt{check}$''
is accepted by the automaton ${B}$ with positive probability,
 contrary to the intended meaning and
 contrary to what is claimed.
% around the displayed expression in the middle of p.16. \cite{Fijalkow-2017-SIGLOG}
% 
%   \cite{gimbert-oualhadj-2010:PFA}
% Factorize $w'$ in $w' = u_0v_0\sharp u_1v_1\sharp u_kv_k\cdots$
% such that $u_i \in b^*$ and $v_i\in A^*$.
% Such a factorization is not always possible.

There are two ways to address this: One simple way is to
fix the automaton ${A}$ by adding a new start state,
ensuring that  ${A}$ never goes
back to the start state from another state.
Another possibility is to extend the analysis by observing that
the input ``$\texttt{check}\ u_0\ \mathtt{check}$''
%such ill-formed inputs
makes equal contributions to ACCEPT and REJECT, and thus it cannot raise
the acceptance probability over $1/2$.\footnote{\label{fix-fij}%
  Nathanaël Fijalkow, private communication, May 17, 2024.}
A similar analysis applies to other ill-formed inputs
like
 ``$\texttt{check}\ u_1\ \mathtt{end}\ u_0\ \mathtt{check}$''.

The reader may ask, why one should be so picky about technical
mistakes. Isn't it more important to convey the main ideas?
Well, % My answer is that
this precisely underlines the advantages of a high-level
verbal description
that I am advocating,
over an apparently more precise (semi-)formal
treatment.
If the details
are not important, why bother with them?
We can just \emph{specify} that the automaton should reject incomplete
or ill-formed inputs, leaving out any details
about representing the algorithm with states and
transitions until someone is interested in the precise number of states.

Indeed, such a study can be made: If the automaton $A$ has $d$ states,
Fijalkow's automaton $B$, which has the required properties of
  Theorem~\ref{thm-amplify} and which
carries out Algorithm~F with the above-mentioned modifications of
behavior, has $2d+3$ states. The automaton
$A$ can have an arbitrary starting distribution, and it can even have
output values $f_i\in[0,1]$. The automaton $B$ will have a single
start state and a single accepting state.

In this respect, Algorithm F actually has an advantage over the
% supposedly
simple no-coin algorithm~NC: This algorithm would need 3 copies
of the automaton $A$: one for the first input $u_1$ in each round, and two
separate copies when the result of the first input is known. Thus, the
number of states would be $3d$ plus some constant overhead.

\subsection{Case study 3: Coding in binary}
\label{sec:reflection-binary}

This example and the following case study~4
are about issue~(2): Presenting material in a self-contained way versus relying
on powerful general results.
%As another example, let us
We
look at
Lemma~\ref{coding} from
Section~\ref{sec:coding}
about encoding the input alphabet in binary for input to the PFA.
Binary coding is a fundamental and common
procedure %natural operation
in Computer Science, and when one thinks about it in
terms of a (randomized or deterministic) algorithm, it is
a small technical issue that should hardly be worth
mentioning.
It has been explicitly formulated as a lemma in order to say something about the number of states.

\CodingBinary*
 The proof of
 this lemma %Lemma~\ref{coding}
 in
Section~\ref{sec:coding}
is direct and straightforward.
%The proof is self-contained %,
%% explicit,
% and short.
It describes the PFA $A'$ that does the job.
% Still,
The proof is perhaps even a little too much on the
verbose side.
%
%I have not bothered to give a formal proof, letting the construction stand
%for itself.
%It is straightforward and left for the reader to see that $A'$
%simulates the original automaton $A$ faithfully.
Writing the two explicit
transition matrices is an add-on,
to make the description more concrete.

% Only for working out the effect on the number of states, and hence on
%the size of the transition matrices, it is necessary to become more
%concrete.
For comparison, let us look at the
treatment in the original sources in
Nasu and Honda
\cite[Lemma 19, p.~269]{NasuHonda1969}
and in
Paz \cite[Lemma 6.15, p.~190]{paz71}.

%see Appendix~\ref{sec:nhp}.
They appeal to a general statement that
acceptance probabilities are preserved under
GSM-mappings (mappings induced by a \emph{generalized sequential machine})
\cite[Definition 6.2 and Theorem 6.10, p.~186]{paz71}.
A %generalized
sequential machine is a finite automaton with output.
%over some alphabet $\Sigma$.
The machine is generalized in the sense
that, in one step,
%the output at any step is from $\Sigma^*$, i.e.,
the machine can also produce several symbols or no
 output at all.

 Decoding from the binary alphabet
$\{\texttt{a},\texttt{b}\}$
to the original alphabet $\Sigma$ can be easily carried out by a GSM.
However,
the application of a GSM
incurs a complication.
More concretely,
the encoding function $\tau$ in this construction encodes
some 7-letter alphabet $\Sigma$ (see Appendix~\ref{sec:nhp})
 by the binary codewords % of the form
$\texttt{a}^i\texttt{b}$ for $i=1,\ldots,7$.
(Lemma~\ref{coding} uses a slightly more efficient variant of this
code, which is optimized to minimize the number of states.)

Any GSM that performs the decoding
$\tau^{-1}\colon \{\texttt{a},\texttt{b}\}^* \to
\Sigma^*$
%However %, according to the way how a GSM works,
%this GSM
has
the undesirable property
that ``unfinished'' strings, whose last codeword is incomplete,
are mapped to some word in $\Sigma^*$, namely to the
decoding of the last complete prefix of the form $\tau(u)$.
The same happens for completely illegal words, for example those that
contain \texttt{bb}.
This behavior is inherent in the way of operation of a GSM.

As a consequence, words $x
\in\{\mathtt{a},
\mathtt{b}\}^*$
 that are not of the form $\tau(u)$
for some $u$ can be accepted with some probability different from~0,
contrary to what is required in Lemma~\ref{coding}.
(In terms of the algorithm in our proof of Lemma~\ref{coding},
the PFA $A'$ would accept the input based solely on the state
$q$, ignoring the counter $i$.)

Thus,
in a separate step \cite[p.~269, last line]{NasuHonda1969},
the probabilities
have to be patched up to correct this.
A finite automaton, and in particular a PFA, can easily check the
well-formedness of $x$, i.e., membership in the regular language
  $\{
\mathtt{ab},
\mathtt{aab},
\ldots,
\mathtt{a}^{7}\mathtt{b} \}^*$.
Both in \cite{NasuHonda1969} and
in Paz
  \cite[p.~190, lines 12--13]
  {paz71},
the correction is expressed very concisely
in two lines
  in terms of a characteristic function $\chi$
  of this %e above
  regular language (``event'')
  and the elementwise minimum operation
  $g' = g \land \chi$.
This refers to the  
intersection $f \land g$ of two fuzzy events (probabilistic languages),
which is defined
in \cite[Definition~2, p.~251]{NasuHonda1969}.

In Nasu and Honda's proof~\cite[p.~269, last line]{NasuHonda1969},
there is an additional complication: The probabilities
have to be patched up in two different ways, because of the more
ambitious goal of establishing that the encoded language is an
$E$-set. This requires two PFAs:
In one PFA, the
ill-formed words are accepted with probability 1, in the other,
with probability 0.\footnote
{In Paz  \cite{paz71},
 the same situation occurs at a different place, namely
 when dealing with deterministic linear languages
 \cite[Proof of Lemma 6.11, p.~188]{paz71}.}

 Closure under intersection and union with a \emph{regular} set
 (or event)
is treated by Nasu and Honda 
  in \cite[Theorem 4, p.~252% CHECKED: CORRECT
]{NasuHonda1969}
in somewhat obfuscated form, %  in etwas verklausulierter Form,
citing their earlier paper~\cite{NasuHonda1968} from  1968.
In Paz, it appears  
in the same form in
\cite[Proposition 1.9 of Chapter IIIA, pp.~148--9]{paz71}
with proper credit to \cite{NasuHonda1968}
(see \cite[Section IIIA.3, p.~152]  {paz71}).

However, when one looks up these statements,
one does not directly find
 the required
 closure property but something more general.
%%\footnote{
%  \cite[Theorem 4, p.~251]{NasuHonda1969},
%\cite[Proposition 1.9 of Chapter IIIA, pp.~148--9]{paz71}:
     Formulated in our notation, the statement says the following:
If $\phi(u)$ and $\psi(u)$ are acceptance probabilities of
  (rational?) PFAs and the set
  $  \{\,u\in \Sigma^*\mid
  \phi(u)>\psi(a)\,\}$ is a regular language,
  then
  $\max\{\phi(u),\psi(u)\}$ and % <- Paz
  $\min\{\phi(u),\psi(u)\}$
 (the union and intersection of two fuzzy languages)
%  (or $\phi \lor \psi$ and  $\phi \land \psi$, as it is expressed there)
     are also acceptance probabilities of
 (rational?)
 PFAs.
% \footnote{In \cite{paz71}, the set
%  $  \{\,u\in \Sigma^*\mid
%  \phi(u)\ge\psi(a)\,\}$ is %  [CONFIRMED in Paz.]
%considered instead of  $  \{\,u\in \Sigma^*\mid
%  \phi(u)>\psi(a)\,\}$.}
Still, the required statement (closure under intersection or union with a
regular language) it is derivable as an easy corollary.

% ?? appealing to another closure property of $P$-sets. ???
% Bezieht sich das auf Paz???

Overall,
 applying the results on GSM-maps makes the proofs very concise:
 The proof of Lemma~19
by Nasu and Honda
\cite[p.~269]{NasuHonda1969} takes 17 lines, of which 7 are used for
the complete and detailed specification of the GSM.
The proof
 in
 Paz \cite[Proof of Lemma 6.15, p.~190]{paz71},
 where the GSM is left to the reader (see footnote~\ref{inaccurate2}),
consists of only
8 lines. % in total. 9th line is about a different statement, not counted.
However, as we have discussed, the application of the general 
results about GSMs
comes with
a considerable conceptual and technical overhead, some of which is hidden by appealing to statements that don't
quite fit.

There is a final ironic twist: Since we are concerned with the
emptiness question, the inaccuracy introduced by the GSM would be
of no consequence!  It we assign to a word $x$ that is not of the form
$\tau(u)$ the acceptance probability of some other word $u'$, this does not
change the answer to the question whether acceptance probabilities
$>\lambda$ exist.
In the context of the emptiness question, patching up the probabilities
of ill-formed words is unnecessary.

After this confusing turmoil, the reader may want to go back and savor the
directness of the self-contained 13-line % 12? 14?
proof
of Lemma~\ref{coding} on p.~\pageref{coding}.

\subsubsection{Integer matrices}
\label{sec:integer-hirv}
%The issue I want to discuss is not
%self-containedness or
%the reliance on previous results,
%but
I want to come back to the first issue,
(1)~choosing the right level of abstraction, or rather, choosing an appropriate
presentation.
I will use another instance of binary coding for illustration.
I will contrast a representation of transition matrices
 as block matrices with a representation that defines the entries by a
 formula via case distinctions.
\paragraph{Hirvensalo 2007.}

The binary coding technique can be used % in in the same way
for %the context of
products of integer matrices.
In this context, the method was used by % is due to
Hirvensalo \cite[Step 3 of Section~3]{hirvensalo07} (see also
\cite[p.~5]{tHirvensalo06a}).
However, when
 a number of integer matrices $M_1,\ldots,M_k$ is given,
 the metaphor
of Lemma~\ref{coding}  % uses letters 1 and 2
 of an algorithm
that performs random
transitions
``according to $M_i$'' is no longer appropriate.
In Section~\ref{sec:Hirvensalo-to-binary}, where 
the method is described,
we have therefore resorted to
our next-best choice: a description in terms
of block matrices.

Hirvensalo %\cite[Step 3 of Section 3]{hirvensalo07} (see also
%\cite[p.~5]{tHirvensalo06a})
describes a code that is essentially the same as the code in
Lemma~\ref{coding}  % uses letters 1 and 2
and explains the idea of the algorithm that does the decoding
as it reads the binary input.
The rows and columns of the new matrices
are indexed by ``states''
$(i,q) \in
\{0,1,2 %\ldots,k-2
\} \times \{q_1,q_2,q_3,q_4,q_5,q_6\} %Q
$
% and each symbol \texttt{a} and \texttt{b}
(in notation of the proof of
Lemma~\ref{coding}). %; $k=4$).

In
\cite{hirvensalo07,tHirvensalo06a},
the transitions are specified very concisely by four formulas that, for each
pair of states $(i,q),(i',q')$, gives
the entry for row
$(i,q)$ and column $(i',q')$ in the matrices
$M'_{\mathtt{a}}$ and $M'_\mathtt{b}$, in terms of the
entries of the
 original matrices $M_j$
and a case distinction %  conditions %al expression
involving the relation between $i$ and~$i'$. 

There is the added twist that the sixth rows acts like an
absorbing state, and we need not bother about continuing the
decoding process for the corresponding state $q_6$ (cf.~the remark after
Theorem~\ref*{thm-fixed-f-2-matrices} in Section~\ref{sec:coding}).
This is accommodated in Hirsenhalo's formulas by using only the state
 $(0,q_6)$ but not the states
 $(1,q_6)$ and
 $(2,q_6)$ in the new automaton, thus saving two states.
A schematic figure
\cite[Figure 1]{hirvensalo07}
 accompanies the description
 of the decoding process.

 With this type of description, while it is very condensed, it is challenging
 to make a formal correctness proof, and it would be next to impossible to
 uncover the mistake in the setup of the final vector $f_3$ that is
 mentioned in footnote~\ref{Hirvensalo-mistake}.  Here, the block
 matrix representation of Section~\ref{sec:Hirvensalo-to-binary},
 is more convenient, and it allowed us to discover the proper setting
 of~$f_3$.

%Section~\ref{sec:hirv}

% To be fair, one should mention that he used the method in the context
% of multiplication of arbitrary, not necessarily stochastic matrices.
% In this setting, the description as algorithm that performs random
% transitions, like in the
% proof
% of Lemma~\ref{coding}, % on p.~\pageref{coding}.
% is not so appropriate.

\paragraph{Blondel and Canterini 2003.}
A simpler but slightly wasteful method for reducing the number of matrices to two
was used earlier by
Blondel and  Canterini \cite[Step 2, p.~235]{blondel-canterini-03}.
% BAD CITATION ON WEB PAGE FOLLOWING DOI! SPRINGER NATURE DOESN'T DO
% ITS JOB!
%
%following
%V. D. Blondel and J. N. Tsitsiklis. When is a pair of matrices mortal? Inform. Process. Lett.,
%63(5):283--286, 1997.
One matrix, $M_\mathtt{b}'$, is simply a block diagonal matrix of the
original matrices $M_i$. The other matrix
$M_\mathtt{a}'$ performs a cyclic permutation of the blocks.
By a suitable power of
$M_\mathtt{a}'$, the ``action'' can be brought to any desired block of
 $M_\mathtt{b}'$, so-to-speak.
\begin{equation}\label{eq:cyclic}
  M_\mathtt{a}'
  =
  \begin{pmatrix}
  0 & I &  &    &&  \\
    & 0 & I &   &&  \\
    &   & 0 &   &&  \\[-0.8ex]
 %\vdots
 & %\vdots
 & %\vdots
 & \!\!\ddots
 && %\vdots
\\%[0.2ex]
  &  & &%\cdots
 & 0 & I\\
 I\!\!\! &  & & &  & 0
\end{pmatrix}
\text{ and }
M_\mathtt{b}'
=
  \begin{pmatrix}
    M_1 & 0 & 0 & \cdots & 0\\[0.4ex]
    0 &M_2 & 0 & \cdots & 0\\[0.4ex]
    0&0&M_3 & \cdots & 0\\[0.4ex]
    \vdots & \vdots & \vdots &\ddots
 & \vdots\\[0.9ex]
    0&0 & 0 & \cdots & M_k
  \end{pmatrix}.
  \qedhere
\end{equation}
The number of states is multiplied by $k$ instead of $k-1$.
This is particularly striking as the main objective of the paper
\cite{blondel-canterini-03}
is
to get a small number of states.\footnote
{To be fair, one should remember that the method is used for of products of arbitrary, not
  necessarily stochastic matrices.
%  ; the conversion to stochastic matrices
%is done at a later step.
While the idea that one is just coding the input in another (binary)
alphabet comes very natural in a setting where one thinks about an
automaton that reads a sequence of symbols,
% one has to admit that
it is not so suggestive % naheliegend.
when thinking about matrix products.

In
Hirvensalo
\cite[Section~4]{hirvensalo07,tHirvensalo06a}, the method
\eqref{eq:cyclic}
is used in the context
of quantum finite automata. In this setting, it is the method of choice because the
transformation matrices have to be unitary.}
In any case, the block matrix representation makes the idea very
clear, and it is
suited for clean proofs.

\subsection{Case study 4: Testing equality}
The equality test requires a PFA with acceptance probability
$1/2 - (\phi(a)-\psi(a))^2/4$
according to formula~\eqref{eq:equality-trick}.
In Section~\ref{check:equality}, we have 
directly the PFA for this particular example.
By contrast,
the original proof
\cite[Lemma 11 on p.~259]{NasuHonda1969}
refers to general closure properties
of
%?acceptance probability functions?
probabilistic events (languages) %
under complementation, convex combination, and multiplication
\cite[Propositions 1--2 and Theorem 3 on p.~252,
which are taken from \cite{NasuHonda1968}]
{NasuHonda1969}.

In contrast to the previous example of binary coding, there is not such a clear win
for one version of the proof or the other.
While the proof in Section~\ref{check:equality} is direct and does not
build on auxiliary results,
it
talks informally about ``mixing''
several PFAs, which is just a different way of saying that
one forms a convex combination.
It introduces the dash-dotted transitions
of Figure~\ref{fig:mix}
that are taken
before reading the input. %, and discusses their elimination only informally.
 The formal details, how these transitions can be eliminated and how this is expressed in terms of
 transition matrices, are also given. % swept under the rug.
The proof in 
\cite[Theorem 4.2]{NasuHonda1968} treats
convex combinations with a different method:
%presents those details for the case of
 % In contrast to our construction,
 one does not need an extra start state; one can just adapt
 the
 starting distribution~$\pi$. (In our situation, the introduction of the new start
state was helpful in order to eliminate the empty word.)

\subsection{Using auxiliary results or starting from scratch}

In a monograph, a textbook, or a longer treatise, it is natural to
accumulate a body %of knowledge,
of techniques and results,
as well as % lemmas,
notation and abbreviations,
on which further
results rest.

Of course,
after erecting such an edifice it makes sense to build on it.
Such an approach may allow short and potentially elegant proofs.
However,
if a proof comes out too terse,
especially if this is coupled with
inaccuracies\footnote
{Besides the cases discussed above, see % also
  footnotes
\ref{inaccurate1} and
\ref{inaccurate2}.
Also,
%the proof
in \cite[p.~190, l.~14]{paz71}, % has a typo:
$\tau(g',\lambda)$ should be replaced by
 $T(g',\lambda)$.
 This typographical error is disturbing because $\tau$
 appears frequently in the same paragraph (see Appendix~\ref{sec:nhp}),
 whereas 
 $T(g',\lambda)$ is defined in a quite remote location
 \cite[Definition 1.1 of Section IIIB, p.~153]  {paz71}.
}
%a lack of precision
and a
lack of explanations (What is the main idea? % and
Which parts of the construction are only technical machinery to change acceptance
$\ge\lambda$ into
$>\lambda$ or to reduce the alphabet size to~2?), % or get 
it makes a proof practically inaccessible to a reader who is
interested only in a particular result,
to the point of becoming %almost
hermetic.\footnote
{For an experience with the contrary outcome, let me cite the following
  passage about the %political
  philosopher Thomas Hobbes (1588--1679)
  from his contemporary biographer John Aubrey
  \cite[p.~150]{aubrey_brief_lives}\footnotemark:
  ``He was 40 yeares old before he looked on Geometry; which
  happened accidentally. Being in a Gentleman's Library, Euclid's
  % \emph
  {Elements} lay open, and 'twas the 47 \emph{El.\ libri}~1.
  He read the Proposition. `By G---,' sayd he (he would now and
  then sweare an emphaticall Oath by way of emphasis) `this is
  impossible!'
  So he reads the Demonstration of it, which referred him back to such
  a Proposition; which proposition he read. That referred
  him back to another, which he also read. \emph{Et sic deinceps} [and
  so on]
  that at last
  he was demonstratively convinced of that trueth. This made him in
  love with Geometry.''
%  (cited from \cite[p.~23]{hellman98:great-feuds}, which in turn refers to
%  p.~150 of the
%  book \emph{Aubrey's Brief Lives}, edited by Oliver Lawson Dick, University
%  of Michigan Press, 1949/1957.)  
}\footnotetext{\url{https://archive.org/details/AubreySBriefLives/page/n273/mode/2up}}%
$^,$% separation between two footnotes
\footnote{
There is a similar dilemma in writing software. Should one
use library functions for a particular task,
or should one simply program everything from scratch?
Using a library often comes with an overhead, because the functions
that it offers don't usually fit exactly and require adaptation.

There is, however, a crucial difference between writing software and  writing mathematics:
In programming,
the primary goal is to get things done, as effectively as possible. The target
is the computer, and the machine does not have to understand what it
is doing, and why. (But yes, software needs to be maintained, and
another % different
programmer will have to read and understand the computer code.)
Also in mathematics, one wants to get things proved, but
the primary target audience is the mathematical reader.
Besides convincing the reader of correctness, one also wishes to
elicit some understanding.}

In this article, I have enjoyed the freedom of
concentrating on a single result and
proving everything
from scratch.
Working with concrete automata %Thinking about the constructions
in terms of explicit transition matrices
has allowed me to come up with the
specializations and strengthenings that I have presented.

\goodbreak
{ \let\oldsection=\section
  \def\section*#1{\oldsection*#1%
    \addcontentsline {toc}{section}{\numberline {}References}%
  }
\bibliography{PFA-undecidable}
}

\appendix

% letting the code for ``\texttt{\#}'' start and end with \texttt{1}.
% Then no matter whether we read the string in proper order or in
% reverse order, we are on the safe side.
% ARGUMENT? ist das sicher?

%\pagebreak

\section{The original Nasu--Honda proof in a nutshell}
\label{sec:nhp}

For comparison, we summarize the proof  originally
given by Nasu and Honda in
\cite{NasuHonda1969} and reproduced in the monograph of Paz
\cite[Chapter IIIB, Section 6]{paz71}.
It illustrates how the people had to struggle with the material when
it was new.
The presentation 
of Paz is quite faithful to the original, to the point of
using identical notation for many notions. It
is more condensed, %and in at least one place,
slightly simplified, but
defaced by the occasional typographic or editing error, such as a
mysterious passage in a
proof that should
obviously belong to another proof. % p.188, first two lines:
% ``with A(1)=A(a) ... are deleted''
% It would perhaps fit to the proof of Lemma 2.16 after ``9-adic PFA'',
% 4th line of the proof.
% For comparison, see line 6 from below on that page.

We have adapted the notation to our notation.
All references to \cite{paz71} are to Chapter IIIB. (Items in that book are
numbered separately in each chapter.)

\paragraph{Post's Correspondence Problem.}
The PCP with $k$ word pairs
$(v_i,w_i)$ over
$\{\texttt{0},\texttt{1}\}$
is represented by the language $%L=
L(v,w)$ of words of the form
$$
\texttt{0}^{a_m}\texttt{1}
\ldots
\texttt{0}^{a_2}\texttt{1}
\texttt{0}^{a_1}\texttt{1}
\texttt{+}
v_{a_1}
v_{a_2}
\ldots v_{a_m}
\texttt{X}
w_{b_n}^R\ldots w_{b_2}^R w_{b_1}^R
\texttt{+}
\texttt{1}\texttt{0}^{b_1}
\texttt{1}\texttt{0}^{b_2}
\ldots
\texttt{1}\texttt{0}^{b_n}
$$
for sequences
$a_1\ldots a_m$
and $b_1\ldots b_n$ with
$a_i,b_j\in \{1,\ldots,k\}$,
where the superscript~$^R$ denotes reversal,
see
\cite[p.~265, before Lemma 16]{NasuHonda1969},
  \cite[Lemma 6.13, p.~189]{paz71}.
  This language is intersected with the set $L_s
  =\{\,y\texttt{+}z\texttt{X}z^R\texttt{+}y^R \mid
  y,z\in \{\texttt{0},\texttt{1} \}^*\,\}
$
  of palindromes with central
symbol
\texttt{X} and two occurrences of the separator symbol
%\texttt{+}
``\texttt{+}''
\cite[p.~265, Lemma 15]{NasuHonda1969}.
% ALSO \cite[Lemma 6.13, p.~189]{paz71}.
The intersection
$L(v,w)\cap L_s$
is well-known to represent the PCP solutions
\cite[p.~270, Lemma 20]{NasuHonda1969},
because the palindrome property ensures that both the index sequences
and
the produced words match between the two sides.
 Its emptiness (apart from the single word \texttt{+X+}) is therefore undecidable
 \cite[Lemma 6.16, p.~190]{paz71}.

Actually, the alphabet of these languages is
$\Sigma=\{
\texttt{0} ,\texttt{1},\texttt{+},
\texttt{X},
\bar{\texttt{0}},
\bar{\texttt{1}},
\bar{\texttt{+}}
\}$,
because, for technical reasons, every symbol $\sigma$ in the right half, after
the \texttt{X}, is replaced by another symbol, its ``complemented'' version~$\bar \sigma$. %primed.

\paragraph{Rational automata, $P$-sets, and $E$-sets.}

\begin{definition}[{\cite[Definition 17, p.~259]{NasuHonda1969}}]

A \emph{rational PFA} is a PFA where all components of $\pi$, $f$, and
the transition matrices $M\in\mathcal{M}$ are rational numbers.
  
An \emph{$E$-set} is the set of words $u$ with $\phi(u)=\psi(u)$, where
$\phi(u)$ and $\psi(u)$ are the acceptance probabilities of two rational PFAs.

A \emph{$P$-set} is a language recognized by a rational PFA with some
rational cutpoint $\lambda$, or in other words,
 the set of words $u$ with $\phi(u)>\lambda$.

\end{definition}

We have already seen in Section~\ref{check:equality}
the very important statement
that every $E$-set is a $P$-set
\cite[Lemma 11, p.261]{NasuHonda1969},
 \cite[Corollary 6.4, p.~183]{paz71}.\footnote
{In \cite[p.~182]{paz71},
 these sets have been renamed to \emph{$E$-events} and \emph{$P$-events},
respectively.
% in both sources P and E with italics
I find this choice of terminology, which goes back to
Rabin~\cite[p.~233]{rabin1963} and
pervades much of the literature,
% whole %Chapter~III of the book,
unfortunate, since an event is rather something whose probability is
to be measured. The terms
$E$-language and $P$-language would have been even more specific.
}
% in \cite[top of p.~251]{NasuHonda1969}. in connection with FUZZY
% events:
% "Subsets of $\Sigma^*$ will sometimes be called events or
% languages."
% \cite[p.~145]{paz71}: a function from  $\Sigma^*$ to the reals.

\paragraph{Deterministic linear languages.}

The goal is to show that
$L(v,w)$, $L_s$, and finally
$L(v,w)\cap L_s$ are $E$-sets (and therefore $P$-sets).

This is done by defining
a certain class of context-free
languages, the so-called \emph{deterministic linear languages}
(\cite[Definition 18, p.~263]{NasuHonda1969},
\cite[Definition 6.3, p.~187]{paz71}),
and appealing to the fact that these languages
are
$E$-sets
\cite[Lemma 14, pp.~263--265]{NasuHonda1969},
  \cite[Lemma 6.11, pp.~187--188]{paz71}.
  The proof % heavily
  relies on $m$-ary %9-adic
automata.

  It is then easy to see that this language class includes $L_s$, and also
the language $L(v)$ of words of the form
$$
\texttt{0}^{a_m}\texttt{1}
\ldots
\texttt{0}^{a_2}\texttt{1}
\texttt{0}^{a_1}\texttt{1}
\texttt{+}
v_{a_1}
v_{a_2}
\ldots v_{a_m},
$$
which form the left half of the words of
$L(v,w)$, up to the \texttt{X} symbol. %, or more explicitly,
Similarly, the right
halves form a deterministic linear language.

Putting the two halves together
to get the language $L(v,w)$
takes more effort.
The proof in
\cite[Lemma 17, p.~265--269]{NasuHonda1969} is cumbersome and stretches
over more than three pages.
Paz formulates the argument as a separate lemma \cite[Lemma 6.12, p.~188]{paz71}:
\begin{lemma}
  \label{LXL}
If
$L_1$ and $L_2$ are
 deterministic linear languages
% $E$-sets
 over
disjoint alphabets $\Sigma_1$ and $\Sigma_2$ not containing the letter $\mathtt{X}$, then $L_1\mathtt{X}L_2$ is an $E$-set.
\end{lemma}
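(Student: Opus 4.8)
The plan is to exhibit $L:=L_1\mathtt{X}L_2$ directly as an $E$-set, by constructing two rational PFAs whose acceptance probabilities coincide exactly on the words of $L$. Since $L_1$ and $L_2$ are deterministic linear languages, each is an $E$-set; fix rational PFAs over $\Sigma_1$ with acceptance probabilities $\phi_1,\psi_1$ such that $L_1=\{u_1:\phi_1(u_1)=\psi_1(u_1)\}$, and rational PFAs over $\Sigma_2$ with $\phi_2,\psi_2$ for $L_2$. Applying the equality-testing construction of Section~\ref{check:equality} (mixing the PFAs for $\phi_i\psi_i$, $1-\phi_i^2$, $1-\psi_i^2$ in the ratio $\tfrac12:\tfrac14:\tfrac14$) gives, for each $i$, a rational PFA over $\Sigma_i$ with acceptance probability $r_i(u_i)=\tfrac12-\tfrac14(\phi_i(u_i)-\psi_i(u_i))^2$; crucially $r_i(u_i)\le\tfrac12$, with equality iff $u_i\in L_i$.

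Next I would build a single rational PFA $P$ over $\Sigma:=\Sigma_1\cup\{\mathtt{X}\}\cup\Sigma_2$ running three components in parallel (product of state sets, Kronecker product of transition matrices). A deterministic format-checker verifies that the input lies in $\Sigma_1^*\,\mathtt{X}\,\Sigma_2^*$, routing $P$ to an absorbing rejecting state the moment this fails (a second $\mathtt{X}$, a $\Sigma_1$-letter after the $\mathtt{X}$, a $\Sigma_2$-letter before it), and also refusing to accept if no $\mathtt{X}$ is ever seen. A second component simulates the $r_1$-PFA on the $\Sigma_1$-letters, acting as the identity on each symbol of $\{\mathtt{X}\}\cup\Sigma_2$; a third simulates the $r_2$-PFA, acting as the identity on each symbol of $\Sigma_1\cup\{\mathtt{X}\}$ --- these identity actions are well defined precisely because $\Sigma_1,\Sigma_2$ are disjoint and omit $\mathtt{X}$. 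The start distribution of $P$ lets the output be decided by the second component with probability $\tfrac12$ and by the third with probability $\tfrac12$, with the format-checker's rejection overriding both. On a well-formed input $u=u_1\mathtt{X}u_2$,
\[
p(u)=\tfrac12\,r_1(u_1)+\tfrac12\,r_2(u_2)=\tfrac12-\tfrac18\bigl(\phi_1(u_1)-\psi_1(u_1)\bigr)^2-\tfrac18\bigl(\phi_2(u_2)-\psi_2(u_2)\bigr)^2,
\]
which is $\le\tfrac12$ and equals $\tfrac12$ iff $u_1\in L_1$ and $u_2\in L_2$; on an ill-formed input $p(u)=0$. Hence $L=\{u:p(u)=\tfrac12\}$, and paired with the trivial two-state rational PFA of constant acceptance probability $\tfrac12$ (one absorbing accepting state, one absorbing rejecting state, start distribution $(\tfrac12,\tfrac12)$), this presents $L$ as an $E$-set.

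I expect the only genuine obstacle to be the handling of ill-formed inputs --- the point that cost Nasu and Honda more than three pages --- namely checking that no word outside $\Sigma_1^*\mathtt{X}\Sigma_2^*$ can slip through with acceptance probability exactly $\tfrac12$. In the construction above this is immediate: such words get $p(u)=0$, and in fact $p$ stays within $[\tfrac14,\tfrac12]$ on all well-formed words, safely away from $0$. If one preferred to keep $P$ free of the format-checker, an alternative is to omit it and intersect the resulting probabilistic event with the regular language $\Sigma_1^*\mathtt{X}\Sigma_2^*$, invoking closure of probabilistic events under intersection with a regular set; but baking the check into $P$ keeps the argument self-contained. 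The remaining verification --- that after reading the whole input the three components sit, respectively, in the state the $r_1$-PFA reaches on $u_1$, the state the $r_2$-PFA reaches on $u_2$, and the ``$\mathtt{X}$ already read'' state of the format-checker --- is a routine induction splitting the input at the $\mathtt{X}$, which I would state in a line rather than compute.
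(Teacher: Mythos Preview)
Your proof is correct, but it takes a genuinely different route from the Nasu--Honda argument that the paper sketches. Their proof exploits the specific structure of the $m$-ary automata that arise in the representation of deterministic linear languages as $E$-sets: they interleave the $m$-ary digits of the two automata (one for $L_1$, one for $L_2$) into a single $m^2$-ary automaton, so that equality of the combined acceptance probabilities forces equality in both interleaved streams simultaneously. This digit-interleaving trick is the reason the original proof stretches over three pages, and it depends essentially on the $m$-ary form of the automata, not merely on $L_1,L_2$ being $E$-sets.

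Your construction, by contrast, uses nothing about deterministic linear languages beyond the fact that they are $E$-sets: you pass through the equality-testing formula~\eqref{eq:equality-trick} to obtain PFAs with $r_i\le\tfrac12$, take a convex combination, and AND with a deterministic format checker. This is effectively an inline proof of the closure of $E$-sets under intersection, specialized to $L_1\mathtt{X}\Sigma_2^*\cap\Sigma_1^*\mathtt{X}L_2$. Indeed, immediately after the lemma the paper observes exactly this: the intersection property yields a short alternative proof of a strictly stronger statement (Lemma~\ref{LXL-general}), where $L_1$ and $L_2$ may be arbitrary $E$-sets and the alphabets need not even be disjoint. Your argument sits between the two --- it avoids the $m$-ary machinery but builds the intersection by hand rather than invoking the closure lemma, and it still uses the disjoint-alphabet hypothesis (for the ``act as identity on the other alphabet'' trick), which the paper's alternative shows is unnecessary: one can simply have the $r_1$-PFA freeze upon reading $\mathtt{X}$ and the $r_2$-PFA wait until $\mathtt{X}$ before starting.
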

From the representation of
 deterministic linear languages as $E$-sets, we have four 
 $m$-ary automata: two for $L_1$ and two for $L_2$.
The trick is to mix %run
two of these $m$-ary automata (one for $L_1$ and one for $L_2$) into
%simultaneously as
an
$m^2$-ary automaton:
Out of the $m^2$ digits,
one
automaton uses the digits
$m,2m, \ldots, (m-1)m $;
the other
automaton uses the digits
$1,2,\ldots, m-1$.
The acceptance probability is constructed as an $m^2$-ary number,
but when it is viewed
in the $m$-ary expansion, % of
the digits alternate between the digits for $L_1$
and the digits for~$L_2$.

The construction is repeated for the two other PFAs defining the 
$E$-sets $L_1$ and $L_2$.
Equality of acceptance probabilities then means equality for both
$L_1$ and $L_2$.%
\footnote
{The requirement of disjoint alphabets is the reason for introducing
the complemented %primed
symbols
$\bar{\texttt{0}},
\bar{\texttt{1}},
\bar{\texttt{+}}$.
%$\texttt{0}' ,\texttt{1}',\texttt{+}'$.
Technically, it should not be necessary; any PFA could do
this conversion implicitly after % reading
the letter \texttt{X} as it reads the input from left to right. %)\footnote
%
% I believe the whole proof would have been simpler
%   with the more straightforward representation
% $$
% \texttt{0}^{a_1}\texttt{1}
% \texttt{0}^{a_2}\texttt{1}
% \ldots
% \texttt{0}^{a_m}\texttt{1}
% \texttt{+}
% v_{a_1}
% v_{a_2}
% \ldots v_{a_m}
% \texttt{X}
% \texttt{0}^{b_1}
% \texttt{1}\texttt{0}^{b_2}\texttt{1}
% \ldots
% \texttt{0}^{b_n}\texttt{1}
% \texttt{+}
% w_{b_1} w_{b_2}\ldots w_{b_n},
% $$
% replacing $L_s$ by the set
% $\{\,y\texttt{+}z\texttt{X} y\texttt{+}z
% \mid y,z
% \in \{
% \texttt{0},
% \texttt{1}%,\texttt{+}
% \}^*\,\}$.
% One could not have applied the results
% about
%  deterministic linear languages, but one would get rid of
% all reversals in the proofs. % of the order.
}

\paragraph{Intersections of $E$-sets.}
To get from $L(v,w)$ and $L_s$ to
$L(v,w)\cap L_s$, one uses the property that
$E$-sets are closed under intersection.
This is easy to prove by comparing the %by using equality between
acceptance probabilities
$\tfrac 12 - \tfrac14 (\phi_1-\psi_1)^2$ and
$\tfrac 12 + \tfrac14 (\phi_2-\psi_2)^2$,
which are constructed along the lines of
formula~\eqref{eq:equality-trick} on p.~\pageref{eq:equality-trick}
\cite[Lemma 12, p.~261]{NasuHonda1969}.\footnote
%in Section~\ref{check:equality}.
{Paz, in the corresponding passage of his proof \cite[Proof of Lemma 6.14, p.~190]{paz71},
 appeals to
``Exercise 4.a.3'', but in that exercise
\cite[%Exercise 4.a.3, % following from Theorem 4.3,
p.~172]{paz71},
closure under intersection is only proved for
$E$-sets that are defined by the condition $\phi(u)=\lambda$ for a
\emph{constant~$\lambda$}.
\label{inaccurate1}
}

Actually, this intersection property leads to a simple alternative proof of
a generalized version of
Lemma~\ref{LXL}, where %for which
$L_1$ and $L_2$ can be arbitrary $E$-sets:
\newcommand{\LXLcount}{}
\edef\LXLcount{\arabic{lemma}$'$}

\begingroup\let\thelemma\LXLcount
\begin{lemma}
  \label{LXL-general}
If
$L_1$ and $L_2$ are
$E$-sets
 over
some alphabet $\Sigma$ not containing the letter $\mathtt{X}$, then $L_1\mathtt{X}L_2$ is an $E$-set.
\end{lemma}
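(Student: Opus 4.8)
The plan is to bypass the digit-interleaving construction used for Lemma~\ref{LXL} and instead reduce Lemma~\ref{LXL-general} to closure of $E$-sets under intersection --- which, as noted in the ``Intersections of $E$-sets'' paragraph, follows by comparing the acceptance probabilities $\tfrac12-\tfrac14(\phi-\psi)^2$ and $\tfrac12+\tfrac14(\phi-\psi)^2$ built along the lines of~\eqref{eq:equality-trick}. The starting point is the set identity
\[
  L_1\mathtt{X}L_2 \;=\; \bigl(L_1\mathtt{X}\Sigma^*\bigr)\cap\bigl(\Sigma^*\mathtt{X}L_2\bigr),
\]
which holds precisely because $\mathtt{X}\notin\Sigma$: each factor on the right consists of words containing exactly one $\mathtt{X}$, the position of that $\mathtt{X}$ is the same for both factors, so the factorization $u_1\mathtt{X}u_2$ is forced, with $u_1\in L_1$ coming from the first factor and $u_2\in L_2$ from the second. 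Hence it suffices to prove that each of $L_1\mathtt{X}\Sigma^*$ and $\Sigma^*\mathtt{X}L_2$ is an $E$-set, and then take the intersection.

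To see that $L_1\mathtt{X}\Sigma^*$ is an $E$-set, write $L_1=\{\,u:\phi_1(u)=\psi_1(u)\,\}$ for rational PFAs with acceptance probabilities $\phi_1,\psi_1$ and per-state output values $f_q$ resp.\ $g_q$. I would build a rational PFA $\Phi_a$ over $\Sigma\cup\{\mathtt{X}\}$ that simulates the $\phi_1$-automaton on the prefix preceding the first $\mathtt{X}$; on reading that $\mathtt{X}$ it ``latches'', sending each state $q$ to a new absorbing state $q_A$ with probability $f_q$ and to a new absorbing state $q_R$ otherwise (a legal stochastic transition). Then the probability of being in $q_A$ right after the $\mathtt{X}$ equals $\phi_1(u_1)$, so $\Phi_a(u_1\mathtt{X}u_2)=\phi_1(u_1)$ for every $u_2\in\Sigma^*$. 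A twin automaton $\Psi_a$ does the same with $\psi_1$, and for inputs of the form $u_1\mathtt{X}u_2$ one gets $\Phi_a=\Psi_a$ iff $u_1\in L_1$. The witness for $\Sigma^*\mathtt{X}L_2$ is symmetric: it idles in a waiting state until the first $\mathtt{X}$, then starts simulating the $\phi_2$-automaton. All modifications only add finitely many states with rational (indeed $0$-$1$) data, so the new PFAs are rational, and both languages are $E$-sets.

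The one point that needs care is ill-formed inputs --- words with no $\mathtt{X}$, or with two or more --- since we must obtain \emph{exactly} $L_1\mathtt{X}\Sigma^*$, whereas the naive twin automata both output $0$ on an $\mathtt{X}$-free word and would wrongly admit it. This is fixed by routine bookkeeping of the output vectors: make all pre-$\mathtt{X}$ states and the ``second-$\mathtt{X}$'' trap state output $1$ in $\Phi_a$ and $0$ in $\Psi_a$ (while keeping $q_A$ at $1$ and $q_R$ at $0$), so that every ill-formed word yields $\Phi_a=1\ne 0=\Psi_a$. Alternatively --- and perhaps more transparently --- one can intersect additionally with the regular language $\Sigma^*\mathtt{X}\Sigma^*$ of words with exactly one $\mathtt{X}$, which is itself an $E$-set (compare a DFA for it against the one-state PFA that always accepts). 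In either case, a final application of closure of $E$-sets under intersection completes the proof; the remaining verifications are mechanical. The main (though minor) obstacle is exactly this handling of malformed inputs, needed so that the constructed $E$-set is not a strict superset of $L_1\mathtt{X}L_2$.
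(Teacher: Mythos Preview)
Your proposal is correct and follows essentially the same approach as the paper: decompose $L_1\mathtt{X}L_2$ as the intersection $(L_1\mathtt{X}\Sigma^*)\cap(\Sigma^*\mathtt{X}L_2)$, argue that each factor is an $E$-set by having the PFAs ignore the irrelevant side of the $\mathtt{X}$, and then invoke closure of $E$-sets under intersection. The paper's proof is three lines and glosses over the treatment of ill-formed inputs that you carefully spell out, but the underlying argument is identical.
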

\addtocounter{lemma}{-1}
\endgroup
\begin{proof}
The languages
$L_1\texttt{X}\Sigma^*$
and $\Sigma^*\texttt{X}L_2$ are certainly $E$-sets: A~PFA can simply ignore
all symbols before or after the \texttt{X}. Their intersection is 
 $L_1\mathtt{X}L_2$. 
\end{proof}

\paragraph{Coding with 2 symbols.}
Finally, the 7-character alphabet $\Sigma=\{
\texttt{0} ,\texttt{1},\texttt{+},
\texttt{X},
\bar{\texttt{0}},
\bar{\texttt{1}},
\bar{\texttt{+}}\}$
is converted to binary by a coding function
$\tau$ that uses the codewords
$\texttt{a}^i\texttt{b}$ for $i=1,\ldots,7$.
We have already seen in 
Lemma~\ref{coding}, for a very similar code, that this can be done
while preserving the acceptance probabilities.
The step is discussed in %greater
detail in
Section~\ref{sec:reflection-binary}.

Paz
appeals to a general statement that
acceptance probabilities are preserved under
GSM-mappings (mappings induced by a \emph{generalized sequential machine})
\cite[Definition 6.2 and Theorem 6.10, p.~186]{paz71}.
(A generalized sequential machine is a finite automaton with output
over some alphabet $\Sigma$.
The machine is generalized in the sense
that
the output at every step is from
 $\Sigma^*$, i.e., the machine can produce several symbols or no
 output at all.)
Nasu and Honda use their even more general statement about
PGSM-mappings (mappings induced by a \emph{probabilistic GSM}
\cite[Definition 13 and Theorem 6, pp.~253--255]{NasuHonda1969}),
which
are an object of study in the paper
\cite{NasuHonda1969} and
figure %featured
prominently in its title.

Nasu and Honda use this to show that
$\tau(L(v,w)\cap L_s)$ is an $E$-set
\cite[Lemma 19, p.~269]{NasuHonda1969}.
Paz proves the weaker statement that
$\tau(L(v,w)\cap L_s)$ is a $P$-set
\cite[Lemma 6.15, p.~190]{paz71}
and is therefore able to shortcut the proof.\footnote
{\label{inaccurate2}%
This proof has a small technical mistake
  \cite[Proof of Lemma 6.15, p.~190, line 8]{paz71}:
  It claims that
\emph{one can easily construct a GSM mapping $\Psi^A$} to do the
decoding.
Some words $x$ are not
 decodable for the reason that they
 contain more than 7 \texttt{a}'s in a row or do not end with~\texttt{b}.
For these words, it cannot be ensured that
$\Psi^A(x)=e$ ($e$ denotes the empty word, cf.\ \cite[p.~xix, and
the footnote on p.~155]{paz71}) as Paz claims.
 The mistake is of no consequence because such ill-formed strings $x$
 are filtered out in a subsequent step, see Section~\ref{sec:reflection-binary}.
In the original proof of Nasu and Honda \cite[p.~269]{NasuHonda1969}, a GSM for decoding is described explicitly.}

Since every $E$-set is a $P$-set (Section~\ref{check:equality})
%\cite[Lemma 11, p.261]{NasuHonda1969},
% \cite[Corollary 6.4, p.~183]{paz71}.
and since PFA Emptiness is about $P$-sets,
the undecidability of
 PFA Emptiness is established.

 \subsection{Deciding whether the recognized language is
    a regular  language,
    or whether it is context-free}
  \label{sec:context-free}

 It is also shown to be undecidable whether
 a $P$-set is a regular  language,
or whether it is context-free
\cite[Theorem 22, p.~270]{NasuHonda1969},
\cite[Theorem 6.17, p.~190]{paz71}.
This can be established by the following arguments,
which
%(These arguments
are not given explicitly in
\cite{NasuHonda1969} or
\cite%[Chapter IIIB, Section 6]
{paz71}, but
only through %they are replaced by
unspecific references to the
theory of %literature about
context-free languages.
% \cite%[p. 189]{paz71}: Ginzburg 1966, Chapter 4 (with z!)
% Seymour Ginsburg (Author)
% Abraham Ginzburg, Michael Yoeli

If the PCP has a solution, then the language
$L(v,w)\cap L_s$ contains some word of the form
$$y\texttt{+}z\texttt{X}\hat z\bar{\texttt{+}}\hat y$$
with nonempty $y,z\in\{\texttt{0},\texttt{1}\}^*$,
where $\hat y = \bar y^R$ denotes simultaneous
complementation and reversal. %priming.
We fix $y$ and $z$.
Since a PCP solution can be repeated arbitrarily, all words
\begin{equation}
  \label{sec:longsolution}
y^i\texttt{+}z^i\texttt{X}\hat z^i\bar{\texttt{+}}\hat y^i  
\end{equation}
for $i\ge0$ are also in the language. %$L(v,w)\cap L_s$.
Moreover, intersection with the regular language
$\{y\}^*\texttt{+}\{z\}^*\texttt{X}\{\hat z\}^*\bar{\texttt{+}}\{\hat y\}^*$
leaves \emph{exactly} the words of the form~\eqref{sec:longsolution},
but for such a language it is known that it is not context-free.\footnote
{In fact, a language like
$\{\,\texttt{a}^i
\texttt{b}^i
\texttt{c}^i\mid i\ge0\,\}$ with just \emph{three} blocks of equal
length
% repetitions
is the prime
example of a language that is not context-free.}
Since the intersection of a context-free language with a regular language
is  context-free,
 the language
$L(v,w)\cap L_s$ cannot be context-free in this case.
We conclude that the recognized language is
context-free, or regular,
(namely, the one-word language $\{\texttt{+X+}\}$)
if and only if the PCP has no solution.

To get the result for a binary input alphabet, this whole chain of arguments must be transferred
from $L(v,w)\cap L_s$
to the \emph{encoded} language
$\tau(L(v,w)\cap L_s)$,
but this does not change the situation.

There is an alternative proof,
following an exercise in
Claus
\cite[p.~158, Aufgabe]{claus71}.
%posed it as an
%to
The exercise asks to derive the undecidability of % results about
testing whether
a given $P$-set %the recognized language
is a regular language
as a corollary of the emptiness question.
The hint for the solution suggests to
take the language
$L$ of nonempty solution sequences
$a_1a_2\ldots a_m$ of the PCP,
and
append some
nonregular $P$-set $\tilde L$ to it.
This can be done by appealing to Lemma~\ref{LXL-general}.
We have already established that $L$ is an $E$-set
(Section~\ref{sec:PCP}).
Taking some nonregular $E$-set $\tilde L$,
for example the language
$\{\,%\bar
{\texttt{a}}^i%\bar
{\texttt{b}}^i\texttt{\#}\mid i\ge 0 \,\}$
of Section~\ref{equality-testing},
we can form
the $E$-set $L'=L\texttt{X}\tilde L$.
%, which is therefore also a $P$-set.
If the PCP has no solution, $L'$ is empty and therefore
regular.
If the PCP has a solution, $L'$ is not
regular because $\tilde L$ is not regular.

This construction can be extended for context-free languages by taking
the language
 $\tilde L=\{\,%\bar
{\texttt{a}}^i%\bar
{\texttt{b}}^i
{\texttt{c}}^i
\texttt{\#}\mid i\ge 0 \,\}$.
It is not context-free, but it is the
intersection of two $E$-sets
 $\{\,%\bar
{\texttt{a}}^i%\bar
{\texttt{b}}^i
{\texttt{c}}^n
\texttt{\#}\mid i,n\ge 0 \,\}
\cap\{\,%\bar
{\texttt{a}}^n
{\texttt{b}}^i
{\texttt{c}}^i
\texttt{\#}\mid i,n\ge 0 \,\}$
and therefore an $E$-set.

%\end{document}%%%%%%%%%%%%%%%%%%%%%%%%%%%%%%%%%%%%%%%%%%%%%%%%%

\ifnotation

\begingroup
%\small \advance\baselineskip by -0,3pt % on one page!
\section{Notation, terminology, and abbreviations} %% Notation not into TOC, otherwise upset page breaks?

%\label{sec:notation}

\parindent 0pt
\everypar={\hangafter =1 \hangindent=13pt }
  
$A$ accepting computation of 2CM, appropriately encoded

$A(v,w)$, $\tilde A(v,w)$ special integer matrices that model word
pairs $(v,w)$

$A, A',B$ PFAs

$a_i$ ... $a_m$, $b_1,\ldots, b_n$ \emph{index \textbf{sequence}} for PCP solution

\texttt{a,b} symbols for equality checker, alphabet after conversion
to binary

$B(u)$ binary PFA

$B_i,C_i,D_i,E_i,F_i$ successively transformed transition matrices

$[0,c_1],
  (c_1,c_2],
  (c_2,c_3],\ldots$ interval boundaries

$d\times d$ size of matrix in joint spectral radius, size of matrix in PFA = \textbf{number of states}

$e_1,e_2,...$ (first) unit vector (local usage)

$\eps>0$. probabilities close to 0 or 1.

$\epsilon$ empty word (local usage)

$\eeta, \eeta_q $ characteristic vector (or more general, output
vector) of accepting states $q$.
%(%previously $f$;
$\eta^F$ is used in Nasu and Honda, where $F$ is the set
of accepting states. \cite{claus71} and most others uses $f$ instead
of $\eta$.
Paz p. 150 uses $\eta$. Blondel and Tsitsiklis 2000 use 0-1-vector $\eta$
(following Paz).

$G$ modulus

$\gamma,\gamma_1$ small constant (power of 2)

$\Gamma$ tape alphabet, including the blank symbol \blank

GSM Generalized sequential machine

$H$ halting state

$I$ unit matrix

$i,j$ lengths for equality checker

$J$ all-ones matrix, scaled by $1/d$.

$k$ number of input pairs for PCP

$K=10$ repetitions until declaring outcome ``INCORRECT''

$\lambda$ cutpoint

$l_i,r_i$ counter values in 2CM

$L,R$ movements of TM, as part of the rules (quite local usage)

MPCP Modified Post Correspondence Problem

2MPCP doubly Modified Post Correspondence Problem (not used much)

$\mathcal{M}$ set of matrices

$M,M_i,M_j$ matrices, transition matrices

$m_1,m_2,...,m_d$ columns of $M$, (local)

$m$ length of matrix product in PFA and in joint spectral radius

$m$ length of PCP solution / length of accepting computation in 2CM %(formerly $n$)

$n$ length of \emph{partial} PCP solution

$n$ length of a round, for amplification

%$n??$ length of partial??? matrix product in PFA

PFA probabilistic finite automaton

PCP Post's Correspondence Problem

$p_{00},p_{01},...$ transition probabilities in 2x2 automaton $B(u)$

$\pi, \pi_q$ starting distribution

$Q$ TM states / 2CM states

$q_i$ ... $q_m$ states in accepting computation for 2CM

$q,q'$ states of TM (in rules) / state of PFA

$q_A,q_R$ extra states for PFA

RMPCP Reversed Modified Post Correspondence Problem (used only once)

$r_i,l_i$ counter values in 2CM

$x^R$ string reversal

start state, starting distribution, starting pair, starting vector
%(start vector?)

$s,s',t\in \Gamma$ tape symbols

$\sigma\in\Sigma$ (letter from) the input alphabet (of PFA). Currently
not used as alphabet of PCP

$\Sigma$ output alphabet of a GSM

$t$ number of repetitions of $A$ to boost the acceptance.
%Previously $m$, $\to b? C? c??$

$t$ number of rounds, for amplification ($m$ in other publications)

$T$  Turing machine (local usage)

$\tau\colon \Sigma^*
\to
 \{\texttt{a},\texttt{b}\}^*$ binary encoding

$\tau^{-1}$ decoding, done by a GSM

$u$ input to TM or to PFA

$u,u'$, ($v$) binary strings

$U, U_W, U_0$ various universal Turing machines (local usage)

$0.u$ string $u$ interpreted as a binary fraction

$(v_i,w_i)$ ... $(v_k,w_k)$ PCP pairs, pairs of \textbf{words}!

$x=0.x_1x_2\ldots x_{47}$ random bits

$x=\phi(u)>1/2$, acceptance probability of particular $u$, for amplification

$y,z$ STILL MISSING THEIR ROLE!

$\Phi$, $\Psi$ states

$\phi=\phi(a)$, $\psi$ acceptance probabilities

\texttt{\#} separator symbol

\blank\ blank space

\endgroup

\fi%%%%%%%%%% ifnotation

\end{document}
\section{Small UTMs, Overview}

\url{https://www.thi.uni-hannover.de/fileadmin/thi/abschlussarbeiten/2020/ba_strieker.pdf}
Universität Hannover,
%Institut f ̈ur Theoretische Informatik
%Abschlussarbeit
%zur Erlangung des akademischen Grades
Bachelor of Science,
%vorgelegt von
Laura Strieker. 39 pp.

\url{https://link.springer.com/chapter/10.1007/978-3-642-03409-1_24}
\url{https://arxiv.org/pdf/0707.4489.pdf}
\url{https://arxiv.org/abs/0707.4489}
Small weakly universal Turing machines.
Turlough Neary, Damien Woods.
\cite{small-weakly-universal-2009}
We give small universal Turing machines with state-symbol pairs of (6,
2), (3, 3) and (2, 4). These machines are weakly universal, which
means that they have an infinitely repeated word to the left of their
input and another to the right. They simulate Rule 110 and are
currently the smallest known weakly universal Turing machines.

Recently,
Woods and Neary [29] have given 3-state, 7-symbol and 4-state,
5-symbol semiweakly universal machines that simulate cyclic tag
systems.

29. Damien Woods and Turlough Neary. Small semi-weakly universal Turing machines.
In Jérôme Durand-Lose and Maurice Margenstern, editors, Machines, Computations, and Universality (MCU), volume 4664 of LNCS, pages 306–323, Orléans,
France, September 2007. Springer.

Another in Fundamenta Informaticae 91 (2009) 179-195. %161–177 ?
10.3233/FI-2009-0011
\url{http://doi.org/10.3233/FI-2009-0011},
\url{https://mural.maynoothuniversity.ie/12415/}
with incorrect page numbers.
\url{https://mural.maynoothuniversity.ie/12415/1/Woods_SmallSemi_2009.pdf}
This work is an extended version of [34] and contains new results.
34 = MNU paper.

 3-state, 7-symbol machine
The cyclic tag system's list of appendants is reversed and encoded
to the left of the input. This encoded list is repeated infinitely
often to the left.

1 halting rule, 9 left, 11 right rules.

It is also known from the work of Margenstern [8], Michel [10], and Baiocchi [1] that the region between the non-universal curve and the smallest standard universal machines contains (standard) machines
that simulate the $3x + 1$ problem and other related problems.

[8] M. Margenstern. Frontier between decidability and undecidability: a survey. Theoretical Computer Science,
231(2):217–251, Jan. 2000

[10] P. Michel. Small Turing machines and generalized busy beaver competition. Theoretical Computer Science,
326:45–56, Oct. 2004.

[1] tech. report 1998, U. di Roma

\hrule

There are universal Turing machines with 4 states and 6 tape symbols.

Rogozhin's (4, 6) machine uses only 22 instructions, 
Rogozhin, Yurii (1996), "Small Universal Turing Machines", Theoretical
Computer Science, 168 (2): 215–240, doi:10.1016/S0304-3975(96)00077-1

states/symbols = 
(15, 2), (9, 3), (6, 4), (5, 5), (4, 6), (3, 9), and (2, 18)

 $U_{15,2}$ has 15 states and 2 symbols

``The initial state is u1 and the blank symbol is c.'' (Def 3.1)

 T. Neary and D. Woods / Four Small Universal Turing Machines
 p.121(wrong page number)

% \end{document}
\vbox{
Table 16. Table of behavior for U15,2.
\begin{verbatim}
    u1   u2   u3   u4   u5   u6   u7   u8
c: cRu2 bRu3 cLu7 cLu6 bRu1 bLu4 cLu8 bLu9
b: bRu1 bRu1 cLu5 bLu5 bLu4 bLu4 bLu7 bLu7

    u9    u10   u11   u12   u13   u14   u15
c: cRu1  bLu11 cRu12 cRu13 cLu2  cLu3  cRu14
b: bLu10       bRu14 bRu12 bRu12 cRu15  bRu1
\end{verbatim}
}

 This means that the alphabet can be coded in 5 bits.
 (The halting state has to be added.)
The input words have then at most 15 bits, and
entries of the transition matrices are multiples of $1/4^{15}$.
A variable-length code might be more efficient,
each word contains at most one state, but several letters.
Using 5-letter codes of the form \texttt{0****} for the 15 states plus the halting
state leaves 3-letter codes \texttt{1**} for the 4 symbols
$\Gamma \cup \{\texttt{\#}\}$, leading to word lengths bounded
by $5+3+3=11$.

We only need to take care that the code for \texttt{\#} is not all
zeros, and then the trailing zero situation cannot arise.

e.g. (9, 3), 2 bits per symbol 2+4 bits per letter (sometimes less)

UTM(7,4) Section 8, Rogozhin, 7 states and 4 symbols

UTM(10,3) Section 8, Rogozhin, 7 states and 4 symbols

%Fundamenta Informaticae 91 (2009) 105–126 105 WRONG
%DOI 10.3233/FI-2009-0008 WRONG
%IOS Press
Four Small Universal Turing Machines

Neary and Woods
\cite[Section 3.5]{4-small-2009}
\footnote{see
  \url{http://mural.maynoothuniversity.ie/12416/}, %/1/Woods_FourSmall_2009.pdf},
  with %slightly inaccurate
  incorrect page numbers}

(15, 2),

(9, 3), Section 3.2 $U_{9,3}$.
13 right-moving rules, 1 halting rule,
13 left-moving rules. $|\Gamma|=3$

$9+2 + 13\times3 + 14=63$ 

(5, 5), Section 3.3 $U_{5,5}$.
11 right-moving rules, 3 halting rules,
11 left-moving rules. $|\Gamma|=5$

$15+2 + 11\times5 + 14=86$ 

Alternative representation, Section~\ref{alternative-pairs}:
$5\times(2\times 5+3)+3=68$

(6, 4), Section 3.4 $U_{6,4}$.
13 right-moving rules, 1 halting rule,
10 left-moving rules. $|\Gamma|=4$

$12+2 + 10\times4 + 14=68$ 

Section 3.5 $U_{15,2}$.
originally, 14 right-moving rules, 1 halting rule,
15 left-moving rules. $|\Gamma|=2$.
After swapping:
15 right-moving rules, 1 halting rule,
14 left-moving rules. $|\Gamma|=2$.

$2\times 3+3 + 14\times2 + 16=53$ (sic!)

Section 3.2 $U_{9,3}$.
13 right-moving rules, 1 halting rule,
13 left-moving rules. $|\Gamma|=3$

$9+2 + 13\times3 + 14=64$

Alternative representation, Section~\ref{alternative-pairs}:
$3\times(2\times 9+3)+3=66$.

We always have the freedom to swap
left-moving with right-moving rules,
by flipping the Turing machine's tape and starting over the rightmost
input character!

Universal Turing Machines
for which the Halting Problem is Undecidable.

\paragraph{Watanabe.}

Shigeru Watanabe, 4-symbol 5-state universal Turing machine,
Information Processing Society of Japan Magazine
13 (9) (1972) 588--592

also another one: 3 symbols, 7 states.

Suppose the 21 rules are balanced between left and right
Section 3.2 $U_{9,3}$.
10 right-moving rules, 1 halting rule,
10 left-moving rules. $|\Gamma|=3$

$9+3 + 10\times3 + 11=53$ ONE smaller.

\paragraph{Wolfram--Cook, Rule 110.}

\cite[Fig.~1]{cook2004}

5 symbols, 2 states, 4 L, 6 R

4 symbols, 3 states, 2 L, 6 R, 4 transitions unused

3 symbols, 4 states, 3 L, 7 R, 2 unused

2 symbols, 7 states, 3 L, 9 R, 2 unused

[8] Thanks to David Eppstein for figuring out that the two-symbol machine
can be achieved with only seven states. (personal communication, 1998)

\cite[{Note [7]}, p.~38]{cook2004}

depends (in a complicated way) on the tag system that is simulated,
and thus it is not useful for our purpose.

questions about their behavior,
such as ``Will this sequence of symbols ever appear on the tape?'', are
undecidable.

See also \cite{small-weakly-universal-2009},
before section 3.1: ``... do not halt. ...
... such that it [the glider] is generated by Rule 110 if and only if
the simulated machine halts.
...
The glider would be encoded on
our tape as a unique, constant word.''

The repeating patterns by which the empty parts of the 
tape are filled are not fixed, but depend on the tag system,
in a complicated way, see
\cite[Note on initial conditions, p.~1116]
{wolfram-NKS}\footnote
{%on-line at
  \url{https://www.wolframscience.com/nks/notes-11-8--initial-conditions-for-rule-110/}}.
HOWEVER, according to
\cite{small-weakly-universal-2009},
p.5. ``Both of these repeated words are independent of the input.''
(of the cyclic tag system).
Is there perhaps something like a UNIVERSAL cyclic tag system!!

``the final initial conditions consist of an infinite repetition of b1
blocks, followed by b2, followed by an infinite repetition of b3
blocks. The b1
blocks act like `clock pulses', b2
encodes the initial conditions for the tag system and the b3
blocks encode the rules for the tag system.''

\paragraph{Margenstern and Pavlotskaya.}

Margenstern and Pavlotskaya [9] gave a
2-state, 3-symbol Turing machine that is universal when coupled with a finite
automaton. This machine uses only 5 instructions.

9. Maurice Margenstern and Liudmila Pavlotskaya. On the optimal number
of instructions for universality of Turing machines connected with a
finite automaton.
International Journal of Algebra and Computation, 13(2):133--202, April 2003.

\url{https://www.semanticscholar.org/paper/On-the-Optimal-Number-of-Instructions-for-Universal-Margenstern-Pavlotska%C3%AFa/95ab0fc3122aa196187bff9610b8e71f2a33ac6e}

\paragraph{Wolfram's 2,3 Turing Machine}

\paragraph{On universality.}

Intermediate Turing degrees.

\url{https://en.wikipedia.org/wiki/Universal_Turing_machine}

% distinguish notions of
% \emph{semi-weak} or \emph{weak} universality for Turing machines, concentrating
% on whether the empty parts of the tape are initially filled
% with
% some fixed repeating pattern
% instead of uniformly
% blank symbols.
% The fundamental distinction
% whether the % universal machine that can halt and
% %``universal''
% Turing machine has a provision for halting % command
% is
% blurred and treated only as an afterthought.

\subsection{Alternative representation}
\label{alternative-pairs}
IDEA for an alternative representation of a TM configuration as a
part of a string rewriting process:
The ``active cell'' can be \emph{either} to the left \emph{or} to the
right of the state symbol. Each state has two corresponding state
symbols
$q^L$  and $q^R$.
$q^L$ is used after a left move, and $q^R$ is used after a right move.

\begin{itemize}
\item
  For each \emph{right-moving rule} $(q,s,s',R,q')$, the pairs
  $(q_Rs,s'q'_R)$ and
  $(sq_L,s'q'_R)$.
\item
  For each \emph{left-moving rule} $(q,s,s',L,q')$, the pairs
  $(q_Rs,q'_Ls')$ and
  $(sq_L,q'_Ls')$.
\item For each \emph{halting rule} $(q,s,-)$, the pairs
  $(q_Rs,H)$
and  $(sq_L,H)$.
\end{itemize}
In total, these are $3|\Gamma|+3$ pairs,
plus 2 pairs
for every rule (no matter whether left-moving
right-moving, or halting),
plus the starting pair.

This may be advantageous if $|\Gamma|\ge 4$ (2 pairs per rule versus
1 or $|\Gamma|$ pairs per rule).
There are $|\Gamma|\cdot |Q|$ rules,
so in total
 $|\Gamma|\cdot (2|Q|+3) + 3$ pairs
plus the starting pair.
(There might be unused transitions.
This is not taken into account.)

\section{Further literature}

  Another citation, from \emph{Leviathan}, p.~115, Part I, Chapter V:
  ``For who is so stupid as both to mistake in Geometry and also to
  persist in it, when another detects his error to him?''

Stephen C.  Kleene, Mathematical Logic, 1967, pp.~3--4.
  ``To any student who is not ready to do so, we suggest that he close
  the book now, and pick up some other subject instead, such as
  acrostics or beekeeping.''
\footnote{namely, ready to keep in mind the distinction between the
  object logic and the observer's logic.}
\subsection{Turlough Neary}

Small universal Turing machines.
\url{https://www.ini.uzh.ch/~tneary/tneary_Thesis.pdf}
A thesis submitted for the
degree of Doctor of Philosophy
Department of Computer Science
National University of Ireland, Maynooth

Supervisors: Dr. Damien Woods and Dr. J. Paul Gibson

External Examiner: Prof. Maurice Margenstern

%Internal Examiner: Dr. James Power
%Department Head: Dr. Adam Winstanley

October 2008

Turlough Neary

Institute for Neuroinformatics,
University of Zürich and ETH Zürich,
Switzerland.
tneary@ini.phys.ethz.ch

\url{https://www.ini.uzh.ch/~tneary/}

I am a research scientist in the group of Matthew Cook. 

now alumnus (PostDoc)

Last entry 2019

STACS 2015: The PCP is undecidable already with as few as five
 (word) word pairs~\cite{neary:PCP5:2015}:
``Using Cook’s [6] reduction of tag systems to cyclic tag systems, it is a straightforward matter
to give a binary cyclic tag system program that is universal and contains only two 1 symbols. We also use our binary tag system construction to improve the bound for the number
of pairs of words for which the Post correspondence problem [18] is undecidable, and the
bounds for the simplest sets of matrices for which the mortality
problem [16] is undecidable.''

What is a \emph{binary cyclic tag system program}? (top of p.650)

``By applying the reductions in [8] and [3] to P in Theorem 11 we get
Corollary 12.

 Corollary 12. The matrix mortality problem is undecidable for sets with six 3 × 3 matrices
and for sets with two 18 × 18 matrices.''

\subsubsection{P-completeness of cellular automaton {Rule} 110 (2006)}
\cite{neary-woods-rule110-2006}
\footnote{see also Technical Report 04/2006, Boole Centre for Research in Informatics,
  University College Cork, Ireland, 2006, \url{https://dna.hamilton.ie/assets/dw/NearyWoodsBCRI-04-06.pdf}.
  See also Chapter~4 of Neary's thesis for a % somewhat different and
  more direct approach with reduced time overhead $O(|Q|t^2\log t)$.}
Some remarks and thoughts about this paper, although it is not presently relevant.
The main technical result is that Turing machines can be simulated
by cyclic tag systems with polynomial overhead.
\begin{itemize}
\item ``As with standard Turing
machines, using a binary alphabet causes at most a constant factor increase in
the time, space and number of states.'' (Referring to clockwise Turing
machines, end of Section~2).

This is a bit tricky.
One way to do it is as follows: If a larger alphabet is simulated by $b$-bit
blocks,
a consecutive block of $b-1$ bits around the current tape head is
not represented
on the binary tape, but is remembered as part of
 the state. (When the tape is extended by one letter of the original alphabet, additional bits
 are remembered and gradually output to the circular tape.)

 (The version of the Ph.D. thesis goes directly from Turing machines to cyclic tag
 systems. There, going to a binary alphabet is no problem. Apparently
 the decision whether to move left or right is made as part of the
 state? (As in Cocke \&
Minsky 1964, \url{https://doi.org/10.1145/321203.321206}).
 Or I rather believe, the subscript $L$ or $R$ as part of the encoded state only remembers
 the previous movement. This is not so clear. Probably the factor 30
 for the right-moving rules
in \cite{neary-woods-rule110-2006}
 must everywhere simply be changed to 60 to work for the
 version in the thesis.)

% An easier possibility is to allow an arbitrary number of symbols to be
% written (including none), not just 1 or 2. The current setup in the
% paper does not work with shortening the tape, but see the next remark.

% \item As an alternative overall approach, I think once can get rid of the counter
%   symbols ($\mu$ and its variations).  In a loop, we mark tape symbols
%   \emph{as long as there are at least 2 unmarked symbols}.  One way to do
%   it is as follows: Instead of working modulo 2 (objects of length $z$
%   and $2z$) and making decisions by observing the parity, we work
%   modulo~3.
%   In such a setup, we can count up to 2. (Details would have to be
%   worked out.)
% BASIC QUESTION: HOW TO CHECK THE EXISTENCE OF A PARTICULAR (TYPE OF
% SYMBOL) ON THE TAPE?  
  
\item Halting currently leads to a tape that remains unchanged in a cyclic
  loop.
  As an alternative, it could be arranged that only empty appendants
  are used, and the cyclic tag system comes to a stop. (If the TM
  output is not interesting and only halting should be decided.)  
\end{itemize}

\subsubsection{Small PCP with 5 words, translation from the binary tag
system}
\label{sec:small-pcp}

According to  Neary~\cite{neary:PCP5:2015}.
The tag system is over the alphabet $\{\mathtt{b},\mathtt{c}\}$.

Codes for \texttt{b} and \texttt{c}:
\begin{align*}
  \langle \mathtt{b} \rangle &= \mathtt{10^\beta 1}\\
  \langle \mathtt{c} \rangle &= \mathtt{1}
\end{align*}
Every dataword ends with \texttt{b}.
A dataword $x_0x_1 \ldots x_q\mathtt{b} \in
\{\mathtt{b},\mathtt{c}\}^*\mathtt{b}$
is encoded as
\begin{itemize}
\item $(r,r\langle x_0\rangle \langle x_1\rangle
  \ldots
  \langle x_q\rangle \mathtt{10}$), before the $\beta-1$ deletions
\item $(r,r\langle x_0\rangle \langle x_1\rangle
  \ldots
  \langle x_q\rangle \mathtt{10^\beta}$), after the $\beta-1$
  deletions, ready for the next appendant (looking at $x_0$).
\end{itemize}

\def\wordpair#1#2{$\vcenter{\hbox{\strut$\mathtt{#1}$}
     \hrule\hbox{\strut$\mathtt{#2}$}}$}

 % (1, 1hu0ihu1i. . .huli10),
 %(10^\beta 1, 110), (10^\beta1, 0), (1, 0), (10^\beta 1111, 1111)}

A word pair $(v,w) = \hbox{\wordpair {\hbox{$v$}}{\hbox{$w$}}}$
matches the existing string with the upper component
($v$) and extends the string built so far by the lower
component~($w$). In other words, $v$ is removed from the left and $w$
is appended.

\medskip

\begin{tabular}{|l|l||l|l|}
  \hline &tag system appendant
  &appendant pair& deletion pair\\
  \hline
  $\langle \mathtt{b} \rangle =\mathtt{10^\beta 1}$
& $\texttt{b}\to \mathtt{b}$
         &\wordpair{10^\beta 1}{110}
            &\wordpair{10^\beta 1}0 \\
  \hline
  $\langle \mathtt{c} \rangle =\mathtt{1}$
& $\texttt{c}\to u_0u_1 \ldots u_l\mathtt{b}$
  &\wordpair{1}{1
                                  \hbox{$\langle u_0\rangle \langle u_1\rangle
  \ldots
  \langle u_l\rangle$}
10}
            &\wordpair{1}0 \\
  \hline
\end{tabular}

The final pair is \wordpair {10^\beta1^\ell}{1^\ell},
with
$\ell$ large enough so that $\mathtt{1^\ell}$ can never be created on
the bottom by the other pairs.

Observations:
\begin{itemize}
\item A string that wants to have a chance to be matched must be a word in $\{\mathtt{10^\beta
    1},\mathtt{1}\}^*$ or a prefix of such a word.
\item Any string that is a prefix of a word in $\{\mathtt{10^\beta
    1},\mathtt{1}\}^*$ can be uniquely factored
  into pieces
  $  %\langle \mathtt{b} \rangle =
  \mathtt{10^\beta 1}$ and
  $  %\langle \mathtt{c} \rangle =
  \mathtt{1}$, plus possibly a last
incomplete piece that is a prefix of
$  %\langle \mathtt{b} \rangle =
\mathtt{10^\beta 1}$.
\item A \texttt{0} can be matched only as part of $\mathtt{10^\beta1}$.
\item Consequence: after creating \texttt{10}, there must be
  exactly $\beta-1$
  deletions in order to create the required \texttt{0}s.
  The next letter created at the bottom must be \texttt{1}, and therefore we
  must use an appendant pair.
\end{itemize}

In the tag system, the length is preserved modulo $\beta-1$.
This means that $(l+1)-(\beta-1)$ should be a multiple of
$\beta-1$. Consequently: $(\beta-1)|(\ell + 1)$.
This initial length is congruent to 1 mod $\beta$.
It follows that, at termination, the length must be 1.
Since the last letter is always $\mathtt{b}$, the word must be a
single $\mathtt{b}$ on termination.

\subsection{Wolfram, A New Kind of Science}

no termination mentioned (Chap. 11, around p. 690)

p. 667 tag systems of a certain kind are universal.

p.704 7 states 4 colors

``halting'' is somehow treated as an exotic feature of Turing
machines

[ Actually, the ``good'' Turing machines of Turing 1936 don't halt,
(``non-circular''), except by mistake. ]
Review in
\url
{https://www.cambridge.org/core/journals/journal-of-symbolic-logic/article/abs/a-m-turing-on-computable-numbers-with-an-application-to-the-entscheidungs-problcm-proceedings-of-the-london-mathematical-society-2-s-vol-42-19361937-pp-230265/4DFCA89035F7F7C5BF4DB5129B8BB09E}

2 states 5 colors $\to$ rule 110

p.94 tag system CAN halt: case (c): all elements are eventually removed.

So-called ``register machines'' p.97 (Chapter 3) are really 2-counter
machines.
German ``Registermaschine'' is something else, namely RAM
(random-access-machine).
But Minsky also says ``Universal
Program machines with Two Registers''.

See also notes:
p.~1115 History

p.~1116
. (The initial conditions in the note below quite soon become too
large to run explicitly on any existing computer.)

The discussion in the main text and the construction above require a
cyclic tag system with blocks that are a multiple of 6 long, and in
which at least one block is added at some point in each complete
cycle. By inserting k = 6 Ceiling[Length[subs]/6] in the definition of
TS1ToCT from page 1113 one can construct a cyclic tag system of this
kind to emulate any one-element-dependence tag system.

\subsection{On universality}

It seems that such a system can
 simulate a machine that resembles a Turing machine except that it
misses the most important state: a halting state.
 (Another, less crucial difference is
 % after
 initializing the ``empty'' parts of the ``tape'' in both sides of the
 input not with
blanks, but with some other periodic pattern.)
 
Unfortunately, the author never clearly define the concept of
universality that he uses.
.. The same vagueness and confusion ... permeates the literature in which

Wikipedia.
...

A basic fact about a universal Turing machine in the classic sense,
is that the halting problem is undecidable. Which problem
.. about

Cook
\cite[p.~11]{cook2004}
Universality in Elementary Cellular Automata

Cook states some undecidable problem

...
any program in just the tape data portion of the initial arrangement,
while always using a fixed repeating pattern on the right and on the
left.
To find a specific example of an undecidable question in such a glider
system, we can easily suppose that one of the Turing machine states
represented in the tag system has the peculiar property that all of its H,
L, and R symbols lead to empty appendants for the end of the tape.
Then, if this state is ever entered, the tag system will stop appending
symbols to the tape, and the glider system will stop sending moving data
to the end of the tape, causing ossifiers to hit tape data after all. This
can lead to a new kind of glider being produced. So one undecidable
question is, “Will the following glider ever appear?''

We could also easily suppose that one of the Turing machine states
represented in the tag system has the property that it halves both the L
and R symbols of the tape, and always stays in the same state. In this
case, the tape will eventually be zeroed out, and the tag system's behavior
will become periodic with period one, and the behavior of the glider
system will likewise become periodic, with a specific predeterminable
(space, time) period. So another example of an undecidable question is,
“Will the behavior become periodic with the following period?''

If the glider system becomes periodic at all, then the emulated tag
system must be periodic as well, meaning that the Turing machine is in
an infinite loop where it keeps entering the same state, with the same
tape to the left and right. Conversely, if the Turing machine enters
such a loop, then the glider system must become periodic. Since it is
undecidable whether a Turing machine will enter such a loop, another
example of an undecidable question for the glider system is, “Will the
behavior become periodic at all?”

%   Stephen Wolfram, precursors.
%   Konrad Zuse
%   entertained the idea the
%
%   http://zuse.zib.de/item/0sI6eUB7PSzWQ1md
   ``Rechnender Raum'', Elektronische Datenverarbeitung, vol. 8, pages
   336--344, 1967
\\
   \url{http://www.mathrix.org/zenil/ZuseCalculatingSpace-GermanZenil.pdf}
   \\
   \url{https://www.informationphilosopher.com/solutions/scientists/zuse/Rechnender_Raum.pdf} 9pp.
   \\
  \url{https://link.springer.com/book/10.1007/978-3-663-02723-2}\\
\url{http://www.horst-zuse.homepage.t-online.de/rechnender-raum.html}
%   According to his autobiography,
% ``The Computer - My Life''. Berlin: Springer-Verlag, 1993.
%   he abandoned these ideas.

\subsection{Tag systems and universality}

However, for our purposes, this would require
some notion of universality for an individual tag system,
and such a notion does not seem to exist?
seem to have been studied.

[It exists for cellular automata.  see Wolfram, NKS: universal in a
certain sense, it can simulate any cellular automaton, with proper
setup.]

One could look at universal tag systems, with a fixed set of rules,
but an arbitrary starting word. (e.g. a tag system simulating
a UTM? add another round of the detour!)

In particular, a 2-tag system can be constructed to emulate a
Universal Turing machine, as was done by Wang 1963 and by Cocke \&
Minsky 1964.
\url{https://en.wikipedia.org/wiki/Tag_system}

Cocke, John; Minsky, Marvin (1964). "Universality of Tag Systems with
P=2". J. Assoc. Comput. Mach. 11:
15–20. doi:10.1145/321203.321206. hdl:1721.1/6107. S2CID 2799125.
\url{https://doi.org/10.1145/321203.321206}

deletion number P = 2.

Wang, H.: "Tag Systems and Lag Systems", Math. Annalen 152, 65–74, 1963.
\url{https://doi.org/10.1007/BF01343730}

Common review of both papers in

The Journal of Symbolic Logic , Volume 36 , Issue 2 , June 1971 , pp. 344
DOI: \url{https://doi.org/10.2307/2270314}

The \emph{lag systems} of Wang would be good enough for us. They look at the
whole prefix of length $p$ (the deletion number), not just at the
first letter.
Or even \emph{Monogenic Normal Systems} ([3]). They remove
prefixes of varying length. [3] POST, E. L.: Formal reduction of the general combinatorial decision problem. Am.
J. Math. 65, 197--215 (1943).

Here is a more efficient simulation (in terms of time, not with number
of rules):
Neary \& Woods, ``{P-completeness of cellular automaton {Rule} 110}'',
\cite{neary-woods-rule110-2006}  (2006)
\footnote{see also Technical Report 04/2006, Boole Centre for Research in Informatics,
  University College Cork, Ireland, 2006, \url{https://dna.hamilton.ie/assets/dw/NearyWoodsBCRI-04-06.pdf}.
  See also Chapter~4 of Neary's thesis for a % somewhat different and
  more direct approach with reduced time overhead $O(|Q|t^2\log t)$.}
showed that a Turing machine can be simulated
by a cyclic tag systems with polynomial overhead.
Some other paper of Neary \& Woods showed that
 cyclic tag systems can be simulated by 2-tag systems. THERE WE GO!

\subsection{Marvin Minsky}

[ reference ???
This (WHAT?) is proved in Minsky's book (Computation, 1967, p. 255--258),
Marvin Minsky (1967). Computation: Finite and Infinite Machines (1st
ed.). Englewood Cliffs, N. J.: Prentice-Hall, Inc. In particular see
chapter 11: Models Similar to Digital Computers and chapter 14: Very
Simple Bases for Computability. In the former chapter he defines
``Program machines'' and in the later chapter he discusses ``Universal
Program machines with Two Registers'' and ``...with one register'', etc.
]

\hrule

A 2-way automaton can move forward and backward on the input tape, and
thus
process the input as often as it wants.
Such an algorithm can boost the correctness probability, and thus it
is possible to make a correct decision (accept or reject) with
probability at least $1-\eps$, for any $\eps$. (By contrast, for a (1-way)
probabilistic finite
automaton this is impossible: If there is a gap interval $(p_1,p_2)$
of positive length such that the acceptance probability
is never in this interval,
then, for every cutpoint $\lambda$ in this interval,
the accepted language is regular. Michael O. Rabin
1963,
Probabilistic Automata,
Information and Control 6 230--245. \cite{rabin1963}
\cite[Theorem~IIIB.2.3]{paz71}

\subsection{Blondel and Tsitsiklis (2000)}

Blondel and Tsitsiklis~\cite{blondel2000} 2000:
PFA EMPTINESS is undecidable cannot be found in its
entirety in the published literature. A proof (stated
with a different terminology) is given in Theorem 6:17
in p. 190 of [17]. The proof given there is a few lines
long and refers to a long cascade of lemma that appear
at various places in the book. A full proof is hard to
reconstruct.
Sketches of an alternative proof can be found in
several recent references; see, e.g., [6,14,15]. Finally,
a full proof can be found in the expanded version of
[6]; see Theorem 3:2 in [7].

[14] O. Madani, On the computability of infinite-horizon
partially observable Markov decision processes, AAAI98 Fall
Symposium on Planning with POMDPs, Orlando, FL, 1998.

[15] O. Madani, S. Hanks, A. Condon, On the undecidability of
probabilistic planning and infinite-horizon partially observable
Markov decision problems, Proceedings of the Sixteenth
National Conference on Artificial Intelligence, Orlando, FL,
July 1999.

\subsection{Madani, Hanks, Condon 2003}
\label{sec:madani-hanks-condon}

What DID Madani, Hanks, Condon \cite[Sec.~3.1 % Undecidability of the Emptiness problem.
and Appendix A%. The weak equality test.
]{jair03} in 2003.
 prove?

\subsection{Others}

Rūsiņš Mārtiņš Freivalds

?? RELATION TO
sequential test (A. Wald)

\subsection{Literature about stochastic automata and languages}

Rabin~\cite[Sections~IX--XII, p.~242--245]{rabin1963}
calls PFAs
in which
all transition probabilities are positive
\emph{actual automata} and
studies their \emph{very special properties}.

WHAT ARE THOSE PROPERTIES? Well, not so special.
for ISOLATED cutpoints, only very special languages.

\hrule
\cite[p.261]{NasuHonda1969}

//Lemma 11: An E-event ($\phi(a)=\psi(a)$) is a P-event (stochastic
language).

...

``Matuura et al. and Turakainen studied linear space automata (or
generalized automata) and found independently the same fact that the
family of stochastic languages is the same as the family of languages ac-
cepted by linear space automata (or generalized automata). Matuura
et al. gave us a key idea with respect to Lemma 11 in this viewpoint.
P-sets, E-sets and D-sets can be studied generally in this viewpoint, and
E-sets and D-sets can be related to the work of Schüzenberger (sic!) (1961).
The next remark and Lemma 12 follows from Schüzenberger's results
but as for Lemma 12 we will give a straightforward proof.''

//Remark: neither E-set nor D-set
 ($\phi(a)\ne\psi(a)$)
contains the other.

LEMMA 12. ...

What are ``linear space automata''? With arbitrary matrices
$M$, working on a \emph{linear vector space} with linear operators
\cite{schutzenberger-1961}.

``(c) The unbounded part of the memory, $V_N$, is the finite dimensional
vector space of the vectors with $N$ integral coordinates; this part of the
memory plays only a passive role and all the control of the automaton
is performed by the finite part.''

Acceptance is by NON-containment in a Subspace. (or union of several subspaces)

\hrule
\cite[p.270]{NasuHonda1969}

// Thm.21 is the undecidability of PFA Emptiness.

footnote 6:
As for Theorem 21, it reduces to the statement in p.~150 of Schüzenberger
(1962) from the viewpoint that we said before Lemma 12 of this paper.

footnote $\to$ Sch\"utzenberger 1961/63:
Which of this is in Paz? only Sch 1961 (I\&C)
\cite{schutzenberger-1961}

``... grateful to Dr. Y. Inagaki and Mr. H. Matuuta (sic!) in Nagoya
University ...''

MATUURA,H., INAGAKI, Y., AND HUKUMURA, T. (1968), Generalization of finite
automata and its analysis. Papers of Technical Group on Automata. IECE,
Japan. 1968---1 (in Japanese).

MATUURA, H., INAGAKI, Y., AND HUKUMURA,T. (1968), Linear space automata
and probabilistic automata. Records of 1968 National Convention. IECE,
Japan. S 8--4. (in Japanese).

TURAKAINEN, P. (1968), On probabilistic automata and their generalizations.
Annales Academiae Scientiarum Fenicae, Series A I Mathematica 429.

SCHÜTZENBERGER, M. P. (1961), On the definitions of a family of automata, Inform.
Control 4, 245-270.
\cite{schutzenberger-1961}

SCHÜTZENBERGER, M. P. (1962), Certain elementary families of automata. In "Mathematical Theory of Automata." (J. Fox, ed.) pp. 139-153. Polytechnic Press of
the Polytechnic Institute of Brooklyn, 1963.

Marcel Schützenberger~\cite{schutzenberger63:certain}
\footnote{\url{https://monge.univ-mlv.fr/~berstel/Mps/Travaux/A/A/1963-4ElementaryFamAutomataSympThAut.pdf}}

review of Schützenberger's papers by Michael O. Rabin in
The Journal of Symbolic Logic , Volume 34 , Issue 2 , 25 July 1969 , pp. 296 - 297
DOI: \url{https://doi.org/10.2307/2271115}

``The third paper is mainly expository and is based on the first two papers.''

M. P. SCHUTZENBERGER. Certain elementary families of automata. Proceedings of the
Symposium of Mathematical Theory of Automata, New York, N.Y., Microwave Research
Symposia series vol. 12, Polytechnic Press of the Polytechnic Institute of Brooklyn, New York
1963, pp. 139-153.

\hrule
closed under complement,

This is
credited to ``(Paz)'' in \cite{NasuHonda1969}

cites 2 papers of Paz

Some aspects of probabilistic automata~\cite{paz-1966} 1966:\\
Paz, A. (1967), Fuzzy star functions, probabilistic automata and their approximation by nonprobabilistic automata. IEEE Conference Record, 1967 Eighth
    Annual Symposium on Switching and Automata Theory, 280-290.

also in the other, 1968 Nasu-Honda paper, p.288:
``THEOREM 4.1 (due to Paz).'' (closed under complement)

BUT THIS REFERS to probabilistic event == some fuzzy event.
complementation means just complementing the acceptance
probabilities. (trivial)

cites ONLY \cite{paz-1966} 1966, and gives an independent proof.

\hrule

C. V. Page~\cite{page-1966}:
 
$F$: output vector, a $n$-component column vector whose entries are
\emph{real} numbers. $F_i$ is the output from state $S_i$

arbitrary output values % (= output values)
in the interval $[0,1]$
could be accommodated 
by introducing an additional accepting state $C$.
Upon reading a special termination symbol, the PFA makes one probabilistic
step to $C$ or the rejecting state $B$,
reflecting the
output values $\eeta_q$. From $C$, the PFA can only go to the rejecting state $B$.
This ensures that
the corresponding transition matrix can be used only once, at the end.

However, as this would introduce an additional state, this
cannot be used to improve
Theorem~\ref{thm-fixed-f}
for the ``classic'' setting.

Generalized automata and stochastic languages,
    by Paavo Turakainen,
    Proc. Amer. Math. Soc. 21 (1969), 303-309.
    DOI: \url{https://doi.org/10.1090/S0002-9939-1969-0242596-1}

    ``the initial vector'',  ``the final vector'' $f$, cutpoint = $\eeta$!

    Turakainen~\cite
{turakainen69}
    
    arbitrary positive or negative matrices, and $\pi$ and $\eeta$.
    Does not define a more general class of languages.
    \cite[\S 3.3.2, Der Satz von  Turakainen, p.120]{claus71}.

INFORMATION AND CONTROL 10, 215--219 (1967),
On m-Adic Probabilistic Automata,
ARTO SALOMAA,
proves regular IFF $\lambda\in\mathbb Q$,
cites Paz~\cite{paz-1966} about
$m$-adic automata.

Paz~\cite{paz-1966} 1966:\\
E X A M P L E S (p.35)

\emph{1. An ``$m$-adic two state p.a.'' is obtained by prescribing the following
conditions:} ... use all $m$ different matrices.

Rabin~\cite{rabin1963} already defined THE binary automaton, with 2
states and 2 transitions.

``The following matrices were suggested by E. F. Moore.''

Rabin~\cite{rabin1963}:
IX. ACTUAL AUTOMATA, p.242.
all transition probabilities are POSITIVE.
%In certain actual situations it is natural to

\paragraph{Adding a small probability to change $\ge\lambda$
into $>\lambda$.}

\medskip\hrule
The idea of adding a small probability to change $\ge\lambda$
into $>\lambda$ is can also be used to prove that
%implicit in the proof that
the languages recognized by
rational PFAs are closed under complement.
But: Sometimes proved with the techniques of
???Turakainen~
\cite{turakainen69}?
Paz cites other Turakainen papers! see below

 This is
crucial for for
CLOSED UNDER COMPLEMENT for rational PFAs, where is this in the literature?
[ Closure properties must have been well-known.. ? In particular for the
complement?
Turakainen~
\cite{turakainen69}?
No the complement is not mentioned
in 
\cite{turakainen69}, only the mirror.
    And moreover that appeared in the same year. Who is credited in
    Paz:
Paz uses integral pseudo-stuff for complementation. cites
Paavo Turakainen
[1969b Ann. Fenn. rational probabilistic]
\url{https://www.acadsci.fi/mathematica/1969/no429pp01-53.pdf}
51 pp.
ANNALES ACADEMIAE SCIENTIARUM FENNICAE
Series A
I. MATHEMATICA
429
On probabilistic automata and their
generalizations.
\url{https://doi.org/10.5186/aasfm.1969.429}

--53.

(53 pp.), No 429 (1968)

Claus
\cite[Hilfssatz 31.ii, p.~131]{claus71} is closed under complement
for $P$-sets. Proof as an exercise; for the \emph{rational} generalized
acceptors of Turakainen~\cite{turakainen69}.

\paragraph{The proof in Paz 1971.}

What the heck is $a$ and $b$ in 
\cite[p.~188, line~1]{paz71}?
What is $m$?
Is this erroneously copied from a different part of the proof? (Lemma
6.12)?
A similar line is
\cite[p.~188, line~$-6$]{paz71}?

disfigured, defaced, spoiled, bungled, daubed (beschmieren), botched
(Flickwerk, Pfuscherei)

defaced by occasional typos and editing errors (such as passages in a
proof that
obviously belong to another proof)

\cite[Lemma 14, pp.~264]{NasuHonda1969}
requires PFA for which
the acceptance probability is an injective mapping
from $\Sigma^*$, ``for example   .. in the proof of Proposition 13'':
These are the $m$-ary automata.

Paz (last page 193) gives precise credit for Section 6.
Section 6 up to Theorem 6.8 is from
Turakainen
[1969b Ann. Fenn. rational probabilistic]
\url{https://www.acadsci.fi/mathematica/1969/no429pp01-53.pdf}
51 pp.
(NOT
~\cite
{turakainen69}!) and 1970a (Akad. Turku, not closed under ...) and
1970b (tech.rep. ``some closure properties''), using the language of
Nasu and Honda as a starting point.
Everything from Exercise 6.9 to the end is from
Nasu and Honda \cite{NasuHonda1969}

Turakainen
[1969b Ann. Fenn. rational probabilistic]
\url{https://www.acadsci.fi/mathematica/1969/no429pp01-53.pdf}
Theorem 15 (p. 37 is about complement but ONLY when $L_=$ is regular!).

Theorem 16 on p.38 is about reducing to 0-1 $f$.
equation (9.1) on p. 39. the number of states is squared as in Claus.

Mention Bukharaev and Starke?

\paragraph{The proof in Claus 1971.}

Volker Claus from 1971 \cite[Satz 28, p.~157]{claus71},

Claus says: proved in Nasu--Honda (1969) for a fixed alphabet of size
$\ge 2$
\emph{without being able to bound the number of states}
, ``wobei nun aber die Zustandszahl der SAkz nicht beschränkt ist.''
  Nasu--Honda is a different proof after all!

Hilfssatz 33, S.~131.
closed under complement if rational.
Exercise, using generalized acceptors VAkz (S. 119).
This seems to be also how it is done in Paz.

$L_{\phi=\psi}$: p. 155, using the formula (and some lemmas
about product automata) (very nice notation!)

Using the PCP, encode with \emph{ternary} automaton, and alphabet
$\{\texttt{1},
\texttt{2}\}$, avoiding the trailing zeros issue.

\paragraph{Terminology.}
``realized'' is used only for fuzzy events.
realizable fuzzy events = probabilistic events.

\emph{represented} (Salomaa, Theory of automata, p.76) is indeed used for P-sets.

\paragraph{Re. ``events''.}
I find this choice of terminology, which pervades much of the literature,
(in particular the whole Chapter~III of the book \cite{paz71})
unfortunate since an event is rather something whose probability is
to be measured.

Turakainen~\cite{turakainen69} speaks only
of \emph{stochastic languages}. (+)

Claus
\cite[Definition 39, p.~148]{claus71}
``unbestimmtes Ereignis'' ``= translation of fuzzy set'' according to
Zadeh, actually = fuzzy \emph{language}.
Special cases:
\cite[Definition 40, p.~148]{claus71}:
``stochastisches Ereignis'' = acceptance probability of a PFA,
``reguläres Ereignis'' = (characteristic function of a) regular language.
But at least stochastic \emph{languages} = stoch.\ Sprachen:
\cite[Definition 26--27, p.~101]{claus71}

\cite[Definition 1, p.251]{NasuHonda1969}: \emph{fuzzy event}. (from
$\Sigma^*$.
``Subsets of $\Sigma^*$ will sometimes be called events or languages.''

Rabin~\cite[p.~233]{rabin1963}:
Subsets of $\Sigma^*$ (i.e., sets of tapes) will sometimes be called
\emph{events}. ...  \emph{regular event}.
Rabin~\cite[Definition 11, p.~243]{rabin1963}:
definite event.

\subsubsection{Turakainen 1975. Conversion from pseudo-stochastic automata to
stochastic automata}
\label{sec:Turakainen}

Turakainen~\cite
{Turakainen_1975}.

Theorem 1:
\begin{compactenum}[(i)]
\item
\label{tura-i}
  $n+2$ states, doubly stochastic, positive (``actual''), one
  final state
\item $n+1$ states, stochastic, positive, \emph{difference} between
  two starting distributions.
\item
\label{tura+3}
  $n+3$ states, doubly stochastic, positive, one
  final state, one deterministic initial state (equal to the final
  state).
  
(And no
  extra linear factor besides the exponential in the length of the input.)

\item (a specialization of (\ref{tura-i}) under some conditions: no
  extra linear factor besides the exponential in the length of the input)
\end{compactenum}
The empty word is always disregarded.

There are some results about what canNOT be done with $n+1$
states (Theorem 3): (\ref{tura+3}) cannot be achieved with $n+1$.

There is also something about characteristic polynomials and minimal automata.
\subsubsection{Newer literature about PFAs}
\label{sec:newer}

according to
2020 \cite 
[p.~22:2]{bell_et_al:LIPIcs.CONCUR.2020.22},
the current record for the number of states is dimension 25
from
Hirvensalo 2007,
\cite{hirvensalo07}, see also \cite{tHirvensalo06a},
but for \emph{binary} input alphabet.

Hirvensalo 2007 already uses ``Claus instance of PCP'' where first and last
word pair is fixed. (in any minal solution, thus w.l.o.g.) Contrast:
as part of the requirements on a solution.

There is a mistake in
Hirvensalo \cite[Section~3, Step~3]{hirvensalo07}.
Due to the shortcut for state $q_6$, it is not true
that ``$\mathbf{x}_3^T C_{w^R} \mathbf{y}_3=0$, if $w\in \Sigma^+$ is
not in the image of $\psi$.''
The value of
$\mathbf{x}_3^T C_{w^R} \mathbf{y}_3=0$, if $w\in \Sigma^+$ is
something not very well-defined in which some part of the vector is
ignored by 
but the contribution of state $q_6$ is counted.
To fix this, in the sense of
Section~\ref{sec:thm-variable-matrices}, Step 2,
the vector $\mathbf{x}_3 = f'$ should be
\begin{math} 
f' =
\begin{pmatrix}
  \hat f\\
  \hat f\\
  \hat f\\
  f_6
\end{pmatrix}
\text{ and not }
f' =
\begin{pmatrix}
  \hat f\\
0\\0\\
  f_6
\end{pmatrix}
\end{math}
as written in the paper: ``such that all its coordinates are zero,
except ...''
(Remember that
Hirvensalo's matrices are ``column-stochastic'' and read from right to
left, see the remark following equation~(1).)
Then the method commits the ``inconsequential error'' of
Section~\ref{sec:thm-variable-matrices}, Step 2, but in a
consistent way.

\paragraph{Falscher Ansatz, Step 1.}
We convert  $\alpha$ and $\beta$ to unit vectors,
inspired by
Blondel and Canterini \cite[Step 3a, p.~236]{blondel-canterini-03}.

For this, we need an invertible matrix $R \in \mathbb Q^{6\times 6}$
such
that
\begin{equation}
  \label{eq:specify-R}
R\beta_2=\beta_3\text{ and }R\alpha_3=\alpha_2,  
\end{equation}
where $\alpha_3$ and
$\beta_3$ are unit vectors.

QUATSCH! Was wir brauchen ist eine orthogonale transformation
verbunden mit einer Skalierung.

Skalierung unproblematisch. $>0$ or $\ge 0$ unaffected.

\subsubsection{Fijalkow-2017-SIGLOG}
\cite{Fijalkow-2017-SIGLOG},
``Undecidability Results for Probabilistic Automata''

Section 3, p.
Invalid argument to get strict inequality acceptance:
``However, since the Equality Problem is undecidable, this implies that the emptiness
problem with strict inequalities is also undecidable.''

\subsubsection{OLD: Bertoni 1974/75}
\label{sec:OLD-bertoni}

\cite{Bertoni1975},
Alberto Bertoni 1975 (Proceedings of a conference 1974 in Berlin):
It is undecidable whether 1/2 is an isolated cutpoint.

reduction from MPCP:
1/2 is an isolated cutpoint $\iff$ the Turing maching does \emph{not}
halt. ($\iff$ the PCP has \emph{no} finite solution.)

builds $\phi(x)+\frac12(1-\psi(x))= \frac12 + (\phi(x)-\psi(x))$
(with 4 states!).

Other points:

\begin{enumerate}
\item 
realizes $\phi(a)$ not by the binary automatom but a different 2x2
matrix.
Always 2x2, independent of the radix $m$. middle of p.110:
\begin{displaymath}
  A(u) :=
  \begin{pmatrix}
    \frac{1}{m^{|u|+1}} &
    1-\frac{1}{m^{|u|+1}} \\[5pt]
    \frac{(u)_m}{m^{|u|}} &
    1-\frac{(u)_m}{m^{|u|}}
  \end{pmatrix}
  \text{, for example }
  A(\texttt{234}_{10}) =
  \begin{pmatrix}
    0.0001 & 0.9999 \\[5pt]
    0.234\phantom0 &
    0.766\phantom0
  \end{pmatrix}.
\end{displaymath}
With starting and ending state both 1.
(This is applied to one part $v_i$ of the word pairs $(v_i,w_i)$.)
Is this just a mistake? Or does the automaton do some useful function?

Note that a radix $m$ (in the paper $K+1$) larger than 2 is required. See the sentence
after Th.2 (p.110): The digit $m-1$ must not appear.

\item In the reduction from PCP to MPCP refers to ``in the same way as [3]''.
  [3] Hopcroft, Ullmann, 1969.
If this is the standard trick like everywhere else, this works for
solvability!
(as stated on top of p. 109).
But not for the purpose in the paper.
Even if the TM halts, the PCP (resulting from the MPCP) will have
arbitrarily long common prefixes, simply by repeating the PCP solution
periodically and introducing a deviation at some point.
\end{enumerate}

Interesting:
Uses a TM model in which the machine either changes state or makes a
(left or right) move, but not both,
with reference to [4] Nelson: Introduction to Automata, Wiley 1968.

\paragraph{Bertoni et al. 1977}
There is also
Bertoni et al. 1977, ICALP, where two other results are treated
\begin{enumerate}
\item The problem ``is 1/2 an isolated cutpoint'' is semidecidable.
\item The problem ``is there an isolated cutpoint'' is undecidable.
\end{enumerate}

Bertoni et al. 1977, top of p.90, (3.1) write correcly
 a $k$-ary automaton, (with 2 states; this works!)

\subsection{Gimbert and Oualhadj 2010:PFA}
\cite{gimbert-oualhadj-2010:PFA}
\footnote{\url{https://hal.science/hal-00456538v3/file/gimbert_oualhadj_probabilistic_automata.pdf}}

Mistakenly credit
Alberto Bertoni 1975
\cite{Bertoni1975} for the result

``\textbf{Proposition 1 (Bertoni).} The equality problem is
undecidable'' (4th page)

The proof is given on the first page of the appendix in the
tech-report (online version).
(Forgot to exclude the empty word from the PCP solutions).

[Cf. also Paz \cite[Lemma IIIB.6.1, p.~183]{paz71}: ``Every $E$-event
can be represented in the form $\phi=\frac12$.'']

Reduction from ``Equality problem'' to Emptiness Problem is ``less
known'':
From testing $\phi=\frac12$ to $\phi(1-\phi)\ge \frac12$.

Interesting: Convex combination is done with an additional letter at
the beginning.

``Bertoni [4] proved that the Isolation Problem is undecidable in general:

\noindent
\textbf{Theorem 3 (Bertoni).} The Isolation Problem is undecidable for probabilistic
automata with five states.''

However, Bertoni
\cite[Th. 4, p.111]{Bertoni1975}:
``It is recursively unsolvable if $\frac12$ is an isolated cutpoint for
arbitrary 4-states stochastic automata.'' 5 or 4?

\subsection{Amplification (Case study 2)}
\label{sec:amplification-appendix}

idea
credited to
\cite{baier08}

Christel Baier, Nathalie Bertrand, and Marcus Größer. On decision problems for
probabilistic Büchi automata. In FoSSaCS, pages 287–301, 2008.

in the rendition of
Fijalkow
\cite{Fijalkow-2017-SIGLOG},
one begins to recognize ideas similar to the Freivalds construction.

similar to
procedure of
Freivalds discussed in Section ...

state-chart.

Fig. 1 on 7th page.

WOANDERS:
probability amplification technique
of Gimbert and Oualhadj
\cite{gimbert-oualhadj-2010:PFA} (Theorem~\ref{thm-amplify} in
Section~\ref{sec:amplification-study}),

one can transform
into dichotomy that is even stronger than
the one of
 Theorem~\ref{dichotomy} in Section~\ref{sec:boost},
 and Section~\ref{sec:further-boost}):

  reminiscent of ...

  ``success

\eqref{eq:plus}
\eqref{eq:minus}

\hrule

\paragraph{Gimbert and Oualhadj, ICALP 2010
\cite{gimbert-oualhadj-2010:PFA}}

[They construct $\mathcal{A}$ from $\mathcal{B}$.]

\texttt{sim} and \texttt{check} are called $a$ and $b$.

A single coin is flipped at the beginning.!! more complicated calculation.

The machine flips a single coin at the beginning, and with probability 1/2, it does one of the
following two things:
\begin{itemize}
\item [($+$)]
 The machine accepts if, in some round,
  the output of $A$  for \emph{all} inputs
  $u_1,\ldots,u_n$ in that round
 was ``Plus''; otherwise it rejects.
\item [($-$)]
 The machine rejects if, in some round,
  the output of $A$  for \emph{all} inputs
  $u_1,\ldots,u_n$ in that round
 was ``Minus''; otherwise it accepts.
\end{itemize}

(The change of behaviour is not mentioned in \cite{Fijalkow-2017-SIGLOG}.)

On top of p.~16, when discussing the automaton of Figure~1,
[Section 7 in the appendix],
the $b$'s come at the \emph{end} of each round.
As written this makes no sense: A word ending in $b$ is never accepted
in the left half of the automaton; thus the total acceptance probability
is at most $1/2$, and it cannot converge to 1, as claimed.

two copies of PFA B are integrated into A.
The transitions are described in detail in the text.

After everything is integrated into the new PFA A

[Proof of Theorem 4, Section 7 in the appendix, p.~16], the $b$'s come
at the beginning and not at the end. This is more reasonable.
On the other hand leaps of argument:

Factorize $w'$ in $w' = u_0v_0\sharp u_1v_1\sharp u_kv_k\cdots$
such that $u_i \in b^*$ and $v_i\in A^*$.
Such a factorization is not always possible.

There is also small oversight. If $u$ leads to the starting state
(which must by assumption be rejecting.)

``(moving to an absorbing rejecting state)''
\paragraph{Fijalkow 2017
  \cite{Fijalkow-2017-SIGLOG}}
short note, 8 pages

Fijalkow 2017 does not point out the difference to G+O.
% and, in email communication, downplays the difference.

pictorial representation of an abstraction
Each transition \texttt{sim} is supposed to represent
a run of $A$, and is supposed to have a fixed success probability $x$.

[He constructs $\mathcal{B}$ from $\mathcal{A}$.]

In Gimbert and Oualhadj
\cite{gimbert-oualhadj-2010:PFA}
(translated to Fijalkow's language)
the transitions from $\mathcal{A}$ to the absorbing states
upon reading \texttt{check}/$b$ explicitly
are performed only from the starting state. (Other transitions lead to
the rejecting sink.)

In the pictorial description of
Fijalkow 2017
\cite[%right picture of
Figure 2]{Fijalkow-2017-SIGLOG},
one must understand that $L$ and $R$ are the starting states of
the respective automata.
The transition from any nonaccepting state of $\mathcal{A}$
othe than the starting state on
 reading \texttt{check} is unspecified.

 Upon reading $
 \texttt{check}\cdot w\cdot \mathtt{check}$,
where $w$ is a word that may lead back to the starting state of
$\mathcal{A}$ with positive probability, (or to a rejecting state,
with the loose interpretation of Fig.~2),
the automaton $\mathcal{B}$ has a positive acceptance probability,
 contrary to what is claimed
 on p.16.

One could easily fix this by insisting that $\mathcal{A}$ never goes
back to the start state. This has been swept under the rug.

``\textbf{The expanding automaton}

The automaton presented on the right-hand side of Figure 1 was used in
the proof of the undecidability of the value 1 problem [Gimbert and
Oualhadj 2010].'' (p.13 bottom)

``The construction is essentially the same as for the undecidability of
the value 1 problem by Gimbert and Oualhadj [Gimbert and Oualhadj
2010], the main improvements are in the correctness proof.'' [p.15--16]
\paragraph{Even simpler}
no coin at all.
One has to specify what happens for an empty round ($n=0$).
We may specify that an empty round leads to
immediate rejection (REJECT).
(According to the original automaton, it leads to
an immediate decision
ACCEPT or REJECT with probability $1/2$ each.

\paragraph{Moral}

Constructing a PFA for a particular purpose via a state diagram is
tricky.
Better stick with a high-level verbal description.

\subsection{6x6 with fractions?}

The matrix~\eqref{eq:gamma-matrix}
 also works with $10^{-x}$
 instead of $10^{x}$, perhaps with reversed order.
Experiments have confirmed this.
Then
 one is back to microscopic differences that decided about PFA acceptance.

\subsection{Wasting states}

Blondel and Canterini 2003
[Step 2, p.~235]
Undecidable problems for probabilistic automata
of fixed dimension
\cite{blondel-canterini-03}

following
V. D. Blondel and J. N. Tsitsiklis. When is a pair of matrices mortal? Inform. Process. Lett.,
63(5):283--286, 1997.

multiply the number of states by $k$.

The same is used later in 1998
Cassaigne and Karhumäki \cite{CASSAIGNE199829},
who cite
Blondel and Tsitsiklis 1997.

This is particularly striking as the main objective of their paper is
to get a small number of states.

\subsection{SOMETHING ELSE: Using small universal TM}

Blondel and Canterini 2003
\cite[Theorem 3.2, p.~241]{blondel-canterini-03}
using a particular small universal TM!
In a different context, and for a different purpose:
To get a bound for the word pairs for \emph{any} $\omega$-PCP
(with fixed pairs or not.)

``We obtain our result by combining the proof of Bertoni [B1=1975] with a
universal Turing machine encoding and sharp bounds on small universal
Turing machines taken from [R4].'' Rogozhin,
ten states and three letters.
Hier ist der Verweis of Bertoni [B1] richtig!

... infinite word W whose first letter is given and such that h(W) =
g(W) if and only if U halts on x.
(In the construction it is stated correctly: If the machine $U$ does not halt.)

No padding rules! forgotten? but \texttt{\#} (separation symbol?) is
counted as an extra letter. Is OK. $40=4\times10$ rules in total.
cases with \texttt{\#} are not treated correctly, but the
count of word pairs is fine.
Could have saved 3 by swapping left and right rules.

\subsection{Problem of representation of matrices}

the question whether a matrix $U$ can be represented by matrices
$U_1,\ldots,U_q$ using multiplication.

``partial problem'': Only $U$ is variable.

Subsequently (cf. [3]) the number of matrices in the system was
reduced to 23 and it was proved that, with an appropriate complication
in the construction of the system, the condition $n\ge6$ could be weakened
to $n\ge4$. For any $n\ge 6$ one can construct a concrete system, consisting of
12 matrices of order $n$, with unsolvable partial problem (cf. [4]). By
appropriately fixing $U$ and varying $U_1,\ldots,U_q$ the unsolvability of the
general formulation has been proved for n=3 (cf. [5]).

[3] 	A.A. Markov, ``On the problem of presenting matrices'' Z. Math. Logik und Grundl. Math. , 4 (1958) pp. 157--168 (In Russian) (German abstract)
[4] 	N.M. Nagornyi, , 6-th All-Union Congress on Math. Logic , Tbilisi (1982) pp. 124 (In Russian)
[5] 	M.S. Paterson, "Unsolvability in $3\times 3$ matrices" Stud. in Appl. Math., 49 : 1 (1970) pp. 105--107

Representation of matrices, problem of. Encyclopedia of Mathematics.
\url{http://encyclopediaofmath.org/index.php?title=Representation_of_matrices,_problem_of&oldid=32622}

\cite{halava07-Claus-instances}
Halava,  Harju, and Hirvensalo:
title = {Undecidability bounds for integer matrices using {Claus} instances},
\url{https://typeset.io/pdf/undecidability-bounds-for-integer-matrices-using-claus-1nwnni898j.pdf}
 International Journal of Foundations of Computer Science Vol. 18,
 No. 05, pp. 931-948 (2007)
 
Theorem 6. 2MPCP with 7 words in total. Everything is fixed except ONE
word from the starting pair.

\subsection{Paterson 1970}

 Paterson~\cite{paterson70} in 1970: mortality for $3\times 3$
 matrices is undecidable.

 PCP

 \begin{equation}
   \label{eq:Paterson}
   \begin{pmatrix}
   10^{|v|} &0&0\\
   0&  10^{|w|}&0\\
   (v)_{10}
&   (w)_{10} & 1   
   \end{pmatrix}
 \end{equation}

 Paterson~\cite{paterson70}
 specified this matrix only by way of example, without
 committing to a particular radix: ``The notation is quaternary, decimal, etc., according
to taste.''

The argument can be simplified by replacing $S$ and $T$ by a single
rank-1 matrix
\begin{equation}
  \label{eq:new-ST}
  \begin{pmatrix}
    1\\-1\\0
  \end{pmatrix}
  \begin{pmatrix}
    1&0&1
  \end{pmatrix}
  =
  \begin{pmatrix}
    1&0&1\\
    -1&0&-1\\
    0&0&0    
  \end{pmatrix}
\end{equation}
 Then one need not discuss 4 cases but just a single case.

\subsection{Turning a semi-Thue system with $k$ rules into a 2MPCP
  with $k+4$ pairs}
\label{sec:semi-Thue}

A = matrix([[113, 113, 1469],
[1938, 0, -7910],
[442, 113, 113]])

Conversion according to Claus \cite{claus80-Bull-EATCS}, Bull. EATCS, 1980.

simplified by converting to the (2)MPCP.

Assume rules in arbitrary (binary at this point not necessary)
alphabet $X$.
Starting word $u_0\in X^*$, ending word  $\bar u\in X^*$.

add separator symbol \texttt{\#}.

encode $X\cup \{\texttt{\#}\} \to \{\mathtt{a},\mathtt{b}\}^*$ in binary, for
example using the codewords $\mathtt{bab},\mathtt{baab},\mathtt{baaab},\ldots$ over the
alphabet $\{\mathtt{a},\mathtt{b}\}$. No codeword is a substring of
another codeword, and the strings that are composed from such codewords
can always be uniquely decoded.

Claus \cite[p.57]{claus80-Bull-EATCS},
applies a homomorphism $h$ with three codewords
\texttt{01}, \texttt{011}, \texttt{0111}.
Although this is injective, it can lead to mistakes.
The first two bits of
 \texttt{0111} can be ``mistaken'' for
\texttt{01}, and a rule for
\texttt{01} can be applied. Afterwards we can copy the two remaing bits \texttt{11} just so.

$w \in (X\cup \{\texttt{\#}\})^* \mapsto \langle w\rangle \in \{\mathtt{a},\mathtt{b}\}^*$ in binary.

Word pairs:
\begin{compactitem}
\item For every rule $v\to w$: $(\langle v\rangle,\langle w\rangle)$
\item $(a,a)$ and $(b,b)$. [First explain without binary coding: too many copying rules]
\item Starting rule $(v_1,w_1)= (\langle \texttt{\#}\rangle, \langle\texttt{\#}u_0\texttt{\#}\rangle)$
\item Ending rule $(v_\infty,w_\infty)= 
  (  \langle\texttt{\#}\bar u\texttt{\#}\rangle,
  \langle \texttt{\#}\rangle)$  
\end{compactitem}

\cite{halava07-Claus-instances}
Halava,  Harju, and Hirvensalo:
title = {Undecidability bounds for integer matrices using {Claus} instances},
\url{https://typeset.io/pdf/undecidability-bounds-for-integer-matrices-using-claus-1nwnni898j.pdf}

%\eject
\subsection{Decision algorithm for 2x2 matrices, according to Claus 1981}
\label{sec:decide-2x2}

Claus \cite[Theorem 7 and Corollary, p.~155]{claus81} show that
the emptiness question can be decided for
two-state PFAs.

Transformation of the problem. One variable $x$ = the probability of being
in the accepting state, is sufficient.

Equivalent formulation: Given a family of linear functions $f_i\colon[0,1]\to[0,1]$,
for $i\in I$, a starting
value $a \in [0,1]$, and a threshold $\lambda$,
is there a sequence $i_1\ldots i_k\in I^*$ such that
$(
f_{i_1}\circ
f_{i_2}\circ
\cdots \circ
f_{i_k})(a)
=f_{i_k}(
f_{i_{k-1}}(\dots
(f_{i_2}(
f_{i_1}(a)))\dots))  
> \lambda$?\footnote
{Easy approach: 
  iterate the functions $f_i$ and
enlarge the max/min of reachable values, starting with~$a$.}

For a word $w=i_1\ldots i_k\in I^*$ we write
$f_w$ for $
f_{i_1}\circ
f_{i_2}\circ
\cdots \circ
f_{i_k}$.

Exclude slopes $\pm 1$ by preprocessing.\footnote
{Claus \cite%[Theorem 7 and Corollary, p.~155]
  {claus81} is a bit overzealous: Since two starting vectors $(1-a,a)$ and
  $(a,1-a)$ are tried,
  the matrix $\left(
    \begin{smallmatrix}
      0&1\\1&0\\
    \end{smallmatrix}\right)$ only needs to be postmultiplied with the
  other matrices. The premultiplied versions are unnecessary.}
Assume slopes are in the interval $(-1,+1)$.

Distinguish
\emph{negative functions} $f_i, i\in I_-$ with negative slope,
and
\emph{positive functions} $f_i, i\in I_+$ with nonnegative slope.
(Or \emph{rising} and \emph{falling}?)

Negative composed with negative gives positive, etc. (The slopes are
multiplied under composition.)

Every function $f_w$ has a unique fixpoint, denoted by $s_w$: $f_w(s_w)=s_w$.
Every function is a contraction towards its fixpoint.
See Figure~\ref{fig:iteration-towards-fixpoint}: 
\begin{figure}[htb]
  \centering
  \includegraphics[scale=1]{iteration-towards-fixpoint}
  \caption{Convergence towards $s_w$ for positive and negative functions $f_w$.}
  \label{fig:iteration-towards-fixpoint}
\end{figure}

Define ``attraction intervals'':
The interval
$A_0$ is spanned by the fixpoints for certain words of length 1 and 2.
The interval
$A$ is spanned by what can be reached in one step from $A_0$.
\begin{align*}
A_0 &:= \mathrm{conv}(\{\,s_i\mid i\in I\,\}
      \cup \{\,s_{ij}\mid i,j\in I_-\,\}) =: [\underline s, \overline s]\\
A  &:= \mathrm{conv}( A_0 \cup \{\, f_i(s) \mid s \in A_0, i \in I\,\}
     ) \\
&= \mathrm{conv}( A_0 \cup \{\, f_i(s) \mid s \in \{\underline s, \overline s\}, i \in I\,\}
     ) \\  
&= \mathrm{conv}( A_0 \cup \{\, f_i(s) \mid s \in \{\underline s, \overline s\}, i \in I_-\,\}
     ) =: [\underline t, \overline t]
\end{align*}
The last equation holds because a positive function $f_i$ contracts
towards $s_i$ and remains on the same side of $s_i$ and hence cannot
lead out of $A_0$.\footnote
{If $f_v$ and $f_w$ are positive, $s_{vw}$ lies in the interval
$\mathrm{conv}\{s_v,s_w\}$ (Corollary of Lemma 8/9 for the subsystem
$\{f_v,f_w\}$.) Useful?}

Any fixpoint $s_w$, as well as
$\underline t$ and $\overline t$ can be reached arbitrarily closely
from any starting point $a$.

\noindent\textbf{Lemma 8 and 9.} Cannot leave the interval
$A=[\underline t, \overline t]$:
$\underline t \le f_j(\overline t) \le \overline t$
and
$\underline t \le f_j(\underline t) \le \overline t$.

Proof: The cases $j\in I_+$ and $\underline t = \underline s$ (by
definition) are easy.
Interesting case:  $j\in I_-$, and $\underline t = f_i(\overline s)$ for $i\in I_-$.
Then  $f_j(\underline t) = f_{ij}(\overline s)$, and since $s_{ij}\in
A_0$ and $f_{ij}$ is positive,
 $f_{ij}$ contracts towards a point in $A_0$,
and therefore
$f_j(\underline t) = f_{ij}(\overline s)
\le \overline s\le \overline t$.\qed

Corollary: $\underline t \le s_w \le \overline t$ for all $w$.

Observation: If $a\notin [\underline t, \overline t]$, then the extreme values that
can be reached (at least arbitrarily closely) are in the set
$\{a\} \cup
\{\,f_i(a)\mid i\in I_-\,\}
 \cup\{ \underline t, \overline t\}
$.
(The same is trivially true
if $a\in [\underline t, \overline t]$.)

\subsection{Leibniz an Tschirnhaus, 1678}
\label{sec:leibniz}

LEIBNIZ AN E. W. VON TSCHIRNHAUS, Ende V./Anfang VI. 1678

In signis spectanda est commoditas ad inveniendum, quae maxima est
quoties rei naturam intimam paucis exprimunt, et velut pingunt, ita
enim mirifice imminuitur cogitandi labor.
\cite[p.~444--445]{Leibniz}

In signs, one has to pay attention to the
suitability
for % invention/
discovery,
which is greatest %best
if %as
the signs
express
the inner nature % essence % intima natura
of things concisely, as though
%gleichsam = as if, as it were, as though, almost
painting it, % abbilden
for % in this way,
the labor of thinking is
thus %in this way.
wonderfully reduced.
%in this way.

G. W. Leibniz in a letter to E. W. von Tschirnhaus, 1678

Notio und Notatio.
Aufzeichnung zu einem Vortrag in Heidelberg im Juni 1941.
Von HELLMUTH KNESER in Tübingen.
Jahresber. Deutsch. Math.-Verein. 53 (1943), 9--21
[Zbl 0028.10202].
\url{https://doi.org/10.1515/9783110894516.377}

about the use of (not: subscript!) indices
indicating the row and column
when writing a system of equations.

Haec observatio omnium, quae de calculo fieri possunt, maximi momenta
est: ita facile et velut sponte sua se detegunt praeclara theoremata
et progressiones, ita ut pleraque sine calculo scribi possint, initiis
tantum praelibatis. (Praelibare heißt vorwegnehmen). Leibnizens
mathematische Schriften, hrsg. von C. J. Gerhardt (Dritte Folge in:
Leibnizens gesammelte Werke, aus den Handschriften der Königlichen
Bibliotek zu Hannover, hrsg. von Georg Heinrich Pertz), 2. Abt. 3
(1863), S. 7. Im folgenden nenne ich diese Ausgabe kurz als „Leibniz,
Math. Schr.''

Bei den Zeichen muß man auf ihre Eignung zum Erfinden achten. Diese
ist am größten, wenn die Zeichen das Wesen der Sache in Kürze ausdrücken
und gleichsam abbilden; denn so verringert sich die Denkarbeit in
wundersamem Maße.

The latin citation in note [11] misses one word:

``In signis spectanda est commoditas ad inveniendum, quae maxima est
quoties rei naturam paucis exprimunt et velut pingunt, ita enim
mirifice imminuitur cogitandi labor.'' Leibniz, Math. Schr., 1. Abt. 4, S. 455.

\paragraph{Old proof of weaker version of
  Eliminating the output vector, proof of
  Theorem%s~\ref{thm-fixed-f} and%
  ~\ref{thm-fixed-f}a}

We will use an adaptation of this procedure to prove part~(a) of the
following theorem.

 (a) We first explain a simple version, where the number of states
 % (NOT the number of matrices)
 is
multiplied by 8.

We start with the set $\mathcal{M}'''$ of 52 $9\times 9$ matrices of
Theorem~\ref{thm-fixed-pi}a.
We simulate the corresponding automaton, but simultaneously, with every
input symbol, we throw an octahedral die. In other words, we
determine a random number $r$  between 1 and~8.
The acceptance criterion is as follows:
If we are in state $q$ and the output value is $f_q=\frac i8$, we
accept if the input if $r\in \{1,\ldots,i\}$.
In other words, the randomness for making the acceptance decision is
already
generated \emph{before} the last symbol is read, and we simply need to
use it by looking up the value $r$.
It is clear that the input is accepted with the correct probability
 $f_q=\frac i8$.
 The number of states is multiplied by 8, and each transition matrix
  $M\in\mathcal{M}'''$ is converted to the matrix
\begin{displaymath}
  \begin{pmatrix}
%    \frac18M&\frac18M&\frac18M&\frac18M&\frac18M&\frac18M&\frac18M&\frac18M\\
 1/8\cdot M&1/8\cdot M&1/8\cdot M&1/8\cdot M&1/8\cdot M&1/8\cdot M&1/8\cdot M&1/8\cdot M\\
 1/8\cdot M&1/8\cdot M&1/8\cdot M&1/8\cdot M&1/8\cdot M&1/8\cdot M&1/8\cdot M&1/8\cdot M\\
 1/8\cdot M&1/8\cdot M&1/8\cdot M&1/8\cdot M&1/8\cdot M&1/8\cdot M&1/8\cdot M&1/8\cdot M\\
 1/8\cdot M&1/8\cdot M&1/8\cdot M&1/8\cdot M&1/8\cdot M&1/8\cdot M&1/8\cdot M&1/8\cdot M\\
 1/8\cdot M&1/8\cdot M&1/8\cdot M&1/8\cdot M&1/8\cdot M&1/8\cdot M&1/8\cdot M&1/8\cdot M\\
 1/8\cdot M&1/8\cdot M&1/8\cdot M&1/8\cdot M&1/8\cdot M&1/8\cdot M&1/8\cdot M&1/8\cdot M\\
 1/8\cdot M&1/8\cdot M&1/8\cdot M&1/8\cdot M&1/8\cdot M&1/8\cdot M&1/8\cdot M&1/8\cdot M\\
 1/8\cdot M&1/8\cdot M&1/8\cdot M&1/8\cdot M&1/8\cdot M&1/8\cdot M&1/8\cdot M&1/8\cdot M\\
  \end{pmatrix}.
\end{displaymath}
% \footnote{
(A more straightforward procedure would be to throw the die once at the
  beginning and remember the result $r$, but then the transition matrix
  would not be positive.)
  % }
  
We can save states by making a cleverer, irregular die. All that is
needed is
that every output value $\eeta_q \in\{\frac14,\frac12,\frac58\}$ can
be combined from the face probabilities of the die. ...
\qed

\end{document}